\documentclass[a4paper,twocolumn,notitlepage,superscriptaddress,floatfix,aps,pra,10pt]{revtex4-2}

\usepackage{array} 
\usepackage[english]{babel}
\usepackage[utf8]{inputenc}
\usepackage{graphicx}
\usepackage{amsmath}
\usepackage{amsthm}
\usepackage{amssymb}
\usepackage{dsfont}
\usepackage{physics}
\usepackage{mathrsfs} 
\usepackage{mathtools} 
\usepackage{tikz}
\usetikzlibrary{arrows}
\usepackage{tcolorbox}
\usepackage{enumitem}
\newlist{steps}{enumerate}{1}
\setlist[steps, 1]{label = Step \arabic*:}
\usepackage[colorinlistoftodos,prependcaption]{todonotes}
\usepackage{booktabs}
\usepackage{thmtools,thm-restate}
\usepackage{accents}

\usepackage{subcaption}
\usepackage[percent]{overpic}

\usepackage[pass]{geometry} 


\usepackage{setspace}       

\usepackage[framemethod=TikZ]{mdframed}

\AtBeginDocument{
	\heavyrulewidth=.08em
	\lightrulewidth=.05em
	\cmidrulewidth=.03em
	\belowrulesep=.65ex
	\belowbottomsep=0pt
	\aboverulesep=.4ex
	\abovetopsep=0pt
	\cmidrulesep=\doublerulesep
	\cmidrulekern=.5em
	\defaultaddspace=.5em
}

\makeatletter
\def\label#1{%
  \@bsphack
  \begingroup
  \UseHookWithArguments{label}{1}{#1}%
  \protected@write\@auxout{}{%
    \string\newlabel{#1}{{\@currentlabel}{\thepage}{\@currentlabelname}{\@currentHref}{\@kernel@reserved@label@data}}%
  }%
  \endgroup
  \@esphack
}
\makeatother

\usepackage[hypertexnames=false]{hyperref}
\usepackage{cleveref}
\usepackage{xurl} 
\hypersetup{
colorlinks = true,
citecolor= blue,
urlcolor= blue,
linkcolor = blue
}

\makeatletter
\def\maketitle{
\@author@finish
\title@column\titleblock@produce
\suppressfloats[t]}
\makeatother

\makeatletter
\newcommand{\makesupplementtitle}[1]{%
  \begingroup
    \ltx@footnote@pop
    \def\@mpfn{mpfootnote}%
    \def\thempfn{\thempfootnote}%
    \c@mpfootnote\z@
    \let\@makefnmark\frontmatter@makefnmark
    \frontmatter@setup
    \thispagestyle{titlepage}
    {\parindent\z@\centering
      \frontmatter@title@above
      \frontmatter@title@format
      #1\par
      \frontmatter@title@below
    }%
    \groupauthors@sw{\frontmatter@author@produce@group}{\frontmatter@author@produce@script}%
    \frontmatter@RRAPformat{\expandafter\produce@RRAP\expandafter{\@date}}%
    \par
    \frontmatter@finalspace
  \endgroup
}
\makeatother

\usepackage{ragged2e}

\makeatletter
\long\def\@makecaption#1#2{%
  \par\vskip\abovecaptionskip
  \begingroup
    \small\rmfamily
    \justifying
    #1. #2\par
  \endgroup
  \vskip\belowcaptionskip
}
\makeatother

\makeatletter
\newcommand{\pagefootnote}[1]{%
  \begingroup
  \stepcounter{footnote}%
  \protected@xdef\@thefnmark{\thefootnote}%
  \@footnotemark
  \@footnotetext{#1}%
  \endgroup
}
\makeatother

\makeatletter
\renewcommand{\p@subfigure}{\thefigure.}
\makeatother
\crefname{subfigure}{figure}{figures}
\Crefname{subfigure}{Figure}{Figures}

\usepackage[dvipsnames]{xcolor}

\newtheorem{theorem}{Theorem}[section]

\newtheorem{remark}[theorem]{Remark}
\newtheorem{cor}[theorem]{Corollary}
\newtheorem{lem}[theorem]{Lemma}

\theoremstyle{definition} 
\newtheorem{definition}[theorem]{Definition}
\newtheorem{example}[theorem]{Example}

\numberwithin{equation}{section}

\newcounter{algsubstate}


\newcommand{\HH}{\mathcal{H}}
\newcommand{\Id}{\mathrm{Id}}
\newcommand{\id}{\mathds{1}}
\newcommand{\CC}{\mathbb{C}}

\newcommand{\NN}{\mathbb{N}}

\newcommand{\RR}{\mathbb{R}}

\newcommand{\BB}{\mathcal{B}}
\newcommand{\MM}{\mathcal{M}}
\newcommand{\UU}{\mathcal{U}}
\newcommand{\EE}{\mathcal{E}}

\newcommand{\Ss}{\mathcal{S}}

\newcommand{\VV}{\mathcal{V}}

\newcommand{\cC}{\mathcal{C}}
\newcommand{\cM}{\mathcal{M}}

\newcommand{\Eeff}{\mathbf{E}^{\mathrm{eff}}}

\newcommand{\poly}{\mathsf{poly}}

\newcommand{\negl}{\mathsf{negl}}

\newcommand{\cone}{\mathsf{cone}}
\newcommand{\conv}{\mathsf{conv}}
\newcommand{\supp}{\mathsf{supp}}

\newcommand{\Span}{\mathsf{span}}
\newcommand{\Pos}{\mathrm{Pos}}
\newcommand{\Herm}{\mathrm{Herm}}

\newcommand{\ptr}[2]{\operatorname{tr}_{#1}\left[#2\right]}
\newcommand{\term}[1]{\textup{\textit{#1}}} 

\newcommand{\CompMaxDiv}{\overset{\smash{\text{\tiny\hspace{0.3em}\raisebox{-0.3ex}{c}}}}{D}_{\mathrm{max}}}

\newcommand{\CompDiv}{ \accentset{\text{\tiny\hspace{0.1em}\raisebox{0.05ex}{c}}}{D}}

\newcommand{\CompMin}{\overset{\smash{\text{\tiny\hspace{0.3em}\raisebox{-0.3ex}{c}}}}{H}_{\mathrm{min}}}
\newcommand{\CompDownArrowMin}{\overset{\smash{\text{\tiny\hspace{0.3em}\raisebox{-0.3ex}{c}}}}{H}_{\infty}^{\downarrow}}

\newcommand{\CompPGuess}{\accentset{\text{\tiny\hspace{-0.1em}\raisebox{0.05ex}{c}}}{p}_{\mathrm{guess}}}
\newcommand{\CompQCorr}{\accentset{\text{\tiny\hspace{-0.1em}\raisebox{0.05ex}{c}}}{q}_{\mathrm{corr}}}

\newcommand{\sorbonne}{Sorbonne Université, CNRS, LIP6, 4 Place Jussieu, 75005 Paris, France.}
\newcommand{\weizman}{The Center for Quantum Science and Technology, Department of Physics of Complex Systems, Weizmann Institute of Science, Rehovot, Israel.}
\newcommand{\weizmanMath}{The Center for Quantum Science and Technology,
Faculty of Mathematics and Computer Science, Weizmann Institute of
Science, Rehovot, Israel.}
\newcommand{\inria}{Inria, Télécom Paris-LTCI, Institut Polytechnique de Paris, 91120 Palaiseau, France.}
\newcommand{\EPFL}{Institute of Computer and Communication Sciences, École Polytechnique Fédérale de Lausanne (EPFL), Lausanne CH-1015, Switzerland.}

\newcommand{\nocontentsline}[3]{}

\begin{document}	
\let\oldaddcontentsline=\addcontentsline
\let\addcontentsline=\nocontentsline

	\title{Accessible Quantum Correlations Under Complexity Constraints}
    
	\date{\today}

        \author{Álvaro Yángüez}
        \affiliation{\sorbonne}
        
        \author{Noam Avidan}
        \affiliation{\weizmanMath}
        
        \author{Jan Kochanowski}
        \affiliation{\inria}
           
        \author{Thomas A. Hahn}
        \affiliation{\weizman}\affiliation{\EPFL}

\begin{abstract}
Quantum systems may contain underlying correlations which are inaccessible to computationally bounded observers. We capture this distinction through a framework that analyses bipartite states only using efficiently implementable quantum channels. This leads to a complexity-constrained max-divergence and a corresponding computational min-entropy. The latter quantity recovers the standard operational meaning of the conditional min-entropy: in the fully quantum case, it quantifies the largest overlap with a maximally entangled state attainable via efficient operations on the conditional subsystem. For classical-quantum states, it further reduces to the optimal guessing probability of a computationally bounded observer with access to side information. Lastly, in the absence of side information, the computational min-entropy simplifies to a computational notion of the operator norm. We then establish strong separations between the information-theoretic and complexity-constrained notions of min-entropy. For pure states, there exist highly entangled families of states with extremal min-entropy whose efficiently accessible entanglement in terms of computational min-entropy is exponentially suppressed. For mixed states, the separation is even sharper: the information-theoretic conditional min-entropy can be highly negative while the complexity-constrained quantity remains nearly maximal. 
Overall, our results demonstrate that computational constraints can fundamentally limit the quantum correlations that are observable in practice.

\end{abstract}
\maketitle

\section{Efficiently accessible quantum correlations}\label{sec:introduction}
Quantum theory permits extraordinarily rich correlations, yet only a restricted subset can be realized in practice. As quantum systems grow, the spaces of states, measurements, and transformations expand far more rapidly than the operations available under realistic spatial and temporal constraints. This gap is especially consequential when correlations are viewed as a resource: in the fully quantum setting, it limits the entanglement which can be distilled by feasible operations, while in the classical-quantum setting it limits the utility of quantum side information for predicting classical information \cite{KRS09}. The physically relevant question is therefore not only which correlations a state contains, but which of them can be turned into observable consequences by feasible operations. 

This mismatch becomes especially pronounced in large many-body systems. In quantum many-body physics, for example, observation and control are typically limited to local measurements, low-complexity dynamics, and variational ans\"atze. Among the approaches shaped by these restrictions are tensor-network methods~\cite{ECP10,Schollwoeck11,Orus14}. In such regimes, complexity is not merely a practical overhead: it shapes which non-classical features can actually be detected, manipulated, or certified at scale~\cite{BFG+23,FI25,BMB+25}. Related questions also arise in high-energy physics, where entanglement is central to the AdS/CFT correspondence \cite{RT06}, yet may not by itself determine the full dual spacetime description~\cite{RT06,Susskind14,ABV23,CFI25}. Taken together, these settings suggest that, in large quantum systems, the central question is not simply how much entanglement or side information a state contains in principle, but how much of it remains detectable, certifiable, or exploitable under efficiently implementable operations, as represented in \Cref{fig:overview-a}.

Motivated by this distinction, we develop a channel-based framework for quantum information under computational restrictions and use it to identify the part of quantum correlations that remains operationally accessible to efficient observers. Comparing bipartite states only through efficiently implementable quantum channels induces a restricted order relation on quantum states, from which natural complexity-constrained versions of the max-divergence and the min-entropy emerge. Importantly, the latter retains the operational interpretation of the conditional min-entropy \cite{KRS09}: in the fully quantum setting, it quantifies the largest overlap with a maximally entangled state attainable by an efficient channel acting on the conditioning system. For classical-quantum states, it reduces to the optimal probability of inferring a classical variable from quantum side information using efficient measurements. When the conditioning system is trivial, it reduces to a single-system entropy determined by a computational analogue of the operator norm. In this way, our framework unifies recently defined computational conditional entropies \cite{avidan2025quantum,AHRA25} with the efficient-measurement formalism of \cite{YHK25}. It extends the latter from measurements to general quantum channels, recovers the former in both the fully quantum and classical-quantum settings, and includes the binary efficient-effect model of \cite{YHK25} as a special case. Within this same framework, we then prove strong separations between information-theoretic and computationally accessible correlations.

These separations already appear if one simply considers pure states. For polynomially many copies, an adaptation of universal entanglement concentration \cite{HM02} shows that feasible one-shot operations can still certify a nontrivial logarithmic amount of entanglement, and our converse bound shows that this scaling is essentially optimal. Thus, even pure states that are information-theoretically close to a maximally entangled state can retain only a logarithmic amount of entanglement within efficient reach. When considering mixed states, the same techniques yield an even stronger and nearly maximal separation. There exist families of states whose information-theoretic conditional min-entropy remains highly negative, while the complexity-constrained conditional min-entropy is, up to negligible corrections, as large as the information-theoretic min-entropy of an uncorrelated, maximally mixed state. Computational complexity constraints can therefore induce an almost maximal separation between information-theoretic and computational notions of conditional min-entropy: entanglement and side information may be extensively present in principle, yet become nearly operationally inaccessible to any efficient observer.

Recent works have shown that computational restrictions can substantially reshape familiar notions from quantum information theory. In entanglement theory, one-shot computational entanglement cost and distillable entanglement have been introduced under efficient LOCC constraints \cite{ABV23}, and recent asymptotic results for pure states have shown that these constraints can lead to large separations between information-theoretic and computational notions of entanglement manipulation \cite{LREJ25}. 

In parallel, efficient-measurement frameworks have shown how distinguishability, information measures, and resource quantification change when the admissible tests themselves are computationally restricted \cite{YHK25, MRRLJE25, MRRLJE26}. The present work is close in spirit to recent proposals for quantum computational entropies, incorporating computational constraints to the operational meaning of entropy \cite{avidan2025quantum,AHRA25,HLR07}, as opposed to computational smoothing based on computational indistinguishability from high entropy states \cite{HILL99,CCL+17}.
Our approach is the following: by extending the efficient-measurement perspective of \cite{YHK25} to general quantum channels, we obtain a channel-based framework from which a complexity-constrained conditional min-entropy arises naturally through a computational divergence. This recovers the previously proposed computational conditional min-entropy from \cite{avidan2025quantum,AHRA25}, thus cementing our computational divergence as a foundational and operationally meaningful quantity. Related ideas also appear in complexity-constrained thermodynamics, where one-shot entropic quantities govern feasible implementations of thermodynamic tasks \cite{YKH+22,MKN+25,MRRLJE25}.

We now turn to the channel-based formulation that underpins our main results. 

\subsection{Complexity-Constrained Quantum Operations}
Quantum information theory typically treats all theoretically feasible physical transformations equally, independent of the underlying computational complexity~\cite{T16,KW24}. 
This standard assumption leads to a clean operational interpretation of quantum correlations and entropies (see, e.g.,~\cite{BD11, KRS09, D09}), but it ignores basic physical constraints: not every admissible transformation is efficiently implementable. Once computational complexity is treated as a physical constraint, the relevant observer is no longer characterized by the full set of quantum channels, but by a restricted class of efficiently realizable operations.

To make the notion of a feasible observer precise, we fix a finite (typically universal) gate set~\cite{NC10} and a polynomial circuit-size bound $p(n)$. We then consider the set of quantum channels (CPTP maps) which can be implemented by circuits of size at most $p(n)$. Similar to, e.g.,~\cite{avidan2025quantum,AHRA25,YHK25, MRRLJE25}, we consider the following operations to be free: initializing ancillary registers in a fixed basis state $\ket{0}$,  discarding subsystems, and measuring subsystems in the computational basis. The corresponding implementation cost is given by the gate complexity $C(T)$ of the quantum channel $T$.

We describe this set in the Heisenberg picture and, from now on,  restrict ourselves to bipartite systems for the remainder of the paper. Thus, given a bipartite state $\rho_{AB}$, the allowed transformation acts only on the subsystem $B$, while $A$ is kept as a reference system. We highlight that, for typical practical settings, the size of both subsystems implicitly scale as a function of $n$. For an isomorphic copy of the reference system $A \cong A^\prime$, the corresponding Choi representation \cite{CHOI1975285, JAMIOLKOWSKI1972275} represents the dual map $T^*$ acting from $A^\prime$ to $B$ by its Choi operator
\begin{align*}
    J(T^*)
    \coloneq
    d_A (\Id_A \otimes T^*)
    \bigl(\ketbra{\Omega_{AA'}}{\Omega_{AA'}}\bigr)\,,
\end{align*}
where $|\Omega_{AA'}\rangle=\frac{1}{\sqrt{d_A}}\sum_{x \in [d_A]}\ket{x_{A}x_{A'}}$ denotes a maximally entangled state on $\HH^A_n \otimes \HH^{A^\prime}_n$ and $d_A$ the dimension of the system $A$, as represented in \Cref{fig:overview-b}. The dual channels $T^*$ (CPU channels) and thus the admissible set of Choi operators are implicitly constrained by the gate complexity underpinning the set of efficient quantum channels $T$.

\begin{figure*}[t!]
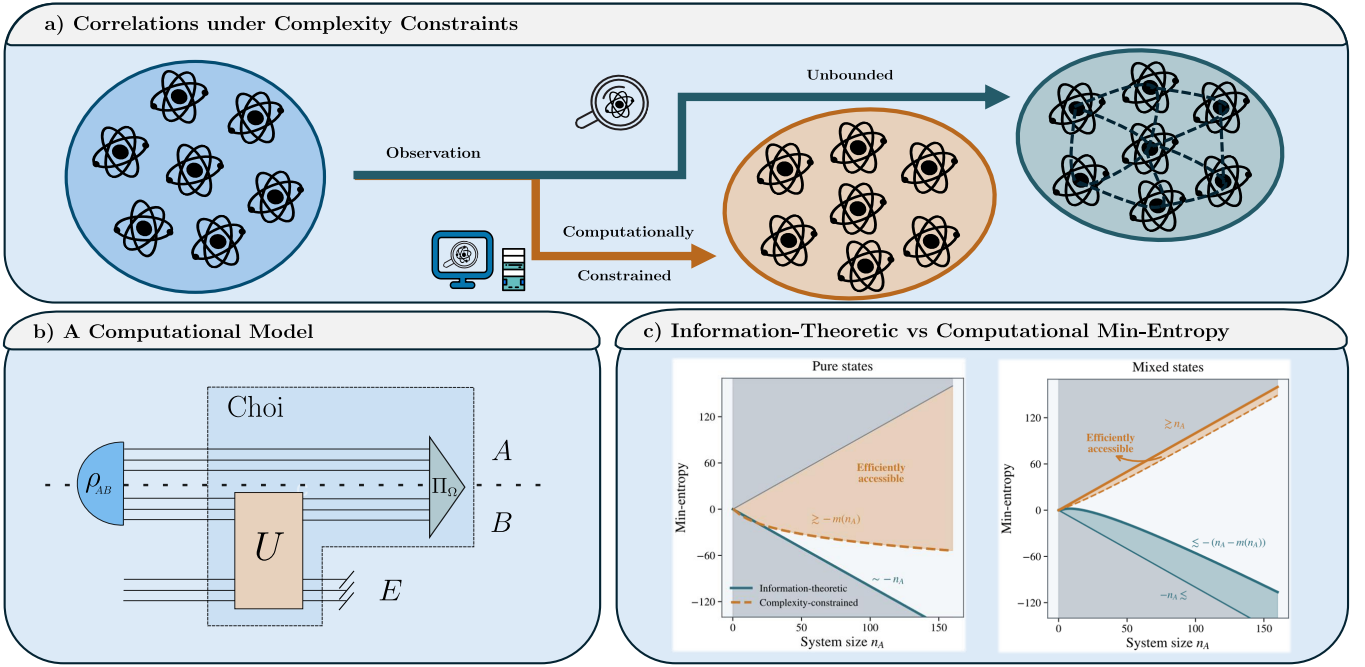

    \centering
    \begin{overpic}[width=\textwidth]{figure.pdf}

        \put(3,92){\phantomsubcaption\label{fig:overview-a}}
        \put(3,45){\phantomsubcaption\label{fig:overview-b}}
        \put(52,45){\phantomsubcaption\label{fig:overview-c}}
    \end{overpic}
     \caption{\textbf{a)} A bipartite state \(\rho_{AB}\) may carry entanglement or quantum side information that is present information-theoretically but only partially available to a computationally bounded observer. \textbf{b)} To model this restriction, we allow only efficiently implementable quantum channels and represent them by their Choi operators. Efficiency is quantified by the gate complexity of a unitary circuit \(U\) acting on subsystem \(B\) and ancillary registers \(E\). The associated Choi operator is obtained from the projector onto a maximally entangled state, \(\Pi_{\Omega}=|\Omega\rangle\langle\Omega|\), under the channel induced by \(U\). These feasible channels define a restricted order on bipartite operators. \textbf{c)} The induced order yields a computational conditional min-entropy that captures the gap between the resource present in principle and those that remain operationally accessible under computational constraints. We illustrate this gap for the pure-state and mixed-state families considered in the paper, with \(m(n)=\omega(\log n)\).}
    \label{fig:overview}
\end{figure*}

\begin{definition}
Given a finite, discrete gate set $\mathcal{G}$, the set of efficiently implementable Choi operators is defined as
\begin{align*}
    \mathbf{J}^{p(n)}_{n,A|B}\coloneq\left\{J(T^*)\;:\;C(T) \le p(n)\right\}\,.
\end{align*}
Moreover, we denote the smallest cone containing $\mathbf{J}^{p(n)}_{n,A|B}$ by
\begin{align*}
\cC^{\mathbf{E}_\mathrm{eff}}_{n,A|B}\coloneq\operatorname{cone}\,\left(\mathbf{J}^{p(n)}_{n,A|B}\right) \, .
\end{align*}
\end{definition}

This construction effectively extends the efficient-measurement framework introduced in \cite{YHK25}  from 2-outcome quantum-classical channels to general quantum channels, where the latter are restricted to acting on the subsystem $B$.

Under the informational-completeness condition stated in \Cref{lem:proper.cone}, $\cC^{\mathbf{E}_\mathrm{eff}}_{n,A|B}$ is a proper cone with respect to the subspace of Choi operators \cite{RKW11, Jen13, Jen14}. Of particular interest is the corresponding dual cone $\cC^{\mathbf{S}_\mathrm{eff}}_{n,A|B}$, which contains all bipartite states in $AB$. Moreover, this dual cone induces a (complexity-constrained) partial order $\le_{\cC^{\mathbf{S}_\mathrm{eff}}_{n,A|B}}$, which not only allows  one to relate two quantum states but also induces a computational max-divergence~\cite{RKW11, GC24,YHK25}.
Structural properties of this construction are given in \Cref{sec:choi-cone}.

\begin{definition}
For positive semi-definite operators $\rho_{AB}$ and $\sigma_{AB}$, the \emph{computational max-divergence} is defined as
\begin{align*}
   \CompMaxDiv^{A|B}(\rho_{AB}\Vert \sigma_{AB})
    \coloneq
    \inf
    \left\{
        \lambda \in \mathbb{R}
        \;:\;
        \rho_{AB}
        \le_{\cC^{\mathbf{S}_\mathrm{eff}}_{n, A|B}}
        2^\lambda \sigma_{AB}
    \right\} \,.
\end{align*}
\end{definition}
This divergence can, e.g., be related to hypothesis testing; see \Cref{lem:DHr.Dmax.choi}. Moreover, if $\sigma_{AB}=\id_A\otimes \sigma_B$, then $\CompMaxDiv^{A|B}(\rho_{AB}\Vert \sigma_{AB})$ is independent of the choice of $\sigma_B$. As we show in the following section, this simple property allows one to derive a closed-form expression for the corresponding computational min-entropy, which recovers the results of~\cite{avidan2025quantum,AHRA25}.

\section{Computational min-entropy}
Motivated by the information-theoretic definition of the min-entropy, we define a computational analogue to it in \Cref{def:inf.H.min} that is expressed in terms of $\CompMaxDiv^{A|B}$. 
 We show below that it retains the standard operational interpretation of the conditional min-entropy from~\cite{KRS09}, except that admissible channels are now required to be efficiently implementable.

\begin{definition}\label{def:inf.H.min}
For a bipartite quantum state $\rho_{AB}$, the \emph{computational conditional min-entropy} of $A$ given $B$ is defined by
\begin{align*}
    \CompMin(A|B)_{\rho}
    \coloneq
    -
    \inf_{\substack{\sigma_B \ge 0 \\ \Tr[\sigma_B]=1}}
    \CompMaxDiv^{A|B}\!\left(\rho_{AB}\,\middle\|\, \id_A\otimes \sigma_B\right) \,.
\end{align*}
\end{definition}
Since the computational max-divergence is independent of the choice of $\sigma_B$ (\Cref{cor:indeo.max.div}), the computational min-entropy is determined solely by the cone structure underlying the set of feasible Choi operators. Apart from the properties derived below, the resulting entropy is also sub-additive under tensor products (\Cref{lem:sub-add.Hmin}).

\subsection{The Fully Quantum Setting}
This subsection is dedicated to our first main result, namely that our proposed computational min-entropy admits the same operational interpretation as its information-theoretic analogue~\cite{KRS09}. In particular, we recover the computational min-entropy defined in~\cite{AHRA25}, which quantifies the largest fidelity between the maximally entangled state $\ket{\Omega_{AA'}}$ and any state achievable by applying an efficiently implementable channel to the subsystem $B$, given the initial state $\rho_{AB}$.

\begin{theorem}\label{th:inf.op.mean.fully}
For every bipartite quantum state $\rho_{AB}$,
\begin{align*}
    \CompMin(A|B)_{\rho}    &=-\log\!\left(    d_A \max_T F_T(\rho_{AB}) \right),
\end{align*}
where
\begin{align*}
    F_T(\rho_{AB})\coloneq\bra{\Omega_{AA'}}(\Id_A\otimes T)(\rho_{AB})\ket{\Omega_{AA'}}
\end{align*}
and the maximum is taken over efficiently implementable quantum channels $T:B\to A^\prime$.
\end{theorem}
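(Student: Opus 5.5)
The plan is to unfold the definitions in a chain and then recognise the fidelity expression as a Hilbert--Schmidt pairing with Choi operators.

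\textbf{Step 1: the order.} Since $\cC^{\mathbf{S}_\mathrm{eff}}_{n,A|B}$ is the dual of $\cC^{\mathbf{E}_\mathrm{eff}}_{n,A|B}=\cone(\mathbf{J}^{p(n)}_{n,A|B})$, we have $X\ge_{\cC^{\mathbf{S}_\mathrm{eff}}}0$ iff $\Tr[JX]\ge 0$ for all $J\in\mathbf{J}^{p(n)}_{n,A|B}$. Testing membership against the generators suffices, because the dual of a cone equals the dual of its generating set. Hence $\rho_{AB}\le_{\cC^{\mathbf{S}_\mathrm{eff}}}2^\lambda\,\id_A\otimes\sigma_B$ holds iff
\[
\Tr[J(T^*)\rho_{AB}]\le 2^\lambda \Tr[J(T^*)(\id_A\otimes\sigma_B)]
\]
for every efficient $T$.

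\textbf{Step 2: evaluate the right-hand side.} Writing $J(T^*)=\sum_{x,y}\ketbra{x}{y}_A\otimes T^*(\ketbra{x}{y}_{A'})$ and using that $T^*$ is unital gives $\Tr_A J(T^*)=T^*(\id_{A'})=\id_B$. Therefore $\Tr[J(T^*)(\id_A\otimes\sigma_B)]=\Tr[\sigma_B]=1$. This is exactly why the divergence is independent of $\sigma_B$ (cf.\ \Cref{cor:indeo.max.div}), so the infimum over $\sigma_B$ is trivial. It follows that
\[
\CompMaxDiv^{A|B}(\rho_{AB}\|\id_A\otimes\sigma_B)=\log\max_{T}\Tr[J(T^*)\rho_{AB}],
\]
where the maximum is attained because the set of efficient channels is finite (finite gate set, bounded size).

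\textbf{Step 3: identify the pairing.} Using duality,
\[
\Tr[J(T^*)\rho_{AB}]=d_A\Tr[(\Id_A\otimes T^*)(\Pi_\Omega)\rho_{AB}]=d_A\Tr[\Pi_\Omega(\Id_A\otimes T)(\rho_{AB})]=d_A F_T(\rho_{AB}).
\]
Negating yields the claim.

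\textbf{Main obstacle: conventions.} The only delicate point is that the Choi operator is defined with $\Pi_\Omega$ on $AA'$ and the dual map relabelling $A'\to B$. One must check that $(\Id_A\otimes T^*)(\Pi_{\Omega_{AA'}})$ lives on $AB$ and pairs with $\rho_{AB}$ under the Hilbert--Schmidt inner product with no transpose. It does, since $T^*$ is the adjoint with respect to that inner product.

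Two minor points also need care. First, when $\cC^{\mathbf{E}_\mathrm{eff}}$ is not proper, the order is still well defined via the dual cone, so nothing breaks. Second, the free operations (ancillas, discarding, computational-basis measurement) keep $T$ trace preserving, so $T^*$ is unital.
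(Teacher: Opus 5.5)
Your proposal is correct, but it takes a different route from the paper.

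The paper rewrites $2^{-\CompMin(A|B)_\rho}$ as the primal conic program $\inf\{\Tr[\tilde\sigma_B]:\tilde\sigma_B\ge 0,\ \id_A\otimes\tilde\sigma_B\ge_{\cC^{\mathbf{S}_\mathrm{eff}}_{n,A|B}}\rho_{AB}\}$. It then proves in \Cref{lem:dual.program.Hmin} that this program has no duality gap. That step uses Slater's condition with $\tilde\sigma_B=2\lambda_{\max}(\rho_{AB})\id_B$, which relies on the cone being proper, i.e.\ on the informational-completeness assumption of \Cref{lem:proper.cone}. From duality it obtains $\max\{\Tr[\rho_{AB}F_{AB}]:F_{AB}\in\cC^{\mathbf{E}_\mathrm{eff}}_{n,A|B},\ \tr_A[F_{AB}]\le\id_B\}$. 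Only then does it restrict to the extreme points $J(T^*)\in\mathbf{J}^{p(n)}_{n,A|B}$ and identify $\Tr[\rho_{AB}J(T^*)]=d_A F_T(\rho_{AB})$, exactly as in your Step 3.

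You work directly with the defining order instead. Because every generator satisfies $\tr_A[J(T^*)]=\id_B$, the constraint $\id_A\otimes\tilde\sigma_B\ge_{\cC^{\mathbf{S}_\mathrm{eff}}_{n,A|B}}\rho_{AB}$ depends on $\tilde\sigma_B$ only through $\Tr[\tilde\sigma_B]$. It therefore collapses to the finitely many scalar inequalities $\Tr[J\rho_{AB}]\le 2^\lambda$. This shows the paper's conic program is degenerate, so strong duality holds trivially. Your argument therefore needs neither Slater's condition nor properness of the cone. It also gives the $\sigma_B$-independence of \Cref{cor:indeo.max.div} as a by-product, for every normalized $\sigma_B$ and without that corollary's support assumption.

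What the paper's duality framework buys is a template that mirrors \cite{KRS09}. It would still produce the dual program $\max\{\Tr[\rho_{AB}F_{AB}]:\tr_A[F_{AB}]\le\id_B\}$ for cones whose elements are not uniformly normalized, where your shortcut would no longer apply.
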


A proof of this result can be found in \Cref{sec:comp-min-entropy}, where this result is restated as \Cref{thm:op.mean.fullyquantum}. In particular, \Cref{th:inf.op.mean.fully} implies that the single quantity $\CompMaxDiv^{A|B}$ underpins the following two, a priori, separate approaches: the operational approach of defining the computational min-entropy in terms of restricted entanglement manipulation~\cite{AHRA25} and the cone-theoretic computational max-divergence construction from~\cite{YHK25} which primarily considered hypothesis testing.

\subsection{The Classical-Quantum Setting}

Let us now consider classical-quantum (cq-) states, i.e., states of the form 
\begin{align*}
    \rho_{XB}
    =
    \sum_x p_x \ketbra{x}{x}_X \otimes \rho_B^x\, .
\end{align*}
For this class of states, the closed-form expression for the computational conditional min-entropy can be further simplified. In particular, its resulting operational meaning can be expressed in terms of
 efficient guessing: the observer holding $B$ seeks to (optimally) infer the classical value stored in $X$ using an efficiently implementable measurement. In this case, the computational conditional min-entropy retains the familiar interpretation of the conditional min-entropy~\cite{KRS09}, but now relative to a computationally bounded observer~\cite{AHRA25}.

\begin{lem}\label{cor:reductiontounpent}
For every classical--quantum state $ \rho_{XB} = \sum_x p_x \ketbra{x}{x}_X \otimes \rho_B^x$, one has
\begin{align*}
    \CompMin(X|B)_{\rho}
    =-\log\CompPGuess(X|B)_{\rho}\,,
\end{align*}
where
\begin{align*}
    \CompPGuess(X|B)_{\rho}
    \coloneq
    \max_{\{E_B^x\}_x \in \mathcal M^{p(n)}_{n,|X|}}
    \sum_x p_x \Tr[E_B^x \rho_B^x]
\end{align*}
is the optimal guessing probability achievable by an efficiently implementable $|X|$-outcome POVM on subsystem $B$.
\end{lem}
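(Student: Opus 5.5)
The plan is to specialize \Cref{th:inf.op.mean.fully} to $A=X$ with $\rho_{XB}$ classical--quantum. That gives
\begin{align*}
\CompMin(X|B)_\rho=-\log\Bigl(|X|\max_T F_T(\rho_{XB})\Bigr),
\end{align*}
with the maximum over efficient channels $T:B\to X'$. It then suffices to show
\begin{align*}
|X|\max_T F_T(\rho_{XB})=\CompPGuess(X|B)_\rho .
\end{align*}
I would prove this identity by two inequalities, each obtained by converting a channel into a measurement or a measurement into a channel at no extra gate cost beyond free operations.

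\textbf{Channel to measurement ($\le$).} For a channel $T$, compute directly:
\begin{align*}
|X|F_T(\rho_{XB})=\sum_{x,y,z}p_x\bra{x}\ket{y}\bra{y}\ket{z}\bra{x}_{X'}T(\rho^x_B)\ket{z}_{X'}
=\sum_x p_x\bra{x}T(\rho^x_B)\ket{x}.
\end{align*}
Only the diagonal terms survive because $X$ is classical. Define $E^x_B\coloneq T^*(\ketbra{x}{x}_{X'})$. Then $\{E^x_B\}_x$ is the POVM obtained by running $T$ and measuring $X'$ in the computational basis. This measurement is free, so the POVM lies in $\mathcal M^{p(n)}_{n,|X|}$ with the same circuit size. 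Hence $|X|F_T\le\CompPGuess$.

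\textbf{Measurement to channel ($\ge$).} Given an efficient POVM $\{E^x_B\}_x$, let $T$ be the measure-and-prepare channel $T(\sigma)=\sum_x\Tr[E^x\sigma]\ketbra{x}{x}_{X'}$. It is realized by the same circuit, outputting the classical outcome register as $X'$. By the computation above, $|X|F_T=\sum_x p_x\Tr[E^x\rho^x_B]$. Taking the maximum gives equality.

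\textbf{Main obstacle.} The delicate step is checking that the two efficiency classes coincide: that $\mathcal M^{p(n)}_{n,|X|}$ consists exactly of POVMs arising from channels of cost at most $p(n)$ followed by a computational-basis measurement, and conversely. This depends on the convention for $\mathcal M^{p(n)}_{n,|X|}$ (introduced following \cite{YHK25}) and on treating computational-basis measurements, discarding, and ancilla initialization as free. If the conventions differ by an additive overhead, for example encoding $x$ into $\lceil\log|X|\rceil$ qubits, I would argue this is absorbed in the complexity bound. I expect it to be purely definitional.
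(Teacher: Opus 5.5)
Your proposal is correct and follows essentially the paper's route. The paper also reduces $2^{-\CompMin(X|B)_\rho}$ to a maximum of $\Tr[\rho_{XB}J(T^*)]$ over efficient Choi operators, which is the same quantity as your $|X|\max_T F_T$. It then observes that only the diagonal terms $\sum_x p_x\Tr[\rho_B^x\,T^*(\ketbra{x}{x})]$ survive and identifies $E_B^x=T^*(\ketbra{x}{x})$ as an efficient POVM.

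Your explicit measure-and-prepare channel for the reverse inequality spells out what the paper dismisses as ``an analogous argument.'' The ancilla and outcome-register convention mismatch you flag is glossed over in the paper in the same purely definitional way.
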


The full proof can be found in \Cref{sec:comp-min-entropy}, following \Cref{th:cq.oper.int}.
Analogous to~\cite[Lemma~32]{AHRA25}, a consequence of \Cref{cor:reductiontounpent} is that our construction recovers the computational unpredictability entropy from~\cite{avidan2025quantum} for cq-states; see also the guessing pseudoentropy from~\cite{CCL+17}. 
Since the computational unpredictability entropy can be related to the extraction of pseudorandomness~\cite[Theorem II.5]{avidan2025quantum}, \Cref{cor:reductiontounpent} immediately implies that our constructed computational divergence and entropy can be related to operationally meaningful tasks.

\subsection{The Non-Conditional Setting}
The present framework also admits a natural non-conditional reduction. Here, one simply appends a trivial, one-dimensional conditioning register $B$, i.e., one defines the state
\begin{align*}
    \hat{\rho}_{AB}
    \coloneq
    \rho_A \otimes \ketbra{1}{1}_B\, 
\end{align*}
and sets
\begin{align*}
    \CompMin(A)_{\rho}
    \coloneq
    \CompMin(A|B)_{\hat{\rho}}\,.
\end{align*}
In this way, the computational conditional min-entropy reduces to a genuine single-system entropy. Moreover, as we describe below, when the conditioning system is trivial, the set of admissible quantum channels reduces to an optimization over efficiently preparable quantum states, and the entropy reduces to a complexity-constrained analogue of the operator norm.

\begin{lem}
For every quantum state $\rho_A$, its computational min-entropy can be expressed as 
\begin{align*}
    \CompMin(A)_{\rho}
 &=- \log \sup_{\tau_{A} \in \mathcal{S}^{p(n)}_{n,A}}\Tr\!\left[\rho_A \tau_{A}^{\,\mathsf T}\right]\\
 &\eqcolon-\log\|\rho_A\|_{\infty, \text{comp}}\,,
\end{align*}
where $\mathcal{S}^{p(n)}_{n,A}$ denotes the set of efficiently preparable states on $A$.
\end{lem}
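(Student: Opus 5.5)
The plan is to specialize \Cref{th:inf.op.mean.fully} to the state $\hat\rho_{AB}=\rho_A\otimes\ketbra{1}{1}_B$ with $d_B=1$. By definition, $\CompMin(A)_\rho=\CompMin(A|B)_{\hat\rho}$, and the theorem gives
\begin{align*}
\CompMin(A|B)_{\hat\rho}=-\log\Bigl(d_A\max_T \bra{\Omega_{AA'}}\rho_A\otimes T(\ketbra{1}{1}_B)\ket{\Omega_{AA'}}\Bigr),
\end{align*}
where $T$ ranges over efficiently implementable channels $B\to A'$. Since $B$ is one-dimensional, each such channel is fixed by its single output $\tau_{A'}\coloneq T(\ketbra{1}{1}_B)$. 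The circuit starts from the trivial input together with ancillas in $\ket{0}$, applies at most $p(n)$ gates, and discards registers. So the set $\{T(\ketbra{1}{1})\}$ is exactly the set $\mathcal S^{p(n)}_{n,A}$ of efficiently preparable states on $A'\cong A$. I would state this identification explicitly, and check that the conventions for the free operations (ancilla initialization, partial trace) coincide on both sides.

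Next I would compute the overlap with the transpose trick. For any operators $X_A$ and $Y_{A'}$,
\begin{align*}
\bra{\Omega_{AA'}}X_A\otimes Y_{A'}\ket{\Omega_{AA'}}=\tfrac{1}{d_A}\Tr\!\left[X_A Y^{\mathsf T}\right].
\end{align*}
This follows by expanding $\ket{\Omega}=d_A^{-1/2}\sum_x\ket{xx}$: one gets $\frac1{d_A}\sum_{x,y}\bra{x}X\ket{y}\bra{x}Y\ket{y}=\frac1{d_A}\Tr[XY^{\mathsf T}]$. The transpose is taken in the computational basis used to define $\ket{\Omega}$. Setting $X=\rho_A$ and $Y=\tau$, the factor $d_A$ cancels, and
\begin{align*}
d_A F_T(\hat\rho_{AB})=\Tr[\rho_A\tau_A^{\mathsf T}].
\end{align*}
Substituting this and maximizing over $\tau\in\mathcal S^{p(n)}_{n,A}$ yields the claimed formula.

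The main point needing care is replacing $\max_T$ by $\sup_\tau$, i.e.\ ensuring the set of channels and the set of prepared states correspond bijectively, including size accounting. A channel with trivial input may still use a gate acting on the (trivial) $B$ register; this is harmless. Because the gate set is finite and the circuit size is bounded, the set is finite, so the supremum is attained and equals the maximum.

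Finally, I would remark that $\|\cdot\|_{\infty,\text{comp}}$ deserves the name. If $\mathcal S^{p(n)}_{n,A}$ were replaced by all states, the supremum would be $\sup_\tau\Tr[\rho\tau^{\mathsf T}]=\|\rho^{\mathsf T}\|_\infty=\|\rho\|_\infty$, recovering $H_{\min}(A)=-\log\|\rho_A\|_\infty$.
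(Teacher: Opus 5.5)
Your proposal is correct and follows the paper's proof essentially step for step. You specialize the fully quantum operational characterization to $\hat\rho_{AB}=\rho_A\otimes\ketbra{1}{1}_B$, identify each efficient channel out of the one-dimensional register with its output state in $\mathcal{S}^{p(n)}_{n,A}$ (which is exactly how the paper defines that set), and evaluate the overlap via $d_A\bra{\Omega_{AA'}}X_A\otimes Y_{A'}\ket{\Omega_{AA'}}=\Tr[X_A Y_A^{\mathsf T}]$.
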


A proof of this result can be found in \Cref{sec:non.cond.case}. Thus, even without side information, computational power constrains this entropy at a foundational level: it is determined not by the largest overlap with an arbitrary state, but by the largest overlap with a state that can be prepared efficiently. Lastly, in the absence of computational constraints, the derived expression reduces to the ordinary non-conditional min-entropy $H_{\min}(A)_{\rho} = -\log \|\rho_A\|_\infty$. This can easily be seen following, e.g. 
~\cite[Sec.~IV.2]{BhatiaMatrixAnalysis}.

\section{Computationally Hidden Quantum Correlations}

As discussed in the previous section, the operational interpretation of $\CompMin(A|B)$ is directly related to the entanglement that can be accessed from system $B$ under computational constraints. $H_{\min}(A|B)$ captures the same phenomenon in the information-theoretic setting. However, the gap between the computational and information-theoretic min-entropy can be significant; see also \Cref{fig:overview-c}.

Following similar arguments to~\cite{LREJ25}, we show that, for pure states, efficient operations guarantee access to a nontrivial amount of entanglement, i.e., the computational min-entropy is negative. Our converse bound shows that the resulting logarithmic scaling is essentially optimal. For mixed states, the effect can be much stronger, as there exist states whose information-theoretic conditional min-entropy is highly negative while their computational conditional min-entropy is nearly maximal. These results also strongly improve and extend on the previous computational vs. informational min-entropy gap from~\cite[Lemma~33]{AHRA25}, which only considers fixed circuit depth (i.e., $p(n)=c$) and analyzes specific one-shot single-copy states rather than the poly-many copy regime.

\subsection{Pure States}

We begin by considering pure states. Our first separation result is obtained by adapting the state-agnostic concentration protocol of \cite{HM02}, which is efficiently implementable due to \cite{CHW06, Kr19}, to the channel-based framework introduced above; see \Cref{lem:channel-reduction}. 

\begin{theorem}
Let $\{|\psi_{AB, n}\rangle\}_n$ be a family of bipartite $n$-qubit quantum pure states and let $k \in \poly(n)$. If
\begin{align*}
    H_{\min}(A)_{\rho} \ge \log k\,,
\end{align*}
where $\rho_A\coloneq \tr_B\ketbra{\psi_{AB}}{\psi_{AB}}$, then
\begin{align*}
\CompMin(A^k|B^k)_{|\psi\rangle^{\otimes k}} \le -\Omega(k\log k)\,.
\end{align*}
\end{theorem}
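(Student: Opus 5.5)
By \Cref{th:inf.op.mean.fully}, it suffices to exhibit one efficiently implementable channel $T:B^k\to A'^k$ such that
\begin{align*}
d_A^k\,F_T\bigl(\psi^{\otimes k}_{AB}\bigr)\ge 2^{\Omega(k\log k)}.
\end{align*}
Then $\CompMin(A^k|B^k)\le-\log(d_A^kF_T)\le-\Omega(k\log k)$. Since the maximum over efficient channels can only be larger, one good protocol is enough. The natural candidate is the universal, state-agnostic entanglement concentration protocol of \cite{HM02}, implemented efficiently via the Schur transform \cite{CHW06,Kr19}. We feed it into the channel picture using \Cref{lem:channel-reduction}, which we take to convert an efficient LOCC-type concentration protocol on $B^k$ into an efficient channel $T$ whose output overlap with $\ket{\Omega_{A^kA'^k}}$ is controlled by the Schmidt rank and success probability of the concentrated state.

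\textbf{Step 1 (Schur–Weyl decomposition).} Write $\ket{\psi}^{\otimes k}=\sum_\lambda\sqrt{p_\lambda}\,\ket{\Phi_\lambda}$ with respect to the Schur–Weyl decomposition. Here $\lambda$ ranges over Young diagrams with $k$ boxes and at most $\min(d_A,d_B)$ rows. Each $\ket{\Phi_\lambda}$ is maximally entangled on the $S_k$-irrep part $\mathcal{P}_\lambda$, tensored with some state on $\mathcal{Q}_\lambda$.

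Measuring $\lambda$ on $B^k$ (the weak Schur sampling step) and applying the inverse Schur transform leaves, with probability $p_\lambda$, a maximally entangled state of Schmidt rank $\dim\mathcal{P}_\lambda$. Bob's channel $T$ then places his half of $\mathcal P_\lambda$ into a subsystem of $A'^k$ aligned with the local basis Alice's register would hold after the Schur transform. This is the delicate point. Since $T$ acts only on $B$, the target $\ket{\Omega_{A^kA'^k}}$ is fixed in the computational basis, so Bob must also compensate for the unknown Schmidt basis of $\ket\psi$.

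\textbf{Step 2 (fidelity bound).} Use the standard fact that if $(\Id\otimes T)(\rho)$ contains a maximally entangled state of rank $r$ in any orientation, then Bob can achieve overlap at least $r/d_A^k$ with $\ket{\Omega}$. Concretely, because of the transpose trick, the relevant quantity is $\max_T d_A^kF_T = 2^{-H_{\min}}$, and this is lower bounded by the squared sum of the Schmidt coefficients of the branch obtained. Equivalently, one can bound $d_A^kF_T$ directly by $\bigl(\sum_\lambda\sqrt{p_\lambda}\,\sqrt{\dim\mathcal P_\lambda}\cdots\bigr)^2$.

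I would instead argue more crudely. First, conditioning on $\lambda$ in a typical set gives $d_A^kF_T\ge \Pr[\text{typical}]\cdot\dim\mathcal P_\lambda$. Second, the hypothesis $H_{\min}(A)\ge\log k$ means every Schmidt coefficient is at most $1/k$, so the spectrum has at least $k$ comparable entries. Hence the typical $\lambda$ has $\Omega(k)$ rows of length $O(1)$, or more generally is spread out. The hook length formula then gives $\log\dim\mathcal P_\lambda\ge\log k!-\sum\log\lambda_i!=\Omega(k\log k)$. Third, a union bound over the at most $\mathrm{poly}$-many diagrams handles the probability loss.

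\textbf{Main obstacle.} The hardest step is the typicality estimate in the regime where the number of boxes $k$ is comparable to the inverse of the maximal eigenvalue, since asymptotic Keyl–Werner concentration does not apply directly. Here I would lower bound $\mathbb E_\lambda\dim\mathcal P_\lambda$ via $\sum_\lambda p_\lambda\dim\mathcal P_\lambda\ge$ the probability that $k$ samples from the spectrum are mostly distinct. That probability is a birthday-type quantity, and it is $\Omega(1)$-ish given $\max_i\lambda_i\le1/k$ when combined with the expected number of distinct values being $\ge(1-1/e)k$. I would then pass from an expectation of $\log$ to the required fidelity by Markov's inequality.

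A second point to verify is that \Cref{lem:channel-reduction} absorbs the basis alignment into an efficient channel. Otherwise I would bound $\CompMin$ through the overlap with a maximally entangled state in a $\lambda$-dependent but efficiently computable basis.
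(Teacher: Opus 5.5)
\textbf{Main gap: the transfer to a channel acting on $B$ alone.} Your architecture is the paper's: reduce via \Cref{th:inf.op.mean.fully} to exhibiting one efficient channel, run the Hayashi--Matsumoto concentration protocol, and move it to Bob alone with \Cref{lem:channel-reduction}. The step you flagged as delicate is where the argument breaks, and the problem is not the basis.

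In the two-sided protocol Alice also discards $\mathcal{Q}^A_\lambda$ and re-embeds only $\mathcal{P}^A_\lambda$. That is what produces the per-block overlap $\dim\mathcal{P}_\lambda/d_A^k$. In $\CompMin$, however, $A^k$ is an untouched reference. After Bob discards $\mathcal{Q}^B_\lambda$, Alice's $\mathcal{Q}^A_\lambda$ stays in a mixed state $\sigma_\lambda$. Whatever $\tau_\lambda$ Bob places in the matching part of $A'^k$, the block then contributes only $\dim\mathcal{P}_\lambda\,\Tr[\sigma_\lambda\tau_\lambda^{\mathsf T}]/d_A^k$, and $\Tr[\sigma_\lambda\tau_\lambda^{\mathsf T}]\approx 1/\dim\mathcal{Q}_\lambda$ for nearly flat spectra. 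Your ``standard fact'' in Step~2 needs Alice's whole system to be in the rank-$r$ maximally entangled state, not merely to contain it as a tensor factor.

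The paper's proof has the same gap. In \Cref{lem:channel-reduction} the Kraus operators $C=A_{\lambda,u}^{\mathsf T}B_{\mu,v}$ satisfy
$\sum C^*C=\sum_{\mu,v}B_{\mu,v}^*\bigl(\sum_{\lambda,u}A_{\lambda,u}A_{\lambda,u}^*\bigr)^{\mathsf T}B_{\mu,v}$, not the expression with $\sum A^*A$ used there. Since $\sum_u A_{\lambda,u}A_{\lambda,u}^*=\dim\mathcal{Q}_\lambda\,W_\lambda W_\lambda^*$, the map $T$ is not trace preserving. For $k=1$, with embeddings aligned so the two-sided overlap is $1/d_A$, the construction gives $T(X)=\Tr[X]\,\id_{A'}$.

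\textbf{The statement itself fails, so no protocol can close this gap.} Take $n_A=n_B$ and $\psi$ Haar random on $\CC^{D}$ with $D=d_A^2$. For any channel $T$, the operator $M_T\coloneq(\Id\otimes T^*)(\ketbra{\Omega}{\Omega})\ge 0$ has $\Tr M_T=(d_B/d_A)^k=1$ and satisfies $F_T(\psi^{\otimes k})=\Tr[M_T\psi^{\otimes k}]$. Hence, for every integer $t\ge 1$,
\begin{align*}
\mathbb{E}_\psi\bigl[F_T(\psi^{\otimes k})^t\bigr]=\frac{\Tr\bigl[M_T^{\otimes t}\,\Pi_{\mathsf{sym}}\bigr]}{\binom{D+kt-1}{kt}}\le\frac{(kt)!}{d_A^{2kt}}\,.
\end{align*}
Markov's inequality plus a union bound over the at most $2^{r(n)}$ efficient channels (\Cref{lem:counting-choi}) gives
$\Pr_\psi[\exists\,T:\ d_A^kF_T\ge 1]\le 2^{r(n)}(kt)!\,d_A^{-kt}\le 2^{-n}$ for $t=r(n)+n$.
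Meanwhile $\|\rho_A\|_\infty=O(1/d_A)$ with high probability (cf.\ the proof of \Cref{lem:min.entr.Haar.state}), so $H_{\min}(A)_\rho\ge\log k$.

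So there are states meeting the hypothesis with $\CompMin(A^k|B^k)_{\psi^{\otimes k}}>0$; the same estimate even gives roughly $k\,(n_A-O(\log n))$. This contradicts $-\Omega(k\log k)$. It is just the Haar moment bound from the proof of \Cref{th:lower-bound.ineff} without its final weakening. It covers every efficient channel, including state-dependent ones and your $\lambda$-dependent-basis fallback.

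\textbf{Your concentration estimate.} The paper imports this estimate from \cite[Theorem~18]{LREJ25}, and it concerns only the two-sided quantity. Your own version of it is also flawed, independently of the above.
\begin{itemize}
\item The hook-length inequality is reversed. Hooks in row $i$ are at least $\lambda_i,\lambda_i-1,\dots,1$, so $\dim\mathcal{P}_\lambda\le k!/\prod_i\lambda_i!$.
\item ``Many rows of length $O(1)$'' is the wrong regime, since $\dim\mathcal{P}_{(1^k)}=1$. You need both $\lambda$ and its conjugate spread out, e.g.\ $\lambda_1,\ell(\lambda)\le k^{1-c}$, which gives $\log\dim\mathcal{P}_\lambda\ge\log k!-k\log(\lambda_1+\ell(\lambda))$.
\item There are $e^{\Theta(\sqrt{k})}$ Young diagrams with $k$ boxes, not polynomially many.
\item A birthday probability is at most $1$, so it cannot certify $\sum_\lambda p_\lambda\dim\mathcal{P}_\lambda\ge 2^{\Omega(k\log k)}$. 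The all-distinct probability is only $k!/k^k$ for a flat spectrum on exactly $k$ values.
\item Jensen already gives $\mathbb{E}\dim\mathcal{P}_\lambda\ge 2^{\mathbb{E}\log\dim\mathcal{P}_\lambda}$, so no Markov step is needed.
\end{itemize}
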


A sharper version, retaining the explicit dependence on $H_{\min}(A)_{\rho}$, is given in \Cref{th:upper.bound.Hmin.pure}. This result is complementary to the recent computational entanglement theory of \cite{LREJ25}, where the relevant figure of merit is the achievable approximate distillation yield for pure states under computationally efficient LOCC in the state-agnostic regime. Our viewpoint is different: complexity is built into the entropy itself. Accordingly, \Cref{th:upper.bound.Hmin.pure} is not a rate statement for a particular LOCC conversion task, but a one-shot deterministic bound on the total entanglement that remains operationally accessible under efficient operations on $B$. In the pure-state regime, it mirrors the same logarithmic scale identified in \cite{LREJ25}, while placing it in a framework that is not tied to LOCC distillation and extends naturally to mixed states for example.

The natural question is whether the logarithmic scaling above is merely an artifact of the concentration protocol used in the proof of \Cref{th:upper.bound.Hmin.pure}  or a genuine property of the state as a consequence of computational constraints. The next theorem shows that it is essentially the latter. There exist highly entangled pure states whose information-theoretic conditional min-entropy nearly saturates the quantum bound, while their computational conditional min-entropy stays much closer to zero.

\begin{theorem}
Let $m:\mathbb{N}\to\mathbb{N}$ satisfy
\begin{align*}
    m(n_A)=\omega(\log n_A)\,,
\end{align*}
and let $k\in\poly(n)$. Then there exists a family of bipartite $n$-qubit pure states
$\{|\psi_{AB,n}\rangle\}_n$, on a bipartition with $n_A=n_B=\frac{n}{2}$, such that
\begin{align*}
    H_{\min}(A^k|B^k)_{|\psi_{n}\rangle^{\otimes k}}
    &=
    -k\bigl(n_A-O(1)\bigr)\,, \\
    \CompMin(A^k|B^k)_{|\psi_{n}\rangle^{\otimes k}}
    &\ge
    -k\,m(n_A)\,.
\end{align*}
\end{theorem}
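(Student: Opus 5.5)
We prove the statement by a counting argument. The idea is to take states that are random within a family of pseudo-maximally-entangled states, or more simply to pick a state that is nearly maximally entangled but whose entanglement basis is \emph{hidden} from every circuit of size $p(n)$.

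\textbf{Construction.} Concretely, we consider states of the form
\begin{align*}
|\psi_{AB}\rangle=(\id_A\otimes U_B)\,|\Phi_{AB}\rangle ,
\end{align*}
where $|\Phi\rangle$ is a state on $n_A+n_B$ qubits with Schmidt coefficients close to flat, and $U_B$ is drawn Haar-randomly, or from a unitary $t$-design of sufficiently high order. With all Schmidt coefficients equal to $2^{-n_A}$ one has exactly $H_{\min}(A|B)=-n_A$. Since the min-entropy of a pure state is additive and depends only on the Schmidt spectrum, $H_{\min}(A^k|B^k)=-k\,n_A$ for every choice of $U_B$; this fixes the first claim. The $O(1)$ slack leaves room for a slightly non-flat spectrum if that is convenient for the derandomisation.

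\textbf{Key estimate.} For the second claim we use \Cref{th:inf.op.mean.fully} on the $k$-fold system, so it suffices to show
\begin{align*}
d_A^k\max_T F_T\bigl(\psi^{\otimes k}\bigr)\le 2^{k m(n_A)}
\end{align*}
for all channels $T:B^k\to A'^k$ with $C(T)\le p(n)$. For a fixed such $T$, with dual Choi operator $J=J(T^*)$, we have
\begin{align*}
d_A^kF_T=\Tr\!\left[J\,\psi_{A^kB^k}^{\otimes k}\right]/d_A^k .
\end{align*}
For a maximally entangled $|\Phi\rangle$, this equals the overlap of the Choi state of $T$ with $(U^{\mathsf T})^{\otimes k}$, up to transposition. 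The goal is to show that, averaged over $U$, this overlap is about $2^{O(k)}$ times the overlap for a random unitary, which is essentially $O(1)$ per copy. Concretely, we compute
\begin{align*}
\mathbb E_U\, d_A^kF_T
=\bigl\langle \text{Choi}(T),\, \mathbb E_U \bigl(\Phi_U\bigr)^{\otimes k}\bigr\rangle .
\end{align*}
Here $\mathbb E_U(\Phi_U)^{\otimes k}$ is the twirl, a combination of permutation operators across the $k$ copies with weights of order $k!/d_A^k$. This gives $\mathbb E_U\,d_A^kF_T\le k!\le 2^{k\log k}$.

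\textbf{Concentration and union bound.} Next we use concentration of measure (Lévy's lemma: $F_T$ is Lipschitz in $U$ with constant $O(\sqrt{d_A^k})$ relative to its scale, or one can use moments) to get tail probabilities $\exp(-\Omega(2^{n_A}\,\epsilon^2))$. The number of circuits of size $p(n)$ over the finite gate set is $2^{O(p(n)\log p(n))}$, so a union bound yields a single $U$ such that every efficient $T$ satisfies $d_A^kF_T\le k!\,2^{O(k)}$. Since $k\in\poly(n)$, we have $\log k=O(\log n_A)$, so $k\log k+O(k)\le k\,m(n_A)$ once $m=\omega(\log n_A)$ and $n$ is large. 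This gives $\CompMin\ge -k\,m(n_A)$.

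\textbf{Main obstacle.} The hardest step is making this uniform over efficient channels. Arbitrary CPTP $T$ of size $p(n)$ act with ancillas and measurements, so the operator $J$ has operator norm up to $d_A^k$. The twirl bound must therefore be done carefully: the per-copy permutation weights must be controlled without losing a $d_A$ factor. The step I would try first is to bound $\Tr[\text{Choi}(T)P_\pi]\le d_A^{k}$ for each permutation $\pi$, using the fact that the Choi state is normalized. After that, the concentration exponent must still beat the $2^{\poly(n)}$ union bound, which holds because $2^{n_A}$ is superpolynomial.
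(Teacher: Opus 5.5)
Your construction and the information-theoretic half are fine, and they are cleaner than the paper's. For $|\psi_U\rangle=(\id_A\otimes U)|\Omega_{AB}\rangle$ you get $H_{\min}(A^k|B^k)=-k\,n_A$ exactly for every $U$; the paper instead samples Haar-random pure states on $AB$ and needs Marchenko--Pastur asymptotics. Your first-moment computation is also essentially right, apart from a normalization slip. With $J=J(T^*)$ one has $d_A^kF_T=\Tr[J\,\psi_U^{\otimes k}]$, with no extra factor $1/d_A^k$. Since $J\ge 0$ and $\Tr J=d_A^k$, the twirl then gives $\mathbb{E}_U\,d_A^kF_T\le k!/(d_A-k+1)^k$, which is much smaller than your $k!$.

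The genuine gap is the concentration step. Set $f_T(U)=\Tr[J\,\psi_U^{\otimes k}]$. Because $\|J\|_\infty$ can be as large as $d_A^k$, the Lipschitz constant of $f_T$ in the Hilbert--Schmidt metric on $\mathrm{U}(d_A)$ is of order $d_A^{k-1/2}$, up to $\poly(k)$ factors. Already for the efficient channel $T=\Id$ one has $f_T(U)=|\Tr U|^{2k}/d_A^k$. This equals $d_A^k$ at $U=\id$ and has gradient of order $\sqrt{k}\,d_A^{k-1/2}$ at suitable points.

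L\'evy's lemma with this constant only gives $\Pr[f_T\ge\mathbb{E}f_T+\epsilon]\le\exp\bigl(-c\,\epsilon^2/(\poly(k)\,d_A^{2k-2})\bigr)$. This is vacuous at the scale $\epsilon\approx 2^{k\,m(n_A)}$ you need (take e.g.\ $m(n_A)=\log^2 n_A$) once $k\ge 2$. Passing to $\sqrt{f_T}$ only rescues $k=2$. So the claimed tail $\exp(-\Omega(2^{n_A}\epsilon^2))$ does not follow. The large operator norm of $J$, which you flag as the main obstacle, is harmless in the first moment (only $\Tr J$ enters there) but fatal exactly here.

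The repair is the moment method you mention only in passing, which is the paper's route.
\begin{itemize}
\item Since $f_T^t=\Tr[J^{\otimes t}\psi_U^{\otimes kt}]$ has the same form with $N=kt$ copies, the same twirl applies.
\item $\mathbb{E}_U\psi_U^{\otimes N}$ is block diagonal in the Schur--Weyl decomposition, with eigenvalues $\dim\VV_\lambda/(d_A^N\dim\UU^{(d_A)}_\lambda)\le N!/(d_A^N(d_A-N+1)^N)$ by the hook-content formula.
\item Hence $\mathbb{E}_U f_T^t\le (kt)!/(d_A-kt+1)^{kt}$.
\item Markov's inequality with $t=r(n)$ gives $\Pr[f_T\ge 1]\le 2^{-kt(n_A-O(\log n))}$, which beats the $2^{r(n)}$ union bound of \Cref{lem:counting-choi}.
\end{itemize}
So some $U$ has $d_A^kF_T<1$ for every efficient $T$, i.e.\ $\CompMin(A^k|B^k)_{\psi_U^{\otimes k}}\ge 0\ge -k\,m(n_A)$. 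The paper runs the same Markov-plus-union-bound argument. It uses $\mathbb{E}\,\psi^{\otimes kt}=\Pi_{\mathsf{sym}}/\binom{d+kt-1}{kt}$ for Haar-random pure states in place of your twirl.
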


A complete proof appears in \Cref{th:lower-bound.ineff}. Taken together, these two theorems show that the pure-state regime is already tightly constrained by complexity. In the worst case, pure states that are nearly maximally entangled from the information-theoretic viewpoint can nevertheless appear only weakly correlated to efficient operations.

\subsection{Mixed States}

We now turn to mixed states, where mixing can hide correlations even more effectively than in the pure-state setting. The construction is based on the generalized Hilbert--Schmidt ensemble, obtained by tracing out $m$ qubits from a Haar-random $(n+m)$-qubit pure state \cite{BMB+25, Col15, Ne07, ZPN11, Ha98}. For these states, the informational conditional min-entropy remains highly negative, while the computational conditional min-entropy becomes almost maximal.

\begin{theorem}
Fix an $A|B$ bipartition with $n_A=n_B= \frac{n}{2}$, let $m:\mathbb{N}\to\mathbb{N}$ satisfy
\begin{align*}
    m(n_A)=\omega(\log n_A)\,,
\end{align*}
and let $k\in\poly(n)$. Then there exists a family of $n$-qubit mixed states
$\{\rho_{n}\}_n$ such that
\begin{align*}
    H_{\min}(A^k|B^k)_{\rho_{n}^{\otimes k}}
    &\le -k\bigl(n_A-m(n_A)\bigr)+\negl(n)\,, \\
    \CompMin(A^k|B^k)_{\rho_{n}^{\otimes k}}
    &\ge k\,n_A-\negl(n)\,.
\end{align*}
\end{theorem}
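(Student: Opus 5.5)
We will establish the two bounds separately, using the generalized Hilbert--Schmidt ensemble. Concretely, let $\ket{\phi}$ be a (pseudo)random pure state on $n+m$ qubits. The register $A$ has $n_A$ qubits, and $B$ together with a hidden register $R$ of $m=m(n_A)$ qubits forms the rest. We set $\rho_n=\tr_R\ketbra{\phi}{\phi}$. Existence of a suitable family will follow from a probabilistic argument over Haar-random (or unitary-design) states, combined with a union bound over the finitely many circuits of size $p(n)$ for the gate set $\mathcal G$. There are at most $2^{O(p(n)\log p(n))}$ such circuits, so a concentration statement that fails with doubly exponentially small probability for each fixed circuit suffices.

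\textbf{Information-theoretic bound.} Since $\rho_{AB}^{\otimes k}$ arises from the pure state $\ket{\phi}^{\otimes k}$ with purifying system $R^k$, we use the duality $H_{\min}(A|B)=-H_{\max}(A|R)$. It then suffices to lower bound $H_{\max}(A|R)_\phi\ge n_A-m$, up to negligible error, for a typical Haar state. For this we use $H_{\max}(A|R)\ge H(A|R)\ge H(A)-H(R)\ge H(A)-m$. By Page-type concentration, $H(A)\ge n_A-\negl(n)$, because $n_A<n_B+m$ and the reduced state on $A$ is exponentially close to maximally mixed with overwhelming probability. Additivity of $H_{\min}$ on tensor products of a state with itself, together with the duality for $\ket{\phi}^{\otimes k}$, gives the factor $k$.

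\textbf{Computational bound.} By \Cref{th:inf.op.mean.fully}, it suffices to show that for every efficient channel $T:B^k\to A'^k$,
\begin{align*}
F_T(\rho^{\otimes k})\le 2^{-2kn_A}\bigl(1+\negl(n)\bigr).
\end{align*}
With $d_A=2^{kn_A}$, this yields $\CompMin\ge kn_A-\negl(n)$, since $F_T\ge d_A^{-2}$ is the value for the maximally mixed state. Write $F_T(\rho^{\otimes k})=\Tr[\rho^{\otimes k} M_T]$, where $M_T=(\Id\otimes T^*)(\Pi_\Omega)$ is an efficiently implementable two-outcome effect on $(AB)^k$. If $\rho^{\otimes k}$ is computationally indistinguishable from $(\id/2^n)^{\otimes k}$ to precision $\negl(n)\cdot 2^{-2kn_A}$, we are done. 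Additive negligible error is too weak here, because the target quantity is exponentially small.

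We therefore control the multiplicative deviation directly. For a fixed $M$ with $0\le M\le\id$, the average $\mathbb{E}_\phi\Tr[\rho^{\otimes k}M]$ under Haar measure equals $\Tr[M\,\mathbb{E}\rho^{\otimes k}]$. This is $\Tr[M\id]/2^{nk}$ plus corrections from the permutation operators in the $k$-th moment, which are suppressed by $2^{-m}$ relative factors. Because $m=\omega(\log n)$ and $k\in\poly(n)$, terms of order $k^2 2^{-m}$ are negligible. For $M_T$ we have $\Tr[M_T]/2^{nk}=2^{-2kn_A}$ when $n_B=n_A$, and the permutation corrections to $\Tr[M_T\,\mathbb E\rho^{\otimes k}]$ come with the same multiplicative $2^{-m}$ suppression. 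Lévy's lemma then gives concentration of the Lipschitz function $\phi\mapsto\Tr[\rho^{\otimes k}M_T]$ around its mean. The Lipschitz constant must be taken relative to $\|M_T\|$ weighted appropriately so that fluctuations are of order $2^{-2kn_A}\cdot 2^{-\Omega(n)}$. The resulting probability of deviation is $\exp(-2^{\Omega(n)})$, which beats the union bound over circuits.

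\textbf{Main obstacle.} The main obstacle is obtaining \emph{multiplicative} rather than additive error control. Both the moment computation, with the permutation-operator expansion of $\mathbb E[\ketbra{\phi}{\phi}^{\otimes k}]$ partially traced over $R^k$, and the concentration must be sharp relative to the tiny baseline $2^{-2kn_A}$. If Lévy's lemma with the naive Lipschitz constant is too lossy, I would instead bound the second moment $\mathbb E_\phi F_T^2$ via the $2k$-th Haar moment and apply Chebyshev's inequality per circuit. Should the resulting failure probability not beat the circuit count, I would replace the union bound with a net argument over the cone generated by $\mathbf J^{p(n)}_{n,A|B}$.
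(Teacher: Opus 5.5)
Your architecture matches the paper's: a GHSE state, then the reduction via \Cref{th:inf.op.mean.fully} to a \emph{multiplicative} bound $F_T(\rho^{\otimes k})\le(1+\varepsilon)2^{-nk}$ for each of the at most $2^{r(n)}$ efficient channels, then the permutation expansion of the first moment with $2^{-m}$ suppression, then a union bound. The gap is in the concentration step, which carries the whole computational bound; it fails as you propose it.

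\textbf{Why L\'evy's lemma fails.} The most favourable Lipschitz function is $\sqrt{F_T}=\|(\sqrt{M_T}\otimes\id_{R^k})\ket{\phi}^{\otimes k}\|$, with constant $k\|M_T\|_\infty^{1/2}$. The value $\|M_T\|_\infty=1$ does occur: for $T=\Id$ one has $M_T=\ketbra{\Omega}{\Omega}$. Relative precision $\varepsilon$ then costs a tail of order $\exp(-c\,2^{n+m-nk}\varepsilon^2/k^2)$, which is vacuous for $k\ge 2$ and $m<n$.

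No reweighting of the Lipschitz constant repairs this. The fluctuations of $F_T$ are governed by the rank $2^m$ of $\rho$, not by the sphere dimension $2^{n+m}$. Already for $k=1$ and $T=\Id$, $2^n\bra{\Omega}\rho\ket{\Omega}$ is Beta distributed, with relative fluctuations $\approx 2^{-m/2}$ and upper tail $\approx\exp(-2^m\varepsilon^2/2)$. So your claimed relative precision $2^{-\Omega(n)}$ with failure probability $\exp(-2^{\Omega(n)})$ is false whenever $m=o(n)$.

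\textbf{Why the fallbacks fail.} The second moment with Chebyshev gives per-circuit failure probability $O(k^2/(2^m\varepsilon^2))$. This cannot absorb the factor $2^{r(n)}$ when, say, $m=\log^2 n$. A net over the cone only enlarges the family you must union bound over, since the maximum in \Cref{th:inf.op.mean.fully} is already attained on the finite set $\mathbf{J}^{p(n)}_{n,A|B}$.

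\textbf{The missing idea} is to use moments of order $t$ growing with $2^m$. Set $d_n=2^n$ and $d_m=2^m$. The GHSE moment formula $\mathbb{E}[\rho^{\otimes K}]=\frac{(d_nd_m-1)!}{(d_nd_m+K-1)!}\sum_{\pi\in S_K}d_m^{\mathrm{cycles}(\pi)}\hat\pi_n$ is exact for every $K$, and $|\Tr[M_T^{\otimes t}\hat\pi_n]|\le\Tr[M_T]^t=1$. Hence $\mathbb{E}[F_T^t]\le d_n^{-kt}\prod_{j<kt}(1+j/d_m)\le d_n^{-kt}e^{k^2t^2/(2d_m)}$. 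Markov's inequality with $t\approx d_m\ln(1+\varepsilon)/k^2$ gives per-circuit failure probability $\exp(-d_m\ln^2(1+\varepsilon)/(2k^2))$. This beats $2^{r(n)}$ already for $\varepsilon=\Theta\bigl(k\sqrt{(r(n)+n)/d_m}\bigr)=\negl(n)$, and it is exactly how the paper argues.

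\textbf{A smaller slip in the entropy bound.} Your chain uses $H(A|R)\ge H(A)-H(R)$, which is false in general. It is equivalent to $H(AR)\ge H(A)$, which a maximally entangled $AR$ violates. Use purity of $\phi_{ABR}$ instead: $H(A|R)=H(B)-H(R)\ge H(B)-m$, then apply Page's bound to $B$, valid since $n_B<n_A+m$. With this fix your dual-form argument coincides with the paper's bound $H_{\min}(A|B)\le H(A|B)=H(R)-H(B)$.
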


The mixed-state construction is proven in \Cref{sec:sep.mixed.state}. This separation is qualitatively stronger than the gap achieved in the pure-state setting. There, computational constraints are sufficient to certify at least some of the contained entanglement. For mixed states, by contrast, they render those correlations essentially inaccessible. In this sense, a state that is highly correlated can nevertheless appear, to an efficient observer, as if it were almost maximally uncorrelated.

\section{Discussion and open problems}

Computational complexity can reshape the scaling of operationally accessible quantum correlations. We capture this phenomenon by showing large gaps between $\CompMin(A|B)$ and $H_{\min}(A|B)$. This distinction already appears when considering pure states, where one-shot processing of polynomially many copies leaves only a logarithmic amount of entanglement within efficient reach in the worst case. For mixed states, this effect is sharper: we construct an ensemble with highly negative information-theoretic conditional min-entropy but a complexity-constrained conditional min-entropy that is, up to negligible corrections, as large as that of an uncorrelated maximally mixed state. In this regime, computational complexity can render an extensive amount of entanglement or quantum side information inaccessible for any efficient observer.

An important open problem is to understand how explicit and how physically natural this phenomenon can be made. Our strongest mixed-state separation is obtained from the generalized Hilbert--Schmidt ensemble~\cite{HLW06,Col15,Ha98,ZPN11}. Since this ensemble is not efficiently constructable~\cite{NC10,Gold08}, a natural next step is to replace this construction with explicit efficient state families, ideally under standard cryptographic assumptions~\cite{BFG+23} or related notions of quantum pseudorandomness~\cite{BS20}. Equally important is to determine whether comparable separations arise for physically motivated families, such as states generated by shallow local dynamics that are pseudorandom to efficient observers~\cite{BHHP25}. Resolving these questions would show whether almost-maximal separations between information-theoretic and operationally accessible side information are exceptional consequences of randomness, or a robust feature of complexity-constrained quantum many-body systems.

Moreover, in this work, we unify two previous, separately established frameworks in computational quantum information theory~\cite{YHK25,avidan2025quantum,AHRA25} by constructing a computational max-divergence which recovers the operationally meaningful computational min-entropy from~\cite{avidan2025quantum,AHRA25}. Interestingly, for bipartite states, the computational min-entropy convergences to the information-theoretic min-entropy in the absence of computational constraints~\cite[Lemma~32]{AHRA25}. Nevertheless, the same cannot be  said for the underlying computational max-divergence which does not converge to its information-theoretic counterpart. This follows immediately from \Cref{lem:super.additive.Dmax}. 
 However, as we show in \Cref{subesc:compmax.mesured.divergenceYHK25equive}, our construction is rich enough to recover the computational max-divergence from~\cite{YHK25}, which considers single-party states and yields $D_{\mathrm\max}$ when the computational constraints are removed~\footnote{We note that these states would be stored in the register $B$, while $A$ holds an uncorrelated ancillary system.}.

Our work suggests viewing complexity-constrained quantum information not only as a collection of restricted tasks, but as a modification of the order structure and operational primitives from which entropic quantities arise. From this perspective, several research directions appear natural. An immediate question that arises is whether our approach can be extended 
to more general Rényi-divergences~\cite{T16} and whether they admit equally sharp operational interpretations. Moreover, it is not yet clear how these entropies should be smoothed. Whereas computational notions of smoothing allow one to more naturally study standard cryptographic objects  such as pseudo-random
generators~\cite{CCL+17}, information-theoretic smoothing yields desirable chain rules~\cite{avidan2025quantum,AHRA25}. Finding a middle-ground would thus be of practical interest.
Lastly, it is not yet fully understood how  these computational divergences and entropies can generally  be adapted to settings where one considers other notions of the computational complexity underpinning quantum states and channels; see, e.g.,~\cite{bostanci2023unitary}.

\section{Acknowledgments}

We thank Zoe García del Toro for insightful discussions and assistance with figures. We would also like to acknowledge Johannes Jakob Meyer and Verena Yacoub for helpful feedback on a previous version of this draft.

ÁY is supported by the European Union's Horizon Europe Framework Programme under the Marie Sklodowska Curie Grant No. 101072637, Project Quantum-Safe Internet (QSI). NA and TAH are supported by the Air Force Office of Scientific Research under award number FA9550-22-1-0391. JK acknowledges support from the Program QuanTEdu-France n° ANR-22-CMAS-0001 France 2030.

\bibliographystyle{unsrtnat}

\bibliography{cited_references_only}

\begin{thebibliography}{59}
\providecommand{\natexlab}[1]{#1}
\providecommand{\url}[1]{\texttt{#1}}
\expandafter\ifx\csname urlstyle\endcsname\relax
  \providecommand{\doi}[1]{doi: #1}\else
  \providecommand{\doi}{doi: \begingroup \urlstyle{rm}\Url}\fi

\bibitem[Konig et~al.(2009)Konig, Renner, and Schaffner]{KRS09}
Robert Konig, Renato Renner, and Christian Schaffner.
\newblock The operational meaning of min- and max-entropy.
\newblock \emph{IEEE Transactions on Information Theory}, 55\penalty0 (9):\penalty0 4337–4347, September 2009.
\newblock ISSN 0018-9448.
\newblock \doi{10.1109/tit.2009.2025545}.
\newblock URL \url{http://dx.doi.org/10.1109/TIT.2009.2025545}.

\bibitem[Eisert et~al.(2010)Eisert, Cramer, and Plenio]{ECP10}
J.~Eisert, M.~Cramer, and M.~B. Plenio.
\newblock Colloquium: Area laws for the entanglement entropy.
\newblock \emph{Rev. Mod. Phys.}, 82:\penalty0 277--306, Feb 2010.
\newblock \doi{10.1103/RevModPhys.82.277}.
\newblock URL \url{https://link.aps.org/doi/10.1103/RevModPhys.82.277}.

\bibitem[Schollw{\"o}ck(2011)]{Schollwoeck11}
Ulrich Schollw{\"o}ck.
\newblock The density-matrix renormalization group in the age of matrix product states.
\newblock \emph{Annals of Physics}, 326\penalty0 (1):\penalty0 96--192, 2011.
\newblock \doi{10.1016/j.aop.2010.09.012}.

\bibitem[Or{\'u}s(2014)]{Orus14}
Rom{\'a}n Or{\'u}s.
\newblock A practical introduction to tensor networks: Matrix product states and projected entangled pair states.
\newblock \emph{Annals of Physics}, 349:\penalty0 117--158, 2014.
\newblock \doi{10.1016/j.aop.2014.06.013}.

\bibitem[Bouland et~al.(2024)Bouland, Fefferman, Ghosh, Metger, Vazirani, Zhang, and Zhou]{BFG+23}
Adam Bouland, Bill Fefferman, Soumik Ghosh, Tony Metger, Umesh Vazirani, Chenyi Zhang, and Zixin Zhou.
\newblock Public-key pseudoentanglement and the hardness of learning ground state entanglement structure.
\newblock In \emph{39th Computational Complexity Conference (CCC 2024)}, volume 300 of \emph{Leibniz International Proceedings in Informatics (LIPIcs)}, pages 21:1--21:23, Dagstuhl, Germany, 2024. Schloss Dagstuhl -- Leibniz-Zentrum f{\"u}r Informatik.
\newblock \doi{10.4230/LIPIcs.CCC.2024.21}.
\newblock URL \url{https://doi.org/10.4230/LIPIcs.CCC.2024.21}.

\bibitem[Feng and Ippoliti(2025)]{FI25}
Xiaozhou Feng and Matteo Ippoliti.
\newblock Dynamics of pseudoentanglement.
\newblock \emph{Journal of High Energy Physics}, 2025\penalty0 (2), February 2025.
\newblock \doi{10.1007/JHEP02(2025)128}.
\newblock URL \url{https://doi.org/10.1007/JHEP02(2025)128}.

\bibitem[Bansal et~al.(2025)Bansal, Mok, Bharti, Koh, and Haug]{BMB+25}
Nikhil Bansal, Wai-Keong Mok, Kishor Bharti, Dax~Enshan Koh, and Tobias Haug.
\newblock Pseudorandom density matrices.
\newblock \emph{PRX Quantum}, 6:\penalty0 020322, May 2025.
\newblock \doi{10.1103/PRXQuantum.6.020322}.
\newblock URL \url{https://link.aps.org/doi/10.1103/PRXQuantum.6.020322}.

\bibitem[Ryu and Takayanagi(2006)]{RT06}
Shinsei Ryu and Tadashi Takayanagi.
\newblock Holographic derivation of entanglement entropy from ads/cft.
\newblock \emph{Physical Review Letters}, 96:\penalty0 181602, 2006.
\newblock \doi{10.1103/PhysRevLett.96.181602}.

\bibitem[Susskind(2016)]{Susskind14}
Leonard Susskind.
\newblock Entanglement is not enough.
\newblock \emph{Fortschritte der Physik}, 64\penalty0 (1):\penalty0 49--71, 2016.
\newblock \doi{10.1002/prop.201500095}.
\newblock URL \url{https://doi.org/10.1002/prop.201500095}.

\bibitem[{Arnon-Friedman} et~al.(2023){Arnon-Friedman}, Brakerski, and Vidick]{ABV23}
Rotem {Arnon-Friedman}, Zvika Brakerski, and Thomas Vidick.
\newblock {Computational Entanglement Theory}, 2023.
\newblock URL \url{https://arxiv.org/abs/2310.02783}.

\bibitem[Cheng et~al.(2025)Cheng, Feng, and Ippoliti]{CFI25}
Zihan Cheng, Xiaozhou Feng, and Matteo Ippoliti.
\newblock Pseudoentanglement from tensor networks.
\newblock \emph{Physical Review Letters}, 135\penalty0 (2):\penalty0 020403, 2025.
\newblock \doi{10.1103/7p1r-r2p6}.
\newblock URL \url{https://doi.org/10.1103/7p1r-r2p6}.

\bibitem[Avidan and Arnon(2026)]{avidan2025quantum}
Noam Avidan and Rotem Arnon.
\newblock Quantum computational unpredictability entropy and quantum leakage resilience.
\newblock \emph{IEEE Transactions on Information Theory}, pages 1--1, 2026.
\newblock \doi{10.1109/TIT.2026.3658830}.

\bibitem[Avidan et~al.(2025)Avidan, Hahn, Renes, and Arnon]{AHRA25}
Noam Avidan, Thomas~A. Hahn, Joseph~M. Renes, and Rotem Arnon.
\newblock Fully quantum computational entropies, 2025.
\newblock URL \url{https://arxiv.org/abs/2506.14068}.

\bibitem[Yángüez et~al.(2026)Yángüez, Hahn, and Kochanowski]{YHK25}
Álvaro Yángüez, Thomas~A. Hahn, and Jan Kochanowski.
\newblock Efficient quantum measurements: Computational max- and measured rényi divergences and applications.
\newblock \emph{IEEE Transactions on Information Theory}, pages 1--1, 2026.
\newblock \doi{10.1109/TIT.2026.3680247}.

\bibitem[Matsumoto and Hayashi(2007)]{HM02}
Keiji Matsumoto and Masahito Hayashi.
\newblock Universal distortion-free entanglement concentration.
\newblock \emph{Physical Review A}, 75\penalty0 (6):\penalty0 062338, 2007.
\newblock \doi{10.1103/PhysRevA.75.062338}.
\newblock URL \url{https://doi.org/10.1103/PhysRevA.75.062338}.

\bibitem[Leone et~al.(2025)Leone, Rizzo, Eisert, and Jerbi]{LREJ25}
Lorenzo Leone, Jacopo Rizzo, Jens Eisert, and Sofiene Jerbi.
\newblock Entanglement theory with limited computational resources.
\newblock \emph{Nature Physics}, 21:\penalty0 1847--1854, 2025.
\newblock \doi{10.1038/s41567-025-03048-8}.
\newblock URL \url{https://doi.org/10.1038/s41567-025-03048-8}.

\bibitem[Meyer et~al.(2025)Meyer, Raza, Rizzo, Leone, Jerbi, and Eisert]{MRRLJE25}
Johannes~Jakob Meyer, Asad Raza, Jacopo Rizzo, Lorenzo Leone, Sofiene Jerbi, and Jens Eisert.
\newblock Computational relative entropy, 2025.
\newblock URL \url{https://arxiv.org/abs/2509.20472}.

\bibitem[Meyer et~al.(2026)Meyer, Rizzo, Raza, Leone, Jerbi, and Eisert]{MRRLJE26}
Johannes~Jakob Meyer, Jacopo Rizzo, Asad Raza, Lorenzo Leone, Sofiene Jerbi, and Jens Eisert.
\newblock The computational two-way quantum capacity, 2026.
\newblock URL \url{https://arxiv.org/abs/2601.15393}.

\bibitem[Hsiao et~al.(2007)Hsiao, Lu, and Reyzin]{HLR07}
Chun-Yuan Hsiao, Chi-Jen Lu, and Leonid Reyzin.
\newblock Conditional computational entropy, or toward separating pseudoentropy from compressibility.
\newblock In \emph{Proceedings of the 26th Annual International Conference on Advances in Cryptology}, EUROCRYPT '07, page 169–186, Berlin, Heidelberg, 2007. Springer-Verlag.
\newblock ISBN 9783540725398.
\newblock \doi{10.1007/978-3-540-72540-4_10}.
\newblock URL \url{https://doi.org/10.1007/978-3-540-72540-4_10}.

\bibitem[H{\aa}stad et~al.(1999)H{\aa}stad, Impagliazzo, Levin, and Luby]{HILL99}
Johan H{\aa}stad, Russell Impagliazzo, Leonid~A Levin, and Michael Luby.
\newblock A pseudorandom generator from any one-way function.
\newblock \emph{SIAM Journal on Computing}, 28\penalty0 (4):\penalty0 1364--1396, 1999.

\bibitem[Chen et~al.(2017)Chen, Chung, Lai, Vadhan, and Wu]{CCL+17}
Yi-Hsiu Chen, Kai-Min Chung, Ching-Yi Lai, Salil~P. Vadhan, and Xiaodi Wu.
\newblock Computational notions of quantum min-entropy, 2017.
\newblock URL \url{https://arxiv.org/abs/1704.07309}.

\bibitem[Yunger~Halpern et~al.(2022)Yunger~Halpern, Kothakonda, Haferkamp, Munson, Eisert, and Faist]{YKH+22}
Nicole Yunger~Halpern, Naga B.~T. Kothakonda, Jonas Haferkamp, Anthony Munson, Jens Eisert, and Philippe Faist.
\newblock Resource theory of quantum uncomplexity.
\newblock \emph{Physical Review A}, 106\penalty0 (6), December 2022.
\newblock ISSN 2469-9934.
\newblock \doi{10.1103/physreva.106.062417}.
\newblock URL \url{http://dx.doi.org/10.1103/PhysRevA.106.062417}.

\bibitem[Munson et~al.(2025)Munson, Kothakonda, Haferkamp, Yunger~Halpern, Eisert, and Faist]{MKN+25}
Anthony Munson, Naga Bhavya~Teja Kothakonda, Jonas Haferkamp, Nicole Yunger~Halpern, Jens Eisert, and Philippe Faist.
\newblock Complexity-constrained quantum thermodynamics.
\newblock \emph{PRX Quantum}, 6\penalty0 (1), March 2025.
\newblock \doi{10.1103/PRXQuantum.6.010346}.
\newblock URL \url{https://doi.org/10.1103/PRXQuantum.6.010346}.

\bibitem[Tomamichel(2016)]{T16}
Marco Tomamichel.
\newblock \emph{Quantum Information Processing with Finite Resources}.
\newblock Springer International Publishing, 2016.
\newblock ISBN 9783319218915.
\newblock \doi{10.1007/978-3-319-21891-5}.
\newblock URL \url{http://dx.doi.org/10.1007/978-3-319-21891-5}.

\bibitem[Khatri and Wilde(2024)]{KW24}
Sumeet Khatri and Mark~M. Wilde.
\newblock Principles of quantum communication theory: A modern approach, 2024.
\newblock URL \url{https://arxiv.org/abs/2011.04672}.

\bibitem[Brand{\~a}o and Datta(2011)]{BD11}
Fernando G. S.~L. Brand{\~a}o and Nilanjana Datta.
\newblock One-shot rates for entanglement manipulation under non-entangling maps.
\newblock \emph{IEEE Transactions on Information Theory}, 57\penalty0 (3):\penalty0 1754--1760, 2011.
\newblock \doi{10.1109/TIT.2011.2104531}.

\bibitem[Datta(2009)]{D09}
Nilanjana Datta.
\newblock Min- and max-relative entropies and a new entanglement monotone.
\newblock \emph{IEEE Transactions on Information Theory}, 55\penalty0 (6):\penalty0 2816--2826, 2009.
\newblock \doi{10.1109/TIT.2009.2018325}.

\bibitem[Nielsen and Chuang(2010)]{NC10}
Michael~A. Nielsen and Isaac~L. Chuang.
\newblock \emph{{Quantum Computation and Quantum Information: 10th Anniversary Edition}}.
\newblock Cambridge University Press, 2010.
\newblock \doi{10.1017/CBO9780511976667}.

\bibitem[Choi(1975)]{CHOI1975285}
Man-Duen Choi.
\newblock Completely positive linear maps on complex matrices.
\newblock \emph{Linear Algebra and its Applications}, 10\penalty0 (3):\penalty0 285--290, 1975.
\newblock ISSN 0024-3795.
\newblock \doi{10.1016/0024-3795(75)90075-0}.
\newblock URL \url{https://www.sciencedirect.com/science/article/pii/0024379575900750}.

\bibitem[Jamiołkowski(1972)]{JAMIOLKOWSKI1972275}
A.~Jamiołkowski.
\newblock Linear transformations which preserve trace and positive semidefiniteness of operators.
\newblock \emph{Reports on Mathematical Physics}, 3\penalty0 (4):\penalty0 275--278, 1972.
\newblock ISSN 0034-4877.
\newblock \doi{10.1016/0034-4877(72)90011-0}.
\newblock URL \url{https://www.sciencedirect.com/science/article/pii/0034487772900110}.

\bibitem[Reeb et~al.(2011)Reeb, Kastoryano, and Wolf]{RKW11}
David Reeb, Michael~J. Kastoryano, and Michael~M. Wolf.
\newblock Hilbert’s projective metric in quantum information theory.
\newblock \emph{Journal of Mathematical Physics}, 52\penalty0 (8), August 2011.
\newblock ISSN 1089-7658.
\newblock \doi{10.1063/1.3615729}.
\newblock URL \url{http://dx.doi.org/10.1063/1.3615729}.

\bibitem[Jenčová(2013)]{Jen13}
Anna Jenčová.
\newblock Extremal generalized quantum measurements.
\newblock \emph{Linear Algebra and its Applications}, 439\penalty0 (12):\penalty0 4070–4079, December 2013.
\newblock ISSN 0024-3795.
\newblock \doi{10.1016/j.laa.2013.10.006}.
\newblock URL \url{http://dx.doi.org/10.1016/j.laa.2013.10.006}.

\bibitem[Jenčová(2014)]{Jen14}
A.~Jenčová.
\newblock Base norms and discrimination of generalized quantum channels.
\newblock \emph{Journal of Mathematical Physics}, 55\penalty0 (2), February 2014.
\newblock ISSN 1089-7658.
\newblock \doi{10.1063/1.4863715}.
\newblock URL \url{http://dx.doi.org/10.1063/1.4863715}.

\bibitem[George and Chitambar(2024)]{GC24}
Ian George and Eric Chitambar.
\newblock Cone-restricted information theory.
\newblock \emph{Journal of Physics A: Mathematical and Theoretical}, 57\penalty0 (26):\penalty0 265302, June 2024.
\newblock ISSN 1751-8121.
\newblock \doi{10.1088/1751-8121/ad52d5}.
\newblock URL \url{http://dx.doi.org/10.1088/1751-8121/ad52d5}.

\bibitem[Bhatia(1997)]{BhatiaMatrixAnalysis}
Rajendra Bhatia.
\newblock \emph{{Matrix Analysis}}.
\newblock Springer New York, NY, 1997.

\bibitem[Childs et~al.(2007)Childs, Harrow, and Wocjan]{CHW06}
Andrew~M. Childs, Aram~W. Harrow, and Pawe{\l} Wocjan.
\newblock Weak fourier-schur sampling, the hidden subgroup problem, and the quantum collision problem.
\newblock In \emph{STACS 2007}, volume 4393 of \emph{Lecture Notes in Computer Science}, pages 598--609. Springer Berlin Heidelberg, 2007.
\newblock \doi{10.1007/978-3-540-70918-3_51}.
\newblock URL \url{https://doi.org/10.1007/978-3-540-70918-3_51}.

\bibitem[Krovi(2019)]{Kr19}
Hari Krovi.
\newblock An efficient high dimensional quantum schur transform.
\newblock \emph{Quantum}, 3:\penalty0 122, February 2019.
\newblock ISSN 2521-327X.
\newblock \doi{10.22331/q-2019-02-14-122}.
\newblock URL \url{http://dx.doi.org/10.22331/q-2019-02-14-122}.

\bibitem[Collins and Nechita(2015)]{Col15}
Benoît Collins and Ion Nechita.
\newblock Random matrix techniques in quantum information theory.
\newblock \emph{Journal of Mathematical Physics}, 57\penalty0 (1), December 2015.
\newblock ISSN 1089-7658.
\newblock \doi{10.1063/1.4936880}.
\newblock URL \url{http://dx.doi.org/10.1063/1.4936880}.

\bibitem[Nechita(2007)]{Ne07}
Ion Nechita.
\newblock Asymptotics of random density matrices.
\newblock \emph{Annales Henri Poincaré}, 8\penalty0 (8):\penalty0 1521–1538, November 2007.
\newblock ISSN 1424-0661.
\newblock \doi{10.1007/s00023-007-0345-5}.
\newblock URL \url{http://dx.doi.org/10.1007/s00023-007-0345-5}.

\bibitem[Zyczkowski et~al.(2011)Zyczkowski, Penson, Nechita, and Collins]{ZPN11}
Karol Zyczkowski, Karol~A. Penson, Ion Nechita, and Benoît Collins.
\newblock Generating random density matrices.
\newblock \emph{Journal of Mathematical Physics}, 52\penalty0 (6), June 2011.
\newblock ISSN 1089-7658.
\newblock \doi{10.1063/1.3595693}.
\newblock URL \url{http://dx.doi.org/10.1063/1.3595693}.

\bibitem[Hall(1998)]{Ha98}
Michael~J.W. Hall.
\newblock Random quantum correlations and density operator distributions.
\newblock \emph{Physics Letters A}, 242\penalty0 (3):\penalty0 123–129, May 1998.
\newblock ISSN 0375-9601.
\newblock \doi{10.1016/s0375-9601(98)00190-x}.
\newblock URL \url{http://dx.doi.org/10.1016/S0375-9601(98)00190-X}.

\bibitem[Hayden et~al.(2006)Hayden, Leung, and Winter]{HLW06}
Patrick Hayden, Debbie~W. Leung, and Andreas Winter.
\newblock Aspects of generic entanglement.
\newblock \emph{Communications in Mathematical Physics}, 265\penalty0 (1):\penalty0 95–117, March 2006.
\newblock ISSN 1432-0916.
\newblock \doi{10.1007/s00220-006-1535-6}.
\newblock URL \url{http://dx.doi.org/10.1007/s00220-006-1535-6}.

\bibitem[Goldreich(2008)]{Gold08}
Oded Goldreich.
\newblock \emph{Computational Complexity: A Conceptual Perspective}.
\newblock Cambridge University Press, Cambridge, UK, 2008.
\newblock \doi{10.1017/CBO9780511804106}.
\newblock URL \url{https://www.cambridge.org/core/books/computational-complexity/6C18AC1554266E963847B51D9E8211F3}.

\bibitem[Brakerski and Shmueli(2020)]{BS20}
Zvika Brakerski and Omri Shmueli.
\newblock Scalable pseudorandom quantum states, 2020.
\newblock URL \url{https://arxiv.org/abs/2004.01976}.

\bibitem[Bostanci et~al.(2025{\natexlab{a}})Bostanci, Haferkamp, Hangleiter, and Poremba]{BHHP25}
John Bostanci, Jonas Haferkamp, Dominik Hangleiter, and Alexander Poremba.
\newblock Efficient quantum pseudorandomness from hamiltonian phase states, 2025{\natexlab{a}}.
\newblock URL \url{https://arxiv.org/abs/2410.08073}.

\bibitem[Note1()]{Note1}
Note1.
\newblock We note that these states would be stored in the register $B$, while $A$ holds an uncorrelated ancillary system.

\bibitem[Bostanci et~al.(2025{\natexlab{b}})Bostanci, Efron, Metger, Poremba, Qian, and Yuen]{bostanci2023unitary}
John Bostanci, Yuval Efron, Tony Metger, Alexander Poremba, Luowen Qian, and Henry Yuen.
\newblock Unitary complexity and the uhlmann transformation problem, 2025{\natexlab{b}}.
\newblock URL \url{https://arxiv.org/abs/2306.13073}.

\bibitem[Holevo(2011)]{H11}
Alexander Holevo.
\newblock \emph{Probabilistic and Statistical Aspects of Quantum Theory}.
\newblock Publications of the Scuola Normale Superiore. Edizioni della Normale Pisa, 2011.
\newblock \doi{10.1007/978-88-7642-378-9}.
\newblock URL \url{https://link.springer.com/book/10.1007/978-88-7642-378-9}.

\bibitem[Bushell(1973{\natexlab{a}})]{Bus73a}
P.~J. Bushell.
\newblock Hilbert's metric and positive contraction mappings in a banach space.
\newblock \emph{Archive for Rational Mechanics and Analysis}, 52:\penalty0 330--338, 1973{\natexlab{a}}.
\newblock \doi{10.1007/BF00247467}.
\newblock URL \url{https://doi.org/10.1007/BF00247467}.

\bibitem[Bushell(1973{\natexlab{b}})]{Bus73b}
P.~J. Bushell.
\newblock On the projective contraction ratio for positive linear mappings.
\newblock \emph{Journal of the London Mathematical Society}, s2-6\penalty0 (2):\penalty0 256--258, February 1973{\natexlab{b}}.
\newblock \doi{10.1112/jlms/s2-6.2.256}.
\newblock URL \url{https://doi.org/10.1112/jlms/s2-6.2.256}.

\bibitem[Eveson(1995)]{Eve95}
S.~P. Eveson.
\newblock Hilbert’s projective metric and the spectral properties of positive linear operators.
\newblock \emph{Proceedings of the London Mathematical Society}, 70:\penalty0 411--440, 1995.

\bibitem[Boyd and Vandenberghe(2004)]{BV04}
Stephen Boyd and Lieven Vandenberghe.
\newblock \emph{Convex Optimization}.
\newblock Cambridge University Press, Cambridge, 2004.
\newblock ISBN 978-0521833783.
\newblock URL \url{https://web.stanford.edu/~boyd/cvxbook/}.
\newblock First edition.

\bibitem[Weyl(1939)]{We39}
Hermann Weyl.
\newblock \emph{The Classical Groups: Their Invariants and Representations}, volume~1 of \emph{Princeton Mathematical Series}.
\newblock Princeton University Press, Princeton, NJ, 1939.
\newblock London distribution: Oxford University Press (Humphrey Milford).

\bibitem[Fulton and Harris(1991)]{FH91}
William Fulton and Joe Harris.
\newblock \emph{Representation Theory: A First Course}, volume 129 of \emph{Graduate Texts in Mathematics}.
\newblock Springer-Verlag, New York, 1991.
\newblock ISBN 978-0-387-97527-6.

\bibitem[Fulton(1997)]{Fu97}
William Fulton.
\newblock \emph{Young Tableaux: With Applications to Representation Theory and Geometry}, volume~35 of \emph{London Mathematical Society Student Texts}.
\newblock Cambridge University Press, 1997.
\newblock ISBN 9780521567244.

\bibitem[Rippchen et~al.(2025)Rippchen, Sreekumar, and Berta]{RSB24}
Tobias Rippchen, Sreejith Sreekumar, and Mario Berta.
\newblock Locally-measured r\'enyi divergences.
\newblock \emph{IEEE Transactions on Information Theory}, 71\penalty0 (8):\penalty0 6105--6133, 2025.
\newblock \doi{10.1109/TIT.2025.3571527}.
\newblock URL \url{https://doi.org/10.1109/TIT.2025.3571527}.

\bibitem[Ginibre(1965)]{Gi65}
Jean Ginibre.
\newblock Statistical ensembles of complex, quaternion, and real matrices.
\newblock \emph{Journal of Mathematical Physics}, 6\penalty0 (3):\penalty0 440--449, March 1965.
\newblock \doi{10.1063/1.1704292}.
\newblock URL \url{https://doi.org/10.1063/1.1704292}.

\bibitem[Billingsley(1999)]{Billingsley1999}
Patrick Billingsley.
\newblock \emph{Convergence of Probability Measures}.
\newblock Wiley, 2 edition, 1999.

\bibitem[Mele(2024)]{Me24}
Antonio~Anna Mele.
\newblock Introduction to haar measure tools in quantum information: A beginner tutorial.
\newblock \emph{Quantum}, 8:\penalty0 1340, May 2024.
\newblock ISSN 2521-327X.
\newblock \doi{10.22331/q-2024-05-08-1340}.
\newblock URL \url{http://dx.doi.org/10.22331/q-2024-05-08-1340}.

\end{thebibliography}

\let\addcontentsline=\oldaddcontentsline
\clearpage

\appendix
\makeatletter
\renewcommand{\thesection}{\Alph{section}}
\renewcommand{\thesubsection}{\thesection.\arabic{subsection}}
\renewcommand{\thesubsubsection}{\thesubsection.\Alph{subsubsection}}

\renewcommand{\p@section}{}
\renewcommand{\p@subsection}{}
\renewcommand{\p@subsubsection}{}
\makeatother

\onecolumngrid
\newgeometry{left=1.05in,right=1.05in,top=1.0in,bottom=1.1in}
\setstretch{1.08} 
\setlength{\parskip}{0.15em}
\setlength{\parindent}{0pt}

\title{Supplemental Material }

\makesupplementtitle{Supplemental Material\\[0.3em]Accessible Quantum Correlations under Complexity Constraints}

\onecolumngrid

\tableofcontents

\newgeometry{left=1.05in,right=1.05in,top=1.0in,bottom=1.1in}
\setstretch{1.08} 
\setlength{\parskip}{0.35em}
\setlength{\parindent}{0pt}

\section{Preliminaries}

\subsection{Notation}\label{sec:notation}
In this work we assume all Hilbert spaces to be finite dimensional. Hilbert spaces will be denoted by $\HH$. When we consider $n$ qudits of local dimension $d$, we write $\HH_n \simeq (\CC^{d})^{\otimes n}$. Given a Hilbert space $\HH$, we denote the set of bounded operators acting on it as $\BB(\HH)$ and its subset of positive semi-definite operators by $\Pos(\HH)$. An operator $\rho \in \Pos(\HH)$ is called a \emph{quantum state} if we have $\Tr \left[\rho \right] = 1$. The set of quantum states on $\HH$ is denoted by $\Ss(\HH)$. The Hilbert-Schmidt inner product on $\BB(\HH)$ is given by $\langle a,b\rangle\coloneqq\Tr[a^*b]$, where $^*$ denotes the Hilbert-Schmidt adjoint. The partial trace  is the unique map $\tr_B: \BB(\HH_A \otimes \HH_B) \rightarrow \BB(\HH_A)$ such that 
$$\Tr[\tr_B[X]Y] = \Tr[X(Y_A \otimes \id_B) ]\,.$$ 
We denote the identity operator on $\HH$ as $\id \in \BB(\HH)$ while the identity channel is the map 
$$\Id:\BB(\HH)\rightarrow \BB(\HH) \;,$$ 
such that $\Id(X)=X$  for all $X \in \BB(\HH)$. For two operators $\rho, \sigma \in \Pos(\HH)$ we write $\rho \ll \sigma$ if $\ker{\sigma} \subseteq \ker{\rho}$, where $\ker{\tau} \coloneqq  \{\ket{v} \,:\, \tau \ket{v} = 0\}$.  

In Landau notation, given two functions $f(n)$ and $g(n)$, we write $f(n)=o(g(n))$ if \newline$\lim_{n\rightarrow \infty} f(n)/g(n) = 0$. In the same way, $f(n)=\omega(g(n))$ if $\lim_{n\rightarrow \infty} f(n)/g(n) = \infty$. $f(n)=O(g(n))$ if there exist two constant $ C,N>0$ such that $|f(n)| \leq C|g(n)|$ for all $n \geq N$. Similarly, $f(n)=\Omega(g(n))$ if there exist two constant $ C,N>0$ such that $|f(n)| \geq C|g(n)|$ for all $n \geq N$. Lastly, $f(n)=\Theta(g(n))$ if both $f(n)=O(g(n))$ and $f(n)=\Omega(g(n))$. A function $f(n)$ is $\negl(n)$, i.e., negligible, if, for every fixed $c$, $f(n) = o(1/n^c)$.

With $\log$ we denote the logarithm to base 2 and with $\ln$ the logarithm to base $e$. 
For any $n\in \mathbb{N}$, we denote $[n]=\{1,\dots,n\}$. 
For a set \(S\), its affine hull is given by 
$$\operatorname{aff}(S)\coloneq \Bigl\{\sum_{i=1}^m \lambda_i x_i:\ x_i\in S,\ \lambda_i\in\mathbb{R},\ \sum_{i=1}^m \lambda_i=1\Bigr\}\;.$$
Lastly, its relative interior is $\operatorname{ri}(S)\coloneq \{x\in S:\exists \varepsilon>0 \text{ such that } B(x,\varepsilon)\cap \operatorname{aff}(S)\subseteq S\}$, where $ B(x,\varepsilon)$ is an open ball of radius $\varepsilon$ around $x$.

\subsection{Information Theoretic Concepts}

We begin by introducing the class of linear maps acting on quantum states that will be used throughout this paper and their adjoints.

\begin{definition}[Dual map]\label{def:dual.map}
Given a bounded linear map $T: \BB(\HH_A) \rightarrow \BB(\HH_B)$, there exists a unique linear map $T^{*} : \BB(\HH_B) \rightarrow \BB (\HH_A)$ defined as the dual map, such that $\forall X \in \BB(\HH_B), \; \forall \rho \in \BB(\HH_A)$:

\begin{equation}
    \Tr[T(\rho)X]= \Tr[\rho T^*(X)]\,.
    \label{eq:dual}
\end{equation}
    
\end{definition}

Moreover, the linear maps that represent physically admissible transformations are \emph{quantum channels}, i.e., completely positive and trace-preserving (CPTP) maps in the Schrödinger picture. Their adjoints are completely positive and \emph{unital}  (CPU) maps, i.e.\ they satisfy
\begin{align*}
    T^{*}(\id_{\HH_B})=\id_{\HH_A}.
\end{align*}
Throughout this work we will frequently use the Choi-Jamiołkowski representation of quantum channels.

\begin{definition}[Choi-Jamiołkowski representation \cite{CHOI1975285,JAMIOLKOWSKI1972275}] \label{def:Choi}
Let $\HH^A_n, \HH^{A'}_n$ be finite-dimensional Hilbert spaces with $\dim \HH^A_n = d_A$, and let $\HH_A \cong \HH_{A'}$. Let $T: \BB(\HH_{A^\prime}) \to\BB(\HH_{B})$ be a CPTP map. For such a $T$, its Choi-Jamiołkowski operator representation on $AB$ is given by
	\begin{align*}
		J(T) \coloneq  d_A (\Id_A \otimes T) (\ketbra{\Omega_{AA'}}{\Omega_{AA'}})\,,
	\end{align*}
	where $\ket{\Omega_{AA'}} = \frac{1}{\sqrt{d_A}}\sum_{x \in [d_A]} \ket{x}_A \otimes \ket{x}_{A'}$.
\end{definition}

Having fixed our notation for quantum channels, we next specialize to channels arising from quantum measurements, which are conveniently described in terms of POVMs.

\begin{definition} [POVM] A \term{positive operator-valued measure} (POVM) is given by a tuple of positive semi-definite operators $M =( E_{1},\dots,E_{m} )$ (for some $m \in \mathbb{N}$) such that ${\sum_{k=1}^m E_{k} = \id}$. Moreover, if ${E_{i} E_{j} = \delta_{ij} E_{i}}$ for all $i,j \in \{1,\dots, m \}$, we call this a \term{projection-valued measure} (PVM).
The operators $E_i$ of a POVM are called \textit{effect operators} or \textit{POVM elements}.
\end{definition}

\begin{theorem}[Naimark’s Dilation Theorem in Finite Dimensions; see, e.g., Theorem 2.5.1 of~\cite{H11}] \label{Thm: Naimark}
	Any POVM $M=(E_i)_{i=1}^K$ on $\mathcal{H}$ can be implemented via an isometry and a projective measurement on ancilla systems. In particular, there exists an isometry $V:\mathcal{H}\to \mathcal{H}\otimes \mathcal{H}_{K}$, where $\mathcal{H}_{K}$ denotes a Hilbert space of dimension $K$, such that
	\begin{align}
		E_i =  V^* \left(\id_\mathcal{H} \otimes \ketbra{i}{i}_{\mathcal{H}_K} \right) V =T^*
		(\ketbra{i}{i})\quad \forall i\in [K]\,.
	\end{align}
\end{theorem}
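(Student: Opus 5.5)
The plan is to write down the dilating isometry explicitly using square roots of the effect operators, and then read off both claimed identities by direct computation. Since each $E_i\in\Pos(\mathcal{H})$, the positive square root $\sqrt{E_i}$ exists and is unique. For all $\ket{\psi}\in\mathcal{H}$ I would define
\begin{align*}
    V\ket{\psi}\coloneqq\sum_{i=1}^K \sqrt{E_i}\,\ket{\psi}\otimes\ket{i}_{\mathcal{H}_K}\,,
\end{align*}
where $\{\ket{i}\}_{i\in[K]}$ is an orthonormal basis of $\mathcal{H}_K$. Equivalently, $V=\sum_i \sqrt{E_i}\otimes\ket{i}$ is a linear map $\mathcal{H}\to\mathcal{H}\otimes\mathcal{H}_K$, with adjoint $V^*=\sum_j \sqrt{E_j}\otimes\bra{j}$.

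The first step is to verify that $V$ is an isometry. Using orthonormality $\braket{j}{i}=\delta_{ij}$ and self-adjointness of $\sqrt{E_i}$,
\begin{align*}
    V^*V=\sum_{i,j}\sqrt{E_j}\sqrt{E_i}\,\delta_{ij}=\sum_i E_i=\id_{\mathcal{H}}\,,
\end{align*}
where the last equality is exactly the POVM normalization.

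The second step is to check that the effects are recovered. By the same computation,
\begin{align*}
    V^*\bigl(\id_{\mathcal{H}}\otimes\ketbra{i}{i}\bigr)V=\sqrt{E_i}\sqrt{E_i}=E_i\,.
\end{align*}
The operators $\id_{\mathcal{H}}\otimes\ketbra{i}{i}$ are orthogonal projectors summing to $\id_{\mathcal{H}\otimes\mathcal{H}_K}$, so they form a PVM on the ancilla. Hence the POVM is implemented by applying $V$ and then measuring $\mathcal{H}_K$ projectively in the basis $\{\ket{i}\}$. This gives the outcome distribution $\Tr[(\id\otimes\ketbra{i}{i})V\rho V^*]=\Tr[\rho E_i]$.

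The remaining step is to express this through the dual-map notation of \Cref{def:dual.map}. I would take the quantum-classical channel
\begin{align*}
    T(\rho)\coloneqq\sum_i \Tr\!\bigl[(\id_{\mathcal{H}}\otimes\ketbra{i}{i})V\rho V^*\bigr]\ketbra{i}{i}\,,
\end{align*}
which is the composition of the isometric channel $\rho\mapsto V\rho V^*$, the projective measurement, and the partial trace over $\mathcal{H}$, so it is CPTP. By \eqref{eq:dual}, $\Tr[T(\rho)\ketbra{i}{i}]=\Tr[\rho\, V^*(\id\otimes\ketbra{i}{i})V]$ for all $\rho$, and therefore $T^*(\ketbra{i}{i})=V^*(\id\otimes\ketbra{i}{i})V=E_i$.

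No step is a genuine obstacle. The only points needing care are the existence of $\sqrt{E_i}$, which follows from positivity, and fixing the domain and codomain conventions of $T$ so that its dual matches the stated identity.
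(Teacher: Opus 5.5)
Your proof is correct. The paper gives no argument of its own here; it cites the general Naimark theorem from Holevo, which produces an abstract extension space carrying an orthogonal resolution of the identity. Your explicit choice $V=\sum_i\sqrt{E_i}\otimes\ket{i}$ is the standard finite-dimensional version of that result. Its advantage is that it lands directly in the tensor-product form $\mathcal{H}\otimes\mathcal{H}_K$ with $\dim\mathcal{H}_K=K$ that the statement asserts, and both identities reduce to one-line computations using $\sum_i E_i=\id_{\mathcal{H}}$.

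One difference in conventions is worth noting. Immediately after the theorem, the paper sets $T^*(X)\coloneqq V^*(\id_{\mathcal{H}}\otimes X)V$ for all $X\in\BB(\mathcal{H}_K)$, i.e.\ $T(\rho)=\tr_{\mathcal{H}}[V\rho V^*]$ with no dephasing. Your quantum-classical $T$ instead has dual $X\mapsto\sum_i\bra{i}X\ket{i}\,E_i$. The two agree on every $\ketbra{i}{i}$, so the stated identity holds either way. They differ on off-diagonal $X$, however, and the paper's later use of this construction (e.g.\ in \Cref{lem:CPU.Effect.relation}, where Choi operators of $T^*$ appear) is the un-dephased one.

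Finally, the paper's circuit model rests on the remark $V=U\ket{0}$. To obtain it, extend the isometry $\psi\otimes\ket{0}\mapsto V\psi$ to a unitary on $\mathcal{H}\otimes\mathcal{H}_K$. This is possible because the orthogonal complements of $\mathcal{H}\otimes\ket{0}$ and of $V\mathcal{H}$ both have dimension $d(K-1)$, where $d=\dim\mathcal{H}$.
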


Here $T^*(X)\coloneq V^* \left(\id_\mathcal{H} \otimes X\right) V $ for $X \in \BB(\HH_K)$ is a completely positive unital (CPU) map. Note that one may choose $V=U\ket{0}$, where $U: \mathcal{H}\otimes \mathcal{H}_{K} \to \mathcal{H}\otimes \mathcal{H}_{K}$ is a unitary and $\ket{0}\in\mathcal{H}_K$.

Moreover, one can associate to any $m$-outcome POVM $M=(E_1,...,E_m)$ a channel that maps quantum states to classical distributions over $\{1,\dots,m\}$, via
\begin{align}
    \MM_M\, : \; \rho \mapsto \sum_{i=1}^{m}\Tr[\rho E_{i}]\, \ketbra{i}{i} \,.
\end{align} 
We will sometimes drop the subscript $M$, when the POVM to which this measurement map corresponds is clear from context. 

To quantify how well states can be distinguished we will use a number of standard distance and divergence measures. We begin with fidelity and relative entropy, and then introduce the conditional min-entropy, which will play a central role in our paper.

\begin{definition} [Fidelity]
    Given  two quantum states $\rho , \sigma\in \Ss(\HH)$, the  \term{fidelity} between $\rho$, $\sigma$ is given by:
		\begin{align*}
			F(\rho,\sigma) &\coloneqq \norm{\sqrt{\rho }\sqrt{\sigma }}_{1}^{2}
				 \; .
		\end{align*}
\end{definition}

\begin{definition} [Max-divergence~\cite{D09}]
    Given any two positive semi-definite operators $\rho , \sigma\in \text{Pos}(\HH)$ with $\Tr\left[\rho\right] > 0$ and $\rho \ll \sigma$, the \term{max-divergence} between $\rho$, $\sigma$  is given by:
		\begin{align}
                D_\text{max}(\rho\|\sigma) &\coloneqq 
				\log\inf\{\lambda \geq 0\,|\,\rho\leq\lambda\sigma\} \; .
		\end{align}
\end{definition}

\begin{definition}[Conditional Min-Entropy \cite{KRS09}]
Let $\rho_{AB}\in \Pos(\HH_A \otimes \HH_B)$ be a bipartite quantum state. The \emph{min-entropy} of $A$ conditioned on $B$ is given by,
\begin{align*}
    H_\text{min}(A|B)_{\rho}\coloneq -\inf_{\substack{\sigma_B \geq 0 \\ \Tr[\sigma_B]=1}} D_\text{max}(\rho_{AB}\|\id_A \otimes \sigma_B)\,.
\end{align*}
\end{definition}

Moreover, as shown in \cite{KRS09}, the conditional min-entropy admits the following operational interpretation:
\begin{align*}
    H_\text{min}(A|B)_{\rho} = -\log d_A \max_{T} F\left((\id_A \otimes T)(\rho_{AB}),|\Omega_{AA'}\rangle \langle \Omega_{AA'}|\right)\,,
\end{align*}
where the maximization is over channels $T:B\to A'$, and $|\Omega_{AA'}\rangle= \frac{1}{\sqrt{d_A}}\sum_{x \in [d_A]} \ket{x}_A \otimes \ket{x}_{A'}$. Thus, the conditional min-entropy is a one-shot measure that quantifies how well one can recover a maximally entangled state on $AA'$ from $\rho_{AB}$ by acting locally on $B$. For states with quantum correlations, $H_\text{min}(A|B)_{\rho}$ can be negative, reaching $-\log d_A$ for a maximally entangled state.

In the case of the input state being a classical-quantum state, i.e., $\rho_{XB}=\sum_x p_x |x\rangle \langle x| \otimes \rho_B^x$, the operational meaning of the conditional min-entropy is the following:
\begin{align*}
     H_\text{min}(X|B)_{\rho}=-\log \max_{\{E_B^x\}_x}\sum_x p_x \Tr[E^x_B \rho^x_B]\,,
\end{align*}
where $\{E_B^x\}_x$ is a POVM on $B$. Then, in this case, the conditional min-entropy quantifies the predictability of the classical register $X$ given access to $B$.

One can define other relative entropies or divergence from an operational perspective. In this case, we will also work with the hypothesis testing relative entropy.

\begin{definition}[$\varepsilon$-hypothesis testing relative entropy \cite{KW24}]
 Given any two positive semidefinite operators $\rho , \sigma\in \text{Pos}(\HH)$ with $\Tr\left[\rho\right] > 0$ and $\rho \ll \sigma$, and $\varepsilon\in [0,1]$, its corresponding \emph{$\varepsilon$-hypothesis testing relative entropy } is given by
    \begin{align}\label{eq:info_hypothesis_testing_D_1minuseps}
        D_{H}^{\varepsilon}(\rho \| \sigma) \coloneq  -\log \inf_{E} \{\Tr[E\sigma] : 0\le E\le \id \;, \Tr[(\id-E)\rho ] \le \varepsilon  \} \,.
    \end{align}
\end{definition}

This entropy can equivalently be expressed as
\begin{equation}
	D_{H}^{1-\varepsilon}(\rho || \sigma) \coloneq  -\log \inf_{E} \{\Tr[E\sigma] : 0\le E\le \id \;, \Tr[E\rho ] \ge \varepsilon  \} \;.
\end{equation}

\subsection{Cone Theory}\label{sec:cone.theory}
In this section we summarize some basic notions of cone theory. For a deeper understanding, the reader may refer to \cite{Bus73a,Bus73b,Eve95,RKW11,GC24}.
\begin{definition} [Convex Cone]
    Let $\mathcal{V}$ be a finite-dimensional real vector space with inner product $\langle\cdot,\cdot\rangle$. 
A \term{convex cone} $\mathcal{C}\subset \mathcal{V}$ is a subset that satisfies
\begin{align*}
    a,b\in \mathcal{C} \implies \lambda a+\mu b\in\mathcal{C} \quad \forall \lambda,\mu\geq 0\, .
\end{align*}
Moreover, for any $K\subset \mathcal{V}$, we denote by $\cone(K)\coloneqq\conv(\bigcup_{\lambda\geq 0}\lambda K)$ the smallest (convex) cone that contains $K$.
\end{definition}

A cone is \term{closed} if it is closed in the norm topology induced by the inner product on $\mathcal{V}$ and, hence, it contains its boundary. Following~\cite{RKW11}, $\cC$ is \term{pointed} if $\cC\cap(-\cC)=\{0\}$, where
\begin{align*}
    a\in -\cC \iff -a\in \cC \, .
\end{align*}
Moreover, a cone is \term{solid} if $\Span(\cC)=\mathcal{V}$. In finite dimensions this is equivalent to $\mathsf{ri}(\cC)\neq\emptyset$.

\begin{definition} [Proper Cone]
A convex, closed, pointed, and solid cone is called a \textit{proper cone}.
\end{definition}

One can also assign to each cone a dual cone.
\begin{definition} [Dual Cone]
Given a cone $\mathcal{C}$, its \term{dual cone} $\cC^*$ is given by
\begin{align*}
    \cC^*\coloneqq\{x\in\VV^*| \langle x,c\rangle\geq 0, \ \forall c\in\cC\} \, .
\end{align*} 
\end{definition}

Therefore, if $\cC_1\subset\cC_2$, then $\cC^*_2\subset\cC_1^*$. We additionally note that the dual cone is also closed and convex. Moreover, if $\cC$ is a proper cone, then so is $\cC^*$. Lastly, in finite-dimensional vector spaces, closed convex cones satisfy that the corresponding dual cone is also its pre-dual cone, i.e. $(\cC^*)^*=\cC$.
Given a real vector space $\mathcal{V}$ and a proper cone $\mathcal{C} \subset \mathcal{V}$, one can define a natural partial order on $\mathcal{V}$.
\begin{definition}[Partial Order]
Let $\mathcal{V}$ be a finite-dimensional real vector space and $\mathcal{C} \subset \mathcal{V}$ be a proper cone. We say that $a\geq_{\cC}b$ if and only if $a-b\in\cC$.
\end{definition}

In the following, we take $\VV = \Herm(\HH)$. This vector space is isomorphic to $\RR^{d^2}$, where $d=\dim(\HH)$.
A canonical example of a proper cone in quantum information theory is the cone of \term{positive semidefinite} operators in $\Herm(\HH)$, 
\begin{align}
    \Pos(\mathcal{H})\coloneqq\{a\in\Herm(\HH) \,| \, a\geq 0\}\,.
\end{align}
This cone is self-dual in $\VV$ and we denote its induced partial order by $\geq$. This particular partial order is relevant for the definition of the max-divergence~\cite{D09,RKW11}, which can be written as
\begin{align*}
    D_\text{max}(a\|b) &=  \log\inf\{\lambda\in\RR| a\leq\lambda b\} \\
    &= \log\sup_{v\in\Pos(\mathcal{H})}\frac{\langle v,a\rangle}{\langle v,b\rangle} \, .
\end{align*}
In the optimization above, $v$  should be understood as an effect operator in the pre-dual of the cone of $\Pos(\HH)$, in which $a,b$ live, rather than that cone itself. Since $\Pos(\HH)$ is self-dual with respect to the Hilbert-Schmidt inner product, these cones coincide here. This distinction will matter for general cones in the coming sections.

If one considers other proper cones inside of $\Herm(\HH)$, then one can use 
this relation between the max-divergence and the conic structure of $\Pos(\HH)$ to derive a more general notion of max-divergences.
\begin{definition} [$\cC$-Max-divergence~\cite{RKW11, GC24}] \label{Def: CMaxDiv}
    Let $\cC$ be a proper cone. For any $a,b\in\cC\setminus\{0\}$, the \term{$\cC$-max-divergence} between $a,b$ is given by
    \begin{align}
        D^{\cC}_\text{max}(a\|b)\coloneqq  \log\inf\{\lambda\geq 0 \,| \, a\leq_{\cC}\lambda b\} \, .
    \end{align}
    If no $\lambda\in\RR$ exists such that $\lambda b-a\in\cC$, i.e. the infimum is over an empty set, then we set $ D^{\cC}_\text{max}(a\|b)=\infty$.
\end{definition}
Moreover, there exists a dual formulation of the $\cC$-max-divergence.
\begin{lem} [Dual Expression~\cite{RKW11,GC24}] \label{lem: dualexp}
    Let $\cC$ be a proper cone. Then 
    \begin{align}
    D^{\cC}_\textup{max}(a\|b)\coloneqq \log\inf\{\lambda\in\RR| a\leq_{\cC}\lambda b\}=\log\sup_{v\in\cC^*}\frac{\langle v,a\rangle}{\langle v,b\rangle}  \, .
    \end{align}
\end{lem}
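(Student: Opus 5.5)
This is the standard conic duality for max-divergence, and I would prove it as a pair of inequalities plus a strong-duality step. Write \(\lambda^\star=\inf\{\lambda: \lambda b-a\in\cC\}\) and \(\mu^\star=\sup_{v\in\cC^*,\langle v,b\rangle>0}\langle v,a\rangle/\langle v,b\rangle\). I will work in the real inner-product space \(\VV=\Herm(\HH)\). I will use the bipolar identity \((\cC^*)^*=\cC\) for closed convex cones, recalled above.

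\textbf{Step 1: weak duality, \(\mu^\star\le\lambda^\star\).} If \(\lambda b-a\in\cC\), then every \(v\in\cC^*\) gives \(\langle v,\lambda b-a\rangle\ge0\). Hence \(\langle v,a\rangle\le\lambda\langle v,b\rangle\). Dividing by \(\langle v,b\rangle>0\) and taking the supremum over \(v\) and the infimum over feasible \(\lambda\) yields the inequality. If no feasible \(\lambda\) exists, \(\lambda^\star=\infty\) and there is nothing to show.

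\textbf{Step 2: the reverse inequality, \(\lambda^\star\le\mu^\star\).} Assume \(\mu^\star<\infty\). This first requires that \(\langle v,b\rangle=0\) with \(v\in\cC^*\) forces \(\langle v,a\rangle\le 0\); I handle that case below. Then every \(v\in\cC^*\) satisfies \(\langle v,\mu^\star b-a\rangle\ge0\): for \(\langle v,b\rangle>0\) this is the definition of \(\mu^\star\), and for \(\langle v,b\rangle=0\) it is the degenerate condition. So \(\mu^\star b-a\in(\cC^*)^*=\cC\), and \(\lambda^\star\le\mu^\star\). The infimum is therefore attained, using closedness of \(\cC\).

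\textbf{The degenerate directions are the main obstacle.} These are \(v\in\cC^*\) with \(\langle v,b\rangle=0\) but \(\langle v,a\rangle>0\). Such a \(v\) makes \(\lambda b-a\notin\cC\) for every \(\lambda\), so \(\lambda^\star=\infty\). The right-hand side must then also be read as \(\infty\), via the convention \(c/0=\infty\) for \(c>0\) implicit in the definition. For the nondegenerate case I would use solidity/properness. Since \(b\in\cC\setminus\{0\}\) and \(\cC^*\) is proper, the interior of \(\cC^*\) is nonempty, and any interior \(w\) has \(\langle w,b\rangle>0\). For a degenerate \(v\), the perturbation \(v+\epsilon w\in\cC^*\) has strictly positive denominator, so degenerate directions are limits of admissible ones. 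If \(\langle v,a\rangle\le 0\) for all degenerate \(v\), they do not affect the supremum, and Step 2 applies. If some degenerate \(v\) has \(\langle v,a\rangle>0\), the ratios \(\langle v+\epsilon w,a\rangle/\langle v+\epsilon w,b\rangle\) diverge as \(\epsilon\to0\). Then \(\mu^\star=\infty=\lambda^\star\), consistent with the convention that \(D^{\cC}_{\max}=\infty\) on an empty feasible set.

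Finally, I apply \(\log\), which is monotone, to both sides. The restriction to \(\lambda\ge0\) versus \(\lambda\in\RR\) is harmless. Because \(a\in\cC\) and \(b\in\cC\setminus\{0\}\), with \(\cC\) pointed, any feasible \(\lambda\) must be nonnegative: otherwise \(\lambda b-a\in -\cC\) would be nonzero. This gives the stated identity.
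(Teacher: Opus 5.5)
Your proof is correct. The paper gives no argument of its own for this lemma; it imports it from \cite{RKW11,GC24}. So there is nothing in the text to match step by step, and what you supply is a self-contained derivation.

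\textbf{Comparison with the paper's duality route.} Your proof rests only on the bipolar identity $(\cC^*)^*=\cC$ and on solidity of $\cC^*$. For the related identity in Lemma~\ref{lem:dual.program.Hmin}, the paper sets up primal and dual conic programs and invokes Slater's condition. Your route is more elementary: applying bipolarity to $\mu^\star b-a$ directly produces a primal optimizer, so no strong-duality theorem is needed. In conic-programming language, a rescaling of your interior point $w\in\operatorname{int}\cC^*$ is a strictly feasible dual point, which is why no duality gap can arise.

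\textbf{What your version adds.} It settles two points the paper's statement leaves implicit.
\begin{enumerate}
\item The supremum formally includes $v\in\cC^*$ with $\langle v,b\rangle=0$, even $v=0$. Your perturbation $v+\epsilon w$ shows that restricting to strictly positive denominators loses nothing.
\item The definition of $D^{\cC}_{\max}$ uses $\lambda\ge0$, while the lemma uses $\lambda\in\RR$. You reconcile the two via pointedness.
\end{enumerate}
Nothing in your argument uses $\VV=\Herm(\HH)$ beyond finite dimensionality and a nondegenerate pairing. This matters because the paper applies the lemma with $\cC=\cC^{\mathbf{S}_\mathrm{eff}}_{n,A|B}$ in the quotient space $\VV^*_{n,A|B}$, and takes the supremum over its pre-dual $\cC^{\mathbf{E}_\mathrm{eff}}_{n,A|B}$.

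\textbf{A wording caveat.} Your case split reads as if the absence of a degenerate $v$ with $\langle v,a\rangle>0$ lands you in the finite case of Step~2. For proper cones that are not facially exposed, this can fail. Take the cone over a stadium-shaped base, with $b$ on a non-exposed extreme ray and $a$ on the adjacent flat edge. There the problem is infeasible, yet every degenerate $v$ has $\langle v,a\rangle=0$. The supremum is still $\mu^\star=\infty$, attained in the limit by non-degenerate $v$ approaching the degenerate one. Your proof is unaffected, since Step~2 is stated under $\mu^\star<\infty$ and the case $\mu^\star=\infty$ is closed by Step~1.
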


In order to exploit the duality between cones, it is convenient to use tools from \emph{conic} (and in particular \emph{semidefinite}) programming \cite{BV04}. Let $\VV_1$ and $\VV_2$ be finite-dimensional Euclidean spaces equipped with inner products $\langle\cdot,\cdot\rangle_1$ and $\langle\cdot,\cdot\rangle_2$. Generalizing \Cref{def:dual.map}, given a linear map $\Phi:\VV_1\to\VV_2$, its (Hilbert-space) adjoint $\Phi^*:\VV_2\to\VV_1$ is the unique linear map satisfying
\begin{align*}
    \langle \Phi(v_1), v_2\rangle_2=\langle v_1,\Phi^*(v_2)\rangle_1
    \qquad \forall\; v_1\in\VV_1,\ \forall\; v_2\in\VV_2\,.
\end{align*}
Let $\mathcal{C}_1\subseteq\VV_1$ and $\mathcal{C}_2\subseteq\VV_2$ be proper cones, and let $c\in\VV_1$ and $b\in\VV_2$. A standard conic program (the \emph{primal} problem) and its associated \emph{dual} problem are given by
\begin{align}
\label{eq:conic.primal}
    \gamma^{\mathrm{primal}}
    \;\coloneq \;
    \inf_{x\in\VV_1}\ \langle c,x\rangle_1
    \quad \text{such that}\quad
    \Phi(x)-b \in \mathcal{C}_2,\quad x\in \mathcal{C}_1\,,
\end{align}
and
\begin{align}
\label{eq:conic.dual}
    \gamma^{\mathrm{dual}}
    \;\coloneq \;
    \sup_{y\in\VV_2}\ \langle b,y\rangle_2
    \quad \text{such that}\quad
    c-\Phi^*(y)\in \mathcal{C}_1^*,\quad y\in \mathcal{C}_2^*\,,
\end{align}
where $\mathcal{C}_1^*$ and $\mathcal{C}_2^*$ denote the dual cones with respect to $\langle\cdot,\cdot\rangle_1$ and $\langle\cdot,\cdot\rangle_2$, respectively. When $\mathcal{C}_1$ and $\mathcal{C}_2$ are positive semidefinite cones, these become \emph{semidefinite programs}.

If the primal and dual problems are feasible, then \emph{weak duality} holds: for every primal-feasible
$x\in\VV_1$ and dual-feasible $y\in\VV_2$ one has
\begin{align*}
    \langle c,x\rangle_1 \;\ge\; \langle b,y\rangle_2,
\end{align*}
and in particular $\gamma^{\mathrm{primal}}\ge \gamma^{\mathrm{dual}}$.

Moreover, under a standard constraint qualification (Slater's interiority condition) there is no duality
gap. One convenient conic formulation is the following.
\begin{theorem}[Slater's interiority condition]\label{thm:Slater}
Assume that \eqref{eq:conic.primal} is feasible and that there exists a \emph{strictly feasible} point
$\bar x\in \mathsf{ri}\,(\mathcal{C}_1)$ such that $\Phi(\bar x)-b\in \mathsf{ri}\,(\mathcal{C}_2)$.
If $\gamma^{\mathrm{primal}}$ is finite, then strong duality holds, i.e.,
\begin{align*}
    \gamma^{\mathrm{primal}}=\gamma^{\mathrm{dual}},
\end{align*}
and the dual optimum is attained.
\end{theorem}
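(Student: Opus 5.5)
The plan is the standard separating-hyperplane argument. Let $p^\star\coloneq\gamma^{\mathrm{primal}}$, which is finite by assumption. Weak duality already gives $\gamma^{\mathrm{dual}}\le p^\star$. It therefore suffices to construct a single dual-feasible $y^\star$ with $\langle b,y^\star\rangle_2\ge p^\star$. That inequality yields both $\gamma^{\mathrm{primal}}=\gamma^{\mathrm{dual}}$ and attainment of the dual optimum.

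\textbf{Step 1: the convex set.} In $\VV_2\times\RR$ define
\begin{align*}
    \mathcal{K}\coloneq\bigl\{\bigl(\Phi(x)-b-w,\ \langle c,x\rangle_1+s\bigr)\;:\;x\in\cC_1,\ w\in\cC_2,\ s\ge 0\bigr\}\,.
\end{align*}
$\mathcal{K}$ is convex, being the image of the convex set $\cC_1\times\cC_2\times\RR_{\ge0}$ under an affine map.

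\textbf{Step 2: disjointness.} Let $\mathcal{R}\coloneq\{(0,r): r<p^\star\}$. Then $\mathcal{K}\cap\mathcal{R}=\emptyset$. Indeed, a point $(0,r)\in\mathcal{K}$ would come from some $x\in\cC_1$ with $\Phi(x)-b=w\in\cC_2$. Such an $x$ is primal-feasible, so $r=\langle c,x\rangle_1+s\ge p^\star$.

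\textbf{Step 3: separation.} Since both sets are nonempty and convex, the separating hyperplane theorem in finite dimensions gives a nonzero pair $(z,\mu)\in\VV_2\times\RR$ such that
\begin{align*}
    \langle z,\Phi(x)-b-w\rangle_2+\mu\bigl(\langle c,x\rangle_1+s\bigr)\;\ge\;\mu r
\end{align*}
for all $x\in\cC_1$, $w\in\cC_2$, $s\ge0$ and $r<p^\star$.

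\textbf{Step 4: reading off signs.} Sending $s\to\infty$ forces $\mu\ge0$. Scaling $w\in\cC_2$ to infinity forces $\langle z,w\rangle_2\le 0$ for all $w\in\cC_2$. Hence $y\coloneq -z\in\cC_2^*$. Letting $r\uparrow p^\star$ and setting $s=w=0$ gives
\begin{align*}
    \mu\langle c,x\rangle_1-\langle y,\Phi(x)-b\rangle_2\ge \mu p^\star\qquad\forall x\in\cC_1\,.
\end{align*}

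\textbf{Step 5: $\mu\neq 0$ (the main obstacle).} This is where Slater's condition enters. Suppose $\mu=0$. Taking $x=\bar x$ and keeping $w$ free in the separation inequality gives $\langle y,\Phi(\bar x)-b-w\rangle_2\le 0$ for all $w\in\cC_2$. Because $\Phi(\bar x)-b\in\mathsf{ri}(\cC_2)$ and $\cC_2$ is proper, hence solid, the relative interior coincides with the interior. So $w=\Phi(\bar x)-b-\varepsilon u$ lies in $\cC_2$ for every direction $u$ and all small $\varepsilon>0$. Substituting gives $\varepsilon\langle y,u\rangle_2\le0$ for every $u\in\VV_2$, hence $y=0$. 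Together with $\mu=0$ this contradicts $(z,\mu)\neq0$. Therefore $\mu>0$, and we rescale to $\mu=1$. This is the only step that genuinely uses the strictly feasible point; the solidity of $\cC_2$ is what turns relative interior into interior.

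\textbf{Step 6: dual feasibility and value.} With $\mu=1$, Step 4 reads
\begin{align*}
    \langle c-\Phi^*(y),x\rangle_1+\langle b,y\rangle_2\ge p^\star\qquad\forall x\in\cC_1\,.
\end{align*}
Replacing $x$ by $tx$ and letting $t\to\infty$ shows $\langle c-\Phi^*(y),x\rangle_1\ge0$ on $\cC_1$, i.e.\ $c-\Phi^*(y)\in\cC_1^*$. So $y$ is dual-feasible. Taking $x=0$ gives $\langle b,y\rangle_2\ge p^\star$.

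\textbf{Step 7: conclusion.} Combining Step 6 with weak duality yields $\gamma^{\mathrm{dual}}=\langle b,y\rangle_2=p^\star$. Hence strong duality holds and the dual optimum is attained at $y^\star=y$.
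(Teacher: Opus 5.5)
The paper gives no proof of this theorem. It quotes it as the standard Slater-type strong-duality result for conic programs (the surrounding text points to \cite{BV04}) and uses it only as a black box in \Cref{lem:dual.program.Hmin}.

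Your separating-hyperplane argument is a correct, self-contained proof. $\mathcal{K}$ is convex and disjoint from $\{(0,r):r<p^\star\}$. The finite-dimensional separation theorem for disjoint nonempty convex sets then gives a nonzero $(z,\mu)$, with no closedness or interiority needed. Letting $s\to\infty$ and scaling $w$ and $x$ along their cones yields $\mu\ge 0$, $y=-z\in\mathcal{C}_2^*$ and $c-\Phi^*(y)\in\mathcal{C}_1^*$. The strictly feasible point excludes $\mu=0$, and weak duality closes the argument and gives dual attainment.

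Compared with invoking the textbook theorem, your route buys two things.

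First, you only use $\bar x\in\mathcal{C}_1$ together with $\Phi(\bar x)-b\in\operatorname{int}(\mathcal{C}_2)$. The hypothesis $\bar x\in\mathsf{ri}(\mathcal{C}_1)$ is never needed. This is legitimate because $x\in\mathcal{C}_1$ is kept as an undualized constraint set and there are no equality constraints. So you actually prove a slightly stronger statement than the one quoted.

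Second, the separation theorem really produces a linear functional on $\VV_2\times\RR$. Your argument therefore transfers verbatim to the setting where the paper applies the theorem. There, $\VV_2=\VV^*_{n,A|B}$ is a quotient space, and the dual cone is taken with respect to the pairing with $\VV_{n,A|B}$ rather than an inner product on $\VV_2$. The dual variable $y$ then naturally lives in $\VV_{n,A|B}$, exactly as the paper's $F_{AB}$ does.
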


\subsection{Representation Theory}
\label{sec:rep.theory}
Representation theory provides a natural framework for exploiting symmetries in composite quantum systems, in particular when dealing with many identical copies. A key instance is Schur-Weyl duality, which describes the commuting actions of the unitary group and the symmetric group on tensor powers $\HH^{\otimes k}$, yielding a canonical decomposition into irreducible sectors and the associated Schur basis. These tools underpin several standard primitives in quantum information theory—such as spectrum estimation, permutation-invariant tomography, and entanglement concentration—and they will be used throughout this work. Let us briefly recall the notions and notation we need. For a more detailed treatment, we refer the reader to \cite{We39,FH91}.

Given a group $G$, a (unitary) representation of $G$ on a Hilbert space $\HH$ is a homomorphism $R: G \to \mathrm{U}(\HH)$, i.e., a family of unitaries $\{R_g\}_{g\in G}\subset \BB(\HH)$ satisfying $R_e=\id$ and $R_{gh}=R_gR_h$ for all $g,h\in G$. A subspace $K\subseteq \HH$ is called $G$-invariant (or simply \term{invariant}) if $R_g K\subseteq K$ for all $g\in G$. In this case, the restricted maps $\{R_g|_{K}\}_{g\in G}$ form a \term{subrepresentation}. The representation is \term{irreducible} if its only invariant subspaces are $\{0\}$ and $\HH$. These irreducible representations of a group $G$ play a fundamental role in representation theory.

In particular, every finite-dimensional unitary representation of a compact group is \term{completely reducible}, i.e., it decomposes as a direct sum of irreducible representations (or \term{irreps}), possibly with multiplicities. To relate two representations $R:G\to \mathrm{U}(\HH_1)$ and $S:G\to \mathrm{U}(\HH_2)$, we introduce \term{intertwiners}: a linear map $T:\HH_1\to \HH_2$ is an intertwiner if $T R_g = S_g T$ for all $g\in G$. Schur's lemma characterizes intertwiners between irreducible representations.

\begin{lem}[Schur's Lemma~\cite{FH91}]
Let $R:G\to \mathrm{U}(\HH_1)$ and $S:G\to \mathrm{U}(\HH_2)$ be irreducible representations, and let $T:\HH_1\to \HH_2$ be an intertwiner, i.e.\ $TR_g=S_gT$ for all $g\in G$. Then:
\begin{itemize}
    \item If $R$ and $S$ are not isomorphic, then $T=0$.
    \item If $R$ and $S$ are isomorphic, then $T$ is an isomorphism; in particular, if $\HH_1=\HH_2$ and $R=S$, then $T=\alpha\,\Id_{\HH_1}$ for some $\alpha\in\CC$.
\end{itemize}
\end{lem}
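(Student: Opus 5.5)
The plan is the standard kernel/image argument, using only the definition of an intertwiner and irreducibility. The key observation is that for an intertwiner $T:\HH_1\to\HH_2$ with $TR_g=S_gT$ for all $g\in G$, both $\ker T\subseteq\HH_1$ and $\operatorname{im}T\subseteq\HH_2$ are invariant subspaces.
\begin{itemize}
\item If $v\in\ker T$, then $T R_g v = S_g T v = 0$, so $R_g v\in\ker T$ and $\ker T$ is $R$-invariant.
\item If $w=Tv$, then $S_g w = T R_g v\in\operatorname{im}T$, so $\operatorname{im}T$ is $S$-invariant.
\end{itemize}
By irreducibility of $R$, $\ker T$ is either $\{0\}$ or $\HH_1$. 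By irreducibility of $S$, $\operatorname{im}T$ is either $\{0\}$ or $\HH_2$.

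From this dichotomy, either $T=0$, or $\ker T=\{0\}$ and $\operatorname{im}T=\HH_2$. In the second case $T$ is a linear bijection, and $S_g = T R_g T^{-1}$ for all $g$, so $R$ and $S$ are isomorphic. Taking the contrapositive gives the first bullet: if $R\not\cong S$, then $T=0$.

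For the second bullet, the argument above shows that a nonzero intertwiner is an isomorphism. I would state explicitly that $T=0$ remains allowed, since the zero map is always an intertwiner. The statement as written should therefore be read as "$T$ is either zero or an isomorphism."

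For the scalar case $\HH_1=\HH_2$ and $R=S$:
\begin{itemize}
\item Since $\HH_1$ is a finite-dimensional complex space, $T$ has an eigenvalue $\alpha\in\CC$, as a root of the characteristic polynomial.
\item The operator $T-\alpha\,\Id_{\HH_1}$ is again an intertwiner, because $\Id$ commutes with every $R_g$.
\item Its kernel is nontrivial, as it contains an eigenvector. By the dichotomy above, it is therefore not an isomorphism, so $T-\alpha\,\Id_{\HH_1}=0$.
\end{itemize}

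The main point requiring care is this last step. It relies on the field being algebraically closed and the space being finite-dimensional, so that an eigenvalue exists; both hold under the paper's standing assumptions. Unitarity of the representations is not needed anywhere in the argument. Everything else is routine verification of invariance.
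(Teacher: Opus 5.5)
Your proof is correct and is the standard argument. The paper gives no proof of its own but cites Fulton--Harris, whose proof is this same kernel/image invariance dichotomy followed by the eigenvalue trick applied to $T-\alpha\,\Id_{\HH_1}$; your reading of the second bullet as ``$T$ is either zero or an isomorphism'' is also right, since the zero map is always an intertwiner, and it matches how Fulton--Harris state the lemma.
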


Let us now focus on two the groups that we will use in this work: the unitary group $\UU(d)$ acting \emph{collectively} on all copies, and the symmetric group $S_k$ permuting the $k$ copies \cite{We39,FH91}. Schur--Weyl duality describes the joint representation of these commuting actions on the tensor-power space $(\CC^d)^{\otimes k}$. Let $d=\dim(\CC^d)$.

The \term{unitary group} on $\CC^d$ is given by
\begin{align*}
    \UU(d)\coloneqq \{U\in \BB(\CC^d)\,:\,U^* U=\id\}\,,
\end{align*}
and it acts on $k$ copies via the tensor-power representation $U\mapsto U^{\otimes k}$ on $(\CC^d)^{\otimes k}$.

The \term{symmetric group} $S_k$ is the group of permutations of $[k]\coloneqq\{1,\dots,k\}$ under composition. It acts on $(\CC^d)^{\otimes k}$ through the permutation unitaries $\{R_\pi\}_{\pi\in S_k}$ defined by
\begin{align*}
    R_\pi\!\left(\ket{\psi_1}\otimes\cdots\otimes\ket{\psi_k}\right)
    =\ket{\psi_{\pi^{-1}(1)}}\otimes\cdots\otimes\ket{\psi_{\pi^{-1}(k)}} \, .
\end{align*}

These two actions commute, i.e.\ $(U^{\otimes k})R_\pi=R_\pi (U^{\otimes k})$ for all $U\in \UU(d)$ and $\pi\in S_k$. Since $\UU(d)$ is compact and $S_k$ is finite, their representations are completely reducible. Consequently, Schur-Weyl duality yields the decomposition
\begin{align*}
    (\CC^d)^{\otimes k} \cong \bigoplus_{\lambda \in \mathcal{I}_{(k,d)}}\UU^{(d)}_{\lambda} \otimes \VV_\lambda \, ,
\end{align*}
where $\UU^{(d)}_{\lambda}$ is an irreducible subspace for the action of $\UU(d)$ and $\VV_\lambda$ is the corresponding irreducible subspace for the action of $S_k$. In this decomposition, $\UU(d)$ acts non-trivially only on $\UU^{(d)}_{\lambda}$ and $S_k$ acts non-trivially only on $\VV_\lambda$. The index set $\mathcal{I}_{(k,d)}$ consists of partitions, commonly known as Young coefficients, $\lambda\vdash k$ with at most $d$ parts, i.e.,
\begin{align*}
    \mathcal{I}_{(k,d)}\coloneq \left\{\lambda=(\lambda_1,\ldots,\lambda_d)\in\NN^d:\ 
    \lambda_1\ge \cdots \ge \lambda_d\ge 0,\ \sum_{i=1}^d \lambda_i = k\right\}\,.
\end{align*}
We identify such a partition $\lambda\vdash k$ with its associated Young diagram, consisting of $k$ boxes arranged in left-justified rows of lengths $\lambda_1,\lambda_2,\ldots$. The constraint ``at most $d$ parts'' is equivalently $\ell(\lambda)\le d$, i.e.\ the diagram has at most $d$ rows. We refer to $\lambda$ as the \term{Young coefficient} indexing the corresponding Schur-Weyl block. For further background on Young diagrams and their properties, see \cite{Fu97}.

Therefore, by Schur-Weyl duality there exists an orthonormal basis of $(\CC^d)^{\otimes k}$, the \emph{Schur basis}, which we label as $\ket{\lambda,u,v}$ with $\lambda\in\mathcal{I}_{(k,d)}$, $u\in[\dim \UU^{(d)}_\lambda]$, and $v\in[\dim \VV_\lambda]$. With respect to this basis, the commuting actions of $S_k$ and $\UU(d)$ are block-diagonal:
\begin{align*}
    R_\pi &\cong \bigoplus_{\lambda \in \mathcal{I}_{(k,d)}} \Id_{\dim \UU^{(d)}_\lambda} \otimes R_\pi^{\lambda},\\
    R_U &\cong \bigoplus_{\lambda \in \mathcal{I}_{(k,d)}} R_U^{\lambda} \otimes \Id_{\dim \VV_\lambda} \;,
\end{align*}
where $R_\pi^{\lambda}$ denotes the irreducible representation of $S_k$ on $\VV_\lambda$ and $R_U^{\lambda}$ denotes the irreducible representation of $\UU(d)$ on $\UU^{(d)}_\lambda$. The unitary transformation that maps the computational basis to the Schur basis is known as \term{Schur transform}:
\begin{align*}
    U_{Sch}: \{|k\rangle \otimes |l\rangle\} \mapsto \{|\lambda, u_i\rangle \otimes |\lambda, v_j\rangle\}\,.
\end{align*}

For each $\lambda\in\mathcal{I}_{(k,d)}$, let $\Pi_\lambda$ denote the orthogonal projector onto the Schur--Weyl block $\UU^{(d)}_\lambda\otimes \VV_\lambda$. This projector can be written as
\begin{align*}
    \Pi_\lambda \;=\; \frac{\dim \VV_\lambda}{k!}\sum_{\pi\in S_k} \chi^\lambda(\pi)\, R_\pi \,,
\end{align*}
where $\chi^\lambda(\pi)\coloneqq \Tr\!\left(R_\pi^{(\lambda)}\right)$ is the character of the irrep $\lambda$ of $S_k$. The family $\{\Pi_\lambda\}_{\lambda\in\mathcal{I}_{(k,d)}}$ forms a projective measurement, i.e.\ $\Pi_\lambda\Pi_\mu=\delta_{\lambda\mu}\Pi_\lambda$ and $\sum_{\lambda}\Pi_\lambda=\Id_{(\CC^d)^{\otimes k}}$. 

Moreover, the Schur transform (and hence the implementation of this measurement) can be approximated on a quantum computer to precision $\delta$ in polynomial time.

\begin{theorem}[\cite{CHW06,Kr19}]\label{thm:eff.Schur}
Both the Schur transform and its corresponding measurement, commonly known as \emph{weak Schur sampling}, can be implemented in a quantum computer up to precision $\delta$ with running time $O(\poly(k,n,\log(1/\delta)))$.
\end{theorem}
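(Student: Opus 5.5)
Since this is a cited result, the plan is to reconstruct it from two standard primitives: the quantum Fourier transform over the symmetric group $S_k$, and generalized phase estimation. We cannot use the Bacon--Chuang--Harrow cascade of Clebsch--Gordan transforms for $\UU(d)$. That route costs $\poly(d)$, and here each copy is an $n$-qubit register, so $d=2^n$. The whole construction must therefore act only through the $S_k$ side of Schur--Weyl duality, with all costs polynomial in $k$, $n=\log d$ and $\log(1/\delta)$. This is the approach of \cite{CHW06} for weak Schur sampling and of \cite{Kr19} for the full transform.

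\textbf{Step 1 (weak Schur sampling).} Adjoin an ancilla register $\CC[S_k]$, which needs $O(k\log k)$ qubits, and run the following circuit.
\begin{itemize}
\item Prepare $\frac{1}{\sqrt{k!}}\sum_{\pi}\ket{\pi}$; for instance, apply the $S_k$-Fourier transform to $\ket{e}$.
\item Apply the controlled permutation $\sum_\pi \ketbra{\pi}{\pi}\otimes R_\pi$. This costs $O(k^2 n)$ controlled-SWAPs of $n$-qubit registers, decomposing $\pi$ into transpositions.
\item Apply the QFT over $S_k$, which maps $\CC[S_k]\cong\bigoplus_\lambda \VV_\lambda\otimes\VV_\lambda^*$ into a register labelled $\ket{\lambda,i,j}$.
\item Measure the label $\lambda$.
\end{itemize}
By the orthogonality relations, the probability of outcome $\lambda$ equals $\Tr[\Pi_\lambda\rho]$, with $\Pi_\lambda=\frac{\dim\VV_\lambda}{k!}\sum_\pi\chi^\lambda(\pi)R_\pi$ as stated above. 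This is exactly generalized phase estimation for the representation $\pi\mapsto R_\pi$. The post-measurement state on $(\CC^d)^{\otimes k}$ is the correctly projected state, so the procedure implements the projective measurement $\{\Pi_\lambda\}$.

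\textbf{Step 2 (full Schur transform).} Keeping the $\ket{\lambda,i,j}$ register coherent instead of measuring it gives an isometry. It maps $(\CC^d)^{\otimes k}$ onto a space where $\lambda$ and the $S_k$-irrep index $v$ are explicit registers, while the $\UU(d)$ multiplicity index $u$ remains encoded in the residual state of the original register. Following \cite{Kr19}, I would then extract $u$ efficiently. One way is to use the duality between $S_k$ and $\UU(d)$ together with a Gelfand--Tsetlin-type labelling that only requires sorting and relabelling the at most $k$ distinct symbols of $[d]$ that occur in a computational-basis string. This compresses the dimension dependence to $\poly(k,\log d)$. Finally, the ancilla $\CC[S_k]$ is uncomputed to obtain a clean unitary $U_{Sch}$.

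\textbf{Step 3 (precision and main obstacle).} Every component reduces to gates from the fixed set $\mathcal G$ by Solovay--Kitaev, at overhead $\mathrm{polylog}(1/\delta)$ per gate. The errors add linearly over the $\poly(k,n)$ gates, which yields the stated running time. I expect the main obstacle to be the efficient QFT over $S_k$, via Beals' recursive adapted-basis construction along the chain $S_1\subset\cdots\subset S_k$. Together with it, the $d$-independent extraction of the $\UU(d)$ label is the part that must be taken from \cite{Kr19} rather than re-derived. For the weak Schur sampling statement alone, the $S_k$ QFT suffices, and that is the only part this paper subsequently needs.
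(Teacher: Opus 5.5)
Your overall route is the one the paper relies on. The paper gives no proof: it imports generalized phase estimation with Beals' $S_k$ Fourier transform for weak Schur sampling from \cite{CHW06}, and the $\poly(k,\log d)$ full transform from \cite{Kr19}. Your outcome statistics in Step 1 and your precision bookkeeping in Step 3 are fine. Several concrete claims are wrong, however.

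A minor one first: $F\ket{e}=\sum_\lambda\sqrt{\dim\VV_\lambda/k!}\,\sum_i\ket{\lambda,i,i}$ is not the uniform superposition over $S_k$; that state is $F^{-1}$ applied to the trivial-irrep label. More important, measuring $\lambda$ does \emph{not} leave $(\CC^d)^{\otimes k}$ in $\Pi_\lambda\ket{\psi}$. Set $W\ket{\psi}\coloneqq\frac{1}{\sqrt{k!}}\sum_\pi\ket{\pi}\otimes R_\pi\ket{\psi}$. Then $WR_h=(\mathcal{R}_h\otimes\id)W$ for $\mathcal{R}_h\ket{\pi}=\ket{\pi h^{-1}}$, hence $W\Pi_\lambda=(P_\lambda\otimes\id)W$ with $P_\lambda=F^{-1}\bigl(\sum_{i,j}\ketbra{\lambda,i,j}{\lambda,i,j}\bigr)F$. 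The joint post-measurement state is therefore $(F\otimes\id)W\Pi_\lambda\ket{\psi}$ up to normalization. Its marginal on the system is the twirl $\frac{1}{k!}\sum_\pi R_\pi\Pi_\lambda\ketbra{\psi}{\psi}\Pi_\lambda R_\pi^*$, not the projected state. To implement the projective measurement $\{\Pi_\lambda\}$ you must, after reading $\lambda$, undo $F$, the controlled permutation and the preparation; this returns $\Pi_\lambda\ket{\psi}$ with a clean ancilla. This matters for the paper, whose concentration protocol acts on the post-measurement state. If the ancilla were simply discarded, Alice's twirl would reduce the overlap of the $\VV_\lambda$ component with the maximally entangled state to $(\dim\VV_\lambda)^{-2}$.

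Your Step 2 is not what \cite{Kr19} does, and its final step is unjustified. In Young's orthogonal form, coherent phase estimation maps $\ket{u}\ket{v}\in\UU_\lambda\otimes\VV_\lambda$ to $(\dim\VV_\lambda)^{-1/2}\sum_i\ket{\lambda,i,v}\otimes\ket{u}\otimes\ket{i}$, where the last ket lies in the $\VV_\lambda$ factor of the original register. So the $i$-index of the ancilla is maximally entangled with the system and cannot simply be ``uncomputed'' at the end. Disentangling it while compressing $u$ into a $\dim\UU_\lambda$-dimensional register is the entire difficulty. Krovi does not run phase estimation and then extract $u$. He decomposes $(\CC^d)^{\otimes k}$ into permutation modules indexed by content, stores the content compactly through its at most $k$ distinct symbols, and builds a Fourier transform over permutation modules; the Kostka multiplicity index and the content together supply the $\UU(d)$ label. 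Since the theorem is imported, cite \cite{Kr19} for the full transform rather than attributing to it a pipeline it does not contain.

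Finally, your closing remark misreads the paper. \Cref{lem:channel-reduction} explicitly assumes efficient implementations of the Schur transform, its inverse and their transposes. It needs them because the protocol discards $\UU_\lambda$ and re-embeds $\VV_\lambda$ via $W_\lambda$, and measuring $\lambda$ alone gives no access to the $\VV_\lambda$ factor. Coherent phase estimation does deposit the $\VV_\lambda$ index $v$ in the $j$ register, so an $S_k$-only implementation of the protocol is plausible, but that would require its own argument.
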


Then, it allows us to decompose tensor powers of pure states as follows.

\begin{lem}[\cite{HM02,LREJ25}]\label{lem:schur.basis}
Given a bipartite pure state $\ket{\psi_{AB}}$, its $k$-fold tensor product admits the decomposition
\begin{align*}
    \ket{\psi_{AB}}^{\otimes k}
    \;=\;
    \sum_{\lambda\in\mathcal{I}_{(k,d)}} \sqrt{\mathsf{Pr}(\lambda)}\,
    \ket{\Phi^\lambda_{AB}}
    \otimes
    \left(U^A_{\mathcal{V}_\lambda}\otimes U^B_{\mathcal{V}_\lambda}\right)\,
    \ket{\phi^+_{AB}}\, ,
\end{align*}
where $\mathsf{Pr}(\lambda)=\Tr\!\left[\Pi^A_\lambda\,\rho_A^{\otimes k}\right]$, $\ket{\Phi^\lambda_{AB}}\in \mathcal{U}_{\lambda}^A \otimes \mathcal{U}_{\lambda}^B$, and $\ket{\phi^+_{AB}}\in \mathcal{V}^A_\lambda \otimes \mathcal{V}^B_\lambda$ is a maximally entangled state of Schmidt rank $\dim \VV_\lambda$. Here $U^A_{\mathcal{V}_\lambda}$ denotes the Schur transform restricted to the subspace $\mathcal{V}_{\lambda}$ on system $A$ (and analogously for $B$).
\end{lem}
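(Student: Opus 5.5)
The plan is to reduce the statement to the Schur--Weyl decomposition of an (unnormalized) maximally entangled vector, and then push the state-dependent part through the polynomial extension of the $\UU(d)$-irreps.

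First, write the Schmidt decomposition $\ket{\psi_{AB}}=\sum_{i=1}^d\sqrt{p_i}\,\ket{i}_A\ket{i}_B$. Absorbing the local Schmidt-basis unitaries into the Schur transforms on $A$ and $B$, we may assume these are computational bases. Then
\begin{align*}
\ket{\psi_{AB}}=(M\otimes\id_B)\ket{\Gamma_{AB}},
\end{align*}
where $M\coloneq\sqrt{\rho_A}=\sum_i\sqrt{p_i}\ketbra{i}{i}$ and $\ket{\Gamma_{AB}}=\sum_i\ket{i}\ket{i}$. Consequently,
\begin{align*}
\ket{\psi_{AB}}^{\otimes k}=(M^{\otimes k}\otimes\id)\ket{\Gamma}^{\otimes k}.
\end{align*}

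Second, decompose $\ket{\Gamma}^{\otimes k}$, which is the unnormalized maximally entangled vector between $(\CC^d)^{\otimes k}_A$ and $(\CC^d)^{\otimes k}_B$. It satisfies $(X\otimes\id)\ket{\Gamma}^{\otimes k}=(\id\otimes X^{\mathsf T})\ket{\Gamma}^{\otimes k}$ for every operator $X$. Applying this to $X=U^{\otimes k}$ and to $X=R_\pi$ (note $R_\pi^{\mathsf T}=R_{\pi^{-1}}$), we choose the Schur basis on $B$ to be the complex conjugate of the one on $A$; equivalently, we use a real Schur transform such as the Gelfand--Tsetlin/Young--Yamanouchi one. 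With this choice the transpose maps each block $\UU_\lambda\otimes\VV_\lambda$ to itself. Schur's lemma, applied to the intertwiner obtained by viewing $\ket{\Gamma}^{\otimes k}$ as the identity operator, then gives the blockwise identity
\begin{align*}
\ket{\Gamma}^{\otimes k}=\sum_\lambda \ket{\Gamma_{\UU_\lambda}}\otimes\ket{\Gamma_{\VV_\lambda}}.
\end{align*}
This is really just the resolution $\id=\sum_\lambda\Pi_\lambda$ rewritten as a vector in the Schur bases. The main obstacle is exactly this bookkeeping: one must ensure that the $A$- and $B$-side Schur transforms are chosen compatibly, i.e.\ conjugate to each other, so that no residual unitary appears inside the blocks. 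I would settle it by fixing the real Schur basis from the start and letting the operators $U^{A}_{\VV_\lambda}\otimes U^{B}_{\VV_\lambda}$ in the statement denote the resulting basis embeddings.

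Third, use that $X\mapsto X^{\otimes k}$ restricted to each block is $R^\lambda(X)\otimes\id_{\VV_\lambda}$ for all $X\in\BB(\CC^d)$, not only for unitaries. This follows because the algebra generated by $\{U^{\otimes k}\}$ is the commutant of the $S_k$-action and contains all $X^{\otimes k}$, with $R^\lambda$ being the polynomial extension. This holds even when $\rho_A$ is rank-deficient, by continuity or directly from polynomiality. Hence
\begin{align*}
\ket{\psi}^{\otimes k}=\sum_\lambda \bigl(R^\lambda(M)\otimes\id\bigr)\ket{\Gamma_{\UU_\lambda}}\otimes\ket{\Gamma_{\VV_\lambda}}.
\end{align*}

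Finally, normalize. Set
\begin{align*}
\ket{\phi^+}=\ket{\Gamma_{\VV_\lambda}}/\sqrt{\dim\VV_\lambda},\qquad
\ket{\Phi^\lambda}\propto(R^\lambda(M)\otimes\id)\ket{\Gamma_{\UU_\lambda}},
\end{align*}
with the latter normalized. The squared norm of the $\lambda$-component equals $\langle\psi|^{\otimes k}(\Pi^A_\lambda\otimes\id)\ket{\psi}^{\otimes k}=\Tr[\Pi_\lambda^A\rho_A^{\otimes k}]=\mathsf{Pr}(\lambda)$, because the components for distinct $\lambda$ are orthogonal. This yields the claimed coefficient $\sqrt{\mathsf{Pr}(\lambda)}$ and shows that $\ket{\phi^+}$ is maximally entangled with Schmidt rank $\dim\VV_\lambda$.
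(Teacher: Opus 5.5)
Your proof is correct. Note that the paper does not prove this lemma itself; it imports it from \cite{HM02,LREJ25}. Your route is the standard derivation behind those references:
\begin{itemize}
\item write $\ket{\psi_{AB}}^{\otimes k}=(M^{\otimes k}\otimes\id)\ket{\Gamma}^{\otimes k}$ in the Schmidt frame;
\item resolve $\ket{\Gamma}^{\otimes k}=\sum_\lambda\ket{\Gamma_{\UU_\lambda}}\otimes\ket{\Gamma_{\VV_\lambda}}$ in mutually conjugate (or real) Schur bases;
\item use that $X^{\otimes k}$ lies in the commutant of the $S_k$ action, so it acts as $\bigoplus_\lambda R^\lambda(X)\otimes\id_{\VV_\lambda}$.
\end{itemize}

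A few small refinements:
\begin{itemize}
\item The second step does not need Schur's lemma. For any orthonormal basis $\{\ket{b_j}\}$ one has $\ket{\Gamma}=\sum_j\ket{b_j}\ket{\bar b_j}$, so the blockwise form is immediate once you pick conjugate Schur bases.
\item The Schmidt-basis unitaries $V^{\otimes k}\otimes W^{\otimes k}$ are most cleanly absorbed into $\ket{\Phi^\lambda_{AB}}$. They act only on the $\UU_\lambda$ factors and leave $\mathsf{Pr}(\lambda)$ invariant, so there is no need to modify the Schur transform.
\item When $\mathsf{Pr}(\lambda)=0$, the vector $\ket{\Phi^\lambda_{AB}}$ may be chosen arbitrarily.
\item The single index set $\mathcal{I}_{(k,d)}$ presupposes $d_A=d_B=d$, or an embedding into a common $\CC^d$. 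Your use of $\ket{\Gamma}$ requires the same, which is harmless.
\end{itemize}

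Your construction also gives directly the block-matching property that the paper later imports from \cite[Lemma~12]{LREJ25} in Eq.~\eqref{eq:diag-vanish}. Since $\Pi_\lambda^{\mathsf T}=\Pi_\lambda$ (real characters, $R_\pi^{\mathsf T}=R_{\pi^{-1}}$) and $M^{\otimes k}$ commutes with $\Pi_\lambda$, one gets $(\Pi^A_\lambda\otimes\Pi^B_\mu)\ket{\psi_{AB}}^{\otimes k}=(M^{\otimes k}\otimes\Pi_\mu\Pi_\lambda)\ket{\Gamma}^{\otimes k}=0$ for $\lambda\neq\mu$.
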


The dimensions of the irreps of $S_k$ and $\UU(d)$ can be expressed as \cite{We39,LREJ25}
\begin{align*}
    \dim \UU^{(d)}_\lambda &= \prod_{1\le i<j\le d}\frac{ \lambda_i -\lambda_j+j-i}{j-i}\,,\\
    \dim \VV_\lambda &= \frac{k!\prod_{1\le i<j\le d} (\lambda_i -\lambda_j+j-i)}{\prod_{i=1}^d(\lambda_i+d-i)!}\,.\\
\end{align*}

We lastly focus on the \emph{symmetric subspace} of $(\CC^d)^{\otimes k}$, i.e.\ the subspace invariant under all permutations of the $k$ tensor factors. In the Schur--Weyl decomposition, it corresponds to the one-row partition $\lambda=(k,0,0,\ldots)$. This subspace is particularly important because Haar averages of $k$-fold tensor powers are supported entirely on it. Then, the \term{symmetric subspace} of $(\CC^d)^{\otimes k}$ is defined as
\begin{align*}
    \mathsf{Sym}^k(\CC^d)\coloneqq \{\ket{\varphi}\in(\CC^d)^{\otimes k}:\ R_\pi\ket{\varphi}=\ket{\varphi}\ \ \forall\,\pi\in S_k\}\,.
\end{align*}
Its orthogonal projector is
\begin{align*}
    \Pi^{(d,k)}_{\mathsf{sym}}\coloneqq \frac{1}{k!}\sum_{\pi\in S_k} R_\pi \, ,
\end{align*}
and it satisfies $\Tr\left[\Pi^{(d,k)}_{\mathsf{sym}}\right]=\dim \mathsf{Sym}^k(\CC^d)=\binom{d+k-1}{k}$.

\section{Complexity-constrained Quantum Information}

Quantum information processing implementations are inherently limited to operations that admit efficient circuit descriptions. This has recently motivated complexity-aware variants of standard information measures, defined by restricting the class of physically realizable procedures~\cite{ABV23,CCL+17,avidan2025quantum,AHRA25,MRRLJE26,YHK25,MKN+25,YKH+22,MRRLJE25}. While works such as~\cite{YHK25,MRRLJE25} impose computational constraints by restricting the set of available \emph{measurements}, leading to computational divergences with an operational meaning in efficient state discrimination tasks and resource estimation, we will follow a more general approach. More specifically, throughout this work we place the computational restriction on the \emph{transformations} themselves, that is on the admissible quantum channels. In \Cref{sec:comp-bounded-channels}, we formally introduce this complexity restriction and define the sets of polynomially implementable quantum channels. We then show in \Cref{sec:choi-cone} how one can associate these sets with proper cones, extending the formalism introduced by \cite{YHK25} to the case of quantum channels.

\subsection{Computationally Bounded Quantum Channels}\label{sec:comp-bounded-channels}

In this section, we formalize the notion of complexity-constrained quantum channels. We measure complexity in terms of \term{gate complexity}: informally, given a fixed gate set $\mathcal{G}$, the cost of a transformation is the number of gates from $\mathcal{G}$ required to implement it. Since complexity-theoretic notions are asymptotic, we work with a scaling parameter $n$, denoting the number of registers in the circuit. We consider $n$-qudit systems, with Hilbert space $\HH_n\simeq (\CC^{d})^{\otimes n}$ and $\dim(\HH_n)=d^n$, where $d$ is the local dimension.

Our complexity measure counts only gates. Accordingly, one has unrestricted access to ancillary registers and  measurements in the computational basis. Moreover, for simplicity,  polynomial-time classical (post-)processing is treated as free. When we refer to efficient channels or measurements, we mean those implementable with at most $p(n)$ gates for some polynomial $p$. 

Let us first recall the notion of complexity-constrained measurements introduced in \cite{YHK25,MRRLJE25} via Naimark's dilation (\Cref{Thm: Naimark}). It allows us to represent the generation of POVM elements ---from now on \emph{effect operators}--- as the action of a unitary gate acting on the state space and an auxiliary system followed by a projective measurement in the ancilla.

\begin{definition}[Gate complexity of a POVM]\label{def:POVMcomplexity}
Given a finite gate set $\mathcal{G}$, the \emph{gate complexity} $C(M)$ of a POVM $M\coloneq (E_i)_{i=1}^K$ is the minimal number of gates necessary to implement the unitary operator $U \in \mathcal{B}(\mathcal{H}\otimes \mathcal{H}_{K})$ such that
	\begin{align}
		\Tr \left[ E_i \rho\right] = \Tr \left[ \left(\id_\mathcal{H} \otimes \ketbra{i}{i}_{\mathcal{H}_K} \right) \left(U \left(\rho \otimes  \ketbra{0}{0}_{\mathcal{H}_K} \right) U^* \right)\right]  \qquad  \forall i \in [K] \, ,
	\end{align}
	with $\sum_{i= 1}^K E_i = \id$.
\end{definition}

Similarly to \cite{MRRLJE25,YHK25}, one can define the set of quantum measurements restricted to some $G$ gates and with $K$ outcomes.
\begin{definition}[Set of complexity and output constrained POVMs]
	Given a finite gate set $\mathcal{G}$, for $G \in \NN$ and $K \geq 2$, the set of $G$-complexity $K$-outcome POVMs is defined as,
	\begin{align}
		\mathbf{M}^G_{n,K}\coloneq  \left\{M=(E_i)_{i=1}^K \; \Big| \; C(M) \leq G  , \, \text{dim}(\mathcal{H}_K) =K, \, E_i \in \Pos(\mathcal{H}_n)\right\} \,,
	\end{align}
	where $C(M)$ is the complexity of the POVM $M$. When $G$ is taken as a polynomial $p(n): \NN \rightarrow\NN$, then we write $\mathbf{M}^{p(n)}_{n,K}$.
\end{definition}

We fix a gate set $\mathcal{G}$ and a polynomial gate budget $p(n)$.
\begin{definition}[The set of efficient binary effects \cite{YHK25}]\label{def:eff-binary-effects}
	For each dimension $n$, let $\Eeff_n \subseteq \BB(\HH_n)$ denote the set of POVM effects
	$E$ such that $0\le E\le \id$ and the binary POVM $\{E,\id-E\}$ is implementable by a quantum
	circuit of size at most $p(n)$ over $\mathcal{G}$. We suppress the dependence on
	$\mathcal{G}$ and $p(n)$ in the notation.
\end{definition}

In \cite{YHK25}, the set and cone of efficiently generated quantum effect operators were introduced to formalize the measurements accessible to a computationally bounded observer. We now generalize this framework from measurements to quantum channels. For this, we use the Choi--Jamio\l kowski representation, but in a form tailored to our perspective: the relevant Choi operators are those of Heisenberg-picture maps, i.e., of completely positive unital (CPU) maps, rather than of Schr\"odinger-picture CPTP maps as in the standard convention (see \Cref{def:Choi}). Since our notion of efficient implementability is defined from the viewpoint of a bounded observer, the Heisenberg picture is the natural one to use.

\begin{definition}[Choi-Jamiołkowski representation of CPU maps]
Let $\HH^A_n, \HH^{A'}_n$ be finite-dimensional Hilbert spaces with $\dim \HH^A_n = d_A$, and let $\HH^A_n \cong \HH^{A'}_n$. Let $T^*: \BB(\HH^{A^\prime}_n) \to\BB(\HH^{B}_n)$ be a completely positive unital (CPU) map. For such a $T^*$, its Choi-Jamiołkowski operator representation on $AB$ is given by
	\begin{align*}
		J(T^*) \coloneq  d_A (\Id_A \otimes T^*) (\ketbra{\Omega_{AA'}}{\Omega_{AA'}})\,,
	\end{align*}
	where $\ket{\Omega_{AA'}} = \frac{1}{\sqrt{d_A}}\sum_{x \in [d_A]} \ket{x}_A \otimes \ket{x}_{A'}$.
\end{definition}

Since the Choi-Jamiołkowski representation is a linear bijection, the action of the CPU map $T^*$ can be recovered from its Choi-operator $J(T^*)$ in the following way,
\begin{align}\label{eq:Choi.action.operator}
	T^*(X) = \tr_A[J(T^*)(X^T \otimes \id_B)]\,,
\end{align}
where the transpose is taken with respect to the computational basis used to define the state $\ket{\Omega_{AA'}}$.

Similar to \Cref{Thm: Naimark}, one can represent a completely positive unital (CPU) map as a unitary (or, more generally, isometric) evolution on a larger space, followed by discarding an environment. In the Heisenberg picture, this is captured by the following Stinespring representation.

\begin{theorem}[Stinespring representation]\label{thm:Stinespring}
Let $T:\BB(\HH_A)\rightarrow \BB(\HH_B)$ be a completely positive trace preserving (CPTP) linear map, and let $T^*:\BB(\HH_B)\to \BB(\HH_A)$ denote its dual map. Then there exists an integer $r\le d_A d_B$ and an isometry $V:\HH_A\rightarrow \HH_B\otimes \CC^r$ such that
\begin{align*}
    T^*(X) &= V^*(X \otimes \id_r)V \\
    T(\rho) &= \tr_r[V\rho V^*]
\end{align*}
for all $X \in \BB(\HH_B)$ and $\rho \in \BB(\HH_A)$.
\end{theorem}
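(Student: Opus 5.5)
The statement to prove is the Stinespring representation in finite dimensions: every CPTP map $T:\BB(\HH_A)\to\BB(\HH_B)$ admits an isometry $V:\HH_A\to\HH_B\otimes\CC^r$, with $r\le d_Ad_B$, such that $T(\rho)=\tr_r[V\rho V^*]$ and $T^*(X)=V^*(X\otimes\id_r)V$. I will build $V$ from a Kraus decomposition, which I obtain from the spectral decomposition of the (Schrödinger-picture) Choi operator $J(T)$ of \Cref{def:Choi}.

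\textbf{Step 1: Kraus operators.} Since $T$ is completely positive, $J(T)=d_A(\Id_A\otimes T)(\ketbra{\Omega_{AA'}}{\Omega_{AA'}})$ is positive semidefinite on $\HH_A\otimes\HH_B$. Write its spectral decomposition as $J(T)=\sum_{i=1}^r\ketbra{v_i}{v_i}$ with unnormalized eigenvectors $\ket{v_i}$. Here $r=\operatorname{rank}J(T)\le d_Ad_B$, which gives the claimed bound. For each $i$, define $K_i:\HH_A\to\HH_B$ by $K_i\ket{x}\coloneq(\bra{x}_A\otimes\id_B)\ket{v_i}$. Because $J(T)=\sum_{x,y}\ketbra{x}{y}\otimes T(\ketbra{x}{y})$, comparing matrix elements gives $T(\ketbra{x}{y})=\sum_i K_i\ketbra{x}{y}K_i^*$. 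By linearity, $T(\rho)=\sum_i K_i\rho K_i^*$ for all $\rho$.

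\textbf{Step 2: Dilation.} Set $V\coloneq\sum_{i=1}^r K_i\otimes\ket{i}$, i.e.\ $V\ket{\psi}=\sum_i K_i\ket{\psi}\otimes\ket{i}\in\HH_B\otimes\CC^r$. Then
\[
\tr_r[V\rho V^*]=\sum_{i,j}K_i\rho K_j^*\,\langle j|i\rangle=\sum_iK_i\rho K_i^*=T(\rho).
\]

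\textbf{Step 3: Isometry.} Trace preservation gives $\Tr[\rho]=\Tr\bigl[\rho\sum_iK_i^*K_i\bigr]$ for all $\rho$, so $\sum_iK_i^*K_i=\id_A$. Hence $V^*V=\sum_{i,j}K_i^*K_j\langle i|j\rangle=\id_A$, and $V$ is an isometry.

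\textbf{Step 4: Dual map.} For all $\rho$ and $X$,
\[
\Tr[T(\rho)X]=\Tr\bigl[V\rho V^*(X\otimes\id_r)\bigr]=\Tr\bigl[\rho\,V^*(X\otimes\id_r)V\bigr].
\]
By the uniqueness of the dual map in \Cref{def:dual.map}, $T^*(X)=V^*(X\otimes\id_r)V$.

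The only step needing care is Step 1: the matrix-element identification between $J(T)$ and $T(\ketbra{x}{y})$ must use the same index convention as in \Cref{def:Choi}, including the factor $d_A$ that cancels the normalization of $\ket{\Omega_{AA'}}$. The remaining steps are direct computations.
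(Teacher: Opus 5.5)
Your proof is correct and complete. You build Kraus operators from the spectral decomposition of $J(T)$ (using the convention of \Cref{def:Choi}), assemble them into $V=\sum_i K_i\otimes\ket{i}$, obtain $V^*V=\id_A$ from trace preservation, and identify $T^*$ through uniqueness of the dual map; this is the standard textbook argument, and $r=\operatorname{rank}J(T)\le d_Ad_B$ gives the dimension bound. The paper states this theorem only as background without a proof, so there is no alternative route to compare against.
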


This representation allows us to model the circuit complexity of the CPTP and CPU maps.

\begin{definition}[Gate complexity of CPU maps]\label{def:CPUcomplexity} 
Let $T^*: \BB(\HH^{A^\prime}_n) \to \BB(\HH^{B}_n)$ be a completely positive and unital linear map. Given a finite gate set $\mathcal G$, the \emph{gate complexity} $C(T^*)$ is the minimum number of $\mathcal G$-gates such that there exist ancillary spaces $\HH_E,\HH_K$ such that $\dim(\HH^{B}_n) \dim( \HH_E) = \dim(\HH^{A^\prime}_n) \dim( \HH_K) $ and a unitary circuit $U:\HH^{B}_n\otimes \HH_E \to \HH^{A^\prime}_n\otimes \HH_K$ generated by $C(T^*)$ gates from $\mathcal{G}$, such that
	\begin{align}
		T^*(X) = \bra{0}_E\, U^*\,(X\otimes \id_K)\,U\,\ket{0}_E
		\qquad \forall\,X\in \BB(\HH^{A^\prime}_n)\,.
	\end{align}
Completely analogously we say that the \emph{gate complexity} $C(T)$ of a CPTP map $T: \BB(\HH^{B}_n)\to \BB(\HH^{A^\prime}_n)$ is the minimal number of $\mathcal{G}$-gates required to implement a Steinspring dilation of $T$, i.e., such that there exist ancillary spaces $\HH_E,\HH_K$ such that $\dim(\HH^{B}_n) \dim( \HH_E) = \dim(\HH^{A^\prime}_n) \dim( \HH_K) $ and a unitary circuit $U:\HH^{B}_n\otimes \HH_E \to \HH^{A^\prime}_n\otimes \HH_K$ generated by $C(T^*)$ gates from $\mathcal{G}$, such that
\begin{align}
T(\rho)=\tr_K[U(\rho\otimes|0\rangle\langle 0|_E)U^*]\qquad \forall\,\rho\in \BB(\HH^{B}_n)\,.
\end{align}
\end{definition}
We now show that, in our circuit model, the gate complexity of a CPU map $T^*$ coincides with that of its dual CPTP map $T$.

\begin{lem}\label{lem:complexity.dual.equivalence}
Let $T : \BB(\HH^B_n) \to \BB(\HH^{A^\prime}_n)$ be a CPTP map then $C(T)=C(T^*)$.
\end{lem}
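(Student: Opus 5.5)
The plan is to prove $C(T)=C(T^*)$ by two inequalities. Each follows from the observation that a single unitary circuit $U:\HH^B_n\otimes\HH_E\to\HH^{A'}_n\otimes\HH_K$ is simultaneously a valid witness in both halves of \Cref{def:CPUcomplexity}. The two definitions optimize the gate count over the same class of objects, namely triples $(\HH_E,\HH_K,U)$ with the same dimension constraint and the same gate set $\mathcal G$. It therefore suffices to show that, for a fixed triple, $U$ implements $T$ in the Schrödinger sense if and only if it implements $T^*$ in the Heisenberg sense. Then the two sets of admissible circuits coincide, and so do their minimal gate counts.

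First I would take $U$ implementing $T$, i.e.\ $T(\rho)=\tr_K[U(\rho\otimes\ketbra{0}{0}_E)U^*]$ for all $\rho$. For arbitrary $X\in\BB(\HH^{A'}_n)$ and $\rho\in\BB(\HH^B_n)$ I would compute
\begin{align*}
\Tr[T(\rho)X]
&=\Tr\!\left[U(\rho\otimes\ketbra{0}{0}_E)U^*(X\otimes\id_K)\right]\\
&=\Tr\!\left[\rho\,\bra{0}_E U^*(X\otimes\id_K)U\ket{0}_E\right].
\end{align*}
The first equality is the defining property of the partial trace from \Cref{sec:notation}. The second is cyclicity of the trace together with the definition of $\bra{0}_E\,\cdot\,\ket{0}_E$ as a partial matrix element.

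By the uniqueness of the dual map in \Cref{def:dual.map}, this yields $T^*(X)=\bra{0}_E U^*(X\otimes\id_K)U\ket{0}_E$. So the same circuit with the same gate count witnesses $C(T^*)\le C(T)$. The converse runs the same chain of equalities backwards. Starting from the Heisenberg representation of $T^*$, we get $\Tr[\rho\,T^*(X)]=\Tr[\tr_K[U(\rho\otimes\ketbra{0}{0})U^*]X]$ for all $X$. Nondegeneracy of the Hilbert--Schmidt pairing then gives the Schrödinger form of $T=(T^*)^*$, hence $C(T)\le C(T^*)$.

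I do not expect a real obstacle; the argument is purely algebraic. The one point needing care is bookkeeping. The definitions must use the identical dimension constraint $\dim\HH^B_n\dim\HH_E=\dim\HH^{A'}_n\dim\HH_K$ and the identical free ancilla initialization $\ket{0}_E$ in both directions, so that no extra gates are hidden (for instance, a gate needed to invert $U$). Here no such gates arise, because the very same $U$ appears in both formulas and is never replaced by $U^*$ as a circuit.

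I would also note that the minima exist, since gate counts are natural numbers. If no implementing circuit exists for one map, none exists for the other, and both complexities are $+\infty$.
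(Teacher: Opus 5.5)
Your proof is correct and follows the same route as the paper: the same dilation unitary $U$ (equivalently the isometry $V=U(\cdot\otimes\ket{0}_E)$) implements both $T$ in the Schr\"odinger form and $T^*$ in the Heisenberg form, which gives $C(T^*)\le C(T)$, and running the argument backwards gives the reverse inequality. Your explicit trace computation via uniqueness of the dual map, and your observation that the two sets of admissible circuits coincide, make the minimality step slightly cleaner than the paper's contrapositive phrasing, but the argument is the same.
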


\begin{proof}

Let the CPTP map $T:\BB(\HH^B_n) \to \BB(\HH^{A^\prime}_n)$ be efficiently implementable with polynomial complexity $p(n)$. 
I.e. there exist ancillary spaces $\HH_E,\HH_K$, an ancilla state $\ket{0}_E\in \HH_E$, and a $p(n)$-depth unitary
	\begin{align*}
		U : \HH^B_n \otimes \HH_E \longrightarrow \HH^{A^\prime}_n \otimes \HH_K
	\end{align*} 
such that
\begin{align}
T(\rho)=\tr_K[U(\rho\otimes|0\rangle\langle 0|_E)U^*] = \tr_k[V\rho V^*]
\end{align}
where the Stinespring isometry $V : \HH^B_n \to \HH^{A^\prime}_n \otimes \HH_K$ is given by
	\[
		V \ket{\psi}_B \coloneq  U\left(\ket{\psi}_B \otimes \ket{0}_E\right)\,.	
	\]
	Therefore the dual map $T^* : \BB(\HH^{A^\prime}_n) \to \BB(\HH^B_n)$ is given by
	\[
		T^*(X) = V^* (X \otimes \id_K) V = \langle 0|_E U^*(X\otimes\id_K)U |0\rangle_E\,.
	\]
	Moreover, $T^*$ is specified by the same isometry $V$ (equivalently, by the same unitary $U$ and ancilla initialization) and thus satisfies $C(T^*)\leq p(n)$. 
    The converse follows by minimality of $C(T)$, in the sense that if there was some Stinespring dilation of $T^*$ with a complexity $C<p(n)$, then this would, by the above argument in reverse, yield a Stinespring dilation of $T$ with the a complexity $C<p(n)=C(T)$, which is not possible.
\end{proof}

Moreover, after a suitable rescaling, Choi--Jamio\l{}kowski operators of CPU maps can be viewed as effect operators on the space $\BB(\HH^A_n \otimes \HH^B_n)$.

\begin{lem}[Choi-Jamiołkowski representation of CPU maps and effect operators]\label{prop:CHoi.effect}
Let $T^*:\BB(\HH^{A^\prime}_n)\to \BB(\HH^{B}_n)$ be a completely positive and unital (CPU) map, and let $J(T^*)$ denote its Choi operator. Define the normalized Choi operator
\begin{align*}
    E_{AB} \coloneq \frac{J(T^*)}{d_A} \in \BB(\HH^A_n \otimes \HH^B_n)\,.
\end{align*}
Then $E_{AB}$ is an effect operator which satisfies
\begin{align}
    & i)\;0\le E_{AB}\le \id_{AB}\,, \\
    & ii)\;\tr_A[E_{AB}]=\frac{1}{d_A}\id_B\, .
\end{align}
\end{lem}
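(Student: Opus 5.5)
The plan is a direct computation from the definition of the Choi operator, using only complete positivity and unitality of $T^*$. First I rewrite the normalized Choi operator as
\begin{align*}
    E_{AB} = \frac{J(T^*)}{d_A} = (\Id_A \otimes T^*)\bigl(\ketbra{\Omega_{AA'}}{\Omega_{AA'}}\bigr)\,,
\end{align*}
which is immediate from the definition of $J(T^*)$. Lower bound in i): $\ketbra{\Omega_{AA'}}{\Omega_{AA'}}\ge 0$ and $T^*$ is completely positive, so $\Id_A\otimes T^*$ is a positive map and $E_{AB}\ge 0$.

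Upper bound in i): I use that $\ketbra{\Omega_{AA'}}{\Omega_{AA'}}$ is a rank-one projector. Hence $\id_{AA'}-\ketbra{\Omega_{AA'}}{\Omega_{AA'}}\ge 0$. Applying the positive map $\Id_A\otimes T^*$ and using linearity gives
\begin{align*}
    E_{AB} \le (\Id_A\otimes T^*)(\id_A\otimes \id_{A'}) = \id_A \otimes T^*(\id_{A'}) = \id_A\otimes \id_B\,.
\end{align*}
The last equality is exactly unitality of $T^*$.

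Property ii): I first note that the partial trace over $A$ commutes with $\Id_A\otimes T^*$, i.e.\ $\tr_A\circ(\Id_A\otimes T^*) = T^*\circ \tr_A$. It suffices to check this on product operators $X_A\otimes Y_{A'}$, where both sides equal $\Tr[X_A]\,T^*(Y_{A'})$, and then extend by linearity. Since $\tr_A \ketbra{\Omega_{AA'}}{\Omega_{AA'}} = \id_{A'}/d_A$, it follows that
\begin{align*}
    \tr_A[E_{AB}] = T^*\!\left(\tfrac{1}{d_A}\id_{A'}\right) = \tfrac{1}{d_A}\id_B\,,
\end{align*}
again using unitality.

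No step is a real obstacle. The only points needing care are the bookkeeping of which factor $T^*$ acts on ($A'\to B$) and the commutation of the partial trace with a map acting on the other tensor factor, both of which are routine.
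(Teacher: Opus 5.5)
Your proof is correct and follows the paper's argument. Part (i) is exactly the paper's use of $\ketbra{\Omega_{AA'}}{\Omega_{AA'}}\le\id_{AA'}$ together with complete positivity and unitality. For (ii), the paper instead evaluates the Choi reconstruction formula $T^*(X)=\tr_A[J(T^*)(X^{\mathsf T}\otimes\id_B)]$ at $X=\id_{A'}$, whereas you commute $\tr_A$ past $\Id_A\otimes T^*$ directly; that is the same computation the paper itself uses in its later normalization lemma.
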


\begin{proof}
Let $\ket{\Omega_{AA'}}\coloneq\frac{1}{\sqrt{d_A}}\sum_{i=1}^{d_A}\ket{i}_A\otimes \ket{i}_{A'}$, so that
\begin{align*}
    J(T^*) \coloneq d_A(\id_A\otimes T^*)\!\left(\ketbra{\Omega_{AA'}}{\Omega_{AA'}}\right).
\end{align*}
Since $\ketbra{\Omega_{AA'}}{\Omega_{AA'}}\le \id_{AA'}$, complete positivity implies
\begin{align*}
    0\le J(T^*) \le d_A(\id_A\otimes T^*)(\id_{AA'}) = d_A\,\id_A\otimes T^*(\id_{A'}) = d_A\,\id_{AB},
\end{align*}
where we use the fact that $T^*$ is unital in the last step. Dividing by $d_A$ yields $0\le E_{AB}\le \id_{AB}$, proving (i).

For (ii), it follows from Eq.~\eqref{eq:Choi.action.operator} that
\begin{align*}
    \tr_A\!\left[J(T^*)\right]=T^*(\id_{A'})=\id_B\,.
\end{align*}
Therefore, we have that $\tr_A[E_{AB}]=\id_B/d_A$.
\end{proof}

Next, we relate the gate complexity of a CPU map to that of the associated binary test on $AB$ obtained from its (normalized) Choi operator.

\begin{lem}[Circuit complexity of Choi effect operators]\label{lem:choi.effect.complexity}
Let $T^*: \BB(\HH^{A^\prime}_n)\to \BB(\HH^{B}_n)$ be a CPU map with gate complexity $C(T^*)$ as in \Cref{def:CPUcomplexity}, and define its normalized Choi effect operator
\begin{align*}
    E_{AB}\coloneq\frac{1}{d_A}\,J(T^*)=(\Id_A\otimes T^*)\!\left(\ketbra{\Omega_{AA'}}{\Omega_{AA'}}\right)\in \BB(\HH^A_n\otimes \HH^B_n)\,,
\end{align*}
where $\ket{\Omega_{AA'}}=\frac{1}{\sqrt{d_A}}\sum_{x\in[d_A]}\ket{x}_A\otimes \ket{x}_{A'}$. Then $M_{AB}\coloneq\{E_{AB},\,\id_{AB}-E_{AB}\}$ is a valid binary POVM and is implementable with at most
\begin{align*}
    C(M_{AB}) \le C(T^*) + C(U_\Omega)
\end{align*}
$\mathcal{G}$-gates, where $U_\Omega$ denotes a circuit implementing the Bell test $\{\Omega_{AA'},\id_{AA'}-\Omega_{AA'}\}$ (equivalently, preparing $\ket{\Omega_{AA'}}$ and measuring in the computational basis). Moreover, for every state $\rho_{AB}$, the acceptance probability of this implementation equals $\Tr[\rho_{AB}E_{AB}]$.
\end{lem}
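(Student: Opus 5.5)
The plan is to build the binary test explicitly in the Schrödinger picture, then read off both its effect operator and its gate count. By \Cref{lem:complexity.dual.equivalence}, $T^*$ is the dual of a CPTP map $T:\BB(\HH^B_n)\to\BB(\HH^{A'}_n)$ with $C(T)=C(T^*)$. Fix a Stinespring unitary $U:\HH^B_n\otimes\HH_E\to\HH^{A'}_n\otimes\HH_K$ built from $C(T^*)$ gates, so that $T(\rho)=\tr_K[U(\rho\otimes\ketbra{0}{0}_E)U^*]$.

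The protocol on an input $\rho_{AB}$ has three stages. First, append $\ket{0}_E$ and apply $U$ to $BE$, leaving $A$ untouched; this produces $(\Id_A\otimes T)(\rho_{AB})$ on $AA'$ once $K$ is discarded. Second, apply $U_\Omega^*$ to $AA'$, where $U_\Omega$ is the circuit preparing $\ket{\Omega_{AA'}}$ from $\ket{0}_{AA'}$. Third, measure $AA'$ in the computational basis and accept iff the outcome is all zeros. Discarding $K$ and measuring are free operations, and initialising ancillas costs nothing. The total gate count is therefore $C(T^*)+C(U_\Omega)$, and $M_{AB}$ inherits this bound by minimality.

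The acceptance probability is
\begin{align*}
p_{\mathrm{acc}}&=\bra{\Omega_{AA'}}(\Id_A\otimes T)(\rho_{AB})\ket{\Omega_{AA'}}\\
&=\Tr\!\left[\rho_{AB}\,(\Id_A\otimes T^*)(\ketbra{\Omega_{AA'}}{\Omega_{AA'}})\right]\\
&=\Tr[\rho_{AB}E_{AB}],
\end{align*}
where the second line uses the defining duality \eqref{eq:dual}, applied to $\Id_A\otimes T$ (whose dual is $\Id_A\otimes T^*$). Since this holds for every $\rho_{AB}$, the accept effect of the circuit is exactly $E_{AB}$ and the reject effect is $\id_{AB}-E_{AB}$. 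Validity of $M_{AB}$ as a binary POVM then follows either directly from the construction or from \Cref{prop:CHoi.effect}(i).

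The main obstacle is fitting this into the formalism of \Cref{def:POVMcomplexity}, which requires one unitary on the system plus an ancilla, followed by a computational-basis readout on an outcome register.
\begin{itemize}
\item The discarded register $K$ is handled by simply not measuring it.
\item The all-zeros event on $AA'$ can be copied coherently into a single outcome qubit with a multi-controlled NOT. That gate adds a cost linear in $n$, and I would absorb it into $C(U_\Omega)$ by including it in the definition of the Bell-test circuit, as the statement's parenthetical ("equivalently, preparing $\ket{\Omega_{AA'}}$ and measuring in the computational basis") suggests.
\item Alternatively, the readout can count as the free classical post-processing allowed in the circuit model.
\end{itemize}
Finally, $U_\Omega$ itself (Hadamards and CNOTs) has size $O(n)$. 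Hence $C(M_{AB})$ stays polynomial whenever $C(T^*)\le p(n)$.
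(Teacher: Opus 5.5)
Your proposal is correct and follows the paper's proof. Both dilate $T^*$ with the Stinespring unitary from \Cref{def:CPUcomplexity}, apply it to $B$, run the Bell test on $AA'$, and identify the acceptance probability with $\Tr[\rho_{AB}E_{AB}]$ by duality. The paper writes this last step directly as $V^*(\Omega_{AA'}\otimes\id_K)V=T^*(\Omega_{AA'})$ instead of passing through $T$ and \Cref{lem:complexity.dual.equivalence}. Two pieces of your bookkeeping go beyond the paper, which glosses over them: routing the all-zeros event into a single outcome qubit, and your implicit assumption $C(U_\Omega^*)=C(U_\Omega)$. Both are harmless if, as the statement does, you take $U_\Omega$ to be the entire Bell-test circuit.
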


\begin{proof}
By \Cref{def:CPUcomplexity} there exist ancillas $\HH_E,\HH_K$, a fixed $\ket{0}_E\in\HH_E$, and a unitary
\begin{align*}
    U:\HH^{B}_n\otimes \HH_E \longrightarrow \HH^{A'}_n\otimes \HH_K
\end{align*}
implemented with $C(T^*)$ gates. Define the isometry $V:\HH^B_n\to \HH^{A'}_n\otimes \HH_K$ by
\begin{align*}
    V\ket{\psi}_B \coloneq U\left(\ket{\psi}_B\otimes \ket{0}_E\right)\qquad \forall\,\ket{\psi}_B\in \HH^B_n\,,
\end{align*}
so that
\begin{align*}
    T^*(X)=V^*(X\otimes \id_K)V\qquad \forall\,X\in\BB(\HH^{A'}_n)\,.
\end{align*}

Consider the following measurement procedure on input $\rho_{AB}$:
\begin{enumerate}
    \item Apply $V$ to register $B$, producing registers $A'K$.
    \item Perform the two-outcome projective measurement $\{\Omega_{AA'},\,\id_{AA'}-\Omega_{AA'}\}$ on $AA'$ (discard $K$).
\end{enumerate}
Step (1) can be implemented with $C(T^*)$ gates and step (2), with $C(U_\Omega)$ gates. The acceptance probability is
\begin{align*}
    \Tr\!\left[(\Omega_{AA'}\otimes \id_K)\,(\id_A\otimes V)\rho_{AB}(\id_A\otimes V^*)\right]
    &=\Tr\!\left[\rho_{AB}\,(\id_A\otimes V^*)(\Omega_{AA'}\otimes \id_K)(\id_A\otimes V)\right] \\
    &=\Tr\!\left[\rho_{AB}\,(\Id_A\otimes T^*)(\Omega_{AA'})\right]
    =\Tr[\rho_{AB}E_{AB}]\,,
\end{align*}
which proves the claim.
\end{proof}

Therefore, the complexity of implementing the CPU map $T^*$ upper bounds—up to the additive overhead $C(U_\Omega)$ for implementing the Bell measurement—the complexity of implementing the binary measurement $\{E_{AB},\id_{AB}-E_{AB}\}$ on $\BB(\HH^A_n\otimes \HH^B_n)$, where $\tr_A[E_{AB}]=\id_B/d_A$.

\subsection{The cone of efficiently implementable Choi operators}\label{sec:choi-cone}

Let us now define the set and cone of complexity constrained Choi operators.

\begin{definition}[Set of complexity-constrained Choi operators]\label{def:choi.set.eff}
Given a finite, discrete gate set $\mathcal{G}$, the \emph{set of $G$-complexity Choi operators} $\mathbf{J}^{G}_{n,A|B}$ is defined as
\begin{align}
	\mathbf{J}^{G}_{n,A|B}
	\coloneq\left\{\, J(T^*) \ \Big|\ J(T^*) \coloneq d_A (\Id_A \otimes T^*) \left(\ketbra{\Omega_{AA'}}{\Omega_{AA'}}\right),\ \ C(T^*) \leq G \right\},
\end{align}
where $C(T^*)$ denotes the circuit complexity of implementing the CPU map $T^*:\BB(\HH^{A^\prime}_n)\to \BB(\HH^{B}_n)$ given the gate-set $\mathcal{G}$.
When $G$ is a polynomial $p(n):\NN\to\NN$, we write $\mathbf{J}^{p(n)}_{n,A|B}$.
\end{definition}

\begin{lem}[Counting exact bounded-size channel circuits]\label{lem:counting-choi}
Fix a finite, discrete gate set $\mathcal{G}$. Since $\mathcal{G}$ is finite, there exists
$\ell\in\NN$ such that every gate in $\mathcal{G}$ acts nontrivially on at most $\ell$ registers.
Then, for every $n,G\in\NN$, the set $\mathbf{J}^{G}_{n,A|B}$ is finite. More precisely,
\begin{align}
    \left|\mathbf{J}^{G}_{n,A|B}\right|
    \leq
    \sum_{m=0}^{G}
    \left(|\mathcal{G}|(n+\ell G)^{\ell}\right)^m
    \leq
    2^{\,c\,G\log(n+G)}
\end{align}
for some constant $c>0$ depending only on $\mathcal{G}$ and $\ell$.
In particular, if $G=p(n)$ is polynomial, then there exists a polynomial $r(n)$ such that
\begin{align}
    \left|\mathbf{J}^{p(n)}_{n,A|B}\right| \leq 2^{r(n)} .
\end{align}
\end{lem}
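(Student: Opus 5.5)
The plan is a direct counting argument. Every element of $\mathbf{J}^{G}_{n,A|B}$ is $J(T^*)$ for some CPU map $T^*$ with $C(T^*)\le G$. By \Cref{def:CPUcomplexity}, such a map is determined by a unitary circuit $U$ built from at most $G$ gates of $\mathcal{G}$, together with the free choices of ancilla initialization in $\ket{0}_E$ and the traced-out register $K$. Since $J$ is a function of $T^*$, and $T^*$ is a function of the circuit description, it suffices to bound the number of distinct circuit descriptions of size at most $G$. The map from circuits to Choi operators is then a surjection onto $\mathbf{J}^{G}_{n,A|B}$, and finiteness follows.

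First, I bound the number of wires. A circuit with $m\le G$ gates, each acting on at most $\ell$ registers, can touch at most $\ell G$ ancilla registers nontrivially. Ancillas that are never acted on are initialized in $\ket{0}$ and traced out, so they contribute trivially and can be dropped. We may therefore assume the circuit lives on at most $n+\ell G$ registers.

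Next, I count the circuits themselves. Each gate is specified by choosing an element of $\mathcal{G}$ (at most $|\mathcal{G}|$ options) and an ordered tuple of at most $\ell$ target registers among $n+\ell G$ (at most $(n+\ell G)^{\ell}$ options). A sequence of $m$ gates thus admits at most $\bigl(|\mathcal{G}|(n+\ell G)^\ell\bigr)^m$ choices, and summing over $m=0,\dots,G$ gives the first inequality. The remaining discrete data must also be absorbed: which registers form the output $A'$ versus $K$, and which are ancillas $E$. These choices are fixed by the convention that the input register is $B$ and the output is $A'$, or at worst they contribute a factor $2^{O((n+\ell G)\log(n+\ell G))}$, which is absorbed in the final bound.

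Finally, the geometric sum is bounded by $(G+1)\bigl(|\mathcal{G}|(n+\ell G)^\ell\bigr)^G$. Taking logarithms gives $\log(G+1)+G\log|\mathcal{G}|+\ell G\log(n+\ell G)\le cG\log(n+G)$, with $c$ depending only on $|\mathcal{G}|$ and $\ell$, using $\log(n+\ell G)\le \log\ell+\log(n+G)$ and assuming $n+G\ge2$. For $G=p(n)$ we set $r(n)=c\,p(n)\log(n+p(n))$, which is bounded by a polynomial. I expect the only delicate step to be the first one, namely justifying that the number of ancilla registers and the $E/K$ bookkeeping can be bounded in terms of $G$. The rest is routine.
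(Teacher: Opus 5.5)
Your proposal is correct and follows essentially the same route as the paper: drop ancillas that no gate touches so every circuit lives on at most $n+\ell G$ registers, count $|\mathcal{G}|(n+\ell G)^{\ell}$ choices per gate position, sum over $m\le G$, and bound the sum by $(G+1)\bigl(|\mathcal{G}|(n+\ell G)^{\ell}\bigr)^{G}$. You should drop your fallback remark that the $A'/K/E$ bookkeeping could cost a factor $2^{O((n+\ell G)\log(n+\ell G))}$: such a factor is not absorbed into $2^{c\,G\log(n+G)}$ when $G\ll n$, and it would break the first displayed inequality. The fixed wire convention (output $A'$ on designated wires, untouched ancillas traced out), which the paper also uses implicitly, is what makes the stated bounds hold exactly.
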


\begin{proof}
By \Cref{def:CPUcomplexity}, every element of $\mathbf{J}^{G}_{n,A|B}$ is the Choi operator of a CPU map
$T^*:\BB(\HH^{A^\prime}_n)\to\BB(\HH^B_n)$ implemented by a unitary circuit $U$ of size at most $G$,
possibly using ancillary registers.

Although ancillary registers are unrestricted in principle, only those on which at least one gate acts
are relevant. Hence every such circuit is equivalent to one in which all ancillas on which no gate acts
have been removed. Moreover, the remaining ancillas may be relabelled according to the first gate in
which they appear. Since each gate acts on at most $\ell$ registers, a circuit with at most $G$ gates
uses at most $\ell G$ active ancilla registers. Therefore every such circuit admits a canonical
description on at most
\begin{align}
    N\coloneq n+\ell G
\end{align}
registers.

For each of the $m\leq G$ gate positions, one chooses:
\begin{enumerate}
    \item the gate in $\mathcal{G}$, giving at most $|\mathcal{G}|$ possibilities;
    \item the ordered tuple of registers on which it acts, giving at most $N^\ell$ possibilities.
\end{enumerate}
Hence the number of exact circuit descriptions of length $m$ is at most
\begin{align}
    \left(|\mathcal{G}|\,N^\ell\right)^m
    =
    \left(|\mathcal{G}|(n+\ell G)^\ell\right)^m .
\end{align}
Summing over all $m\leq G$ gives
\begin{align}
    \left|\mathbf{J}^{G}_{n,A|B}\right|
    \leq
    \sum_{m=0}^{G}
    \left(|\mathcal{G}|(n+\ell G)^\ell\right)^m .
\end{align}
Since distinct Choi operators are fewer than distinct circuit descriptions, this proves finiteness.

Finally, because $\ell$ and $|\mathcal{G}|$ are constants,
\begin{align}
    \sum_{m=0}^{G}
    \left(|\mathcal{G}|(n+\ell G)^\ell\right)^m
    \leq
    (G+1)\left(|\mathcal{G}|(n+\ell G)^\ell\right)^G
    \leq
    2^{\,c\,G\log(n+G)}
\end{align}
for some constant $c>0$. Taking $G=p(n)$ yields the claim.
\end{proof}

Therefore, for every polynomial $p(n)$, the family
$\{\mathbf{J}^{p(n)}_{n,A|B}\}_{n\in\NN}$ consists of at most $2^{r(n)}$ Choi operators for some polynomial $r(n)$. In particular, for each fixed $n$, the set $\mathbf{J}^{p(n)}_{n,A|B}$ is finite. There is a direct connection between this family and the family of efficiently generated effect operators $\{\Eeff_n\}_{n \in \NN}$ defined in \cite{YHK25}.

\begin{lem}\label{lem:CPU.Effect.relation}
Given a gate set $\mathcal{G}$, let $\Eeff_n \subseteq \BB(\HH^B_n)$ denote the set of efficiently generated binary effect operators from~\cite[Definition~3.8]{YHK25}, i.e.\ those $F_B$ for which the binary POVM $\{F_B,\id_B-F_B\}$ is implementable by a circuit of size $q(n)$.
Let $\HH^{A^\prime}_n \cong \CC^2$ with computational basis $\{\ket{0},\ket{1}\}$. Then, for every $F_B\in \Eeff_n$ there exists a CPU map $T^*:\BB(\HH^{A^\prime}_n)\to \BB(\HH^B_n)$ with $C(T^*)\le q(n)$ and Choi operator $J_{AB}\in \mathbf{J}^{q(n)}_{n,A|B}$ such that
\begin{align*}
    F_B
    = T^*(\ketbra{0}{0}_{A^\prime})
    = \tr_A\!\left[J_{AB}(\ketbra{0}{0}_A\otimes \id_B)\right]
    = \bra{0}_A J_{AB}\ket{0}_A \, .
\end{align*}
Similarly, $\id_B-F_B = T^*(\ketbra{1}{1}_{A^\prime})$. In particular,
\begin{align*}
    \Eeff_n \subseteq \left\{\, \bra{i}_A J_{AB}\ket{i}_A \ \Big|\ J_{AB}\in \mathbf{J}^{q(n)}_{n,A|B},\ i\in\{0,1\}\right\}.
\end{align*}
The same inclusion holds more generally for any choice of $\dim(\HH^{A^\prime}_n)\ge 2$.
\end{lem}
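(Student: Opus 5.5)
The plan is to build the CPU map directly from the circuit that implements the binary POVM $\{F_B,\id_B-F_B\}$. Since $F_B\in\Eeff_n$, \Cref{def:POVMcomplexity} provides a unitary $U$ on $\HH^B_n\otimes\HH_K$ with $\dim\HH_K=2$, built from at most $q(n)$ gates. For every state $\rho$ it satisfies $\Tr[F_B\rho]=\Tr[(\id_B\otimes\ketbra{0}{0})U(\rho\otimes\ketbra{0}{0})U^*]$; the outcome labels may be relabelled $0,1$ at no cost. Because this identity holds for all $\rho$, we get the operator identity
\begin{align*}
F_B=\bra{0}_K U^*(\id_B\otimes\ketbra{0}{0}_K)U\ket{0}_K\,,
\end{align*}
and the analogous identity for $\id_B-F_B$ with $\ketbra{1}{1}_K$.

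Next we reinterpret this circuit as a Stinespring dilation in the sense of \Cref{def:CPUcomplexity}. We take the ancilla $\HH_E\coloneq\HH_K$ as input, initialised in $\ket{0}$. We identify the output register $A'\cong\CC^2$ with the measured ancilla qubit, and we let the discarded register be $B$ itself. The dimension condition $\dim\HH^B_n\cdot 2=2\cdot\dim\HH^B_n$ then holds. Define
\begin{align*}
T^*(X)\coloneq\bra{0}_E U^*(X_{A'}\otimes\id_B)U\ket{0}_E\,.
\end{align*}
This map is CP and unital, and $C(T^*)\le q(n)$, so $J_{AB}\coloneq J(T^*)\in\mathbf{J}^{q(n)}_{n,A|B}$. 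Its dual $T$ is the CPTP map that applies $U$ and traces out everything except the outcome qubit. By \Cref{lem:complexity.dual.equivalence}, the complexity count applies equally to $T$ and $T^*$. By construction, $T^*(\ketbra{0}{0})=F_B$ and $T^*(\ketbra{1}{1})=\id_B-F_B$.

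The remaining equalities follow from \eqref{eq:Choi.action.operator} with $X=\ketbra{0}{0}$. Since $X^T=X$ in the computational basis, this gives $T^*(\ketbra{0}{0})=\tr_A[J_{AB}(\ketbra{0}{0}_A\otimes\id_B)]$. By cyclicity, this equals $\bra{0}_AJ_{AB}\ket{0}_A$, and the same computation with $\ket{1}$ gives the second diagonal block. This proves the stated inclusion for $\dim\HH^{A'}_n=2$.

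For the general case $\dim\HH^{A'}_n=d\ge2$, embed the outcome qubit into a $d$-level register initialised in $\ket{0}$, using ancilla padding, which is free. The map $\ket{a}\mapsto\ket{a}$ for $a\in\{0,1\}$ requires no gates, so the same $T^*$ composed with this embedding realises $F_B$ and $\id_B-F_B$ as the diagonal blocks $i=0,1$. The only delicate point, rather than a real obstacle, is bookkeeping. One must ensure that the definition of $C(T^*)$ allows the discarded register to be the original $B$ register. One must also ensure that padding the output to dimension $d$, and matching the dimension constraint $\dim\HH^B_n\dim\HH_E=\dim\HH^{A'}_n\dim\HH_K$, is possible with free ancillas and no extra gates. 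If the register sizes do not divide evenly, choose $\HH_E$ and $\HH_K$ as suitable free ancilla registers so that the counts match.
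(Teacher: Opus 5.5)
Your proposal is correct and follows essentially the same route as the paper: you take the dilation unitary of the efficient binary POVM and reuse it unchanged as the Stinespring dilation of $T^*(X)=\bra{0}U^*(X\otimes\id_B)U\ket{0}$, with the measured qubit playing the role of $A'$ so that $C(T^*)\le q(n)$, and then read off $F_B$ and $\id_B-F_B$ as the diagonal blocks of $J(T^*)$ via \eqref{eq:Choi.action.operator}. Your explicit matching of registers to \Cref{def:CPUcomplexity} and your ancilla-padding argument for $\dim\HH^{A'}_n\ge 2$ add bookkeeping that the paper leaves implicit, since it only asserts the general case.
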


\begin{proof}
Fix $F_B\in \Eeff_n$. By definition, there exists a circuit of size $q(n)$ implementing the binary POVM $\{F_B,\id_B-F_B\}$. By \Cref{Thm: Naimark}, we may assume this circuit can be described in the following way
\begin{align*}
    F_B=\bra{0}_{A'}\,U^*(\id_B\otimes \ketbra{0}{0}_{A'})U\,\ket{0}_{A'}\, .
\end{align*}
Define $T^*:\BB(\HH^{A'}_n)\to \BB(\HH^B_n)$ by
\begin{align*}
    T^*(X)\coloneq\bra{0}_{A'}\,U^*(\id_B\otimes X)U\,\ket{0}_{A'} \qquad \forall\,X\in \BB(\HH^{A'}_n)\,.
\end{align*}
Then $T^*$ is completely positive and unital, and it satisfies $T^*(\ketbra{0}{0}_{A'})=F_B$ and $T^*(\ketbra{1}{1}_{A'})=\id_B-F_B$. Moreover, $C(T^*)\le q(n)$ since it is implemented by the same circuit. The rest of the proof follows from \Cref{eq:Choi.action.operator}.
\end{proof}

Then, for every $n \in \NN$, the set of efficient measurements $\Eeff_n$ implementable with $p(n)$ gates is included (as a representation of its corresponding quantum channel) in the set of complexity constrained Choi operators $\mathbf{J}^{p(n)}_{n,A|B}$.

Similarly to \cite{YHK25}, one can define the family of cones generated by the family of sets \(\{\mathbf{J}^{p(n)}_{n,A|B}\}_{n\in\NN}\). In contrast to the measurement setting considered there, these cones are not meant to capture efficient measurements on \(\BB(\HH^A_n)\). Rather, they capture the operator representations of efficiently implementable quantum channels $T:\BB(\HH^B_n)\to \BB(\HH^{A'}_n)$.

\begin{definition}[Family of the cones of efficiently generated Choi operators ]\label{def:fully.quantum.cone}
Given a Hilbert space $\HH^A_n \otimes \HH^B_n$, the cone of efficiently generated Choi operators is given by
	\begin{align}\label{equ:qq-cone}
		\cC^{\mathbf{E}_\mathrm{eff}}_{n, A|B}=\cone(\mathbf{J}^{p(n)}_{n,A|B}) = \bigcup_{\lambda \geq 0} \lambda \cdot \conv\big(\mathbf{J}^{p(n)}_{n,A|B}\big)\,.
	\end{align}
where $\conv\big(\mathbf{J}^{p(n)}_{n,A|B}\big)$ is the convex hull of the set $\mathbf{J}^{p(n)}_{n,A|B}$.
\end{definition}

\begin{lem}[Normalization of Choi operators]\label{lem:choi-normalization}
	For any CPU map $T^*$, the Choi operator satisfies $J(T^*)\ge 0$ and $\tr_A[J(T^*)]=\id_B$.
	Consequently every $F_{AB}\in\cC^{\mathbf{E}_\mathrm{eff}}_{n,A|B}$ obeys
	$\tr_A[F_{AB}]=\lambda\,\id_B$ for some $\lambda\ge 0$.
\end{lem}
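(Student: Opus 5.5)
The first claim will follow from two properties of $\Id_A\otimes T^*$: complete positivity and unitality of $T^*$. These give $J(T^*)\ge 0$ and $\tr_A[J(T^*)]=\id_B$. The second claim, about the cone, then follows by linearity of the partial trace.

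For positivity, $\ketbra{\Omega_{AA'}}{\Omega_{AA'}}\ge 0$, and $T^*$ is completely positive, so $\Id_A\otimes T^*$ is positive. Hence $J(T^*)=d_A(\Id_A\otimes T^*)(\ketbra{\Omega_{AA'}}{\Omega_{AA'}})\ge 0$.

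For the normalization, I would use Eq.~\eqref{eq:Choi.action.operator} with $X=\id_{A'}$, whose transpose is again the identity. This gives $\tr_A[J(T^*)]=T^*(\id_{A'})=\id_B$, where the last step is unitality. To avoid any worry about the transpose convention, one can also expand directly:
\begin{align*}
J(T^*)=\sum_{x,y\in[d_A]}\ketbra{x}{y}_A\otimes T^*(\ketbra{x}{y}_{A'})\,,
\end{align*}
so that
\begin{align*}
\tr_A[J(T^*)]=\sum_x T^*(\ketbra{x}{x})=T^*(\id_{A'})=\id_B\,.
\end{align*}
This is also exactly part (ii) of \Cref{prop:CHoi.effect} before dividing by $d_A$, so that lemma can simply be invoked.

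For the consequence, I would use \Cref{def:fully.quantum.cone}. Any $F_{AB}\in\cC^{\mathbf{E}_\mathrm{eff}}_{n,A|B}$ can be written as $F_{AB}=\lambda\sum_i p_i J_i$. Here $\lambda\ge0$, $(p_i)$ is a probability vector, and $J_i\in\mathbf{J}^{p(n)}_{n,A|B}$. The sum is finite, both because convex hulls consist of finite convex combinations and because the set is finite by \Cref{lem:counting-choi}. Linearity of $\tr_A$ then gives
\begin{align*}
\tr_A[F_{AB}]=\lambda\sum_i p_i\,\id_B=\lambda\,\id_B\,.
\end{align*}
Nothing here is a real obstacle. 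The only point needing care is keeping the transpose convention in Eq.~\eqref{eq:Choi.action.operator} consistent, and the direct expansion above removes that issue.
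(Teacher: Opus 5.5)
Your proposal is correct and follows essentially the same route as the paper: complete positivity gives $J(T^*)\ge 0$, unitality gives $\tr_A[J(T^*)]=T^*(\id_{A'})=\id_B$, and the cone statement follows by linearity of $\tr_A$ over finite nonnegative combinations of generators. The paper reaches the normalization by commuting $\tr_A$ past $\Id_A\otimes T^*$ and using $\tr_A[\ketbra{\Omega_{AA'}}{\Omega_{AA'}}]=\id_{A'}/d_A$, while you use the explicit block expansion, which is the same computation. Your remark that the cone is finitely generated, so no topological-closure issue arises, makes explicit what the paper's phrase ``conic closure'' leaves implicit.
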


\begin{proof}
	Complete positivity gives $J(T^*)\ge 0$, and
	$\tr_A[J(T^*)]=d_A\,T^*(\tr_A[\ketbra{\Omega_{AA^\prime}}{\Omega_{AA^\prime}}])=T^*(\id_{A^\prime})=\id_B$.
	The trace condition extends by linearity and the conic closure.
\end{proof}

Throughout this paper we will work with two normalized bases of the same cone $\cC^{\mathbf{E}_\mathrm{eff}}_{n,A|B}$. The first one is the \emph{channel-normalized} slice,
\begin{align*}
\conv\big(\mathbf{J}^{p(n)}_{n,A|B}\big)\coloneq\left\{F_{AB}\in \cC^{\mathbf{E}_\mathrm{eff}}_{n,A|B}\,:\,\tr_A[F_{AB}]=\id_B\right\},
\end{align*}
which corresponds to the convex hull of efficiently generated Choi operators. The second one is the \emph{effect-normalized} slice,
\begin{align*}
\Eeff_{n, A|B}\coloneq\left\{F_{AB}\in \cC^{\mathbf{E}_\mathrm{eff}}_{n,A|B}\,:\,\tr_A[F_{AB}]=\id_B/d_A\right\},
\end{align*}
which arises from the normalized Choi effect operators. Nevertheless, as discussed in \Cref{lem:choi-normalization}, these two normalizations differ only by a positive scalar rescaling, and hence generate the same proper cone. Which normalization is preferable depends on the application.

\begin{lem}\label{lem:proper.cone}
	Let
    \begin{align*}
        	\VV_{n,A|B}\;\coloneq\;\left\{X\in \Herm(\HH^A_n\otimes \HH^B_n)\;:\;\tr_A[X]\in \RR\cdot \id_{B} \right\}
    \end{align*}
be the subspace spanned by Hermitian operators whose partial trace is proportional to the identity. If
    \begin{align*}
        		\Span_{\RR}\{\,\mathbf{J}^{p(n)}_{n,A|B}\,\}\;=\;\VV_{n,A|B}
    \end{align*}
	for every $n \in \NN$, then the family of cones $\{ \cC^{\mathbf{E}_\mathrm{eff}}_{n, A|B}\}_{n \in\NN}$ is a family of proper cones with respect to the subspaces $\VV_{n,A|B}$, i.e., each $\cC^{\mathbf{E}_\mathrm{eff}}_{n, A|B}$ is closed, convex, pointed, and solid in $\VV_{n,A|B}$.
\end{lem}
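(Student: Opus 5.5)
We verify the four properties of a proper cone one at a time, working inside the real subspace $\VV_{n,A|B}$. First we check that the cone lives in this subspace. By \Cref{lem:choi-normalization}, every element of $\mathbf{J}^{p(n)}_{n,A|B}$ is Hermitian (indeed positive semidefinite) and satisfies $\tr_A[J]=\id_B$. Since the condition $\tr_A[X]\in\RR\cdot\id_B$ is linear, $\cC^{\mathbf{E}_\mathrm{eff}}_{n,A|B}\subseteq\VV_{n,A|B}$. Convexity holds by construction, because $\cone(\cdot)$ is defined as the smallest convex cone containing the set.

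For \emph{closedness}, we use finiteness. By \Cref{lem:counting-choi}, $\mathbf{J}^{p(n)}_{n,A|B}=\{J_1,\dots,J_N\}$ is a finite set, so
\[
\cC^{\mathbf{E}_\mathrm{eff}}_{n,A|B}=\Bigl\{\sum_{i=1}^N \mu_i J_i:\mu_i\ge 0\Bigr\}
\]
is a finitely generated (polyhedral) cone. It is the image of the closed orthant $\RR_{\ge0}^N$ under a linear map. Such cones are closed, and we would cite the standard fact or sketch it: by Carathéodory, each point lies in a cone spanned by a linearly independent subfamily, each such cone is closed as the image of an orthant under an injective linear map, and there are finitely many of them. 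This is the only step requiring any real argument, though it is standard.

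For \emph{pointedness}, suppose $X$ and $-X$ both lie in the cone. Every element of the cone is a nonnegative combination of positive semidefinite operators, so $X\ge0$ and $-X\ge0$. This forces $X=0$. Equivalently, the cone sits inside $\Pos(\HH^A_n\otimes\HH^B_n)$, which is pointed.

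For \emph{solidity} relative to $\VV_{n,A|B}$, the cone spans $\Span_\RR\{\mathbf{J}^{p(n)}_{n,A|B}\}$, since any real combination of generators is a difference of two nonnegative combinations. By the hypothesis of the lemma, this span equals $\VV_{n,A|B}$. Hence $\Span(\cC)=\VV_{n,A|B}$, which in finite dimensions is equivalent to the cone having nonempty interior relative to $\VV_{n,A|B}$. Since these four properties hold for each $n$, the family is a family of proper cones.
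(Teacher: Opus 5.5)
Your proposal is correct and follows the paper's proof essentially step by step. Finiteness of $\mathbf{J}^{p(n)}_{n,A|B}$ from \Cref{lem:counting-choi} gives closedness, containment in $\Pos(\HH^A_n\otimes\HH^B_n)$ gives pointedness, the normalization $\tr_A[J(T^*)]=\id_B$ places the cone inside $\VV_{n,A|B}$, and the spanning hypothesis gives solidity; your only addition is the conic Carath\'eodory sketch for why finitely generated cones are closed, a standard fact the paper simply asserts.
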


\begin{proof}
	Convexity is clear from the definition. By \Cref{lem:counting-choi}, the set $\mathbf{J}^{p(n)}_{n,A|B}$ is finite for every $n$. Hence $\cC^{\mathbf{E}_\mathrm{eff}}_{n,A|B} =\cone(\mathbf{J}^{p(n)}_{n,A|B})$ is a finitely generated cone, and therefore it is closed. Pointedness follows since every generator $J(T^*)$ is positive semidefinite, since $T^*$ are all completely positive, and $\cC^{\mathbf{E}_\mathrm{eff}}_{n,A|B}\subseteq \Pos(\HH^A_n\otimes\HH^B_n)$, hence $\cC^{\mathbf{E}_\mathrm{eff}}_{n,A|B}\cap (-\cC^{\mathbf{E}_\mathrm{eff}}_{n,A|B})=\{0\}$. Moreover, for CPU maps one has $\tr_{A_n}J(T^*)=\id_{B_n}$, so $\cC^{\mathbf{E}_\mathrm{eff}}_{n,A|B}\subseteq \VV_{n,A|B}$.

	Lastly, solidness holds in $\VV_{n,A|B}$ as the assumption implies
	\begin{align*}
		\Span_{\RR}\left(\cC^{\mathbf{E}_\mathrm{eff}}_{n,A|B}\right)
		=\Span_{\RR}\{\mathbf{J}^{p(n)}_{n,A|B}\}
		=\VV_{n,A|B}\,,
	\end{align*}
	(using $\Span(\cone(\conv(S)))=\Span(S)$ for nonempty $S$). Hence, by the standard full-dimensionality  argument (cf.\ \cite[Section 2.1.3]{BV04}), $\mathrm{ri}_{\VV_{n,A|B}}(\cC^{\mathbf{E}_\mathrm{eff}}_{n,A|B})\neq\emptyset$, where $\mathrm{ri}$ denotes the relative interior.
\end{proof}

The solidness condition
\begin{align*}
	\Span_{\RR}\left(\cC^{\mathbf{E}_\mathrm{eff}}_{n,A|B}\right)
	=\Span_{\RR}\{\mathbf{J}^{p(n)}_{n,A|B}\}
	=\VV_{n,A|B}\,,
\end{align*}
can be seen as an informational completeness condition at the level of Choi representations: every Choi operator associated with a CPU map (efficient or not) lies in $\VV_{n,A|B}$ and can be spanned by Choi operators of efficiently implementable CPU maps, i.e.,
\begin{align*}
	\forall X \in \VV_{n,A|B}\; \exists\, c_i \in \RR,\, J^i_{AB} \in \mathbf{J}^{p(n)}_{n,A|B}\ \text{s.t.}\
	X = \sum_i c_i\, J^i_{AB}\,.
\end{align*}

\begin{example}[An explicit example of a set of informationally complete quantum channels]\label{rem:example.info.complet}
In order to show the feasibility of the informational completeness assumption under complexity constraints, let us provide an example. Fix $n\in\NN$ and set of symbols $\mathcal{S}_n\coloneq\{0,1,+,+i\}^n$, where its corresponding quantum states are defined as $|+\rangle\coloneq 1/\sqrt{2}(|0\rangle+|1\rangle)$ and $|+i\rangle\coloneq 1/\sqrt{2}(|0\rangle+i|1\rangle)$. Then, let us define the product states on $\HH^{A\prime}_n\simeq(\CC^2)^{\otimes n}$ by
\begin{align*}
    \sigma_s\coloneq\bigotimes_{k=1}^n |s_k\rangle\!\langle s_k|\,,
\end{align*}
and set $\sigma_{0^n}\coloneq|0\rangle\langle 0|^{\otimes n}$. For each Pauli string $Q \in \{\pm\id, \pm X, \pm Y, \pm Z \}^{\otimes n}$, let us now define the binary POVM $\{E_Q,\id-E_Q\}$ with 
\begin{align*}
      E_Q\coloneq \frac{\id+Q}{2}\,.
\end{align*}
Assuming that the POVMs $\{\{E_Q,\id-E_Q\}\}_Q$ and the states $\{\sigma_s\}_{s \in \mathcal{S}_n}$ are efficiently implementable, then the measurement map $T_{s,Q}: \BB(\HH_n^{B}) \to \BB(\HH_n^{A^\prime})$ defined by
\begin{align*}
    T_{s,Q}(\rho) \coloneq \Tr[E_Q \rho]\sigma_s + \Tr[(\id-E_Q )\rho]\sigma_{0^n}
\end{align*}
is also efficiently implementable for every $s\in\mathcal{S}_n$ and $Q$.

Note that this linear map is nothing but a measure-and-prepare channel, where the set of measurements is the tomographically complete set of Pauli measurements, and the prepared states are products of Pauli eigenstates. One can easily verify that it is a CPTP map and, thus, its dual, a CPU map. Its corresponding Choi operators can be written as
\begin{align*}
J(T_{s,Q}^*)&=\sigma_s^{\,\mathsf T}\otimes E_Q+\sigma_{0^n}^{\,\mathsf T}\otimes(\id-E_Q) \\
&= \sigma_{0^n} \otimes \id + (\sigma_s - \sigma_{0^n})^{\,\mathsf T} \otimes E_Q\,,
\end{align*}
where ${\mathsf T}$ denotes transpose in the computational basis. Therefore,
\begin{align*}
    \Span_{\RR}\{J(T_{s,Q}^*): s,Q\} = \RR\cdot(\id_A \otimes \id_B) \oplus \Herm_0(\HH_{A_n}) \otimes \Herm(\HH_{B_n}) =\VV_{n,A|B}\,.
\end{align*}
\end{example}

Let us now define the dual subspace of the channel subspace $\VV_{n,A|B}$. Following \cite{Jen13,Jen14}, the algebraic dual $\VV^*_{n,A|B}$ can be identified with a quotient of $\Herm(\HH^A_n \otimes \HH^B_n )$ via the Hilbert-Schmidt product. Let us define the orthogonal complement of $\VV_{n,A|B}$:
\begin{align*}
	\VV^{\perp}_{n,A|B}\coloneq  \left\{Y \in \Herm(\HH^A_n \otimes \HH^B_n ) \, |\, \Tr[YX] = 0 \; \forall X \in \VV_{n,A|B} \right\}\,.
\end{align*}
Let $\pi: \Herm(\HH^A_n \otimes \HH^B_n ) \rightarrow\Herm(\HH^A_n \otimes \HH^B_n )/\VV^{\perp}_{n,A|B}$, be the quotient map $\pi(Y) = Y +  \VV^{\perp}_{n,A|B}$. Then $\pi$ induces an isomorphism $\VV^*_{n,A|B} \cong \Herm(\HH^A_n \otimes \HH^B_n )/\VV^{\perp}_{n,A|B}$, where the duality between $\VV_{n,A|B}$ and $\VV^*_{n,A|B}$ is given by
\begin{align*}
	\langle \pi(Y), X \rangle = \Tr[YX], \qquad Y \in \Herm(\HH^A_n \otimes \HH^B_n ),\, X \in \VV_{n,A|B}\,.
\end{align*}
In particular, two operators $X$ and $X' \in \Herm(\HH^A_n \otimes \HH^B_n )$ define the same functional on the effect subspace $\VV_{n,A|B}$ if
\begin{align*}
	X-X' = \id_A \otimes Z, \qquad \Tr[Z] = 0\,.
\end{align*}

\begin{definition}[Dual cone of quantum states]
	Let $\cC^{\mathbf{E}_\mathrm{eff} }_{n, A|B}\subseteq \VV_{n,A|B}$ be the cone of efficiently generated Choi operators. Its dual cone is defined as
	\begin{align}
		\cC^{\mathcal{S}_\mathrm{eff}}_{n, A|B}\coloneq  \cC^{\mathbf{E}_\mathrm{eff} \, *}_{n, A|B} = \left\{ \pi(Y) \, | \, Y \in  \Herm(\HH^A_n \otimes \HH^B_n ), \, \Tr[YX]\geq0 \; \forall X \in  \cC^{\mathbf{E}_\mathrm{eff}}_{n, A|B}  \, \right\}\,,
	\end{align}
	where $\pi(Y)= Y + \VV^{\perp}_{n,A|B}$ is the quotient representation of the operator $Y \in  \Herm(\HH^A_n \otimes \HH^B_n )$.
\end{definition}

Then, the dual cone contains (the image of) every quantum state: if $\cC^{\mathbf{E}_\mathrm{eff}}_{n,A|B}\subseteq \Pos(\HH^A_n\otimes \HH^B_n)$, then $\pi(\rho_{AB})\in \cC^{\mathcal{S}_\mathrm{eff}}_{n,A|B}$ for every $\rho_{AB}\ge 0$, since $\Tr[\rho_{AB}F_{AB}]\ge 0$ for all $\rho_{AB}\ge 0$ and all $F_{AB}\ge 0$. Thus, whenever $\cC^{\mathbf{E}_\mathrm{eff}}_{n,A|B}$ is a proper cone with respect to $\VV_{n,A|B}$, then $\cC^{\mathcal{S}_\mathrm{eff}}_{n,A|B}$ is also proper with respect to $\VV^*_{n,A|B}$.

Under the informational-completeness condition, $\cC^{\mathcal{S}_\mathrm{eff}}_{n, A|B}$ is a proper cone, and hence we can define a partial order on $\VV^*_{n,A|B}$, via
\begin{align}\label{eq:cone.partial.order}
	\pi(\rho_{AB}) \leq_{\cC^{\mathcal{S}_\mathrm{eff}}_{n, A|B}}\pi(\sigma_{AB}) \, \iff \Tr[X \rho_{AB}] \leq \Tr[X \sigma_{AB}] \; \forall X \in \cC^{\mathbf{E}_\mathrm{eff} }_{n, A|B}\,.
\end{align}
In a slight abuse of notation, we will often omit the quotient map $\pi(\cdot)$ and write  $\rho_{AB} \leq_{\cC^{\mathcal{S}_\mathrm{eff}}_{n, A|B}}\sigma_{AB}$ to mean $\pi(\rho_{AB}) \leq_{\cC^{\mathcal{S}_\mathrm{eff}}_{n, A|B}}\pi(\sigma_{AB})$.

\section{Computational Conditional Min-Entropy}\label{sec:comp-min-entropy}

In this section, we first introduce the \emph{computational max-divergence}, defined with respect to the cone of efficiently implementable Choi operators, and study its basic properties as well as its relation to other operationally meaningful quantities, such as a corresponding computational hypothesis-testing divergence. We then show, in \Cref{sec:op.mean.min.ent}, that the computational conditional min-entropy admits the same operational interpretation as in the informational setting: it characterizes the largest fidelity with a maximally entangled state achievable by a recovery channel acting on $B$. Next, in \Cref{sec:cq-case}, we consider the special case of bipartite classical-quantum states and prove that the corresponding operational quantity is the maximal guessing probability under efficiently implementable measurements. Finally, in \Cref{sec:non.cond.case}, we turn to the non-conditional case and derive a single-letter expression mirroring the informational one.

In \cite{YHK25}, the computational max-divergence is induced by the cone generated by efficiently implementable binary measurements. Here, we consider the analogous max-divergence induced by efficiently implementable completely positive unital (CPU) maps, represented as operators through the Choi-Jamiołkowski isomorphism. This viewpoint strictly generalizes the measurement-based approach of \cite{YHK25}: by \Cref{lem:CPU.Effect.relation}, every complexity-constrained effect operator from \cite{YHK25} arises from an efficiently implementable CPU map. Moreover, in \Cref{subsec:cq.cone.YHK} we show that the max-divergence of \cite{YHK25} is recovered as a special case of the construction introduced below.

\begin{definition}[Computational Max-Divergence]\label{def:comp.max.div}
Given a family of proper cones $\cC^{\mathbf{E}_\mathrm{eff}}_{n,A|B}$ and two positive semidefinite operators $\rho_{AB}, \sigma_{AB} \in \Pos(\HH^A_n \otimes \HH^B_n)$, we denote its corresponding computational max-divergence by:
	\begin{align}\label{equ:def:ConeDmax}
		\CompMaxDiv^{A|B}(\rho_{AB} \|\sigma_{AB}) \coloneqq \log \inf\{\lambda \in \RR \,|\, \rho_{AB} \leq_{ \cC^{\mathbf{S}_\mathrm{eff}}_{n, A|B}} \lambda \sigma_{AB}\}  =\log \sup_{F \in   \cC^{\mathbf{E}_\mathrm{eff}}_{n, A|B}} \frac{\Tr[ \rho_{AB} F_{AB} ]}{\Tr[ \sigma_{AB} F_{AB} ]} \; .
	\end{align}
If there does not exists any $\lambda \in \RR$ such that $\rho_{AB} \leq_{ \cC^{\mathbf{S}_\mathrm{eff}}_{n, A|B}} \lambda \sigma_{AB}$, then $\CompMaxDiv^{A|B}(\rho_{AB} \|\sigma_{AB}) = \infty$.
\end{definition}
Note that $\CompMaxDiv^{A|B}$ is defined using the partial order from Eq.\eqref{eq:cone.partial.order}. Compared to \cite{YHK25}, the order here is induced by the cone generated by efficiently implementable CPU maps (via their Choi operators), rather than directly by efficiently implementable effect operators on $\HH^A_n\otimes \HH^B_n$.

Similarly to \cite{YHK25}, let us assume that the polynomial defining the family of sets of complexity-constrained Choi operators $\{\mathbf{J}^{p(n)}_{n,A|B}\}_{n \in \NN}$ is strictly super additive, i.e., $p(n+m) \geq p(n)+ p(m)+1$. Then, the following lemma holds.

\begin{lem}[Super-additivity of the $\CompMaxDiv^{A|B}$]\label{lem:super.additive.Dmax}
    The computational max-divergence $\CompMaxDiv^{A|B}(\rho_{AB} \|\sigma_{AB})$ is super-additive, i.e., for any $\rho_{AB} = \rho_{AB}^1 \otimes \rho_{AB}^2 $ and $\sigma_{AB} = \sigma_{AB}^1 \otimes \sigma_{AB}^2 $ such that $\rho^{i}_{AB},\sigma^{i}_{AB} \in \Pos(\HH^A_n \otimes \HH^B_n)$,
    \begin{align*}
        \CompMaxDiv^{A|B, \,p(n_1+n_2)}(\rho_{AB} \|\sigma_{AB}) \geq \CompMaxDiv^{A|B, \,p(n_1)}(\rho^1_{AB} \|\sigma^1_{AB})+ \CompMaxDiv^{A|B, \,p(n_2)}(\rho^2_{AB} \|\sigma^2_{AB})\,.
    \end{align*}
\end{lem}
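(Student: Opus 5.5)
The plan is to work with the dual (sup) expression of \Cref{def:comp.max.div}:
\begin{align*}
\CompMaxDiv^{A|B}(\rho\|\sigma)=\log\sup_{F\in\cC^{\mathbf{E}_\mathrm{eff}}_{n,A|B}}\frac{\Tr[\rho F]}{\Tr[\sigma F]}\,.
\end{align*}
We exhibit feasible points for the product problem built from feasible points of the two factor problems. The key claim is that the cones are closed under tensor products:
\begin{align*}
F^1\in\cC^{\mathbf{E}_\mathrm{eff},\,p(n_1)}_{n_1,A_1|B_1},\quad F^2\in\cC^{\mathbf{E}_\mathrm{eff},\,p(n_2)}_{n_2,A_2|B_2}
\;\Longrightarrow\; F^1\otimes F^2\in\cC^{\mathbf{E}_\mathrm{eff},\,p(n_1+n_2)}_{n_1+n_2,A_1A_2|B_1B_2}\,.
\end{align*}
Given this, for any such $F^1,F^2$ we obtain
\begin{align*}
\frac{\Tr[(\rho^1\otimes\rho^2)(F^1\otimes F^2)]}{\Tr[(\sigma^1\otimes\sigma^2)(F^1\otimes F^2)]}=\frac{\Tr[\rho^1F^1]}{\Tr[\sigma^1F^1]}\cdot\frac{\Tr[\rho^2F^2]}{\Tr[\sigma^2F^2]}\,.
\end{align*}
Taking the supremum over $F^1$ and $F^2$ separately then bounds the product supremum from below by the product of the two suprema, and the logarithm turns this into the claimed sum. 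The degenerate cases are handled by convention. If a denominator vanishes, or one factor divergence is $+\infty$, the corresponding product ratio is also unbounded, using $\Tr[\rho^iF^i]>0$ for some $F^i$ if needed. So the inequality holds trivially in the extended reals.

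To prove the tensor-closure claim, it suffices by bilinearity to treat generators. Since the cone is $\bigcup_{\lambda\ge0}\lambda\conv(\mathbf{J})$, the product of two conic combinations of generators is a nonnegative combination of products of generators. So take $J(T_1^*)\in\mathbf{J}^{p(n_1)}_{n_1,A_1|B_1}$ and $J(T_2^*)\in\mathbf{J}^{p(n_2)}_{n_2,A_2|B_2}$. The maximally entangled state factorizes, $\ket{\Omega_{A_1A_2A_1'A_2'}}=\ket{\Omega_{A_1A_1'}}\otimes\ket{\Omega_{A_2A_2'}}$ up to reordering, and $d_{A_1A_2}=d_{A_1}d_{A_2}$. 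Hence
\begin{align*}
J(T_1^*\otimes T_2^*)=J(T_1^*)\otimes J(T_2^*)\,.
\end{align*}
Moreover, $T_1^*\otimes T_2^*$ is CPU. A Stinespring dilation for it is obtained by running the two unitary circuits from \Cref{def:CPUcomplexity} in parallel on disjoint registers with disjoint ancillas. Its gate count is at most $p(n_1)+p(n_2)$, and by strict superadditivity this is at most $p(n_1+n_2)-1\le p(n_1+n_2)$. By \Cref{lem:complexity.dual.equivalence} the same bound holds in the Schrödinger picture. Therefore the product Choi operator lies in $\mathbf{J}^{p(n_1+n_2)}_{n_1+n_2,A_1A_2|B_1B_2}$.

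The main obstacle is bookkeeping rather than depth. We must make precise that the composite system of size $n_1+n_2$, with bipartition $A_1A_2|B_1B_2$, is the system on which the left-hand cone is defined. We must also check that the budget is the one the lemma indexes by $p(n_1+n_2)$. Finally, the quotient/dual-cone formalism must not interfere: the sup formula uses only $\Tr[\rho F]$ for $F$ in the primal cone, which is well defined without choosing representatives. I would state these identifications explicitly and then run the argument above.
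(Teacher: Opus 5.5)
Your proposal is correct and follows essentially the same route as the paper, which simply defers to the tensor-product argument of \cite[Lemma~4.3]{YHK25}. In the Choi setting that argument is exactly yours: $J(T_1^*)\otimes J(T_2^*)=J(T_1^*\otimes T_2^*)$ is realized by running the two dilations in parallel with at most $p(n_1)+p(n_2)\le p(n_1+n_2)$ gates, so the ratio in the dual expression factorizes and the two suprema multiply. Your handling of the degenerate cases is a useful detail the paper leaves implicit: a nonzero $\rho^i\ge 0$ has positive trace, so under the informational-completeness assumption it pairs strictly positively with some efficient Choi operator.
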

\begin{proof}
The proof follows by arguments similar to those in \cite[Lemma 4.3]{YHK25}.
\end{proof}

Moreover, the structure of the cone of Choi operators allows us to prove the following lemma

\begin{lem}[$\sigma_B$ independence of the $\CompMaxDiv^{A|B}(\rho\|\id \otimes \sigma_B)$]\label{cor:indeo.max.div}
Given a quantum state $\rho_{AB}\in \HH^A_n \otimes \HH^B_n$, the computational max divergence satisfies
\begin{align*}
    \CompMaxDiv^{A|B}(\rho_{AB}\| \id_A \otimes \sigma_B) =  \CompMaxDiv^{A|B}(\rho_{AB}\| \id_A \otimes \tau_B)
\end{align*}
for any quantum states $\sigma_B, \tau_B \in \HH^B_n$ that satisfy $\rho_{AB}\ll \id_A\otimes\sigma_B,\ \rho_{AB}\ll \id_A\otimes\tau_B$.
\end{lem}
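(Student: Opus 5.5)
The plan is to work directly with the dual (ratio) formulation of the computational max-divergence from \Cref{def:comp.max.div}:
\begin{align*}
\CompMaxDiv^{A|B}(\rho_{AB}\|\id_A\otimes\sigma_B)=\log\sup_{F\in\cC^{\mathbf{E}_\mathrm{eff}}_{n,A|B}}\frac{\Tr[\rho_{AB}F_{AB}]}{\Tr[(\id_A\otimes\sigma_B)F_{AB}]}\,.
\end{align*}
The key observation is that the denominator does not depend on $\sigma_B$ beyond a normalisation. By \Cref{lem:choi-normalization}, every $F_{AB}$ in the cone satisfies $\tr_A[F_{AB}]=\lambda_F\,\id_B$ with $\lambda_F\ge 0$. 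Hence
\begin{align*}
\Tr[(\id_A\otimes\sigma_B)F_{AB}]=\Tr[\sigma_B\,\tr_A[F_{AB}]]=\lambda_F\Tr[\sigma_B]=\lambda_F
\end{align*}
for every normalized $\sigma_B$. The same value is obtained with $\tau_B$ in place of $\sigma_B$. The two suprema are therefore over identical functions of $F$, and the claim follows.

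To make this rigorous, I will carry out three steps.
\begin{enumerate}
\item I will note that, by the quotient-space duality, the functional $\pi(\id_A\otimes\sigma_B)$ on $\VV_{n,A|B}$ equals $\pi(\id_A\otimes\tau_B)$. Indeed, their difference is $\id_A\otimes(\sigma_B-\tau_B)$ with $\Tr[\sigma_B-\tau_B]=0$, which lies in $\VV^\perp_{n,A|B}$ as remarked after the definition of $\pi$. The primal cone-order definition $\rho\le 2^\lambda(\id\otimes\sigma_B)$ then involves the same element of $\VV^*_{n,A|B}$ in both cases, so the infima coincide. This gives a second proof that bypasses the ratio.
\item I will handle degenerate $F$ with $\lambda_F=0$. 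Since $F\ge 0$, the condition $\tr_A F=0$ forces $F=0$, so these elements are excluded from the supremum; equivalently, the numerator also vanishes.
\item I will invoke the dual expression of \Cref{lem: dualexp}, which is valid under the properness assumption of \Cref{lem:proper.cone}. This equates the primal and ratio forms, so that either route suffices.
\end{enumerate}

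The only mild obstacle is the role of the support condition $\rho_{AB}\ll\id_A\otimes\sigma_B$. Because $\id_A\otimes\sigma_B$ has the same image in $\VV^*_{n,A|B}$ for every $\sigma_B$, finiteness cannot genuinely depend on $\sigma_B$. The hypothesis is only there to ensure the quantities are well defined in the conventional sense, so I will just remark that both sides are simultaneously finite or infinite.
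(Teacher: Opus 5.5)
Your proof is correct and follows the paper's argument: in the ratio form of $\CompMaxDiv^{A|B}$, every nonzero $F_{AB}$ in the Choi cone satisfies $\tr_A[F_{AB}]=\lambda_F\,\id_B$ with $\lambda_F>0$, so the denominator equals $\lambda_F$ for every normalized $\sigma_B$. Your quotient-space route, via $\id_A\otimes(\sigma_B-\tau_B)\in\VV^\perp_{n,A|B}$, restates the same fact, and you are right that the support hypothesis is inessential.
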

\begin{proof}
Let us recall that the computational max-divergence can be expressed as
\begin{align*}
  \CompMaxDiv^{A|B}(\rho_{AB}\|\id_A\otimes\sigma_B)
=\log\sup_{F_{AB}\in\cC^{\mathbf{E}_{\mathrm{eff}}}_{n,A|B}}
\frac{\Tr[\rho_{AB}F_{AB}]}{\Tr[(\id_A\otimes\sigma_B)F_{AB}]}\,.  
\end{align*}
Every non-zero $F_{AB}\in\cC^{\mathbf{E}_{\mathrm{eff}}}_{n,A|B}$ satisfies $\tr_A[F_{AB}]=\lambda(F_{AB})\,\id_B$ for some $\lambda(F_{AB})> 0$. Hence, for $\Tr[\sigma_B]=1$,
    \begin{align*}
        \Tr[(\id_A\otimes\sigma_B)F_{AB}] =\Tr\!\left[\sigma_B\,\tr_A[F_{AB}]\right] =\lambda(F_{AB})\,\Tr[\sigma_B] =\lambda(F_{AB})\,,
    \end{align*}
which is independent of $\sigma_B$. Therefore, each ratio in the supremum is independent of $\sigma_B$. Replacing $\sigma_B$ by $\tau_B$ yields the claim.
\end{proof}

This special structure of the Choi cone also allows us to identify the computational max divergence with a hypothesis-testing quantity.

\begin{definition}[Computational hypothesis-testing Choi divergence]\label{def:cHT.div}
Fix $n$ and a complexity bound $p(n)$, and let $\Eeff_{n,A|B}$ denote the corresponding set of efficiently generated Choi effect operators. For states $\rho_{AB},\sigma_{AB}\in\Pos(\HH^A_n\otimes \HH^B_n)$ and $\eta\in(0,1]$, define
\begin{align}
\overset{c}{D}_{H}^{p(n),\eta}(\rho_{AB} \Vert \sigma_{AB})
\coloneq  - \log \inf_{\substack{E \in \Eeff_{n,A|B} \\ \Tr[E \rho_{AB}] \ge \eta}}
\frac{\Tr[E \sigma_{AB}]}{\Tr[E \rho_{AB}]}\, .
\label{eq:def:cDHr}
\end{align}
If the feasible set is empty, we set the corresponding divergence to $\infty$.
\end{definition}

Unlike the computational hypothesis-testing relative entropies introduced in \cite{MKN+25,YHK25,MRRLJE25}, the divergence in \Cref{def:cHT.div} is restricted to efficiently generated Choi effect operators. Such effects can be interpreted as the action of a CPTP map on subsystem $B$, followed by the projective measurement onto the maximally entangled state $|\Omega\rangle$. In particular, they do not span the set of all efficiently implementable measurements on $AB$.

\begin{lem}[Equivalence of hypothesis-testing and max divergence]\label{lem:DHr.Dmax.choi}
Let $\rho_{AB} \in \Pos(\HH^A_n \otimes \HH^B_n)$ be a state, and let $\rho_B=\tr_A[\rho_{AB}]$. Then
\begin{align*}
  \overset{\smash{\text{\tiny\hspace{0.3em}\raisebox{-0.3ex}{c}}}}{D}_{H}^{p(n),\eta}(\rho_{AB} \Vert \id_A \otimes \rho_B) =\CompMaxDiv^{A|B,\,p(n)}(\rho_{AB} \Vert \id_A \otimes \rho_B)
\end{align*}
for every feasible $\eta$, i.e.,~for every $0<\eta\le \sup_{E\in\Eeff_{n,A|B}}\Tr[E\rho_{AB}]$ .
\end{lem}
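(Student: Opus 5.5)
The plan is to show that both sides reduce to the same expression, $\log\bigl(d_A \max_{E\in\Eeff_{n,A|B}}\Tr[E\rho_{AB}]\bigr)$. The key observation is the one behind \Cref{cor:indeo.max.div}: on the effect-normalized slice the denominator involving $\id_A\otimes\rho_B$ is constant. Indeed, for every $E\in\Eeff_{n,A|B}$ we have $\tr_A[E]=\id_B/d_A$. Hence
\begin{align*}
\Tr[E(\id_A\otimes\rho_B)]=\Tr\bigl[\rho_B\,\tr_A[E]\bigr]=\frac{\Tr[\rho_B]}{d_A}=\frac{1}{d_A}\,,
\end{align*}
using that $\rho_{AB}$ is a state.

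\textbf{Hypothesis-testing side.} Substituting this into \Cref{def:cHT.div}, every feasible ratio equals $1/(d_A\Tr[E\rho_{AB}])$. Therefore
\begin{align*}
\inf_{\substack{E\in\Eeff_{n,A|B}\\ \Tr[E\rho_{AB}]\ge\eta}}\frac{\Tr[E(\id_A\otimes\rho_B)]}{\Tr[E\rho_{AB}]}=\frac{1}{d_A\,\sup\{\Tr[E\rho_{AB}]:E\in\Eeff_{n,A|B},\ \Tr[E\rho_{AB}]\ge\eta\}}\,.
\end{align*}
The constraint $\Tr[E\rho_{AB}]\ge\eta$ only discards effects with small overlap. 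So the constrained supremum equals the unconstrained one $s^\star\coloneq\sup_{E\in\Eeff_{n,A|B}}\Tr[E\rho_{AB}]$, provided some $E$ attains $s^\star\ge\eta$.

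Attainment is the one point needing care. By \Cref{lem:counting-choi}, $\mathbf{J}^{p(n)}_{n,A|B}$ is finite. Hence $\Eeff_{n,A|B}=\frac{1}{d_A}\conv(\mathbf{J}^{p(n)}_{n,A|B})$ is compact, and the linear functional $E\mapsto\Tr[E\rho_{AB}]$ attains its maximum. Together with the hypothesis $\eta\le s^\star$, this makes the feasible set nonempty and contain a maximizer. Consequently
\begin{align*}
\overset{c}{D}_{H}^{p(n),\eta}(\rho_{AB}\Vert\id_A\otimes\rho_B)=\log(d_A s^\star)\,.
\end{align*}

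\textbf{Max-divergence side.} Use the dual form in \Cref{def:comp.max.div}, a supremum over $F\in\cC^{\mathbf{E}_\mathrm{eff}}_{n,A|B}$ of $\Tr[\rho_{AB}F]/\Tr[(\id_A\otimes\rho_B)F]$. The ratio is invariant under positive rescaling of $F$, and $F=0$ is excluded. By \Cref{lem:choi-normalization}, every nonzero $F$ in the cone has $\tr_A[F]=\lambda\id_B$ with $\lambda>0$; note $\lambda=0$ would force $\Tr[F]=0$ and hence $F=0$, since $F\ge0$. So $F/(\lambda d_A)\in\Eeff_{n,A|B}$, and the supremum may be restricted to the slice. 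On the slice the denominator is again $1/d_A$, which gives
\begin{align*}
\CompMaxDiv^{A|B,\,p(n)}(\rho_{AB}\Vert\id_A\otimes\rho_B)=\log(d_A s^\star)\,.
\end{align*}
Comparing the two expressions proves the claim for every feasible $\eta$.

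\textbf{Main obstacle.} The only nontrivial step is ensuring that the $\eta$-constraint does not change the optimum, which rests on attainment of $s^\star$ via finiteness of the generator set. Everything else is the constancy of $\Tr[E(\id_A\otimes\rho_B)]$ on the slice.
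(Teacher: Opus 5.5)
Your proposal is correct and follows essentially the same route as the paper. Both arguments rest on three points: $\Tr[E(\id_A\otimes\rho_B)]=1/d_A$ for every $E$ in the effect-normalized slice; the $\eta$-constraint is inactive for feasible $\eta$; and the cone supremum defining $\CompMaxDiv^{A|B}$ reduces to that slice. Your explicit justification of attainment, via finiteness of $\mathbf{J}^{p(n)}_{n,A|B}$ from \Cref{lem:counting-choi}, and of rescaling nonzero cone elements onto the slice, fills in steps the paper only asserts in one line.
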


\begin{proof}
	Let $E_{AB}\in \Eeff_{n,A|B}$. By definition of the normalized Choi slice,
	\begin{equation}
		\ptr{A}{E_{AB}}=\frac{1}{d_A}\id_B \, .
	\end{equation}
	Hence
	\begin{equation}
		\Tr[E_{AB}(\id_A\otimes \rho_B)]
		=
		\Tr\!\left[\rho_B\,\ptr{A}{E_{AB}}\right]
		=
		\frac{1}{d_A}\Tr[\rho_B]
		=
		\frac{1}{d_A}\, .
	\end{equation}
	Therefore, for every feasible $E_{AB}$,
	\begin{equation}
		\frac{\Tr[E_{AB}(\id_A\otimes \rho_B)]}{\Tr[E_{AB}\rho_{AB}]}
		=
		\frac{1}{d_A\,\Tr[E_{AB}\rho_{AB}]}\, .
	\end{equation}
	It follows that
	\begin{align}
		  \overset{\smash{\text{\tiny\hspace{0.3em}\raisebox{-0.3ex}{c}}}}{D}_{H}^{p(n),\eta}(\rho_{AB}\Vert \id_A\otimes \rho_B)
		&=
		-\log
		\inf_{\substack{E_{AB}\in \Eeff_{n,A|B}\\ \Tr[E_{AB}\rho_{AB}]\ge \eta}}
		\frac{1}{d_A\,\Tr[E_{AB}\rho_{AB}]} \\
		&=
		\log\!\left(
		d_A\,
		\sup_{\substack{E_{AB}\in \Eeff_{n,A|B}\\ \Tr[E_{AB}\rho_{AB}]\ge \eta}}
		\Tr[E_{AB}\rho_{AB}]
		\right) \, .
	\end{align}
	Since $\eta$ is feasible, the constraint $\Tr[E_{AB}\rho_{AB}]\ge \eta$ does not affect the
	maximum, and thus
	\begin{equation}
		  \overset{\smash{\text{\tiny\hspace{0.3em}\raisebox{-0.3ex}{c}}}}{D}_{H}^{p(n),\eta}(\rho_{AB}\Vert \id_A\otimes \rho_B)
		=
		\log\!\left(
		d_A\,
		\max_{E_{AB}\in \Eeff_{n,A|B}} \Tr[E_{AB}\rho_{AB}]
		\right) \, .
	\end{equation}
	By the Choi-cone dual formulation of the computational max-divergence, equivalently
	by \Cref{cor:indeo.max.div}, this is exactly
	\begin{equation}
		\CompMaxDiv^{A|B,\,p(n)}(\rho_{AB}\Vert \id_A\otimes \rho_B)\, ,
	\end{equation}
	which proves the claim.
\end{proof}

Similar to the informational setting, the (conditional) min-entropy is the entropy measure canonically associated with the max-divergence order: it quantifies the smallest scalar by which a reference operator of the form $\id_A\otimes\sigma_B$ must be scaled to dominate $\rho_{AB}$ in the relevant partial order. In our complexity-constrained framework, the same construction applies with the cone-induced order $\le_{\cC^{\mathbf{S}_\mathrm{eff}}_{n,A|B}}$. Consequently, and in direct parallel with the informational case \cite{KRS09}, we define the computational conditional min-entropy from the computational max-divergence.

\begin{definition}\label{def:comp.min.entrp}
	Given a family of proper cones $\{ \cC^{\mathbf{E}_\mathrm{eff}}_{n, A|B}\}_{n\in\NN}$, and any quantum state $\rho_{AB} \in \Pos(\HH^A_n\otimes\HH^B_n )$, we denote the corresponding \term{computational min-entropy}  of $A$ conditioned on $B$  by:
	\begin{align}
		\CompMin (A|B)_{\rho}\coloneq - \inf_{\substack{\sigma_B \geq 0 \\ \Tr[\sigma_B]=1}} \CompMaxDiv^{A|B}(\rho_{AB} \| \id_A \otimes \sigma_B) \,.
	\end{align}
\end{definition}

The expression of the computational min-entropy can be simplified as follows.

\begin{cor}\label{cor:comp.hmin.wo.inf}
Given a quantum state $\rho_{AB}\in \HH^A_n \otimes \HH^B_n$, its computational conditional min-entropy satisfies
\begin{align*}
   \CompMin(A|B)_{\rho}=- \CompMaxDiv^{A|B}(\rho_{AB}\| \id_A \otimes \rho_B)
\end{align*}
for $\rho_B=\tr_A[\rho_{AB}]$.
\end{cor}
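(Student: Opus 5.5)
The plan is to obtain the corollary directly from \Cref{cor:indeo.max.div}. By \Cref{def:comp.min.entrp}, $\CompMin(A|B)_{\rho}$ is minus the infimum over normalized $\sigma_B\ge 0$ of $\CompMaxDiv^{A|B}(\rho_{AB}\|\id_A\otimes\sigma_B)$. I would first rewrite this divergence through its dual expression (\Cref{def:comp.max.div}) as a supremum over $F_{AB}\in\cC^{\mathbf{E}_\mathrm{eff}}_{n,A|B}$. By \Cref{lem:choi-normalization}, every such $F_{AB}$ satisfies $\tr_A[F_{AB}]=\lambda(F_{AB})\,\id_B$. Hence the denominator
\begin{align*}
\Tr[(\id_A\otimes\sigma_B)F_{AB}]=\lambda(F_{AB})
\end{align*}
does not depend on the choice of $\sigma_B$. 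So the objective is constant in $\sigma_B$, and the infimum is attained at any state, in particular at $\sigma_B=\rho_B$.

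One technicality needs care: \Cref{cor:indeo.max.div} is stated only for $\sigma_B$ with $\rho_{AB}\ll\id_A\otimes\sigma_B$. Rather than relying on that lemma as stated, I would use the calculation above directly. That calculation gives independence of $\sigma_B$ for \emph{every} normalized $\sigma_B\ge0$, with no support condition. The reason is that the ratio $\Tr[\rho_{AB}F_{AB}]/\lambda(F_{AB})$ is well defined for each nonzero $F$ in the cone. Note that $\lambda(F)>0$ for $F\neq0$, since $F\ge0$ and $\Tr F=d_B\lambda(F)$.

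Thus the function $\sigma_B\mapsto\CompMaxDiv^{A|B}(\rho_{AB}\|\id_A\otimes\sigma_B)$ is constant on the whole feasible set. Its infimum therefore equals its value at $\rho_B=\tr_A[\rho_{AB}]$, which is trivially a valid state. Taking the negative gives the claim.

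I expect no real obstacle. The only point needing attention is justifying the equality of the primal (cone-order) and dual (supremum) forms in \Cref{def:comp.max.div}. This follows from \Cref{lem: dualexp} under the properness assumption of \Cref{lem:proper.cone}, which is part of the standing hypotheses.
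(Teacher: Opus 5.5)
Your proposal is correct and follows the paper's proof, which simply invokes the $\sigma_B$-independence from \Cref{cor:indeo.max.div}: since $\Tr[(\id_A\otimes\sigma_B)F_{AB}]=\lambda(F_{AB})$ for every $F_{AB}$ in the Choi cone, the infimum collapses to its value at $\rho_B$. Re-deriving that independence for all normalized $\sigma_B$, without the lemma's support hypothesis $\rho_{AB}\ll\id_A\otimes\sigma_B$, is a legitimate tightening of the paper's one-line argument. The primal/dual equivalence is not even needed, because the cone order is itself defined by testing against $\cC^{\mathbf{E}_\mathrm{eff}}_{n,A|B}$.
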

\begin{proof}
    It directly follows from the fact that the computational max divergence is independent of the $\sigma_B$ state as proven in \Cref{cor:indeo.max.div}.
\end{proof}

In the information-theoretic setting, one distinguishes between the two conditional min-entropies
	\begin{align*}
	H_{\mathrm{min}}(A|B)_{\rho}
&\coloneq  H_{\infty}^{\uparrow}(A|B)_{\rho}=- \inf_{\substack{\sigma_B \geq 0 \\ \Tr[\sigma_B]=1}}
		D_{\mathrm{max}}(\rho_{AB}\|\id_A \otimes \sigma_B)\,,\\		H_{\infty}^{\downarrow}(A|B)_{\rho}
		&\coloneq  - D_{\mathrm{max}}(\rho_{AB}\|\id_A \otimes \rho_B)\,,
	\end{align*}
	where $\rho_B=\tr_A[\rho_{AB}]$. 

    In general, these two quantities do not coincide, i.e., there is a non-zero gap between them in the information-theoretic setting~\cite{T16}. 
	In our complexity-constrained framework, the computational conditional min-entropy defined above is the direct analogue of the up-arrow quantity, i.e.,
	\begin{align*}
		\CompMin(A|B)_{\rho}
		=
		- \inf_{\substack{\sigma_B \geq 0 \\ \Tr[\sigma_B]=1}}
		\CompMaxDiv(\rho_{AB}\|\id_A \otimes \sigma_B)\,.
	\end{align*}
	One may also attempt to introduce the corresponding down-arrow quantity
	\begin{align*}
		\CompDownArrowMin(A|B)_{\rho}
		\coloneq 
		- \CompMaxDiv^{A|B}(\rho_{AB}\|\id_A \otimes \rho_B)\,.
	\end{align*}
    However, by \Cref{cor:indeo.max.div}, the quantity $\CompMaxDiv(\rho_{AB}\|\id_A \otimes \sigma_B)$ is independent of the choice of $\sigma_B$,

\begin{remark}[Computational up- and down-arrow equivalence]\label{rem:up.down.comp.min}
    For any state $\rho_{AB}$ it holds that
	\begin{align*}
		\CompMin(A|B)_{\rho}
		=
		\CompDownArrowMin(A|B)_{\rho}
		=
		-\CompMaxDiv^{A|B}(\rho_{AB}\|\id_A \otimes \rho_B)\,.
	\end{align*}
    \end{remark}
Hence, in our computational framework, the up-arrow and down-arrow definitions coincide. This is a special consequence of the Choi cone structure, where we exclusively optimize over efficient effect operators $F_{AB}$ whose partial trace $\tr_A[F_{AB}]$ are proportional to the identity.
We highlight that a consequence of \Cref{thm:op.mean.fullyquantum} is that we recover the computational min-entropy of~\cite{AHRA25}. That computational entropy is equivalent to the information-theoretic min-entropy $H_{\min}(A|B)_{\rho}$
when the computational restrictions are removed~\cite[Lemma~32]{AHRA25}. As such, our conic structure naturally encodes the optimization over $\sigma_B$ and does not provide the tools to define a computational notion of $H_{\infty}^{\downarrow}(A|B)_{\rho}$.

\begin{lem}[Sub-additivity of the $\CompMin (A|B)_{\rho}$]\label{lem:sub-add.Hmin}
The computational min-entropy is sub-additive, i.e., for any $\rho = \rho_1 \otimes \rho_2$  such that $\rho_1, \rho_2 \in \Pos(\HH_n)$ it holds that,
\begin{align*}
    \CompMin (A_{1}A_{2}|B_{1}B_{2})_{\rho}\leq \CompMin (A_{1}|B_{1})_{\rho_1}+ \CompMin (A_{2}|B_{2})_{\rho_2}\,.
\end{align*}
\end{lem}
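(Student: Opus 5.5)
The plan is to reduce the claimed sub-additivity of $\CompMin$ to the super-additivity of the computational max-divergence, \Cref{lem:super.additive.Dmax}, with the tensor-product structure of the reference operator handled by \Cref{cor:comp.hmin.wo.inf}. By \Cref{cor:comp.hmin.wo.inf}, for $\rho=\rho_1\otimes\rho_2$ with $\rho_i$ a state on $A_iB_i$, one has
\begin{align*}
\CompMin(A_1A_2|B_1B_2)_{\rho}=-\CompMaxDiv^{A_1A_2|B_1B_2}\!\left(\rho_1\otimes\rho_2\,\middle\|\,\id_{A_1A_2}\otimes\rho_{B_1}\otimes\rho_{B_2}\right),
\end{align*}
since the $B_1B_2$-marginal of $\rho$ is $\rho_{B_1}\otimes\rho_{B_2}$. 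The reference operator factorizes, up to reordering of the registers, as $(\id_{A_1}\otimes\rho_{B_1})\otimes(\id_{A_2}\otimes\rho_{B_2})$. The claimed inequality is therefore exactly
\begin{align*}
\CompMaxDiv^{A_1A_2|B_1B_2}\!\left(\rho_1\otimes\rho_2\,\middle\|\,\sigma_1\otimes\sigma_2\right)\ \ge\ \CompMaxDiv^{A_1|B_1}(\rho_1\|\sigma_1)+\CompMaxDiv^{A_2|B_2}(\rho_2\|\sigma_2),
\end{align*}
with $\sigma_i=\id_{A_i}\otimes\rho_{B_i}$. Applying \Cref{cor:comp.hmin.wo.inf} once more to each factor and negating gives the statement.

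Since the proof of \Cref{lem:super.additive.Dmax} is only referenced, I would make the needed step explicit. I would use the dual expression $\CompMaxDiv=\log\sup_{F}\Tr[\rho F]/\Tr[\sigma F]$ over the cone. By finiteness (\Cref{lem:counting-choi}) the suprema for the two factors are attained at generators $J_1=J(T_1^*)$ and $J_2=J(T_2^*)$, up to scaling. Consider the product channel $T_1\otimes T_2:B_1B_2\to A_1'A_2'$. It is implemented by running the two circuits in parallel, so $C\le p(n_1)+p(n_2)$, possibly plus register relabellings. Its Choi operator is $J_1\otimes J_2$, because $\ket{\Omega_{A_1A_2A_1'A_2'}}=\ket{\Omega_{A_1A_1'}}\otimes\ket{\Omega_{A_2A_2'}}$ and $d_{A_1A_2}=d_{A_1}d_{A_2}$. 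Hence $J_1\otimes J_2$ lies in $\mathbf{J}^{p(n_1+n_2)}_{n_1+n_2,A|B}$. The ratio for product operators then factorizes:
\begin{align*}
\frac{\Tr[(\rho_1\otimes\rho_2)(J_1\otimes J_2)]}{\Tr[(\sigma_1\otimes\sigma_2)(J_1\otimes J_2)]}=\prod_{i=1,2}\frac{\Tr[\rho_iJ_i]}{\Tr[\sigma_iJ_i]}.
\end{align*}
The supremum over the larger cone is at least this product, and taking logarithms yields the super-additivity.

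The main obstacle is the bookkeeping in the complexity count. One must check that the product circuit, including any swaps needed to identify $A_1A_2B_1B_2$ with the canonical ordering, fits within the budget $p(n_1+n_2)$. This is where the assumed strict super-additivity $p(n+m)\ge p(n)+p(m)+1$ enters. I would treat register relabelling as free, consistent with the circuit model counting only gates and allowing arbitrary wire assignment. Failing that, the $+1$ slack, or a polynomial enlargement of the budget, absorbs a constant overhead. Everything else is linear algebra on product states.
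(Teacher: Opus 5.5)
Your proposal is correct and follows the paper's proof: reduce to $-\CompMaxDiv^{A|B}(\rho_{AB}\|\id_A\otimes\rho_B)$ via \Cref{cor:comp.hmin.wo.inf} and invoke the super-additivity of \Cref{lem:super.additive.Dmax}. The paper only cites that lemma from \cite{YHK25}, and you reconstruct it correctly via the product channel $T_1\otimes T_2$, whose Choi operator is $J_1\otimes J_2$, together with the budget condition $p(n_1)+p(n_2)\le p(n_1+n_2)$.
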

\begin{proof}
    It follows from \Cref{cor:comp.hmin.wo.inf} together with the fact of the computational max-divergence being super-additive (\Cref{lem:super.additive.Dmax}).
\end{proof}

Up to this point, we have introduced the computational max-divergence and the induced conditional min-entropy, and established the structural properties needed for its analysis. We now show that this entropy retains the standard operational interpretation, with the unrestricted recovery map replaced by an efficiently implementable channel.

\subsection{The Operational Meaning of the Computational Min-Entropy}\label{sec:op.mean.min.ent}
Let us now prove that, in the complexity-constrained setting, $\CompMin(A|B)_\rho$ admits an operational interpretation analogous to the standard (information-theoretic) conditional min-entropy \cite{KRS09}. In particular, we recover the corresponding notion considered in \cite{AHRA25}. Let us first prove the following lemma.

\begin{lem}\label{lem:dual.program.Hmin} Let $\HH^A_n$ and $\HH^B_n$ be finite-dimensional Hilbert spaces. Let $\{\mathbf{J}^{p(n)}_{n,A|B}\}_{n \in \NN}$ be the family of Choi operators of CPU maps generated by at most $p(n)$ given a gate-set $\mathcal{G}$ and  $ \cC^{\mathbf{S}_\mathrm{eff}}_{n, A|B}$ its corresponding dual cone. Then, for any $\rho_{AB} \in \Pos(\HH^A_n \otimes \HH^B_n)$ and $\sigma_B \in \Pos(\HH^B_n)$, it holds that
	\begin{align}\label{eq:dual.program.Hmin}
		\min_{\substack{\sigma_B \geq 0 \\\id_A \otimes \sigma_B \geq_{ \cC^{\mathbf{S}_\mathrm{eff}}_{n, A|B}} \rho_{AB}}} \Tr[\sigma_B] =  \max_{\substack{F_{AB} \geq_{ \cC^{\mathbf{E}_\mathrm{eff}}_{n, A|B}} 0\\ \tr_A[F_{AB}] \leq \id_B}} \Tr[\rho_{AB} F_{AB}] \,.
	\end{align}
\end{lem}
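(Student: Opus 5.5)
The plan is to read \eqref{eq:dual.program.Hmin} as a primal/dual pair of conic programs in the sense of \eqref{eq:conic.primal}--\eqref{eq:conic.dual}, and to close the gap with Slater's condition (\Cref{thm:Slater}).

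\textbf{Setting up the primal.} Take $\VV_1=\Herm(\HH^B_n)$ with $\mathcal{C}_1=\Pos(\HH^B_n)$ and cost $c=\id_B$, so that $\langle c,\sigma_B\rangle=\Tr[\sigma_B]$. Take $\VV_2=\VV^*_{n,A|B}$ with $\mathcal{C}_2=\cC^{\mathcal{S}_\mathrm{eff}}_{n,A|B}$. The linear map is $\Phi(\sigma_B)=\pi(\id_A\otimes\sigma_B)$ and the offset is $b=\pi(\rho_{AB})$. The primal constraint $\Phi(\sigma_B)-b\in\mathcal{C}_2$ is then exactly $\id_A\otimes\sigma_B\geq_{\cC^{\mathcal{S}_\mathrm{eff}}}\rho_{AB}$.

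\textbf{Computing the dual.} Since $\cC^{\mathbf{E}_\mathrm{eff}}_{n,A|B}$ is a closed convex cone in finite dimensions, the dual cone of $\mathcal{C}_2$ is $\cC^{\mathbf{E}_\mathrm{eff}}_{n,A|B}\subseteq\VV_{n,A|B}$. Next compute $\Phi^*$: for $F\in\VV_{n,A|B}$,
\[
\langle\Phi(\sigma_B),F\rangle=\Tr[(\id_A\otimes\sigma_B)F]=\Tr[\sigma_B\tr_A F],
\]
so $\Phi^*(F)=\tr_A[F]$. This is well defined on the quotient because $\tr_A F\in\RR\cdot\id_B$. The dual program is therefore
\[
\sup\ \Tr[\rho_{AB}F]\quad\text{subject to}\quad \id_B-\tr_A F\geq0,\quad F\in\cC^{\mathbf{E}_\mathrm{eff}},
\]
which is the right-hand side of \eqref{eq:dual.program.Hmin}. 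Weak duality gives primal $\ge$ dual. One can also check it directly: $\Tr[\rho F]\le\Tr[(\id\otimes\sigma)F]=\Tr[\sigma\tr_AF]\le\Tr\sigma$.

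\textbf{Strong duality.} Primal feasibility holds with $\sigma_B=\id_B$, since $\id_{AB}-\rho_{AB}\ge0$ is PSD and PSD operators lie in the dual cone. The primal value is finite because it is bounded below by $0$ by weak duality (take $F=0$). For a strictly feasible point, take $\bar\sigma_B=t\,\id_B$ with $t$ large; this lies in the interior of $\Pos$. It remains to show $\pi(t\,\id_{AB}-\rho_{AB})$ lies in $\mathsf{ri}(\cC^{\mathcal{S}_\mathrm{eff}})$, i.e. that it is strictly positive on every nonzero $F\in\cC^{\mathbf{E}_\mathrm{eff}}$. This is true because
\[
\Tr[(t\id-\rho)F]\ge(t-1)\Tr F>0,
\]
using $F\ge0$ and $F\neq0$. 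By compactness of the normalized base of the cone, strict positivity on nonzero elements is equivalent to lying in the interior of the dual cone. Here properness of the cones from \Cref{lem:proper.cone} is used.

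Slater's condition then yields equality of the optimal values and attainment of the dual. The dual value is a maximum over a compact set, since $\tr_AF\le\id_B$ bounds $\Tr F$. Attainment in the primal follows the same way: the feasible $\sigma_B$ with bounded trace form a compact set, and the cone constraint is closed, which justifies writing $\min$.

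\textbf{Main obstacle.} I expect the main obstacle to be the careful handling of the quotient space $\VV^*_{n,A|B}$: verifying that $\Phi$, $\Phi^*$ and the interior condition are all stated correctly relative to $\VV_{n,A|B}$ rather than the full Hermitian space.
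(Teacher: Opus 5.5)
Your proposal is correct and follows the paper's proof essentially step for step: the same identification as a conic program ($\VV_1=\Herm(\HH^B_n)$, $\VV_2=\VV^*_{n,A|B}$, $\Phi(\sigma_B)=\pi(\id_A\otimes\sigma_B)$, $\Phi^*=\tr_A$), then weak duality, then Slater's condition with a multiple of $\id_B$ as the strictly feasible point (the paper takes $2\lambda_{\max}(\rho_{AB})\,\id_B$). One small fix: your feasibility witness $\sigma_B=\id_B$ and the bound $\Tr[(t\id-\rho)F]\ge(t-1)\Tr F$ both silently assume $\rho_{AB}\le\id_{AB}$, whereas the lemma allows any $\rho_{AB}\in\Pos$, so replace $1$ by $\lambda_{\max}(\rho_{AB})$ in both places.
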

\begin{proof}
The argument is inspired by the proof of \cite[Lemma 4]{KRS09}, but its implementation in our setting requires several nontrivial modifications in order to accommodate the complexity constraints.

Let us first exploit the special structure of the cone of efficiently generated Choi operators to reduce the problem to a dual program.

For each efficiently generated Choi operator $F_{AB}$ we have that $\tr_A[F_{AB}] =  \id_B$. Therefore, for the cone of efficient Choi operators $\cC^{\mathbf{E}_\mathrm{eff}}_{n, A|B}$ define in \Cref{def:fully.quantum.cone}, any element $F_{AB}\in \cC^{\mathbf{E}_\mathrm{eff}}_{n, A|B}$ can be written as
	\begin{align*}
		F_{AB} = \sum_j \alpha_j J^{(j)}_{AB},\qquad \alpha_j \ge 0\, ,
	\end{align*}
with $J^{(j)}_{AB}$ a (channel-normalized) Choi operator, i.e., $\tr_A [J^{(j)}_{AB}] = \id_B$. In consequence, it also satisfies
	\begin{align*}
		\tr_A[F_{AB}] = \left(\sum_j \alpha_j\right)\id_B =: t\,\id_B
	\end{align*}
	for some scalar $t\ge 0$. If $t=0$, then $F_{AB}=0$ and the objective is zero.
	For $0<t\le 1$, define the rescaled operator
	\begin{align*}
		\widehat{F}_{AB} \coloneq  \frac{1}{t} F_{AB}\, .
	\end{align*}
	Since $\cC^{\mathbf{E}_\mathrm{eff}}_{n, A|B}$ is a cone, then $\widehat{F}_{AB}\in  \cC^{\mathbf{E}_\mathrm{eff}}_{n, A|B}$, and $\tr_A[\widehat{F}_{AB}] = \id_B$. Moreover, for any state $\rho_{AB}\ge 0$,
	\begin{align*}
		\Tr[\rho_{AB} \widehat{F}_{AB}]  = \frac{1}{t}\,\Tr[\rho_{AB} F_{AB}] \ge \Tr[\rho_{AB} F_{AB}]\,.
	\end{align*}

	Thus, since the supremum in the r.h.s. of Eq.\eqref{eq:dual.program.Hmin} is attained on elements with $\tr_A[F_{AB}] = \id_B$, we can extend the maximization to all operators $F_{AB}$ such that $\tr_A[F_{AB}] \leq \id_B$. Therefore, we will prove the following claim:
	\begin{align}\label{eq:relaxed.dual.program.Hmin}
		\min_{\substack{\sigma_B \geq 0 \\\id_A \otimes \sigma_B \geq_{ \cC^{\mathbf{S}_\mathrm{eff}}_{n, A|B}} \rho_{AB}}} \Tr[\sigma_B] = \max_{\substack{F_{AB} \geq_{ \cC^{\mathbf{E}_\mathrm{eff}}_{n, A|B}} 0\\ \tr_A[F_{AB}] \leq \id_B}} \Tr[\rho_{AB} F_{AB}]\,.
	\end{align}

We formulate Eq.~\eqref{eq:relaxed.dual.program.Hmin} as a conic program using the tools introduced in \cref{sec:cone.theory} on the spaces
\begin{align*}
  \VV_1\coloneq \Herm(\HH_B^n)\,, \qquad \VV_2\coloneq \VV^*_{n,A|B}\,,  
\end{align*}
with cones
\begin{align*}
    C_1\coloneq \Pos(\HH_B^n)\,, \qquad C_2\coloneq C^{S,\mathrm{eff}}_{n,A|B}\,,
\end{align*}
and identifying the elements of Eq.\eqref{eq:conic.primal} and Eq.\eqref{eq:conic.dual} as $c\coloneq \id_B$ and $b\coloneq \pi(\rho_{AB})$, and the linear map 
\begin{align*}
    \Phi:\VV_1\to \VV_2,\qquad \Phi(\sigma_B)\coloneq \pi(\id_A\otimes \sigma_B)\,.
\end{align*}

Then the primal problem is exactly
\begin{align}
\gamma_{\mathrm{primal}}=\inf_{\sigma_B\in \Herm(\HH_B^n)}
\left\{\Tr[\sigma_B]:\ \sigma_B\in \Pos(\HH_B^n),\Phi(\sigma_B)-\pi(\rho_{AB})\in C^{S,\mathrm{eff}}_{n,A|B}
\right\}.
\end{align}
Since the pairing between $\VV^*_{n,A|B}$ and $\VV_{n,A|B}$ is given by
\begin{align*}
 \langle \pi(Y),F_{AB}\rangle = \Tr[YF_{AB}],\qquad Y\in \Herm(\HH_A^n\otimes \HH_B^n),\ \ F_{AB}\in \VV_{n,A|B}\,,   
\end{align*}
the adjoint map $\Phi^*:\VV_{n,A|B}\to \Herm(\HH_B^n)$ is
\begin{align*}
    \Phi^*(F_{AB})=\tr_A[F_{AB}]\,,
\end{align*}
because
\begin{align*}
    \langle \Phi(\sigma_B),F_{AB}\rangle=\Tr[(\id_A\otimes \sigma_B)F_{AB}]=\Tr[\sigma_B\,\tr_A[F_{AB}]]\,.
\end{align*}
Hence, the dual problem is given by
\begin{align*}
\gamma_{\mathrm{dual}} = \sup_{F_{AB}\in \VV_{n,A|B}} \left\{ \Tr[\rho_{AB}F_{AB}] : \ F_{AB}\in C^{E,\mathrm{eff}}_{n,A|B},\ \id_B-\tr_A[F_{AB}] \in \Pos(\HH_B^n) \right\}\,,
\end{align*}
which can be simplified as,
\begin{align}
\gamma_{\mathrm{dual}}=\sup_{F_{AB}\in C^{E,\mathrm{eff}}_{n,A|B}}\left\{ \Tr[\rho_{AB}F_{AB}] : \tr_A[F_{AB}] \le \id_B \right\}\,.
\end{align}
Therefore, by weak duality for feasible primal-dual conic programs,
\begin{align}
\min_{\substack{\sigma_B\ge 0\\ \pi(\id_A\otimes \sigma_B)-\pi(\rho_{AB})\in C^{S,\mathrm{eff}}_{n,A|B}}} \Tr[\sigma_B] \ \ge\ \max_{\substack{F_{AB}\in C^{E,\mathrm{eff}}_{n,A|B}\\ \tr_A[F_{AB}]\le \id_B}} \Tr[\rho_{AB}F_{AB}] \,.
\end{align}

To prove equality, it remains to show that there is no duality gap. For this, we verify Slater's condition for the primal problem. Since $ C^{E,\mathrm{eff}}_{n,A|B}$ is a proper cone in $\VV_{n,A|B}$, its dual cone $C^{S,\mathrm{eff}}_{n,A|B}$ is a proper cone in $\VV^*_{n,A|B}$. Thus it suffices to find $\sigma_B\in \mathsf{ri}(\Pos(\HH_B^n))$ such that
\begin{align*}
\Phi(\sigma_B)-\pi(\rho_{AB})=\pi(\id_A\otimes \sigma_B-\rho_{AB})\in \mathsf{ri}(C^{S,\mathrm{eff}}_{n,A|B})\,.
\end{align*}

By taking $\sigma_B \coloneq  2\lambda_{\max}(\rho_{AB})\,\id_B$, then $\sigma_B\in \mathsf{ri}(\Pos(\HH_B^n))$. Moreover, for every
$F_{AB}\in C^{E,\mathrm{eff}}_{n,A|B}\setminus\{0\}$,
\begin{align*}
\left\langle \pi(\id_A\otimes \sigma_B-\rho_{AB}),F_{AB}\right\rangle
&=\Tr[(\id_A\otimes \sigma_B-\rho_{AB})F_{AB}] \\
&=2\lambda_{\max}(\rho_{AB})\Tr[F_{AB}]-\Tr[\rho_{AB}F_{AB}] \\
&\ge\lambda_{\max}(\rho_{AB})\Tr[F_{AB}]>0\,,
\end{align*}
where the second to last inequality is Hölder inequality and the last strict inequality follows since $\Tr[E]=\|E\|_1>0$ as by assumption $F_{AB}\neq0$ and $\rho_{AB}$ is a finite dimensional normalized quantum state, hence $\lambda_{\max}(\rho_{AB})>0$. Hence, by the standard characterization of the interior of the dual cone,
\begin{align*}
    \pi(\id_A\otimes \sigma_B-\rho_{AB})\in \mathsf{ri}(C^{S,\mathrm{eff}}_{n,A|B})\,.
\end{align*}
Therefore, Slater's condition holds, and the primal and dual optimal values coincide.

\end{proof}

Let us now prove the main theorem of this section.

\begin{theorem}\label{thm:op.mean.fullyquantum}
	Given a quantum state $\rho_{AB}\in \HH^A_n \otimes \HH^B_n$, its computational min-entropy can be expressed as
	\begin{align}\label{eq:q.H.min}
		\CompMin(A|B)_{\rho} = -\log \CompQCorr(A|B)_{\rho}\,,
	\end{align}
	where
	\begin{align}
		\CompQCorr(A|B)_\rho
		\coloneq  d_A\cdot \max_{T} F\left( (\Id_A \otimes T)(\rho_{AB}), \ketbra{\Omega_{AA'}}{\Omega_{AA'}} \right) 
	\end{align}
	is the maximal fidelity with a maximally entangled state achievable by channel  $T: \BB(\HH^B_n) \rightarrow\BB(\HH^{A'}_n)$, with $ \HH^{A'}_n\cong\HH^{A}_n$ and $\ket{\Omega_{AA'}} \coloneq  \frac{1}{\sqrt{d_A}} \sum_{x \in [d_A]}\ket{x}_A \otimes \ket{x}_{A'}$.
\end{theorem}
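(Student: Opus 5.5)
The argument chains \Cref{lem:dual.program.Hmin} with the scaling structure of the Choi cone. First I rewrite the left-hand side of \eqref{eq:q.H.min} as a conic program. By \Cref{def:comp.min.entrp}, $2^{-\CompMin(A|B)_\rho}$ is the infimum of $\lambda$ over normalized $\sigma_B$ with $\rho_{AB}\le_{\cC^{\mathbf{S}_\mathrm{eff}}_{n,A|B}}\lambda\,\id_A\otimes\sigma_B$. Absorbing $\lambda$ into $\tilde\sigma_B\coloneq\lambda\sigma_B$ gives
\begin{align*}
2^{-\CompMin(A|B)_\rho}=\min\Bigl\{\Tr[\tilde\sigma_B]\;:\;\tilde\sigma_B\ge 0,\ \id_A\otimes\tilde\sigma_B\ge_{\cC^{\mathbf{S}_\mathrm{eff}}_{n,A|B}}\rho_{AB}\Bigr\}.
\end{align*}
This is exactly the primal program of \Cref{lem:dual.program.Hmin}. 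The minimum is attained because the feasible set is closed and the objective is coercive on $\Pos(\HH^B_n)$; alternatively, one may take the infimum directly.

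Next, \Cref{lem:dual.program.Hmin} gives the dual value $\max\{\Tr[\rho_{AB}F_{AB}]\}$ over $F_{AB}\in\cC^{\mathbf{E}_\mathrm{eff}}_{n,A|B}$ with $\tr_A[F_{AB}]\le\id_B$. As argued in that proof, every cone element satisfies $\tr_A F=t\,\id_B$, and rescaling by $1/t$ only increases the objective. The optimum is therefore attained on the channel-normalized slice $\conv(\mathbf{J}^{p(n)}_{n,A|B})$. Since the objective is linear and $\mathbf{J}^{p(n)}_{n,A|B}$ is finite (\Cref{lem:counting-choi}), the maximum over the convex hull is attained at an extreme point, that is, at some $J(T^*)$ with $C(T^*)\le p(n)$. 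Hence $2^{-\CompMin}=\max_{C(T^*)\le p(n)}\Tr[\rho_{AB}J(T^*)]$.

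Finally, I translate this into a fidelity. Using $J(T^*)=d_A(\Id_A\otimes T^*)(\ketbra{\Omega_{AA'}}{\Omega_{AA'}})$ and duality, I get
\begin{align*}
\Tr[\rho_{AB}J(T^*)]=d_A\bra{\Omega_{AA'}}(\Id_A\otimes T)(\rho_{AB})\ket{\Omega_{AA'}}=d_A\,F\bigl((\Id_A\otimes T)(\rho_{AB}),\ketbra{\Omega_{AA'}}{\Omega_{AA'}}\bigr),
\end{align*}
because fidelity with a pure state is the overlap. By \Cref{lem:complexity.dual.equivalence}, the condition $C(T^*)\le p(n)$ is equivalent to $C(T)\le p(n)$, so the maximization ranges over exactly the efficiently implementable channels $T:B\to A'$. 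Taking $-\log$ yields \eqref{eq:q.H.min}.

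The main obstacle is the strong-duality step, but it is already handled by \Cref{lem:dual.program.Hmin}. Two points still need checking. The first is that the reduction to the primal program respects the quotient $\pi$: the relation $\id_A\otimes\sigma_B\ge\rho$ only depends on pairings with $\VV_{n,A|B}$, which is consistent. The second is the degenerate case where $\rho_{AB}\not\ll\id_A\otimes\sigma_B$ for some $\sigma_B$; this is harmless because the infimum over $\sigma_B$ can be taken over full-rank states. The sign and scaling conventions also need care, namely the factor $d_A$ in the Choi normalization and the transpose in \eqref{eq:Choi.action.operator}; the transpose does not appear once one writes the overlap via $\Id_A\otimes T^*$ applied to $\ketbra{\Omega_{AA'}}{\Omega_{AA'}}$.
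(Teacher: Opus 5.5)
Your proposal is correct and follows essentially the same route as the paper's proof: you absorb $\lambda$ into $\tilde\sigma_B=\lambda\sigma_B$ to get the primal program of \Cref{lem:dual.program.Hmin}, pass to the dual on the channel-normalized slice, maximize the linear objective over the extreme points $\mathbf{J}^{p(n)}_{n,A|B}$, and rewrite $\Tr[\rho_{AB}J(T^*)]$ as $d_A\bra{\Omega_{AA'}}(\Id_A\otimes T)(\rho_{AB})\ket{\Omega_{AA'}}$, with \Cref{lem:complexity.dual.equivalence} transferring the complexity bound from $T^*$ to $T$. Your caveat about supports is unnecessary but harmless: every cone element has $\tr_A F_{AB}=t\,\id_B$, so by \Cref{cor:indeo.max.div} the constraint $\rho_{AB}\le_{\cC^{\mathbf{S}_\mathrm{eff}}_{n,A|B}}\lambda\,\id_A\otimes\sigma_B$ depends on $\sigma_B$ only through $\Tr[\sigma_B]$.
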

\begin{proof}
	By definition of the computational max-divergence,
	\begin{align*}
		\CompMin(A|B)_{\rho}
		 & = - \inf_{\substack{\sigma_B \ge 0                            \\ \Tr[\sigma_B] = 1}}
		\CompMaxDiv^{A|B}\left(\rho_{AB}\,\big\|\,\id_A \otimes \sigma_B\right) \\
		 & = - \log \inf_{\substack{\sigma_B \ge 0                       \\ \Tr[\sigma_B] = 1}}
		\inf\left\{\lambda \ge 0 :
		\rho_{AB} \le_{\cC^{\mathbf{S}_\mathrm{eff}}_{n, A|B}}
		\lambda(\id_A\otimes\sigma_B)\right\}.
	\end{align*}
	Introduce the unnormalised variable $\tilde{\sigma}_B \coloneq  \lambda \sigma_B \ge 0$.
	Then the order constraint can be written as
	\begin{align*}
		\rho_{AB} \le_{\cC^{\mathbf{S}_\mathrm{eff}}_{n, A|B}}
		\id_A\otimes \tilde{\sigma}_B
		\iff
		\id_A\otimes \tilde{\sigma}_B - \rho_{AB} \in \cC^{\mathbf{S}_\mathrm{eff}}_{n, A|B},
	\end{align*}

	and $\Tr[\tilde{\sigma}_B] = \lambda$. Hence
	\begin{align*}
		2^{-\CompMin(A|B)_{\rho}}
		 & = \inf_{\tilde{\sigma}_B \ge 0}
		\left\{\Tr[\tilde{\sigma}_B] :
		\id_A\otimes \tilde{\sigma}_B
		\ge_{\cC^{\mathbf{S}_\mathrm{eff}}_{n, A|B}} \rho_{AB}\right\}\;.
	\end{align*}
	This is exactly the primal conic program of ~\Cref{lem:dual.program.Hmin}, then it follows that
	\begin{align}
		2^{-\CompMin(A|B)_{\rho}}
		= \max_{\substack{F_{AB} \in \cC^{\mathbf{E}_\mathrm{eff}}_{n, A|B} \\
				\tr_A[F_{AB}] = \id_B}}
		\Tr[\rho_{AB} F_{AB}] \;.
		\label{eq:cq-dual}	
    \end{align}
	Let us now prove the following equivalence:
	\begin{align*}
		\log \max_{\substack{F_{AB} \in \cC^{\mathbf{E}_\mathrm{eff}}_{n, A|B} \\
				\tr_A[F_{AB}] = \id_B}}
		\Tr[\rho_{AB} F_{AB}]=\log\left( d_A \cdot \max_{T} F\left( (\Id_A \otimes T)(\rho_{AB}), \ketbra{\Omega_{AA'}}{\Omega_{AA'}} \right) \right)
	\end{align*}
	By construction, $F_{AB} \in \BB(\HH^A_n \otimes \HH^B_n)$ is a nonnegative operator such that $\tr_A[F_{AB}] = \id_B$. Then, due to the Choi-Jamiołkowski isomorphism, it corresponds to the Choi operator of a CPU map $T^*$. Therefore,
	\begin{align*}
		\Tr[\rho_{AB}F_{AB}]
		 & =d_A \Tr \left[ \rho_{AB}(\Id_A \otimes T^{*})\ketbra{\Omega_{AA'}}{\Omega_{AA'}}\right]    \\
		 & = d_A \Tr \left[ (\Id_A \otimes T)(\rho_{AB})(\ketbra{\Omega_{AA'}}{\Omega_{AA'}})\right]   \\
		 & = d_A \bra{\Omega_{AA'}} (\Id_A \otimes T)(\rho_{AB})\ket{\Omega_{AA'}}                     \\
		 & =  d_A F \left( (\Id_A \otimes T)(\rho_{AB}), \ketbra{\Omega_{AA'}}{\Omega_{AA'}}\right)\,,
	\end{align*}
	where the first equality follows from duality, with no complexity overhead as proven in \Cref{lem:complexity.dual.equivalence}, and the last one follows by the definition of the fidelity $F(\rho,\sigma) = \| \sqrt{\rho} \sqrt{\sigma}\|_1^2 = \left( \Tr\!\left[\sqrt{\sqrt{\rho} \sigma \sqrt{\rho}}\right]\right)^2$ taking into account that $\ket{\Omega_{AA'}}$ is a pure state.

	Then, since the maximization is done for a linear function w.r.t. a convex set, it suffices to take the extremal point of the convex set, i.e., $\mathbf{J}^{p(n)}_{n,A|B}$. Lastly, due to the Choi-Jamiołkowski isomorphism to \Cref{lem:complexity.dual.equivalence}, this corresponds to the maximization w.r.t. the set of efficiently generated CPTP maps.
\end{proof}

Therefore, the computational min-entropy admits the same operational interpretation as in the information-theoretic setting: it is the (negative logarithm of the) optimal fidelity with respect to a maximally entangled state achieved by acting on the conditioning system when the quantum channel is restricted to be efficiently implementable. This mirrors the characterization of \cite{KRS09} and, under our complexity constraints, recovers the computational conditional min-entropy considered in \cite{AHRA25}.

\subsection{The Classical-Quantum Case}\label{sec:cq-case}

Let us now study the case in which one of the registers is classical. This case can be seen as a reduction of the previous one, in which both subsystems $X\equiv A$ and $X^\prime\equiv A^\prime$ are classical, however, $B$ may not be. Then, as in the information-theoretic case, we recover the interpretation as proportional to the optimal guessing probability under computational constraints on the allowed measurements, analogous to~\cite[Lemma~32]{AHRA25}.
\begin{lem}\label{th:cq.oper.int}
	Let $\rho_{XB} = \sum_{i=1}^{|X|} p_i \ketbra{i}{i}_X \otimes \rho_B^i$ be a cq state.
	Then
	\begin{align}
		\CompMin(X|B)_{\rho} = -\log \CompPGuess(X|B)_{\rho}
	\end{align}
	where
	\begin{align}
		\CompPGuess(X|B)_{\rho}
		\coloneq  \max_{\{E_B^i\}_{i\in X}\in \mathbf{M}^{p(n)}_{n,|X|}}
		\sum_{i\in X} p_i\,\Tr\!\left[E_B^i \rho_B^i\right]
	\end{align}
	is the optimal guessing probability when restricted to efficient $|X|$-outcome POVMs on $B$.
\end{lem}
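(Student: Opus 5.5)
The plan is to start from \Cref{thm:op.mean.fullyquantum}, or more precisely from its intermediate identity \eqref{eq:cq-dual}. This gives
\begin{align*}
2^{-\CompMin(X|B)_\rho}=\max_{J\in\mathbf{J}^{p(n)}_{n,X|B}}\Tr[\rho_{XB}J_{XB}]
= |X|\max_{T}\bra{\Omega_{XX'}}(\Id_X\otimes T)(\rho_{XB})\ket{\Omega_{XX'}}\,,
\end{align*}
where $T:B\to X'$ ranges over efficient channels. The maximum is attained on the finitely many extremal generators, by \Cref{lem:counting-choi}. Inserting $\rho_{XB}=\sum_i p_i\ketbra{i}{i}\otimes\rho_B^i$ and using $\Omega_{XX'}=\frac1{|X|}\sum_{i,j}\ket{ii}\bra{jj}$, only the diagonal blocks survive. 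We obtain
\begin{align*}
\Tr[\rho_{XB}J(T^*)]=\sum_i p_i\Tr\!\left[\rho_B^i\,T^*(\ketbra{i}{i}_{X'})\right].
\end{align*}
Equivalently, by \eqref{eq:Choi.action.operator}, $\bra{i}J(T^*)\ket{i}=T^*(\ketbra{i}{i})$. So the problem reduces to comparing $\{T^*(\ketbra{i}{i})\}_i$, for efficient CPU maps $T^*$, with efficient $|X|$-outcome POVMs.

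For the direction ``$\ge$'', take an optimal $\{E_B^i\}\in\mathbf{M}^{p(n)}_{n,|X|}$ with unitary $U$ as in \Cref{def:POVMcomplexity}. Define $T^*(Y)\coloneq\bra{0}U^*(\id\otimes Y)U\ket{0}$, following the construction in the proof of \Cref{lem:CPU.Effect.relation} but now with $|X|$ outcomes, and identify the outcome register $\HH_K$ with $X'$. Then $T^*(\ketbra{i}{i})=E_B^i$ and $C(T^*)\le p(n)$, so the sum above equals the guessing probability.

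For the direction ``$\le$'', start from an efficient $T^*$ and set $E_B^i\coloneq T^*(\ketbra{i}{i}_{X'})$. These operators are positive and sum to $T^*(\id)=\id_B$, so they form a POVM. The POVM is implemented by running the dilation of $T$ and then measuring $X'$ in the computational basis. Computational-basis measurement is free in this model, so the complexity is still at most $p(n)$ and $\{E_B^i\}\in\mathbf{M}^{p(n)}_{n,|X|}$. The off-diagonal part of $T^*$ plays no role because $\rho_{XB}$ is block diagonal in $X$.

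I expect the main obstacle to be matching the complexity conventions exactly. \Cref{def:POVMcomplexity} requires a unitary on $\HH\otimes\HH_K$ with $\dim\HH_K=K$ and no extra ancilla, whereas \Cref{def:CPUcomplexity} allows arbitrary ancillas $E,K$ and discards part of the output. To bridge this, I would argue, as in the counting lemma, that free ancilla initialisation and discarding let us absorb $E$ and the discarded register $K$ into the Naimark dilation without adding gates. In the other direction, I would take $K=X'$ with a trivial discarded register. If this identification does not hold literally under the stated definitions, I would state the lemma with the measurement set understood as ``measure the output of an efficient channel''. This set coincides with $\mathbf{M}^{p(n)}_{n,|X|}$ up to the free operations.
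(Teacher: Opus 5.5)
Your proposal is correct and follows essentially the same route as the paper. Both arguments pass through the dual program of \Cref{lem:dual.program.Hmin} (equivalently Eq.~\eqref{eq:cq-dual}), restrict to the extremal efficient Choi operators, compute $\Tr[\rho_{XB}J(T^*)]=\sum_i p_i\Tr[\rho_B^i\, T^*(\ketbra{i}{i})]$, and finally identify $E_B^i=T^*(\ketbra{i}{i})$ as an efficient POVM. Your proof is more careful than the paper's in two places: you prove both directions of that last identification explicitly, and you note that \Cref{def:POVMcomplexity} (no extra ancilla) and \Cref{def:CPUcomplexity} (arbitrary ancillas) only agree under the free ancilla/measurement convention, whereas the paper simply asserts the identification.
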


\begin{proof}
    Recall \Cref{lem:dual.program.Hmin}. Then, for a given cq state $\rho_{XB} = \sum_i p_i \ketbra{i}{i}\otimes\rho_B^i$, it follows that
	\begin{align}
		2^{-\CompMin(X|B)_{\rho}} & =\max_{\substack{E_{XB} \in \cC^{\mathbf{E}_\mathrm{eff}}_{n, X|B} \\
				\tr_X[E_{XB}] = \id_B}}
		\Tr[\rho_{XB} E_{XB}]   \\
		 & =\max_{J_{XB}\in \mathbf{J}_{n,X|B}}\Tr[\rho_{XB}J_{XB}]           \\ &=\max_{\underset{C(T^*)\leq p(n)}{T^*:X^\prime\to B}} \Tr\left[\left(\sum_{i=1}^{|X|}p_i\ketbra{i}{i}_X\otimes\rho_B^i\right)d_A(\Id_X\otimes T^*(\ketbra{\Omega}{\Omega}_{XX^\prime}))\right] \\ &=\max_{\underset{C(T^*)\leq p(n)}{T^*:X^\prime\to B}} \Tr\left[\sum_{i=1}^{|X|}p_i\rho^i_BT^*(\ketbra{i}{i}_{X^\prime})\right] \\ &= \max_{\{E_B^i\}_{i\in X}\in \mathbf{M}^{p(n)}_{n,|X|}}
		\sum_{i\in X} p_i\,\Tr\!\left[E_B^i \rho_B^i\right]\,,
	\end{align} 
    where the second equality is a consequence of the convexity of the functional and the fact that the extremal $E_{XB}$ of the first maximization are contained in $\mathbf{J}_{n,X|B}$, and  $\mathbf{J}_{n,X|B}$ is itself contained in the set of all $E_{XB}$ over which the first maximization ranges. The third equality follows from the definition of $\mathbf{J}_{n,X|B}$ and the fourth from
	\begin{align}
		\Tr & \left[\left(\sum_{i=1}^{|X|}p_i\ketbra{i}{i}_X\otimes\rho_B^i\right)d_A(\Id_X\otimes T^*(\ketbra{\Omega}{\Omega}_{XX^\prime}))\right] \\ &= \sum_{i=1}^{|X|}p_i\Tr[(|i\rangle\langle i|_X\otimes\rho_B^i)\sum_{y,z}  \ketbra{y}{z}_X\otimes T^*(\ketbra{y}{z}_{X^\prime})] \\ &= \sum_{i=1}^{|X|}p_i\Tr[\rho_B^iT^*(\ketbra{i}{i}_{X^\prime})].
	\end{align}
	Finally the last equality is denoting the POVM elements $T^*(\ketbra{i}{i}_{X^\prime})=:E_B^i$, which by construction via the efficient map $T^*$ form an efficient $|X|-$outcome POVM $\{E_B^i\}_{i=1}^{|X|}\in\mathbf{M}^{p(n)}_{n,|X|}$ and an analogous argument as for the second equality.
	Taking the logarithm and a minus sign concludes the proof.
\end{proof}

Note that in the c-q setting we may also effectively replace the optimization over $\mathcal{C}^{\Eeff}_{n,A|B}$ by an optimization over efficient quantum-classical maps, which is also equivalent to pinched versions of the efficient quantum-quantum maps $T^*$, and the above optimization over efficient POVMs.

\subsection{Measurement Channels for CQ--States}\label{subsec:cq.cone.YHK}

For classical-quantum states, we show that restricting to efficient Choi operations preserves the classical-quantum structure. Let $\rho_{XB}$ be a classical-quantum state, i.e.\ $\rho_{XB}=\sum_{i=1}^{|X|} \ketbra{i}{i}_X\otimes \rho_B^i$ is block diagonal in the computational basis of $X$.
Then for every $F_{XB}\in\Herm(\HH_X\otimes \HH_B)$ one has
\begin{align}
    \Tr(\rho_{XB}F_{XB})=\Tr\left(\rho_{XB}\Delta_X(F_{XB})\right),
\end{align}
where $\Delta_X(F)=\sum_{i}(\ketbra{i}{i}_X\otimes \id_B)F(\ketbra{i}{i}_X\otimes \id_B)$ is the pinching map in the computational basis. Hence, in any cone program whose objective is of the form $F\mapsto \Tr(\rho_{XB}F)$, one may w.l.o.g.\ restrict the optimization to $X$-diagonal operators.

In particular, the optimization over the cone of efficient Choi-operators in the computational min-entropy can be restricted by an optimization over the cone of efficient \emph{flagged POVMs}
\begin{align}
	\tilde{\cC}^{\mathbf{E}_\mathrm{eff}}_{n, X|B}
	\coloneq (\Delta_X\otimes\Id_B)\!\left(\cC^{\mathbf{E}_\mathrm{eff}}_{n, X|B}\right)=\cone\left\{ \sum_{i=1}^{|X|}\ketbra{i}{i}_X\otimes E^i_B \ \bigg| \ \{E^i_B\}_{i=1}^{|X|}\in \mathbf{M}^{p(n)}_{n,|X|}\right\},
\end{align}

where $\mathbf M^{p(n)}_{n,|X|}$ denotes the family of $|X|$-outcome POVMs on $B$ implementable with complexity $p(n)$. The second equality is easily verified by explicit computation.

Equivalently, one might see the cone $\tilde{\cC}^{\mathbf{E}_\mathrm{eff}}_{n, X|B}$ as the cone of efficiently generated quantum-classical (q-c) measurement channels. Given a POVM $\{E_i\}_{i=1}^{|X|}\in \mathbf {M}^{p(n)}_{n,|X|}$, its corresponding \emph{quantum-classical measurement channel} $\cM:\BB(\HH^B_n)\to \BB(\HH^{X^\prime}_n)$ is given by
\begin{align}
	\cM(\rho_B)\coloneq \sum_{i=1}^{|X|}\tr(E^i_B\rho_B)\,\ketbra{i}{i}_{X^\prime}.
\end{align}
Its adjoint $\cM^*:\BB(\HH^{X^\prime}_n)\to \BB(\HH^B_n)$ is a CPU map and satisfies
\begin{align}
	\cM^*(\ketbra{i}{i}_{X^\prime})=E^i_B,\qquad \cM^*(\ketbra{i}{j}_{X^\prime})=0 \ \ (i\neq j),
\end{align}
so that its the Choi operator
\begin{align}
	J(\cM^*)=\sum_{i=1}^{|X|}\ketbra{i}{i}_X\otimes E^i_B,
\end{align}
is diagonal in $X$. Therefore, the optimization over $\tilde{\cC}^{\mathbf{E}_\mathrm{eff}}_{n, X|B}$ is equivalently an optimization over efficient q-c measurement channels $\cM$.
    
When $|X|=2$, every operator in the set that generates the cone $\tilde{\cC}^{\mathbf{E}_\mathrm{eff}}_{n, X|B}$ has the form
\begin{align}
	\ketbra{0}{0}_X\otimes E_B \;+\; \ketbra{1}{1}_X\otimes (\id_B-E_B),
\end{align}
where $\{E_B,\id_B-E_B\}\in \mathbf M^{p(n)}_{n,2}$ is an efficient binary POVM.
Thus, in the binary c-q case, optimization over flagged POVMs is equivalently parametrized by the choice of an efficient effect operator $E_B$ (the first POVM element), with the second outcome fixed by complement.
In particular, this is the same binary-measurement model as in \cite[Def.~3.4, Def.~3.6, Def.~3.13]{YHK25}: their polynomially generated set of efficient effect operators and its induced cone of efficient effects describe exactly the admissible binary tests underlying the above flagged form.

\subsection{Computational Max-Divergence and Measured Max-Divergence Relations}\label{subesc:compmax.mesured.divergenceYHK25equive}

This classical-quantum reduction lets us extend \cite[Thm.~4.12]{YHK25} beyond the binary setting. In particular, we show that the computational max-divergence induced by the cone of efficiently implementable q-c measurement channels (equivalently, efficiently generated flagged POVMs) coincides with the computational measured Rényi divergence of order $\infty$ from \cite[Def.~4.9]{YHK25}, for arbitrary outcome alphabet~$X$.

\begin{lem}[Computational max-divergence and measured max-divergence equivalence]\label{lem:max.div.eq.Renyi}
	Given a finite gate-set $\mathcal{G}$, let $\tilde{\cC}^{\mathbf{E}_\mathrm{eff}}_{n, X|B}$ be the corresponding cone of efficiently generated flagged POVMs induced by $\mathbf M^{p(n)}_{n,|X|}$ as in \Cref{subsec:cq.cone.YHK}.
	Then for any $\rho,\sigma\in\Pos(\HH^B_n)$ with $\Tr[\rho]>0$,
	\begin{align*}
		\CompMaxDiv^{X|B}(\hat\rho_{XB}\|\hat{\sigma}_{XB}) \;=\; \CompDiv^{\mathbf M^{p(n)}_{n,|X|}}_{\infty}(\rho_B\|\sigma_B)\,,
	\end{align*}
	where $\CompMaxDiv^{X|B}(\hat\rho_{XB}\|\hat{\sigma}_{XB})\coloneq  \CompMaxDiv^{X|B}\left(|1\rangle\langle1|_X \otimes \rho_{B}\||1\rangle\langle1|_X \otimes\sigma_{B}\right)$ is the computational max divergence of \Cref{def:comp.max.div} with $A=X$ and
	\begin{align*}
		\CompDiv^{\mathbf M^{p(n)}_{n,|X|}}_\infty(\rho_B\|\sigma_B)
		&\coloneqq 
		\sup_{M\in\mathbf M^{p(n)}_{n,|X|}}
		\log\inf\left\{\lambda\in\RR:\ \cM_M(\rho_B)\le \lambda\,\cM_M(\sigma_B)\right\},
	\end{align*}
	is the computational measured max-divergence of \cite[Definition IV.8]{YHK25} where $\cM_M:\BB(\HH^B_n)\to \BB(\HH^{X'}_n)$ denotes the q-c measurement channel corresponding to $M$.
\end{lem}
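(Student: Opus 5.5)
The plan is to compute both sides explicitly as suprema of ratios and then match them term by term. For the left-hand side, \Cref{def:comp.max.div} (with \Cref{lem: dualexp}) gives the dual expression as a supremum over the cone $\tilde{\cC}^{\mathbf{E}_\mathrm{eff}}_{n,X|B}$. Since this cone is generated by flagged POVM operators $F_M=\sum_i \ketbra{i}{i}_X\otimes E^i_B$, a general element is $F=\sum_j \alpha_j F_{M_j}$ with $\alpha_j\ge 0$.

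The numerator then evaluates to
\begin{align*}
\Tr[(\ketbra{1}{1}_X\otimes\rho_B)F]=\sum_j\alpha_j\Tr[E^{1}_{B,j}\rho_B],
\end{align*}
where $E^1_{B,j}$ denotes the first outcome effect of $M_j$. The denominator has the same form with $\sigma_B$ in place of $\rho_B$.

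The key step is the elementary mediant inequality: a ratio of nonnegative combinations $\sum_j\alpha_j a_j/\sum_j\alpha_j b_j$ never exceeds $\max_j a_j/b_j$. Because the cone is finitely generated (\Cref{lem:counting-choi}), the supremum is therefore attained on a single generator. This reduces the left-hand side to
\begin{align*}
\sup_{M\in\mathbf M^{p(n)}_{n,|X|}}\log\frac{\Tr[E^1_B\rho_B]}{\Tr[E^1_B\sigma_B]},
\end{align*}
with the convention that a zero denominator and a positive numerator give $+\infty$, matching the empty-set convention in \Cref{def:comp.max.div}.

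For the right-hand side, $\cM_M(\rho_B)$ and $\cM_M(\sigma_B)$ are diagonal, so the classical max-divergence between them is $\log\max_i \Tr[E^i\rho_B]/\Tr[E^i\sigma_B]$. This is a maximum over all outcomes $i$, not only $i=1$.

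Matching the two sides requires a relabelling step. Given a POVM $M$ and an outcome $i^\star$ achieving that maximum, I would build the permuted POVM $M'$ that swaps outcomes $1$ and $i^\star$; its first effect is $E^{i^\star}$. The permutation is a classical relabelling of the ancilla output, hence free, or at most a constant number of gates absorbed in the budget. So $M'\in\mathbf M^{p(n)}_{n,|X|}$, and the two suprema coincide. Conversely, restricting to $i=1$ trivially gives $\le$ for the left-hand side, so the two inequalities combine into equality.

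The main obstacle is this permutation-invariance of $\mathbf M^{p(n)}_{n,|X|}$. The definitions treat classical post-processing as free, and I would invoke that explicitly so that relabelling outcomes stays within the same complexity class. A secondary point is the support and zero-denominator conventions: I would handle $\Tr[E\sigma]=0$ separately and check that both sides are $+\infty$ in exactly the same cases. With these two points settled, the remaining verification is routine.
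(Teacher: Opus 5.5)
Your proposal is correct and is essentially the paper's argument. Both sides are reduced to a supremum of first-outcome ratios $\Tr[E^1_B\rho_B]/\Tr[E^1_B\sigma_B]$, the maximum over outcomes on the measured side is moved into the first slot by free classical post-processing, and conic combinations are handled by quasi-convexity of the ratio. Your mediant inequality proves that last step directly, where the paper cites \cite[Lemma 2]{RSB24} and \cite[Lemma 9]{YHK25}. Your swap of outcomes $1$ and $i^\star$ is also slightly cleaner than the paper's coarse-graining to $\{E_{i^\star},\id-E_{i^\star}\}$, since it stays inside $\mathbf M^{p(n)}_{n,|X|}$.

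The one step the paper includes that you skip is the pinching reduction from the full Choi cone $\cC^{\mathbf{E}_\mathrm{eff}}_{n,X|B}$ of \Cref{def:comp.max.div} to the flagged cone. It is immediate, because $\Tr[(\ketbra{1}{1}_X\otimes\rho_B)J(T^*)]=\Tr[\rho_B\,T^*(\ketbra{1}{1}_{X'})]$. Moreover, $T^*(\ketbra{1}{1}_{X'})$ is the first effect of the efficient POVM ``apply $T$, then measure in the computational basis''.
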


\begin{proof}
	Let $M=\{E_B^i\}_{i=1}^{|X|}\in \mathbf M^{p(n)}_{n,|X|}$ and recall that its q-c measurement channel is
	\begin{align*}
		\cM_M(\tau)=\sum_{i=1}^{|X|}\Tr(E_B^i\tau)\,\ketbra{i}{i}_{X'}.
	\end{align*}
	Set $p_i\coloneq\Tr(E_B^i\rho_B)$ and $q_i \coloneq\Tr(E_B^i\sigma_B)$.
	Since $\cM_M(\rho_B)$ and $\cM_M(\sigma_B)$ are diagonal in the computational basis of $X'$, the operator inequality
	$\cM_M(\rho_B)\le \lambda\,\cM_M(\sigma_B)$ holds if and only if $p_i\le \lambda q_i$ for all $i$.
	Hence
	\begin{align*}
		\inf\{\lambda:\ \cM_M(\rho_B)\le \lambda\,\cM_M(\sigma_B)\}
		=\max_{i:\,q_i>0}\frac{p_i}{q_i},
	\end{align*}
	with the convention that if $q_i=0$ and $p_i>0$ for some $i$ then the infimum equals $+\infty$.
	Therefore,
	\begin{align}
		\CompDiv^{\mathbf M^{p(n)}_{n,|X|}}_\infty(\rho_B\|\sigma_B)
		=\sup_{M\in\mathbf M^{p(n)}_{n,|X|}}
		\log\max_{i\in[|X|]}\frac{\Tr(E_B^i\rho_B)}{\Tr(E_B^i\sigma_B)}\,.
		\label{eq:comp-meas-dinf-maxi}
	\end{align}

Let $M=(E_i)_{i=1}^{|X|}\in \mathbf M^{p(n)}_{n,|X|}$ be the POVM with the effect operator $E_{i^\star}$ that attains the maximum in Eq.~\eqref{eq:comp-meas-dinf-maxi}. Consider the polynomial-time classical post-processing that maps the outcome \(i^\star\) to \(1\) and all other outcomes to \(2\) in the following way
\begin{align*}
   \{E^i\}_{i=1}^{|X|}\mapsto \{F_j\}_{j=1}^{2}\,:
\quad
F_1=E_{i^\star}\,,\quad
F_2=\id-E_{i^\star}\,
\end{align*}
The resulting 2-outcome POVM is still efficiently implementable, up to polynomial classical post-processing. Hence
\begin{align*}
    \max_{i\in[|X|]} \frac{\Tr(E_i\rho_B)}{\Tr(E_i\sigma_B)} = \frac{\Tr(F_1\rho_B)}{\Tr(F_1\sigma_B)}\,.
\end{align*}
Taking the supremum over $M$ gives
	\begin{align}
		\CompDiv^{\mathbf M^{p(n)}_{n,|X|}}_\infty(\rho_B\|\sigma_B)
		=
		\sup_{M\in\mathbf M^{p(n)}_{n,|X|}} \log \frac{\Tr(E_B^1\rho_B)}{\Tr(E_B^1\sigma_B)}\,,
	\end{align}
Moreover, as proven by \cite[Lemma 2]{RSB24}, the measured max-divergence is jointly quasi-convex. Then, for the convex hull of $\mathbf M^{p(n)}_{n,|X|}$, which we denote as $\overline{\mathbf M}^{p(n)}_{n,|X|}$, we have that
\begin{align*}
		\CompDiv^{\mathbf M^{p(n)}_{n,|X|}}_\infty(\rho_B\|\sigma_B) = \sup_{\{E_n^i\}_{i \in |X|}\in\mathbf M^{p(n)}_{n,|X|}} \log \frac{\Tr(E_B^1\rho_B)}{\Tr(E_B^1\sigma_B)}= \sup_{\{E_n^i\}_{i \in |X|}\in\overline{\mathbf M}^{p(n)}_{n,|X|}} \log \frac{\Tr(E_B^1\rho_B)}{\Tr(E_B^1\sigma_B)}\,,
	\end{align*}
as proven by \cite[Lemma 9]{YHK25}. On the other hand, by the definition of $\CompMaxDiv^{A|B}(\rho_{AB}\|\sigma_{AB})$ in the c-q case,
\begin{align}\label{eq:cq.com.max.program}
		\CompMaxDiv^{X|B}(\hat\rho_{XB}\|\hat{\sigma}_{XB}) =
		\log\sup_{F_{XB}\in \cC^{\mathbf E_\mathrm{eff}}_{n,X|B}}
		\frac{\Tr(\hat\rho_{XB}F_{XB})}{\Tr(\hat\sigma_{XB}F_{XB})}\,,
		\qquad
		\hat\rho_{XB}=\ketbra{1}{1}_X\otimes\rho,\ \hat\sigma_{XB}=\ketbra{1}{1}_X\otimes\sigma\,.
	\end{align}

Since $\hat\rho_{XB}$ and $\hat\sigma_{XB}$ are block-diagonal in the $X$ basis, \Cref{subsec:cq.cone.YHK} implies that we may w.l.o.g. restrict to $X$-diagonal optimizers. By construction, for every $F_{XB} \in \tilde{\cC}^{\mathbf E_\mathrm{eff}}_{n,X|B}$, there exists an $E_{XB}=\sum_{i=1}^{|X|}\ketbra{i}{i}_X\otimes E_B^i$ for some $\{E_B^i\}_{i=1}^{|X|}\in\overline{\mathbf M}^{p(n)}_{n,|X|}$ such that $F_{XB} = \lambda \cdot E_{XB}$. Then, since the RHS of Eq.~\eqref{eq:cq.com.max.program} is scaling invariant and due to the form states $\hat\rho_{XB}$ and $\hat\sigma_{XB}$, we might restrict to
\begin{align*}
    	\Tr(\hat\rho_{XB}F_{XB})=\Tr(E_B^1\rho), \qquad \Tr(\hat\sigma_{XB}F_{XB})=\Tr(E_B^1\sigma)\,.
\end{align*}
Then,
	\begin{align*}
		\CompMaxDiv^{X|B}(\hat\rho_{XB}\|\hat{\sigma}_{XB}) = \sup_{\overline{\mathbf M}^{p(n)}_{n,|X|}} \log \frac{\Tr(E_B^1\rho)}{\Tr(E_B^1\sigma)}.
	\end{align*}
\end{proof}

Mirroring the information-theoretic setting, computational indistinguishability and computational unpredictability are captured here by the computational max-divergence~\cite{YHK25} and the computational conditional min-entropy~\cite{AHRA25}, respectively. Both quantities arise from the same order induced by efficiently implementable channels, placing them within a single framework.

\subsection{Non-conditional Computational Min-Entropy}
\label{sec:non.cond.case}
We now consider the case in which there is no side information. Following \cite[Definition~26]{AHRA25}, we obtain a single-system quantity by adjoining a trivial conditioning register. This is the non-conditional specialization of $\CompMin(A|B)$ in our framework.

It is important to distinguish this construction from the flagged c-q reduction of \Cref{lem:max.div.eq.Renyi}. In the latter, one adjoins a classical flag on the first register and obtains a formulation in terms of efficient POVMs. In the present case, however, the conditioning register is one-dimensional, and the corresponding cone is generated by efficient state-preparation maps. Thus, the resulting quantity is not, in general, the same as the measured divergence of \Cref{lem:max.div.eq.Renyi}.

\begin{definition}[Non-conditional computational min-entropy]\label{def:non.cond.min}
Let $B$ be a one-dimensional Hilbert space spanned by $\ket{1}_B$. For any
state $ \rho_A \in \Ss(\HH_n^A)$, define
\begin{align*}
    \hat{\rho}_{AB} \coloneq  \rho_A \otimes \ketbra{1}{1}_B \,.
\end{align*}
The non-conditional computational min-entropy of $\rho_A$ is
\begin{align*}
    \CompMin(A)_{\rho} \coloneq \CompMin(A|B)_{\hat{\rho}}\, .
\end{align*}
\end{definition}

Since the conditioning register is trivial, the infimum over conditioning states
disappears. The next lemma rewrites the resulting quantity in the form naturally
associated with the $A|B$ cone. Concretely, the optimization is over the
convex hull of efficiently preparable states on $A$.

\begin{lem}\label{lem:non.cond.min}
Let $B$ be a one-dimensional Hilbert space spanned by $\ket{1}_B$, and let
\begin{align*}
    \hat{\rho}_{AB} &\coloneq \rho_A \otimes \ketbra{1}{1}_B\,, \\
    \hat{\id}_{AB} &\coloneq \id_A \otimes \ketbra{1}{1}_B\, .
\end{align*}
Let $A'\cong A$, and define the set of efficiently preparable states on $A'$ by
\begin{align*}
    \Ss^{p(n)}_{n,A} \coloneq  \left\{ \tau_{A} \in \Ss(\HH_n^{A}) \,\Big|\, \tau_{A} = \Lambda(\ketbra{1}{1}_B) \text{ for some efficient CPTP map } \Lambda:\BB(\HH^B)\to\BB(\HH_n^{A}) \right\}\,.
\end{align*}
Then
\begin{align*}
    \CompMin(A)_{\rho} = -\log \sup_{\tau_{A} \in \Ss^{p(n)}_{n,A}} \Tr[\rho_A \tau_{A}^{\,\mathsf T}]\,,
\end{align*}
where the transpose is taken with respect to the computational basis used to define $\ket{\Omega_{AA'}}$.
\end{lem}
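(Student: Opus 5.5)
The plan is to reduce everything to the dual expression already established in the proof of \Cref{thm:op.mean.fullyquantum}, namely Eq.~\eqref{eq:cq-dual}. Applied to $\hat\rho_{AB}$ with $B$ one-dimensional, it gives
\begin{align*}
2^{-\CompMin(A)_{\rho}} = \max_{F_{AB}\in\cC^{\mathbf{E}_\mathrm{eff}}_{n,A|B},\ \tr_A[F_{AB}]=\id_B}\Tr[\hat\rho_{AB}F_{AB}]\,.
\end{align*}
The maximization of this linear functional over the convex slice is attained at extreme points. Those extreme points lie in $\mathbf{J}^{p(n)}_{n,A|B}$, which is the same argument used at the end of the proof of \Cref{thm:op.mean.fullyquantum} and in \Cref{th:cq.oper.int}. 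So it suffices to identify the Choi operators $J(T^*)$ of efficient CPU maps $T^*:\BB(\HH^{A'}_n)\to\BB(\HH^B)\cong\CC$.

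Since $B$ is one-dimensional, such a $T^*$ is a unital positive functional $X\mapsto \Tr[\tau_{A'}X]$ for some state $\tau_{A'}$. By \Cref{lem:complexity.dual.equivalence}, $T^*$ is the dual of an efficient CPTP map $\Lambda:\BB(\HH^B)\to\BB(\HH^{A'}_n)$ with $\Lambda(\ketbra{1}{1}_B)=\tau_{A'}$. Hence $\tau$ ranges exactly over $\Ss^{p(n)}_{n,A}$ (identifying $A'\cong A$). This identification of the generating set is the one step I would check carefully. The point is that efficiency of $T^*$ in the sense of \Cref{def:CPUcomplexity}, with input space $\BB(\HH^B)$ trivial, is literally a state-preparation circuit $U\ket{0}_E$ followed by tracing out $K$. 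The complexity notions therefore match with no overhead.

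Next compute the Choi operator. Using $J(T^*)=d_A(\Id_A\otimes T^*)(\ketbra{\Omega_{AA'}}{\Omega_{AA'}})=\sum_{x,y}\ketbra{x}{y}_A\,\Tr[\tau\ketbra{x}{y}]$, one obtains $J(T^*)=\tau_A^{\mathsf T}\otimes\ketbra{1}{1}_B$, which indeed satisfies $\tr_A J=\id_B$. Then
\begin{align*}
\Tr[\hat\rho_{AB}J(T^*)]=\Tr[\rho_A\tau_A^{\mathsf T}]\,.
\end{align*}
Taking the maximum over the finite set (\Cref{lem:counting-choi}), so that the sup is attained, and then applying $-\log$ gives the claim.

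The main obstacle is mild. The only subtlety is the proper-cone and informational-completeness hypothesis needed for \Cref{lem:dual.program.Hmin} when $\dim B=1$. There $\VV_{n,A|B}=\Herm(\HH^A_n)$, and the assumption is that efficient preparable states span it. Alternatively, one can avoid Slater entirely: the primal reduces to the scalar $\inf\{\lambda:\lambda\ge\Tr[\rho\,\tau^{\mathsf T}]\ \forall \tau\}$, since $\Tr[(\id_A\otimes\ketbra{1}{1})J]=1$ for every generator. This is directly the sup, which I would present as a self-contained check.
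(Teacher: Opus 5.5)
Your proposal is correct and follows essentially the same route as the paper. The paper applies \Cref{thm:op.mean.fullyquantum} to $\hat\rho_{AB}$, observes that every efficient channel out of the one-dimensional $B$ is a state preparation $\Lambda(\ketbra{1}{1}_B)=\tau_{A'}$, so that $(\Id_A\otimes\Lambda)(\hat\rho_{AB})=\rho_A\otimes\tau_{A'}$, and then uses $d_A\bra{\Omega_{AA'}}(\rho_A\otimes\tau_{A'})\ket{\Omega_{AA'}}=\Tr[\rho_A\tau_A^{\mathsf T}]$; this is exactly your Choi computation $J(T^*)=\tau_A^{\mathsf T}\otimes\ketbra{1}{1}_B$, read in the Schr\"odinger rather than the Heisenberg picture. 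Your closing remark is a genuine small improvement: since $\Tr[\hat{\id}_{AB}J]=1$ for every generator, the ratio in the definition of $\CompMaxDiv^{A|B}$ is a weighted average of the values $\Tr[\rho_A\tau_A^{\mathsf T}]$, so the lemma follows directly from the cone order without \Cref{lem:dual.program.Hmin}, and hence without the spanning/Slater hypothesis that the paper's route through \Cref{thm:op.mean.fullyquantum} implicitly requires.
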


\begin{proof}
By definition of the non-conditional computational min-entropy,
\begin{align*}
    \CompMin(A)_{\rho} = \CompMin(A|B)_{\hat{\rho}} = - \inf_{\substack{\sigma_B \geq 0 \\ \Tr[\sigma_B]=1}} \CompMaxDiv^{A|B}(\hat{\rho}_{AB}\| \id_A \otimes \sigma_B)\,.
\end{align*}
Since $B$ is one-dimensional, the only state on $B$ is $\ketbra{1}{1}_B$. Hence
\begin{align*}
    \CompMin(A)_{\rho} = -\CompMaxDiv^{A|B}(\hat{\rho}_{AB}\|\hat{\id}_{AB})\,.
\end{align*}

Now apply \Cref{thm:op.mean.fullyquantum} to the bipartite state $\hat{\rho}_{AB}$. This gives
\begin{align*}
    \CompMin(A)_{\rho} = -\log\!\left(d_A \sup_{\Lambda}
        F\left((\Id_A \otimes \Lambda)(\hat{\rho}_{AB}), \ketbra{\Omega}{\Omega}_{AA'} \right)\right)\,,
\end{align*}
where the supremum is over efficient CPTP maps $\Lambda:\BB(\HH^B)\to \BB(\HH_n^{A'})$.
Since $B$ is one-dimensional, every such $\Lambda$ is a state-preparation map which can be associated with its unique image state $\tau_{A'}=\Lambda(\ketbra{1}{1}_B) \in \Ss(\HH_n^{A'})$ and so
\begin{align*}
    (\Id_A \otimes \Lambda)(\hat{\rho}_{AB}) = (\Id_A \otimes \Lambda)(\rho_A \otimes \ketbra{1}{1}_B) = \rho_A \otimes \tau_{A'}\, .
\end{align*}
Therefore,
\begin{align*}
    \CompMin(A)_{\rho} = -\log\!\left( d_A \sup_{\tau_{A'} \in \Ss^{p(n)}_{n,A'}} F\left( \rho_A \otimes \tau_{A'}, \ketbra{\Omega}{\Omega}_{AA'}\right)\right)\,.
\end{align*}
Since one argument of the fidelity is pure,
\begin{align*}
    F\left(\rho_A \otimes \tau_{A'},\ketbra{\Omega}{\Omega}_{AA'}\right)
    =\bra{\Omega_{AA'}}(\rho_A \otimes \tau_{A'})\ket{\Omega_{AA'}}\,.
\end{align*}
Using the identity
\begin{align*}
    d_A\bra{\Omega_{AA'}}(X_A \otimes Y_{A'})\ket{\Omega_{AA'}} =\Tr[X_A Y_{A}^{\,\mathsf T}]\,,
\end{align*}
valid for all compatible operators \(X_A\) and \(Y_{A'}\), we obtain
\begin{align*}
    d_A\bra{\Omega_{AA'}}(\rho_A \otimes \tau_{A'})\ket{\Omega_{AA'}}
    =\Tr[\rho_A \tau_{A}^{\,\mathsf T}]\,.
\end{align*}
Substituting this into the previous expression proves the claim.
\end{proof}

This expression mirrors the ordinary non-conditional min-entropy, for which it holds that, and which we recover when dropping all computational constraints
\begin{align} 
H_{\min}(A)_{\rho} = -\log \|\rho_A\|_\infty = -\log \sup_{\tau \in \mathrm{S}(\HH_n^A)} \Tr[\rho_A \tau]\,. 
\end{align} 
The only difference is that the unrestricted optimization over all states is replaced here by an optimization over efficiently preparable ones.

Moreover, \Cref{lem:non.cond.min} shows that the non-conditional min-entropy is determined by the maximal overlap with the transpose of an efficiently preparable state. For positive operators, it therefore defines a computational analogue of the largest eigenvalue. This motivates the associated computational operator norm on Hermitian operators, defined by taking the absolute value of the overlap. When the transposed efficiently preparable states span $\Herm(\HH_n^A)$, this norm coincides with the order-unit norm associated with the dual cone of $\cone(\Ss^{\,\mathsf T,p(n)}_{n,A})$; see \cite{RKW11,Jen14}.

\begin{cor}[Operational norm in the non-conditional setting]\label{cor:comp.oper.norm}
Let
\begin{align*}
    \Ss^{\,\mathsf T,p(n)}_{n,A} \coloneqq \left\{ \tau_A^{\,\mathsf T}\;:\;\tau_A \in \Ss^{p(n)}_{n,A}\right\}\,,\qquad\cC^{\Ss^{\mathsf T}}_{n,A}\coloneqq\cone\!\left(\Ss^{\,\mathsf T,p(n)}_{n,A}\right)\subseteq \Herm(\HH_n^A)\,,
\end{align*}
where the transpose is taken with respect to the computational basis. For $X_A \in \Herm(\HH_n^A)$, define
\begin{align*}
    \|X_A\|_{\infty,\mathrm{comp}}\coloneqq\sup_{\omega_A\in \Ss^{\,\mathsf T,p(n)}_{n,A}} \left|\Tr[X_A \omega_A]\right|\,.
\end{align*}
Then $\|\cdot\|_{\infty,\mathrm{comp}}$ is a seminorm on $\Herm(\HH_n^A)$. If further
\begin{align*}
    \Span_{\RR}\!\left(\Ss^{\,\mathsf T,p(n)}_{n,A}\right) =\Herm(\HH_n^A)\,,
\end{align*}
then $\|\cdot\|_{\infty,\mathrm{comp}}$ is a norm. In this case,
\begin{align*}
    \|X_A\|_{\infty,\mathrm{comp}}=\inf\Bigl\{t\ge 0\;:\;-t\,\id_A \le_{(\cC^{\Ss^{\mathsf T}}_{n,A})^*} X_A \le_{(\cC^{\Ss^{\mathsf T}}_{n,A})^*} t\,\id_A\Bigr\}\,.
\end{align*}
For $X_A \ge 0$ this is given by
\begin{align*}
    \|X_A\|_{\infty,\mathrm{comp}} = \sup_{\tau_A\in \Ss^{p(n)}_{n,A}} \Tr[X_A \tau_A^{\,\mathsf T}]\,.
\end{align*}
\end{cor}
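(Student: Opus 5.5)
The plan is to establish the three claims of \Cref{cor:comp.oper.norm} in turn: the seminorm property, then definiteness under the spanning assumption together with the order-unit formula, and finally the expression for positive $X_A$.

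\textbf{Seminorm.} For each fixed $\omega_A$, the map $X_A\mapsto|\Tr[X_A\omega_A]|$ is a seminorm on $\Herm(\HH_n^A)$. A pointwise supremum of seminorms is again a seminorm as long as it is finite. Finiteness follows from $|\Tr[X_A\omega_A]|\le\|X_A\|_\infty\|\omega_A\|_1=\|X_A\|_\infty$, because transposition preserves states and hence $\omega_A$ has unit trace norm. Absolute homogeneity and the triangle inequality then pass through the supremum. By \Cref{lem:counting-choi}, applied with a one-dimensional $B$, $\Ss^{p(n)}_{n,A}$ is finite, so the supremum is in fact a maximum.

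\textbf{Definiteness and the order-unit formula.} Suppose $\|X_A\|_{\infty,\mathrm{comp}}=0$. Then $\Tr[X_A\omega_A]=0$ for every generator $\omega_A$. By the spanning assumption this gives $\Tr[X_AY]=0$ for all Hermitian $Y$, so $X_A=0$.

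For the order-unit formula, write $\cC\coloneqq\cC^{\Ss^{\mathsf T}}_{n,A}$. Since $\Tr[\omega_A]=1$ for every generator, the dual-cone condition $t\,\id_A-X_A\in\cC^*$ is equivalent to $\Tr[X_A\omega]\le t$ for all generators $\omega$. This holds because testing a dual-cone condition only requires checking the generators of $\cC$. In the same way, $X_A+t\,\id_A\in\cC^*$ is equivalent to $-\Tr[X_A\omega]\le t$. Together, the two conditions say $|\Tr[X_A\omega]|\le t$ for all generators, and taking the infimum over $t$ gives exactly the supremum defining the norm.

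The spanning assumption makes $\cC$ solid, so $\cC^*$ is pointed. $\cC$ is also pointed, because it lies inside $\Pos$ and its generators have trace $1$, so $\cC^*$ is solid. This guarantees that $\le_{\cC^*}$ is a genuine partial order and that $\id_A$ is an order unit: $\Tr[\id_A\omega]=1>0$ on all nonzero elements of $\cC$, which places $\id_A$ in the interior of $\cC^*$. This is what the remark citing \cite{RKW11,Jen14} relies on.

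\textbf{Positive $X_A$.} If $X_A\ge0$, then every overlap $\Tr[X_A\tau_A^{\mathsf T}]$ is nonnegative, since $\tau_A^{\mathsf T}\ge0$. The absolute value can therefore be dropped, which yields the stated formula. It also matches \Cref{lem:non.cond.min}.

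\textbf{Main obstacle.} The only delicate point is the rigorous duality step, namely reducing the infimum over $t$ to checking generators. Because $\cC$ is finitely generated by unit-trace elements, this reduces to a direct verification on each generator rather than requiring a general conic-duality argument, so I expect it to be manageable.
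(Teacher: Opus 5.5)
Your proposal is correct and follows essentially the paper's route. It proves the seminorm property through the supremum and definiteness from the spanning assumption. It obtains the order-unit formula by testing $t\,\id_A\pm X_A$ against the unit-trace generators; the paper instead rescales nonzero cone elements into $\conv(\Ss^{\,\mathsf T,p(n)}_{n,A})$ and bounds convex combinations, which amounts to the same check. Finally it drops the absolute value for $X_A\ge 0$. Your extra remarks on finiteness of the supremum and on properness and $\id_A$ being an order unit are sound additions that the paper only asserts. One small correction: a nonzero $\eta\in\cC$ has $\Tr[\eta]>0$, not necessarily $\Tr[\eta]=1$, though strict positivity is all your argument needs.
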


\begin{proof}
Let us first prove that $ \|\cdot\|_{\infty,\mathrm{comp}}$ defines a seminorm. For every $\alpha \in \RR$ and $X_A,Y_A \in \Herm(\HH_n^A)$, one has
\begin{align*}
    \|\alpha X_A\|_{\infty,\mathrm{comp}} =\sup_{\omega_A \in \Ss^{\,\mathsf T,p(n)}_{n,A}}\left|\Tr[\alpha X_A \omega_A]\right|=|\alpha|\,\|X_A\|_{\infty,\mathrm{comp}}\,,
\end{align*}
and
\begin{align*}
    \|X_A+Y_A\|_{\infty,\mathrm{comp}}
    &=\sup_{\omega_A \in \Ss^{\,\mathsf T,p(n)}_{n,A}}\left|    \Tr[(X_A+Y_A)\omega_A]\right|\\
    &\le\sup_{\omega_A \in \Ss^{\,\mathsf T,p(n)}_{n,A}}\left(\left|\Tr[X_A\omega_A]\right|+\left|\Tr[Y_A\omega_A]\right|\right)\\
    &\le\|X_A\|_{\infty,\mathrm{comp}}+\|Y_A\|_{\infty,\mathrm{comp}}.
\end{align*}
Hence $\|\cdot\|_{\infty,\mathrm{comp}}$ is a seminorm on $\Herm(\HH_n^A)$.

Assume now that
\begin{align*}
    \Span_{\RR}\!\left(\Ss^{\,\mathsf T,p(n)}_{n,A}\right)=\Herm(\HH_n^A)\,.
\end{align*}
If $\|X_A\|_{\infty,\mathrm{comp}}=0$, then
\begin{align*}
    \Tr[X_A\omega_A]=0\qquad\forall\,\omega_A \in \Ss^{\,\mathsf T,p(n)}_{n,A}.
\end{align*}
By linearity, the same holds for every element of
$\Span_{\RR}(\Ss^{\,\mathsf T,p(n)}_{n,A})=\Herm(\HH_n^A)$. Since the Hilbert--Schmidt inner product is non-degenerate on $\Herm(\HH_n^A)$, it follows that $X_A=0$. Therefore $\|\cdot\|_{\infty,\mathrm{comp}}$ is a norm.

Under the same assumption, the cone $\cC^{\Ss^{\mathsf T}}_{n,A}$ is proper, and the corresponding dual order
is well defined. Let $t \ge 0$. Then
\begin{align*}
    -t\,\id_A \le_{(\cC^{\Ss^{\mathsf T}}_{n,A})^*} X_A \le_{(\cC^{\Ss^{\mathsf T}}_{n,A})^*} t\,\id_A \quad \Longleftrightarrow \quad  \Tr[(t\,\id_A \pm X_A)\eta_A] \ge 0 \qquad \forall\,\eta_A \in \cC^{\Ss^{\mathsf T}}_{n,A}\,.
\end{align*}

Since $\cC^{\Ss^{\mathsf T}}_{n,A}=\cone(\Ss^{\,\mathsf T,p(n)}_{n,A})$ and every $\omega_A \in \Ss^{\,\mathsf T,p(n)}_{n,A}$ has trace one, every nonzero $\eta_A \in \cC^{\Ss^{\mathsf T}}_{n,A}$ can be written as
\begin{align*}
    \eta_A = \lambda\,\bar{\eta}_A
\end{align*}
for some $\lambda=\Tr[\eta_A]>0$ and some
\begin{align*}
    \bar{\eta}_A
    \in
    \left\{
        \zeta_A \in \cC^{\Ss^{\mathsf T}}_{n,A}
        \;:\;
        \Tr[\zeta_A]=1
    \right\}
    =
    \conv\!\left(\Ss^{\,\mathsf T,p(n)}_{n,A}\right)\;.
\end{align*}
Hence, the preceding condition is equivalent to
\begin{align*}
    \Tr[(t\,\id_A \pm X_A)\bar{\eta}_A]
    \ge 0
    \qquad
    \forall\,\bar{\eta}_A \in \conv\!\left(\Ss^{\,\mathsf T,p(n)}_{n,A}\right)\;,
\end{align*}
or, equivalently,
\begin{align*}
    \left|
        \Tr[X_A\bar{\eta}_A]
    \right|
    \le t
    \qquad
    \forall\,\bar{\eta}_A \in \conv\!\left(\Ss^{\,\mathsf T,p(n)}_{n,A}\right)\;.
\end{align*}
This, in turn, is equivalent to
\begin{align*}
    \sup_{\omega_A \in \Ss^{\,\mathsf T,p(n)}_{n,A}}
    \left|
        \Tr[X_A\omega_A]
    \right|
    \le t\,.
\end{align*}
Indeed, one implication is immediate since
$\Ss^{\,\mathsf T,p(n)}_{n,A}\subseteq \conv(\Ss^{\,\mathsf T,p(n)}_{n,A})$. Conversely, if
$\bar{\eta}_A=\sum_i p_i \omega_A^{(i)}$ is a convex combination of elements of
$\Ss^{\,\mathsf T,p(n)}_{n,A}$, then
\begin{align*}
    \left|
        \Tr[X_A\bar{\eta}_A]
    \right|
    &=
    \left|
        \sum_i p_i \Tr[X_A\omega_A^{(i)}]
    \right|
    \le
    \sum_i p_i
    \left|
        \Tr[X_A\omega_A^{(i)}]
    \right|
    \le
    \sup_{\omega_A \in \Ss^{\,\mathsf T,p(n)}_{n,A}}
    \left|
        \Tr[X_A\omega_A]
    \right|.
\end{align*}
Taking the infimum over all $t \ge 0$ proves the order-unit formula.

If $X_A \ge 0$, then since every $\omega_A \in \Ss^{\,\mathsf T,p(n)}_{n,A}$ is positive semidefinite the absolute value is redundant since $\Tr[X_A\omega_A]\ge 0$.

\end{proof}

Therefore, \Cref{lem:non.cond.min} together with \Cref{cor:comp.oper.norm} implies that,
\begin{align}
    \CompMin(A)_\rho = -\log \|\rho_A\|_{\infty,\mathrm{comp}}\,,
\end{align}
mirroring the information-theoretic case.

Finally, the transpose in the preceding formulas is an artifact of the Choi representation used in the reduction to the non-conditional setting, rather than an intrinsic feature of the quantity. If the efficient state family is closed under computational-basis transpose up to polynomial overhead, then there exists a polynomial $q$ such that
\begin{align*}
    \Ss^{\,\mathsf T,p(n)}_{n,A}
    \subseteq
    \Ss^{q(n)}_{n,A}\;,
\end{align*}
so that, after replacing $p(n)$ by $q(n)$, the same optimizations may be written directly in terms of efficiently preparable states, without an explicit transpose.

Taken together, these observations show that the non-conditional computational min-entropy is governed by a computational analogue of the operator norm. For positive operators, it is given by the largest overlap with an efficiently preparable state, while under the spanning assumption, it extends to a genuine norm on $\Herm(\HH_n^A)$ through the associated order-unit structure. In this sense, the non-conditional reduction preserves the familiar interpretation of min-entropy as the logarithm of an operator norm, with the usual operator norm replaced by its complexity-constrained counterpart.

\section{Computationally Hidden Quantum Correlations: Separations Between Computational and Information-Theoretic Notions of Min-Entropy}
 Building on \Cref{sec:comp-min-entropy}, we use the operational characterization of $\CompMin(A|B)$ as the \emph{maximal achievable fidelity with the maximally entangled state under complexity-constrained operations} from \Cref{thm:op.mean.fullyquantum} to derive explicit, quantitative bounds on 
 $\CompMin(A|B)$ for specific families of states. In particular, we focus on fully quantum states and  analyze two complementary scenarios. 

In \Cref{sec:sep.pure.state}, we study achievable gaps using pure states. By adapting the Hayashi--Schur entanglement concentration protocol~\cite{HM02} to our channel model, we obtain an explicit upper bound on $\CompMin(A|B)$ from a single efficient, state-independent distillation protocol~\cite{LREJ25}. As was observed in~\cite{LREJ25}, this analysis can be reduced to two regimes. If \(H_{\min}(A)_\rho \le O(\log n)\), then
the computational min-entropy is guaranteed to scale in a similar fashion to the information-theoretic min-entropy. Conversely, for larger $H_{\min}(A)_{|\psi\rangle}$, the same universal protocol saturates and $\CompMin(A|B)$ scales logarithmically with respect to the number of copies.  We then show that this saturation is essentially unavoidable in the worst case by constructing families of highly entangled pure states for which polynomial-time operations fail to access the underlying \(\Theta(n)\) entanglement, leaving only a polylogarithmic amount of efficiently distillable entanglement.

Afterwards, in \Cref{sec:sep.mixed.state}, we show that mixed states can hide correlations even more effectively. Via the generalized Hilbert--Schmidt ensemble~\cite{HLW06,Col15,Ha98,ZPN11,BMB+25}, we construct mixed states with highly negative information-theoretic conditional min-entropy for which $\CompMin(A|B)$ is nearly maximal. This means that for any efficient observer holding $B$, the state  essentially appears maximally mixed.

\subsection{Separation for Pure States}\label{sec:sep.pure.state}

We start by proving an upper bound on the computational min-entropy for pure states. The bound is achived by constructing an efficient Choi operator that takes as input $k$ copies of an arbitrary pure state $|\psi_{AB}\rangle$. We note that the Choi operator is independent of the state $|\psi_{AB}\rangle$. In this sense, we say the operator is state agnostic or that the state is unknown to the protocol. 

Given a pure bipartite quantum state $|\psi_{AB}\rangle \in \HH^{A}_{n} \otimes \HH^{B}_{n}$, it follows from \Cref{thm:op.mean.fullyquantum} that the computational min-entropy of the state $|\psi_{AB}\rangle^{\otimes k}$ is given by  
\begin{align}\label{eq:q.H.min.operantional}
\CompMin(A^k|B^k)_{|\psi\rangle^{\otimes k}} = -\log \left(\ d_{A_k}\cdot \max_{T}  \bra{\Omega_{A_{k}A^{\prime}_{k}}} (\Id_A \otimes T)(|\psi_{AB}\rangle\langle\psi_{AB}|^{\otimes k})\ket{\Omega_{A_{k}A^{\prime}_{k}}} \right) \,.
\end{align}

We first show that there exists a single efficient channel $T$, independent of the input pure state, which yields the desired upper bound on $\CompMin(A^k|B^k)$. The procedure is inspired by \cite[Algorithm 1]{LREJ25}, using the entanglement distillation protocol proposed in \cite{HM02}. This protocol is state-agnostic: Alice and Bob perform local Schur sampling described in \Cref{sec:rep.theory} on $k$ copies, getting the state the following decomposition:
\begin{align*}
    \ket{\psi_{AB}}^{\otimes k}
    \;=\;
    \sum_{\lambda\in\mathcal{I}_{(k,d)}} \sqrt{\mathsf{Pr}(\lambda)}\,
    \ket{\Phi^\lambda_{AB}}
    \otimes
    \left(U^A_{\mathcal{V}_\lambda}\otimes U^B_{\mathcal{V}_\lambda}\right)\,
    \ket{\phi^+_{AB}}\, ,
\end{align*}
where $\ket{\Phi^\lambda_{AB}}\in \mathcal{U}_{\lambda}^A \otimes \mathcal{U}_{\lambda}^B$, and $\ket{\phi^+_{AB}}\in \mathcal{V}^A_\lambda \otimes \mathcal{V}^B_\lambda$ is a maximally entangled state of Schmidt rank $\dim \VV_\lambda$, as discussed in \Cref{lem:schur.basis}. Then, both keep the multiplicity registers $\VV_\lambda^A,\VV_\lambda^B$, i.e., the ones corresponding to the irreps of the symmetric group and discard the irrep registers $\UU_\lambda^A,\UU_\lambda^B$, i.e., the ones corresponding to the unitary group.
Finally, they apply the inverse Schur transform on $\VV_\lambda$ to return to the computational basis. The overall procedure is efficient whenever the Schur transform can be implemented efficiently. In particular, it is efficient for $k=\poly(n)$~\cite{LREJ25}.  In order to apply these results to bound the computational min-entropy, we first prove that this protocol can be adapted to the channel structure underpinning the maps $T$.

\begin{lem}[Channel reduction for the Hayashi--Schur concentration map of \cite{HM02}]\label{lem:channel-reduction}
Fix the computational basis on $\HH^{A_k}_{n}\simeq\HH^{A'_k}_{n}$ and let
\begin{align}
\ket{\Omega}_{AA'}=\frac{1}{\sqrt{d_{A_k}}}\sum_{x=1}^{d_{A_k}}\ket{x}_{A}\ket{x}_{A'}
\end{align}
be the maximally entangled state, where $d_{A_k}\coloneq \dim \HH^{A_k}_{n}$.
Let $\Lambda:\BB(\HH^{A_k}_{n}\otimes\HH^{B_k}_{ n})\to \BB(\HH^{A_k}_{n}\otimes\HH^{A'_k}_{n})$ be the CPTP map implementing the above described protocol. Assume moreover that the Schur transform, the inverse Schur transform, and their transposes in the fixed computational basis admit polynomial-size implementations. 
Then there exists an efficiently implementable CPTP map $T:\BB(\HH^{B_k}_{ n})\to \BB(\HH^{A'_k}_{n})$ such that for every bipartite pure state $\ket{\psi_{AB}}\in\HH^A_n\otimes\HH^B_n$,
\begin{align}
\bra{\Omega_{AA'}}(\Id_A\otimes T)\left(\ket{\psi}\!\bra{\psi}_{AB}^{\otimes k}\right)\ket{\Omega_{AA'}}
=
\bra{\Omega_{AA'}}\Lambda\left(\ket{\psi}\!\bra{\psi}_{AB}^{\otimes k}\right)\ket{\Omega_{AA'}}.
\end{align}
\end{lem}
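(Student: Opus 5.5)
The proof moves Alice's part of the local protocol onto Bob's side, using the transpose trick $(X_A\otimes \id_{A'})\ket{\Omega_{AA'}}=(\id_A\otimes X^{\mathsf T}_{A'})\ket{\Omega_{AA'}}$. Write $\Lambda=\Lambda_A\otimes\Lambda_B$, where $\Lambda_A:\BB(\HH^{A_k}_n)\to\BB(\HH^{A_k}_n)$ and $\Lambda_B:\BB(\HH^{B_k}_n)\to\BB(\HH^{A'_k}_n)$ are the local maps of the Hayashi--Schur concentration step. Each consists of weak Schur sampling (\Cref{thm:eff.Schur}), discarding the $\UU_\lambda$ register, and applying the inverse Schur transform that embeds $\VV_\lambda$ back into the computational basis, padded with fixed ancillas. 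Then
\begin{align*}
\bra{\Omega}\Lambda(\rho)\ket{\Omega}=\Tr\!\left[(\Lambda_A^*\otimes \Id)(\Omega)\,(\Id_A\otimes\Lambda_B)(\rho)\right].
\end{align*}
The first step is to show that $(\Lambda_A^*\otimes \Id_{A'})(\Omega_{AA'})=(\Id_A\otimes \Lambda_A^{\mathsf T})(\Omega_{AA'})$ for a CPTP map $\Lambda_A^{\mathsf T}$ acting on $A'$. Here $\Lambda_A^{\mathsf T}$ has Kraus operators $\bar K_i=(K_i^\dagger)^{\mathsf T}$, where the $K_i$ are Kraus operators of $\Lambda_A$.

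This identity follows by expanding $\Lambda_A^*(X)=\sum_i K_i^\dagger X K_i$ and applying the transpose trick to each Kraus term: $(K_i^\dagger\otimes\id)\ket{\Omega}=(\id\otimes \bar K_i^{\dagger\,\mathsf T}\cdots)$, computed carefully. The map $Y\mapsto\sum_i K_i^{\mathsf T} Y \overline{K_i}$ is the transpose channel $\mathsf T\circ\Lambda_A^*\circ\mathsf T$. This channel is CP, and it is trace-preserving exactly when $\Lambda_A$ is unital. Here sits the main obstacle, because $\Lambda_A$ is not unital in general: it maps onto a subspace of dimension $\dim\VV_\lambda$ and pads with ancillas. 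I would resolve this in one of two ways. The first is to write $\Lambda_A=\mathcal{D}\circ\mathcal{P}$, with $\mathcal{P}$ a unital pinching or measurement part (Schur sampling plus tracing out $\UU_\lambda$, which in the Heisenberg picture just appends $\id_{\UU_\lambda}$) and $\mathcal{D}$ an isometric embedding. The second is to note that only $\bra{\Omega}\cdot\ket{\Omega}$ matters. One then completes the transferred map to a CPTP map by adding a Kraus operator onto the orthogonal complement, which contributes a nonnegative but irrelevant term. To get equality rather than an inequality, I would show that the completion term has zero overlap with $\Omega$ after the $B$-side map: its range lies outside the support of $\Lambda_B$'s output on the relevant $\lambda$-block. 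This should hold because the Schur decomposition is block-diagonal on both sides and the blocks match, by \Cref{lem:schur.basis}.

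Once the transferred map $\widetilde\Lambda_A$ on $A'$ is a channel, define $T\coloneq \widetilde\Lambda_A\circ\Lambda_B$. This is CPTP $B_k\to A'_k$, independent of $\psi$, and gives the claimed equality of fidelities. For efficiency, $\widetilde\Lambda_A$ is built from the transposes (equivalently, complex conjugates of inverses) of the Schur transform circuits. By assumption these have polynomial-size implementations, and the weak Schur sampling outcome and discards are free. Composing with the efficient $\Lambda_B$ and invoking \Cref{lem:complexity.dual.equivalence} to pass between the CPU and CPTP pictures then shows that $C(T)\le\poly(n,k)$.
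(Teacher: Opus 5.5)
\textbf{There is a genuine gap, and it cannot be closed: the identity is false for a state-independent $T$.}

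You have located the real crux, but neither of your repairs works. The map actually composed with $\Lambda_B$ on Bob's side is $Y\mapsto\sum_i K_i^{\mathsf T}Y\bar K_i$. Its Kraus sum is $\sum_i \bar K_i K_i^{\mathsf T}=\Lambda_A(\id)^{\mathsf T}$. In the paper's notation, $\Lambda_A(\id)=\sum_{\lambda,u}A_{\lambda,u}A_{\lambda,u}^*=\sum_\lambda \dim\UU_\lambda\,W^A_\lambda W^{A*}_\lambda$, whose operator norm is at least $\max_\lambda\dim\UU_\lambda$ (for instance $\binom{d_A+k-1}{k}$ for $\lambda=(k)$). The paper's own proof has the same hole. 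Since $C^*C=B^*\bar A A^{\mathsf T}B=B^*(AA^*)^{\mathsf T}B$, its trace-preservation check should contain $(\sum_{\lambda,u}A_{\lambda,u}A_{\lambda,u}^*)^{\mathsf T}$, not $(\sum_{\lambda,u}A_{\lambda,u}^*A_{\lambda,u})^{\mathsf T}$.

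Your first fix puts the non-unitality in the wrong factor. That the Heisenberg dual $X\mapsto \id_{\UU_\lambda}\otimes X$ of ``Schur-sample and discard $\UU_\lambda$'' is unital only restates that $\mathcal P$ is trace preserving. In fact $\mathcal P(\id)=\bigoplus_\lambda\dim\UU_\lambda\,\id_{\VV_\lambda}\neq\id$. It is the isometric embedding $\mathcal D$ that is harmless, since its transfer $Y\mapsto W^{\mathsf T}Y\bar W$ is trace non-increasing.

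Your second fix presupposes that the transferred map is trace non-increasing, but it is trace increasing. The operator $\id-\Lambda_A(\id)^{\mathsf T}$ is not positive semidefinite, so no additional Kraus operator can complete it to a channel. Making it trace non-increasing requires down-weighting block $\lambda$ by $1/\dim\UU_\lambda$. That yields only an inequality with a loss that is typically exponential when $d_A=2^{n_A}$.

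More fundamentally, the claimed identity is false, so no completion argument can rescue it. Take $\ket{\psi_{AB}}=\ket{x}_A\ket{b}_B$ with $\ket{x}$ a computational basis vector.
\begin{itemize}
\item Right-hand side: $\ket{\psi}^{\otimes k}$ lies in the block $\lambda=(k)$, where $\dim\VV_{(k)}=1$. So $\Lambda$ outputs a fixed state independent of $x$, and the right-hand side equals $\mathbb{E}_\lambda[\dim\VV_\lambda]/d_{A_k}=1/d_{A_k}$.
\item Left-hand side: by $d_{A_k}\bra{\Omega}(X\otimes Y)\ket{\Omega}=\Tr[X^{\mathsf T}Y]$, it equals $d_{A_k}^{-1}\bra{x}^{\otimes k}\tau_b\ket{x}^{\otimes k}$. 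Here $\tau_b=T(\ketbra{b}{b}^{\otimes k})$ is a single state that does not depend on $x$.
\item Contradiction: equality for two distinct basis vectors $x\neq x'$ would force $\tau_b$ to have unit overlap with the orthogonal vectors $\ket{x}^{\otimes k}$ and $\ket{x'}^{\otimes k}$.
\end{itemize}
Hence no state-independent channel on $B$, efficient or not, satisfies the lemma. Discarding $\UU^A_\lambda$ is a non-unital operation on the reference system, and it can raise the overlap with $\ket{\Omega_{AA'}}$ beyond what any fixed Bob-side channel attains. A correct version must weaken the conclusion, for example to the lossy inequality above, rather than look for a cleverer completion.
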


\begin{proof}
Let us first provide the CPTP description of the algorithm that first implements the Schur sampling, subsequently discards $\UU_\lambda$,
and lastly applies the inverse Schur transform on $\VV_\lambda$ described by \cite{HM02,LREJ25}.
For each partition $\lambda\vdash k$, let $\Pi_\lambda^A$ and $\Pi_\lambda^B$ denote the local Schur--Weyl projectors on $\HH^{A_k}_{n}$ and $\HH^{B_k}_{ n}$. Fix the orthonormal bases $\{\ket{u}\}_u$ of $\HH_{U_\lambda^A}$ and $\{\ket{v}\}_v$ of $\HH_{U_\lambda^B}$. Let 
\begin{equation*}
    W_\lambda^A:\HH_{\VV_\lambda^A}\to\HH^{A_k}_{n} \quad\text{and} \quad W_\lambda^{A'}:\HH_{\VV_\lambda^B}\to\HH^{A'_k}_{n} \;
\end{equation*}
be the isometries implementing the inverse Schur transform on $V_\lambda$ extended by $0$ outside the $\lambda$-block. 
Let us define Kraus operators
\begin{align*}
    A_{\lambda,u}\coloneq  W_\lambda^A\;(\bra{u}_{\UU_\lambda^A}\otimes \id_{\VV_\lambda^A})\;\Pi_\lambda^A \;,
\qquad
B_{\mu,v}\coloneq  W_\mu^{A'}\;(\bra{v}_{\UU_\mu^B}\otimes \id_{\VV_\mu^B})\;\Pi_\mu^B  \;.
\end{align*}

Then, the action of the channel on a state $\rho_{AB} \in \BB(\HH^{A_k}_{n} \otimes \HH^{B_k}_{ n})$ can be written as
\begin{align}\label{eq:Lambda-kraus}
\Lambda(\rho_{AB})=\sum_{\lambda,u}\sum_{\mu,v}\,(A_{\lambda,u}\otimes B_{\mu,v})\,\rho_{AB}\,(A_{\lambda,u}\otimes B_{\mu,v})^* .
\end{align}
It follows that
\begin{align*}
\bra{\Omega}\Lambda(\rho_{AB})\ket{\Omega}
&=\sum_{\lambda, \mu, u,v}\bra{\Omega}(A_{\lambda,u}\otimes B_{\mu,v})\rho_{AB}(A_{\lambda,u}\otimes B_{\mu,v})^*\ket{\Omega}\\
&=\sum_{\lambda, \mu, u,v}\bra{\Omega}(\id\otimes A_{\lambda,u}^{\,\mathsf T}B_{\mu,v})\rho_{AB}(\id\otimes ( A_{\lambda,u}^{\,\mathsf T}B_{\mu,v})^*))\ket{\Omega} \;.
\end{align*}
where the second equality follows from the `transpose trick'~\cite[Eq.~2.2.40]{KW24}, i.e., for all $A \in \CC^{d \times d}$, $(A \otimes \id ) |\Omega_{AA'}\rangle = (\id \otimes A^{\,\mathsf T}) |\Omega_{AA'}\rangle$. 
Denoting 
\begin{align*}
    C_{\lambda, \mu,u,v}\coloneq A_{\lambda,u}^{\,\mathsf T}\,B_{\mu,v}\;, \quad \text{and} \quad    T(X)\coloneq \sum_{\lambda, \mu, u,v} C_{\lambda, \mu,u,v}\,X\,C_{\lambda, \mu,u,v}^* \;,
\end{align*}
we see that 
\begin{align*}
 \bra{\Omega}\Lambda(\rho_{AB})\ket{\Omega} 
    &=\sum_{\lambda, \mu, u,v}\bra{\Omega}(\id\otimes C_{\lambda, \mu,u,v})\rho_{AB}(\id\otimes C_{\lambda, \mu,u,v}^*)\ket{\Omega} \\
    &=
    \bra{\Omega}(\Id\otimes T)(\rho_{AB})\ket{\Omega}\;,
\end{align*}
where the channel $T(X)$ is CP by construction.
Next, we prove that this operation is trace preserving. First, using $\sum_u \ket{u}\bra{u}=\id_{\UU_\lambda^A}$
and that $W_\lambda^A$ is an isometry, we have
\begin{align*}
\sum_u A_{\lambda,u}^* A_{\lambda,u}=\Pi_\lambda^A \, .
\end{align*}
Hence, $\sum_{\lambda,u}A_{\lambda,u}^* A_{\lambda,u}=\sum_\lambda \Pi_\lambda^A=\id_{\HH^{A_k}_{n}}$. A similar argument shows that
$\sum_{\mu,v}B_{\mu,v}^* B_{\mu,v}=\id_{\HH^{B_k}_{ n}}$.
Therefore,
\begin{align*}
\sum_{\lambda, \mu,u,v} C_{\lambda, \mu,u,v}^* C_{\lambda, \mu,u,v}
&=\sum_{ \mu,v} B_{\mu,v}^*\,(\sum_{\lambda, u}A_{\lambda,u}^* A_{\lambda,u})^{\,\mathsf T}\,B_{\mu,v}\\
&=\sum_{\mu,v} B_{\mu,v}^*\,\id_{\HH^{A'_k}_{n}}\,B_{\mu,v}
=\sum_{\mu,v} B_{\mu,v}^* B_{\mu,v},
=\id_{\HH^{B_k}_{ n}}\,.
\end{align*}
This proves that $T$ is CPTP. Moreover, by the assumption that the Schur transform, the inverse Schur transform, and their transposes admit polynomial-size implementations, $T$ is efficiently implementable.

\end{proof}

Moreover, the aforementioned algorithm can be further simplified. Let us now consider a fixed state $\rho_{AB}\coloneq \ket{\psi_{AB}}\!\bra{\psi_{AB}}^{\otimes k}$. By \cite[Lemma~12]{LREJ25}, it holds that 
\begin{align}\label{eq:diag-vanish}
(\Pi_\lambda^A\otimes\Pi_\mu^B)\,\rho_{AB}\,(\Pi_\lambda^A\otimes\Pi_\mu^B)=0 \qquad\text{for all }\lambda\neq\mu\,.
\end{align}
In particular, when evaluating $\bra{\Omega}\Lambda(\rho_{AB})\ket{\Omega}$ we may therefore restrict to the diagonal $\mu=\lambda$ terms in
Eq.~\eqref{eq:Lambda-kraus}.
Given $k$ copies of an unknown bipartite pure state $\ket{\psi_{AB}}^{\otimes k}$, there exists a state-agnostic CPTP map $\Lambda:\BB(\HH^{A_k}_{n}\otimes\HH^{B_k}_{n})\to \BB(\HH^{A_k}_{n}\otimes\HH^{A'_k}_{n})$ implementing the Hayashi--Schur entanglement concentration protocol. In particular, its output on $\ket{\psi_{AB}}\!\bra{\psi_{AB}}^{\otimes k}$ is block-diagonal in the Schur label and can be written as
\begin{align}
\Lambda\left(\ket{\psi_{AB}}\!\bra{\psi_{AB}}^{\otimes k}\right)
=\sum_{\lambda\vdash k}\Pr(\lambda)\,\ket{\phi_{\lambda}^{+}}\!\bra{\phi_{\lambda}^{+}}\,,
\end{align}
where $\ket{\phi_\lambda^+}$ is a maximally entangled state supported on the multiplicity subspace $\VV_\lambda^{A}\otimes \VV_\lambda^{A'}$ (hence corresponding to $\log \dim \VV_\lambda$ ebits), the summands have pairwise orthogonal support for distinct $\lambda$, and $\Pr(\lambda)\coloneq \tr\!\left[\Pi_\lambda^A\,\rho_A^{\otimes k}\right]$ with $\rho_A\coloneq \tr_B\ket{\psi_{AB}}\!\bra{\psi_{AB}}$. Moreover, by \Cref{lem:channel-reduction}, the overlap with $\ket{\Omega_{AA'}}$ satisfies
\begin{align*}
\bra{\Omega_{AA'}}\Lambda\left(\ket{\psi_{AB}}\!\bra{\psi_{AB}}^{\otimes k}\right)\ket{\Omega_{AA'}} = \bra{\Omega_{AA'}}(\Id_A\otimes T)\left(\ket{\psi_{AB}}\!\bra{\psi_{AB}}^{\otimes k}\right)\ket{\Omega_{AA'}}
\end{align*}
for a CPTP map $T:\BB(\HH^{B_k}_{n})\to\BB(\HH^{A'_k}_{n})$ acting only on Bob's system.  Moreover, as proven in \cite{CHW06,Kr19}, the Schur transform can be approximated on a quantum computer to precision $\delta$ in time $O(\poly(k,n,\log(1/\delta)))$; see \Cref{thm:eff.Schur}. 
If, in addition, the chosen gate set $\mathcal{G}$ is closed under transpose in the fixed computational basis\pagefootnote{This holds, for example, for the standard gate set $\{H,S, CNOT\}$.} (or more generally closed under transpose up to constant overhead), then the transpose of such an implementation has the same asymptotic gate complexity. Therefore the transpose operations appearing in \Cref{lem:channel-reduction} do not change the efficiency of the resulting channel $T$, which can be implemented in polynomial time.

\begin{theorem}[Upper bound on $\CompMin$ for pure states]\label{th:upper.bound.Hmin.pure}
Given a bipartite state $\ket{\psi_{AB}}\in \HH^A_n\otimes \HH^B_n$  with min-entropy $H_{\min}(A)_{\rho}=-\log\|\rho_A\|_\infty$, where  $\rho_A\coloneq \tr_B\ket{\psi_{AB}}\!\bra{\psi_{AB}}$, then its computational conditional min-entropy obeys
\begin{align}\label{eq:compHmin-bound}
\CompMin(A^k|B^k)_{|\psi \rangle^{\otimes k}}
\;\le\; -\frac{k}{4}\min\{H_{\min}(A)_{\rho}, \log k\}+\log\frac{3}{2}\,,
\end{align}
where $k=\poly(n)$ is the number of input copies.
\end{theorem}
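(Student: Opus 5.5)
We plan to bound $\CompMin$ from above by exhibiting one good efficient channel. By \Cref{thm:op.mean.fullyquantum} (in the form of Eq.~\eqref{eq:q.H.min.operantional}), any single efficient channel $T$ on $B^k$ gives
\begin{align*}
\CompMin(A^k|B^k)_{|\psi\rangle^{\otimes k}}\le -\log\Bigl(d_{A_k}\bra{\Omega}(\Id\otimes T)(\ketbra{\psi}{\psi}^{\otimes k})\ket{\Omega}\Bigr).
\end{align*}
We take $T$ to be the state-agnostic Hayashi--Schur channel of \Cref{lem:channel-reduction}. It is efficient for $k=\poly(n)$ by \Cref{thm:eff.Schur} and the transpose-closure remark. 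It reproduces the overlap of $\Lambda(\ketbra{\psi}{\psi}^{\otimes k})=\sum_\lambda \Pr(\lambda)\ketbra{\phi^+_\lambda}{\phi^+_\lambda}$. The remaining work is a lower bound on $d_{A_k}\bra{\Omega}\Lambda(\cdot)\ket{\Omega}$.

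\textbf{Step 1: overlap of each block.} The vector $\ket{\phi^+_\lambda}$ is maximally entangled of Schmidt rank $\dim\VV_\lambda$. After the inverse Schur isometry $W_\lambda$, it is embedded in $AA'$ as $(W_\lambda\otimes W_\lambda)\ket{\phi^+_\lambda}$. With the natural (transpose-compatible) basis choice, its overlap with $\ket{\Omega}$ is $\dim\VV_\lambda/d_{A_k}$, since $|\langle\Omega|(W\otimes \bar W)|\phi^+\rangle|^2 = |\Tr[W^*W]|^2/(d\,r)=r/d$. This gives
\begin{align*}
d_{A_k}\bra{\Omega}\Lambda(\cdot)\ket{\Omega}\ge \sum_\lambda \Pr(\lambda)\,\dim\VV_\lambda \ge \Pr(\mathcal{T})\min_{\lambda\in\mathcal{T}}\dim\VV_\lambda
\end{align*}
for any set $\mathcal{T}$ of ``typical'' Young diagrams. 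Cross terms between distinct $\lambda$ vanish by block orthogonality, as recalled around Eq.~\eqref{eq:diag-vanish}. If the basis alignment produces a phase or transpose mismatch, we will absorb it into $T$; the transpose structure in \Cref{lem:channel-reduction} is designed for exactly this.

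\textbf{Step 2: choosing $\mathcal{T}$ and bounding $\dim\VV_\lambda$.} This is the main obstacle. Set $h\coloneq\min\{H_{\min}(A)_\rho,\log k\}$, so the largest eigenvalue of $\rho_A$ satisfies $\|\rho_A\|_\infty=2^{-H_{\min}(A)_\rho}\le 2^{-h}$, and take $\mathcal{T}=\{\lambda:\lambda_1\le k/2\}$ (or the threshold $\lambda_1\le 2k\|\rho_A\|_\infty$ when this is smaller). We will lower bound $\Pr(\mathcal{T})\ge 2/3$ using the standard fact that $\Pr(\lambda)\le \Pr_{\mathrm{Mult}}$ of majorization events: $\lambda_1$ is stochastically dominated by the number of occurrences of the most frequent letter under the spectrum. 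By a Chernoff or Markov bound with the explicit constant, this gives
\begin{align*}
\Pr(\lambda_1>k/2)\le 1/3
\end{align*}
once $\|\rho_A\|_\infty\le 1/4$; the degenerate regime $h<2$ is covered since the claim is trivial there up to the $\log(3/2)$ slack. For $\lambda\in\mathcal{T}$ we then bound $\dim\VV_\lambda$ from below via the hook-length formula, or via $\dim\VV_\lambda\ge \binom{k}{\lambda}\prod(\cdot)$-type bounds. Spreading the $k$ boxes over at least $2^{h}$ comparable rows (when $h\le\log k$), or into at least $\sqrt{k}$-type shapes (when capped at $\log k$), should give
\begin{align*}
\log\dim\VV_\lambda\ge \frac{k}{4}h.
\end{align*}
Here the $1/4$ absorbs the polynomial multinomial-to-irrep correction and the typicality slack. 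We expect the delicate part to be getting a clean $k h/4$ uniformly in both regimes rather than up to $o(k)$ terms. If the direct hook-length estimate is messy, we will fall back on the entropy bound $\log\dim\VV_\lambda\ge kH(\bar\lambda)-O(d\log k)$. We would then restrict $\mathcal{T}$ further so that $H(\bar\lambda)\ge h/2$, using that $\lambda_1\le k2^{-h}\cdot O(1)$ forces $H(\bar\lambda)\gtrsim h$.

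\textbf{Step 3: combining.} Substituting into the bound from \Cref{thm:op.mean.fullyquantum} yields
\begin{align*}
\CompMin\le-\log\bigl(\tfrac23\,2^{kh/4}\bigr)=-\tfrac{k}{4}h+\log\tfrac32,
\end{align*}
which is Eq.~\eqref{eq:compHmin-bound}.
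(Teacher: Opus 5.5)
Your Steps 1 and 3 match the paper's proof: the efficient Hayashi--Schur channel from \Cref{lem:channel-reduction}, the identity $d_{A_k}\bra{\Omega}\Lambda(\ketbra{\psi}{\psi}^{\otimes k})\ket{\Omega}=\mathbb{E}_\lambda[\dim\VV_\lambda]$, and restriction to a set of diagrams of probability at least $2/3$. The gap is in Step 2, which carries all the content. The paper does not prove
\begin{align*}
\Pr\Bigl(\lambda:\ \log\dim\VV_\lambda\ge \tfrac{k}{4}\min\{H_{\min}(A)_\rho,\log k\}\Bigr)\ge \tfrac{2}{3};
\end{align*}
it imports this from \cite[Theorem~18]{LREJ25}. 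Your sketch of it does not go through, for four reasons.

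(i) No condition on the first row $\lambda_1$, or on the row entropy $H(\bar\lambda)$, can lower bound $\dim\VV_\lambda$, because $\dim\VV_\lambda=\dim\VV_{\lambda'}$ for the conjugate diagram. For example, $\lambda=(1^k)$ has $\lambda_1=1$ and $H(\bar\lambda)=\log k$, yet $\dim\VV_\lambda=1$. Even $(k/2,k/2)$ gives only $\log\dim\VV_\lambda\le k$, which is below $kh/4$ once $h>4$. Your typical set must also bound the column lengths, and showing that the Schur--Weyl measure concentrates on such a set is the real work.

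(ii) The stochastic domination you invoke runs the wrong way. By RSK, $\lambda_1$ is the longest weakly increasing subsequence of an i.i.d.\ word with letter law $\mathrm{spec}(\rho_A)$. It is therefore at least the largest letter multiplicity, not at most. For a flat spectrum on $d_A\gg k^2$ levels, $\lambda_1\approx 2\sqrt{k}$, while with high probability no letter repeats.

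(iii) The entropy fallback loses $\tfrac{\ell(\ell-1)}{2}\log(k+\ell)$, where $\ell$ is the number of rows of $\lambda$; with $d_A=2^{n_A}$ in place of $\ell$ it is useless outright. In the regime $H_{\min}(A)_\rho\gtrsim\log k$, typical diagrams (for example, for a flat spectrum) have on the order of $\sqrt{k}$ rows. The loss is then of order $k\log k$ and swamps the target $\tfrac{k}{4}\log k$.

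(iv) The case $h<2$ is not trivial. The bound still requires roughly $kh/4$ ebits, linear in $k$, and the $\log\tfrac32$ slack does not cover that.

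To close the gap, cite \cite[Theorem~18]{LREJ25} for the displayed typicality statement, as the paper does. Alternatively, give a genuine concentration argument for the Schur--Weyl distribution that controls both $\lambda_1$ and $\lambda_1'$. You would then feed these into the hook-length formula with constants sharp enough to reach $\tfrac{k}{4}\min\{H_{\min}(A)_\rho,\log k\}$ with probability $2/3$.
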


\begin{proof}
By Lemma~\ref{lem:channel-reduction}, there exists an efficient CPTP map $T$ acting only on $B$ such that
\begin{align}\label{eq:reduce}
\bra{\Omega_{AA'}}(\Id_A\otimes T)(\rho_{AB})\ket{\Omega_{AA'}} = \bra{\Omega_{AA'}}\Lambda(\rho_{AB})\ket{\Omega_{AA'}}\,.
\end{align}
This map, which corresponds to the protocol of \cite{HM02,LREJ25}, produces $\log\dim \VV_\lambda$ ebits conditioned on measuring the Schur label $\lambda$, with a probability distribution $\Pr(\lambda)=\tr[\Pi_\lambda^A\rho_A^{\otimes k}]$. As proven in \cite[Theorem 18]{LREJ25}, one has that it achieves
\begin{align}\label{eq:good-lambda}
\Pr\left( \lambda : \log  \dim \VV_\lambda \;\ge\; \frac{k}{4}\min\{H_{\min}(A)_{\rho},\log k\}\right) \geq \frac{2}{3}\,.
\end{align}
Moreover,
\begin{align*}
\bra{\Omega_{AA'}}\Lambda(\rho_{AB})\ket{\Omega_{AA'}} &= \mathbb{E}_\lambda \left[ \frac{\dim \VV_\lambda}{d_{A_k}}\right]\\
&\geq \frac{2}{3} \cdot \frac{2^{\frac{k}{4}\min\{H_{\min}(A)_{\rho},\log k\}}}{d_{A_k}}\,, 
\end{align*}
where the first line is the direct computation of the overlap between the global maximally entangled state of dimension $d_{A_k}$ and the distribution of maximally entangled state supported on a $\dim \VV_\lambda$-dimensional subspace. The last inequality follows by lower bounding the expected value using Eq.~\eqref{eq:good-lambda}.
Finally, by the definition of $\CompMin$,
\begin{align*}
\CompMin(A^k|B^k)_{|\psi \rangle^{\otimes k}}
&= -\log\left(d_{A_k}\cdot \max_{T}\bra{\Omega_{AA'}}(\Id_A\otimes T)(\rho_{AB})\ket{\Omega_{AA'}}\right) \\
&\le -\log\left(d_{A_k}\cdot \bra{\Omega_{AA'}}(\Id_A\otimes T)(\rho_{AB})\ket{\Omega_{AA'}}\right) \\
&\le  -\frac{k}{4}\min\{H_{\min}(A)_{\rho},\log k\}+\log\frac{3}{2}\,.
\end{align*}
\end{proof}

Let us now study the converse bound. To do so, let us first introduce two families of states that are statistically indistinguishable given polynomially many copies but for which there exists a gap in their information-theoretic conditional min-entropy:
\begin{align}\label{eq:haar.subs.ensensembles}
    &\EE_{\mathsf{Haar}} = \left\{|\psi_{AB}\rangle \, , \qquad |\psi_{AB}\rangle\sim \mathsf{Haar} (n) \right\}\,, \notag\\
    &\EE_m = \left\{ U_\pi |0 \rangle ^{n-m}|\phi_m \rangle \, : \, U_{\pi} \in P \, , |\phi_m \rangle  \sim \mathsf{Haar} (m) \right\}\,,
\end{align}
where $\EE_m$ is a Haar-subsystem state ensemble, with $P$ is the set of unitaries implementing a random permutation on $n$-bit strings $\pi : \{0,1\}^n \rightarrow \{0,1\}^n $, $U_\pi |x\rangle = |\pi(x)\rangle$. Then, given a function $m:\mathbb{N}\to\mathbb{N}$ such that $m =\omega(\log n_A)$ and $\HH_A \simeq \HH_B$, where $n =n_A + n_B$, both ensembles are statistically indistinguishable given access to $k = \poly(n)$ many copies, as proven in \cite[Lemma 16]{LREJ25}.

Let us now calculate their corresponding values of the (information-theoretic) min-entropy.

\begin{lem}[Min-entropy of a Haar-random state]
\label{lem:min.entr.Haar.state}
Let $k \in \poly(n)$. For a state $|\psi_{AB}\rangle \in \HH^A_n \otimes \HH^B_n$ drawn from the Haar random ensemble $\EE_{\mathsf{Haar}}$, one has, almost surely,
\begin{equation}
    H_{\min}(A^k \mid B^k)_{|\psi\rangle^{\otimes k}}
    = -k\bigl(n_A - O(1)\bigr).
\end{equation}
\end{lem}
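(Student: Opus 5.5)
The plan is to reduce everything to a single copy and then use the pure-state formula for the conditional min-entropy. First, the information-theoretic conditional min-entropy is additive under tensor products:
\begin{align*}
H_{\min}(A^k|B^k)_{|\psi\rangle^{\otimes k}}=k\,H_{\min}(A|B)_{|\psi\rangle}.
\end{align*}
This follows because the SDP for $2^{-H_{\min}}$ is multiplicative: product feasible points on both the primal and the dual side give matching bounds. For pure states this can also be seen directly, as below.

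Second, I use the standard identity for a pure state $|\psi_{AB}\rangle$ with Schmidt coefficients $\sqrt{\lambda_i}$:
\begin{align*}
H_{\min}(A|B)_{\psi}=-2\log\Tr\sqrt{\rho_A}=-H_{1/2}(A)_\rho .
\end{align*}
It follows from the operational form recalled in the preliminaries, $2^{-H_{\min}}=d_A\max_T F$. The optimal recovery is the isometry aligning Schmidt bases, which gives $d_A\cdot\frac{1}{d_A}(\sum_i\sqrt{\lambda_i})^2$. The matching upper bound comes from $\sigma_B=\sqrt{\rho_B}/\Tr\sqrt{\rho_B}$ in the $D_{\max}$ formulation. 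Since $\Tr\sqrt{\rho_A^{\otimes k}}=(\Tr\sqrt{\rho_A})^k$, additivity in the first step is immediate.

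The remaining task is the single-copy random-matrix estimate. For $|\psi\rangle$ Haar-random on $\CC^{2^{n_A}}\otimes\CC^{2^{n_B}}$ with $n_A=n_B$, the eigenvalues of $d_A\rho_A$ are the squared singular values of a normalized Ginibre-type matrix. They follow the Marchenko--Pastur law with ratio $1$, i.e. the quarter-circle law for the singular values. Then
\begin{align*}
\frac{1}{d_A}\Tr\sqrt{d_A\rho_A}\;\to\;\int\sqrt{x}\,d\mu_{\mathrm{MP}}(x)=\frac{8}{3\pi}
\end{align*}
almost surely, so $\Tr\sqrt{\rho_A}=\sqrt{d_A}\,(8/(3\pi)+o(1))$. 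Hence $H_{\min}(A|B)_\psi=-n_A-2\log(8/(3\pi))+o(1)=-(n_A-O(1))$. Multiplying by $k$ gives the claim.

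The main obstacle is rigor in the almost-sure convergence. Concentration of measure helps here: $\psi\mapsto\Tr\sqrt{\rho_A}=\|M_\psi\|_1$ is Lipschitz with constant $O(\sqrt{d_A})$. Levy's lemma then gives fluctuations of relative order $O(1/\sqrt{d_A})$ with doubly exponentially small failure probability. Borel--Cantelli across $n$ upgrades this to almost sure, and the expectation is controlled by weak convergence to Marchenko--Pastur together with uniform integrability of $\sqrt{x}$. Only boundedness of the $O(1)$ term is needed. One could also bypass the limit entirely with the crude bounds $\Tr\sqrt{\rho_A}\le\sqrt{d_A}$ and $\Tr\sqrt{\rho_A}\ge 1/\sqrt{\|\rho_A\|_\infty}$, combined with $\|\rho_A\|_\infty\le c/d_A$ with overwhelming probability.
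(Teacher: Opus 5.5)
Your proposal is correct and follows the paper's skeleton: the pure-state identity $H_{\min}(A|B)_\psi=-2\log\Tr\sqrt{\rho_A}$, the Marchenko--Pastur law with ratio $1$ giving $\Tr\sqrt{\rho_A}/\sqrt{d_A}\to 8/(3\pi)$, and additivity over the $k$ copies.

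The differences lie in what you prove rather than cite, and in how the unbounded test function $\sqrt{x}$ is handled.

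\begin{itemize}
\item \textbf{What you prove.} You derive the pure-state identity yourself: Schmidt-aligning recovery gives one direction and $\sigma_B=\sqrt{\rho_B}/\Tr\sqrt{\rho_B}$ gives the other. You also get additivity directly from $\Tr\sqrt{\rho_A^{\otimes k}}=(\Tr\sqrt{\rho_A})^k$. The paper simply cites both facts.
\item \textbf{Handling $\sqrt{x}$.} The paper uses Nechita's almost-sure convergence $d_A\lambda_{\max}(\rho_A)\to 4$ to confine the spectrum to $[0,4+\varepsilon]$, and then swaps $\sqrt{x}$ for a bounded continuous $g_\varepsilon$. You instead use L\'evy concentration of the $\sqrt{d_A}$-Lipschitz map $\psi\mapsto\|M_\psi\|_1$, together with Borel--Cantelli, around a mean controlled by uniform integrability. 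A small advantage of your route is that Borel--Cantelli needs no independence across $n$.
\item \textbf{Your uniform-integrability remark already suffices.} Since $\int x\,d\mu_M=\Tr\rho_A=1$ deterministically, you have $\int_{x>K}\sqrt{x}\,d\mu_M\le K^{-1/2}$ pathwise. Hence almost-sure weak convergence alone yields almost-sure convergence of $\int\sqrt{x}\,d\mu_M$. Both the paper's largest-eigenvalue step and your concentration step are then unnecessary for identifying the constant.
\item \textbf{Your fallback sandwich is the most economical route.} The bound $\|\rho_A\|_\infty^{-1/2}\le\Tr\sqrt{\rho_A}\le\sqrt{d_A}$ gives $H_{\min}(A|B)_\psi\ge -n_A$ deterministically. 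It also gives $H_{\min}(A|B)_\psi\le -n_A+\log c$ whenever $\|\rho_A\|_\infty\le c/d_A$, which holds eventually almost surely. This proves the $-k(n_A-O(1))$ statement using only an operator-norm bound and bypasses Marchenko--Pastur entirely. The cost is that it does not identify the constant $-2\log(8/(3\pi))$.
\end{itemize}
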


\begin{proof}
Since for pure states $H_\text{min}(A|B)_{|\psi\rangle} = -H_\text{max}(A)_\psi$, (c.f. \cite{KRS09}), the conditional min-entropy for a pure state is given by
\begin{align*}
    H_\text{min}(A|B)_{|\psi\rangle}  = -2 \log \left(\Tr \sqrt{\rho_A}\right)\,,
\end{align*}
where $\rho_A = \tr_B(|\psi_{AB}\rangle \langle \psi_{AB}|)$.
Let us first prove the Haar-random case. Given $|\psi_{AB}\rangle\sim \mathsf{Haar}(n)$, \cite[Section III.D]{ZPN11} prove that the reduced state $\rho_A=\tr_B(|\psi_{AB}\rangle\langle\psi_{AB}|)$ can be represented in the following way:
\begin{align}\label{eq:rho_as_genibre}
    \rho_A=\frac{GG^\dagger}{\Tr(GG^\dagger)}\,,
\end{align}
where $G\in \MM_{d_A\times d_B}(\CC)$ is an element of the Ginibre ensemble, i.e., a random matrix with independently and identically distributed complex Gaussian entries \cite{Gi65}.

Then, let us express the functional $\Tr\sqrt{\rho_A}$ in terms of the rescaled empirical eigenvalue distribution of $\rho_A$. Writing $\{\lambda_i\}_{i=1}^{d_A}$ for the eigenvalues of $\rho_A$ and setting
\begin{align*}
    x_i\coloneq d_A\,\lambda_i(\rho_A),
\qquad
\mu_M\coloneq \frac{1}{d_A}\sum_{i=1}^{d_A}\delta_{x_i}\,,
\end{align*}
we obtain
\begin{align*}
    \Tr\sqrt{\rho_A}
    &= \sum_{i=1}^{d_A}\sqrt{\lambda_i(\rho_A)} \\
    &= \sqrt{d_A}\,\frac{1}{d_A}\sum_{i=1}^{d_A}\sqrt{x_i} \\
    &= \sqrt{d_A}\int \sqrt{x}\,d\mu_M(x)\,.
\end{align*}

Then, as proven by \cite[Theorem 5]{Ne07}, the empirical spectral distribution given by $\mu_M$ converges weakly almost surely to the Marchenko-Pastur distribution $\mu_{MP}$
\begin{align}
    \mu_{MP} = \max\{1-c, 0 \}\delta_0+ \frac{\sqrt{(x-a)(b-x)}}{2\pi x}\id_{[a,b]}(x)dx\,,
\end{align}
where $a = (\sqrt{c}-1)^2$ and $b=(\sqrt{c}+1)^2$ and $c \in (0,\infty)$. More specifically, by taking $c=\lim_{n \to \infty} d_B/d_A = 1$, the empirical distribution given by 
\begin{align*}
    \tilde{\mu}_M\coloneq  \frac{1}{d_A}\sum_{i=1}^{d_A}\delta_{y_i}\,,
\end{align*}
with $y_i= c \cdot x_i$ weakly converges to 
\begin{align*}
    \mu_{MP} = \frac{1}{2\pi}\sqrt{\frac{4-x}{x}}\id_{[0,4]}(x)dx\,.
\end{align*}

Since weak convergence implies convergence of integrals only against bounded continuous test functions (Portmanteau theorem, e.g.~\cite[Theorem 2.1]{Billingsley1999}), it cannot be applied directly to $x\mapsto \sqrt{x}$, which is unbounded on $[0,\infty)$. We therefore first use \cite[Theorem 6]{Ne07} to control the largest eigenvalue, and hence to obtain an almost-sure bound on the support of $\mu_M$. In particular,
\begin{align*}
    \max_{i} x_i = d_A \lambda_{\max}(\rho_A) \xrightarrow{d_A \to \infty} 4\,,
\end{align*}
since $c=1$. Hence, for every $\varepsilon>0$, almost surely there exists $d_0\in \NN$ such that for all $d_A\geq d_0$,
\begin{align*}
    \supp(\mu_M)\subset [0,4+\varepsilon]\,.
\end{align*} 

Let $g_\varepsilon$ be a continuous, bounded real function such that
\begin{align*}
    g_\varepsilon(x)=\sqrt{x}\qquad \text{for all }x\in[0,4+\varepsilon]\,.
\end{align*}
Then, for all $d_A\geq d_0$,
\begin{align*}
    \int \sqrt{x}\,d\mu_M(x)=\int g_\varepsilon(x)\,d\mu_M(x)\,.
\end{align*}
By weak convergence of $\mu_M$ to $\mu_{MP}$, we obtain
\begin{align*}
    \int g_\varepsilon(x)\,d\mu_M(x)
    \xrightarrow[d_A\to\infty]{a.s.}
    \int g_\varepsilon(x)\,d\mu_{MP}(x)\,.
\end{align*}
Since $\supp(\mu_{MP})=[0,4]$, we also have
\begin{align*}
    \int g_\varepsilon(x)\,d\mu_{MP}(x)
    =
    \int \sqrt{x}\,d\mu_{MP}(x)\,.
\end{align*}

Therefore,
\begin{align*}
    \int \sqrt{x} d{\mu_M} (x) &\xrightarrow{d_A \to \infty} \int  \sqrt{x} d{\mu_{MP}} (x)  = \frac{1}{2\pi}\int_0^4 \sqrt{4-x}dx = \frac{8}{3\pi}\,.
\end{align*}
Threfore, almost surely it holds that,
\begin{align*}
    \Tr\sqrt{\rho_A}= \left( \frac{8}{3\pi} + o(1) \right) \sqrt{d_A}\,,
\end{align*}
and hence,
\begin{align*}
    H_\text{min}(A|B)_{|\psi\rangle}  &= -\log d_A-2\log \left(\frac{8}{3\pi} \right) + o(1)\\
    &= - n_A + O(1)\,,
\end{align*}
almost surely, since $d_A= 2^{n_A}$. 
Lastly, since in the informational case the conditional min-entropy is additive, one has that
\begin{align*}
    H_\text{min}(A^k|B^k)_{|\psi\rangle^{\otimes k}} = k \cdot H_\text{min}(A|B)_{|\psi\rangle}\,.
\end{align*}

\end{proof}

Let us now bound the case in which the state is given by the Haar-subsystem state ensemble.

\begin{lem}[Min-entropy of a Haar-subsystem random state]
\label{lem:min.entr.Haar.subsystem}
Let $k \in \poly(n)$. For a state $|\psi_{AB}\rangle \in \HH^A_n \otimes \HH^B_n$ drawn from the Haar-subsystem random ensemble $\EE_{m(n_A)}$ of Eq.~\eqref{eq:haar.subs.ensensembles}, one has, almost surely,
\begin{equation}
    H_{\min}(A^k | B^k)_{|\psi\rangle^{\otimes k}}
    \ge -k\, m(n_A)\,.
\end{equation}
\end{lem}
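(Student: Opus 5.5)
The plan is to mirror the structure of the proof of \Cref{lem:min.entr.Haar.state}: reduce to a single copy via additivity, use the pure-state identity $H_{\min}(A|B)_{|\psi\rangle}=-H_{\max}(A)_\psi=-2\log\Tr\sqrt{\rho_A}$, and bound $\Tr\sqrt{\rho_A}$ through the Schmidt rank of the state. Since the information-theoretic conditional min-entropy is additive under tensor products, it suffices to show $H_{\min}(A|B)_{|\psi\rangle}\ge -m(n_A)$ for every $|\psi_{AB}\rangle=U_\pi|0\rangle^{n-m}|\phi_m\rangle$ in $\EE_{m(n_A)}$. Multiplying by $k$ then gives the claim. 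In fact the bound will hold deterministically, so in particular almost surely.

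The key observation is that $|\psi_{AB}\rangle$ has Schmidt rank at most $2^m$ across the $A|B$ cut. Indeed, expanding $|\phi_m\rangle=\sum_{y\in\{0,1\}^m}c_y|y\rangle$ gives
\begin{align*}
|\psi_{AB}\rangle=\sum_{y}c_y\,|\pi(0^{n-m}y)\rangle .
\end{align*}
This is a superposition of at most $2^m$ computational basis strings $x=(x_A,x_B)$, and each $|x_A\rangle|x_B\rangle$ is a product vector. Hence $|\psi_{AB}\rangle$ lies in the span of $2^m$ product vectors, so its Schmidt rank, and therefore $\operatorname{rank}\rho_A$, is at most $2^m$. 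The permutation unitary $U_\pi$ does not act locally, so I cannot argue via invariance of the spectrum of $\rho_A$. The bound has to come from this counting of product terms, which works because $U_\pi$ maps basis states to basis states.

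Given $r\coloneq\operatorname{rank}\rho_A\le 2^m$, Cauchy--Schwarz on the nonzero eigenvalues yields
\begin{align*}
\Tr\sqrt{\rho_A}=\sum_{i=1}^{r}\sqrt{\lambda_i}\le\sqrt{r}\,\Bigl(\sum_i\lambda_i\Bigr)^{1/2}=\sqrt{r}\le 2^{m/2},
\end{align*}
so $H_{\min}(A|B)_{|\psi\rangle}=-2\log\Tr\sqrt{\rho_A}\ge -m(n_A)$.

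The only delicate point is the Schmidt-rank step, where I need a correct argument that does not use local invariance. One must also note that when $m\ge n_A$ the bound $n_A\le m$ trivially dominates, since $-\log d_A$ is the absolute minimum of $H_{\min}$. Everything else follows directly from statements already quoted in the paper.
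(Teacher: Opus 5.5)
Your proposal is correct and follows the same route as the paper. Both arguments use the pure-state identity $H_{\min}(A|B)_{|\psi\rangle}=-2\log\Tr\sqrt{\rho_A}$, the deterministic bound $\Tr\sqrt{\rho_A}\le\sqrt{\operatorname{rank}\rho_A}\le 2^{m/2}$, and additivity over the $k$ copies. Your observation that $U_\pi$ permutes computational basis states, so $|\psi_{AB}\rangle$ is a combination of at most $2^m$ product vectors and has Schmidt rank at most $2^m$, usefully justifies the rank bound, which the paper simply asserts.
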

\begin{proof}
Similarly to the proof of \Cref{lem:min.entr.Haar.state}, let us express the conditional min-entropy of a pure state in the following way:
\begin{align*}
    H_\text{min}(A|B)_{|\psi\rangle}  = -2 \log \left(\Tr \sqrt{\rho_A}\right)\,,
\end{align*}
where $\rho_A = \tr_B(|\psi_{AB}\rangle \langle \psi_{AB}|)$. On the other hand, when $|\psi_{AB}\rangle\sim \EE_m$, we use the deterministic bound
\begin{align*}
    \Tr\sqrt{\psi_A}\le \sqrt{\operatorname{rank}(\psi_A)}\le 2^{m/2},
\end{align*}

and therefore, 
\begin{align}
H_\mathrm{min}(A|B)_{|\psi\rangle \sim \EE_{m}} \geq-2\log\!\left(2^{m/2}\right) =-m\,.
\end{align}

Lastly, since in the informational case the conditional min-entropy is additive, one has that
\begin{align*}
    H_\text{min}(A^k|B^k)_{|\psi\rangle^{\otimes k}} = k \cdot H_\text{min}(A|B)_{|\psi\rangle}\,.    
\end{align*}
\end{proof}

Let us now prove the main theorem of this section. We characterize the computational min-entropy for a Haar random state (see Eq.\eqref{eq:haar.subs.ensensembles}) and compare it with the information-theoretic one. This yields a large separation between the two entropies. The separation holds unconditionally when one is restricted to polynomially many copies and polynomially generated quantum channels.

\begin{theorem}[Lower bound on the computational min-entropy for pure states]\label{th:lower-bound.ineff}
Fix an $A|B$ bipartition with $n_A= n_B=\frac{n}{2}$, a function $m:\mathbb{N}\to\mathbb{N}$ such that $m(n_A)=\omega(\log n_A)$ and let $k\in\poly(n)$ be the number of copies of a given state. Then, there exists a family of bipartite pure states $\{| \psi_{AB,n}\rangle\}_n$ such that
\begin{align*}
H_{\min}(A^k|B^k)_{|\psi_{n}\rangle^{\otimes k}} &=-k\,(n_A-O(1))\,,\\
\CompMin(A^k|B^k)_{|\psi_{n}\rangle^{\otimes k}} &\ge -k\,m(n_A)\,.
\end{align*}
\end{theorem}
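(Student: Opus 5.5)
The plan is a probabilistic-method argument combining \Cref{lem:min.entr.Haar.state}, \Cref{lem:min.entr.Haar.subsystem}, the statistical indistinguishability of $\EE_{\mathsf{Haar}}$ and $\EE_{m}$ under $k=\poly(n)$ copies from \cite[Lemma 16]{LREJ25}, and the finiteness of the efficient channel set from \Cref{lem:counting-choi}. By \Cref{thm:op.mean.fullyquantum},
\begin{align*}
2^{-\CompMin(A^k|B^k)_{|\psi\rangle^{\otimes k}}} = \max_{J\in \mathbf{J}^{p(n)}_{n,A^k|B^k}} \Tr\!\left[J\,\ketbra{\psi}{\psi}^{\otimes k}\right].
\end{align*}
For each fixed efficient Choi operator $J$, the map $\psi\mapsto \Tr[J\,\psi^{\otimes k}]$ is a linear functional of the $k$-copy state. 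By \Cref{lem:choi.effect.complexity}, after rescaling by $d_{A_k}$ it is the acceptance probability of an efficient binary test.

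First, for the Haar-subsystem ensemble, every state satisfies the information-theoretic bound $H_{\min}(A^k|B^k)\ge -k\,m$ by \Cref{lem:min.entr.Haar.subsystem}. Since the computational quantity dominates the informational one (the efficient maximum is over a subset of all channels), we get $\Tr[J\,\phi^{\otimes k}]\le d_{A_k}^{-1}2^{km}$ for every $J$ and every $\phi\in\EE_m$. Averaging gives $\mathbb{E}_{\EE_m}\Tr[J\phi^{\otimes k}]\le 2^{km}/d_{A_k}$. Second, indistinguishability under $\poly(n)$ copies gives, for each fixed efficient $J$,
\begin{align*}
\left|\mathbb{E}_{\mathsf{Haar}}\Tr[J\psi^{\otimes k}]-\mathbb{E}_{\EE_m}\Tr[J\phi^{\otimes k}]\right|\le \frac{\varepsilon(n)}{d_{A_k}},
\end{align*}
with $\varepsilon=\negl(n)$, using $J/d_{A_k}\le\id$. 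Hence $\mathbb{E}_{\mathsf{Haar}}\Tr[J\psi^{\otimes k}]\le (2^{km}+\varepsilon)/d_{A_k}$.

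Third, we must pass from an average bound per $J$ to a single state good for all $J$ at once. For this I would use Haar concentration: the function $\psi\mapsto d_{A_k}^{-1}\cdot$(acceptance probability) is Lipschitz, so Lévy's lemma gives deviations above the mean by $t$ with probability at most $\exp(-c\,2^{n}t^2/k^2)$. A union bound over the at most $2^{r(n)}$ elements of $\mathbf{J}^{p(n)}$ from \Cref{lem:counting-choi} still leaves probability close to $1$, since $2^{r(n)}\ll \exp(2^{n}t^2/k^2)$ for, e.g., $t=2^{-n/4}$. Here the deviation is measured on the normalized effect $J/d_{A_k}$, so the additive error is tiny relative to $2^{km}$. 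With high probability over $\psi\sim\mathsf{Haar}$, every efficient $J$ therefore satisfies $d_{A_k}^{-1}\Tr[J\psi^{\otimes k}]\lesssim 2^{km}d_{A_k}^{-1}(1+o(1))$.

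By \Cref{lem:min.entr.Haar.state}, almost every such $\psi$ also has $H_{\min}=-k(n_A-O(1))$. Picking one $\psi_n$ for each $n$ from the intersection of these two events gives the family. To absorb the $(1+o(1))$ factor into the exact bound $-k\,m(n_A)$, I would run the argument with a slightly smaller $m'=m-1$, which still satisfies $m'=\omega(\log n_A)$.

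I expect the concentration and union bound step to be the main obstacle. The additive errors from Lévy's lemma and from indistinguishability must be small compared with $2^{km}/d_{A_k}$, which itself is exponentially small. The fix I would try first is to work with the normalized effect $E=J/d_{A_k}$. Then the Lévy deviation scale is $2^{-\Omega(n)}$ against a threshold $2^{km-kn_A}$, and I would check carefully whether the indistinguishability error of \cite[Lemma 16]{LREJ25}, which is a trace-distance bound, is multiplicatively compatible with this threshold. If it is not, the fallback is to bound $\mathbb{E}_{\mathsf{Haar}}\Tr[E\psi^{\otimes k}]$ directly as $\Tr[E\,\Pi_{\mathsf{sym}}]/\dim\mathsf{Sym}$ and compare it with the $\EE_m$ moment operator.
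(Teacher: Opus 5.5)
\textbf{Verdict: there is a genuine gap.} Your overall architecture matches the paper: the operational formula, a per-channel bound on the Haar average, a tail bound plus a union bound over the at most $2^{r(n)}$ efficient channels of \Cref{lem:counting-choi}, and intersection with the almost-sure event of \Cref{lem:min.entr.Haar.state}. The gap is in the tail bound (your third step); your second step also fails as written.

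\emph{The transfer from $\EE_m$ to Haar is misnormalized.} Since $2^{-\CompMin}=\max_J\Tr[J\rho]$, your first step gives $\Tr[J\phi^{\otimes k}]\le 2^{km}$, not $d_{A_k}^{-1}2^{km}$. From $0\le J/d_{A_k}\le\id$, a trace-distance bound $\varepsilon$ gives $|\Tr[J\Delta]|\le d_{A_k}\varepsilon$, not $\varepsilon/d_{A_k}$. Equivalently, for the normalized effect $E=J/d_{A_k}$ the target is $\Tr[E\psi^{\otimes k}]\le 2^{-k(n_A-m)}$. That threshold is exponentially small, while \cite[Lemma~16]{LREJ25} only guarantees a negligible $\varepsilon$, so indistinguishability gives nothing here. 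Your fallback is the correct fix and makes $\EE_m$ unnecessary. Since $d_A=d_B$ one has $\Tr E=1$, so $\mathbb{E}_{\mathsf{Haar}}\Tr[E\psi^{\otimes k}]=\Tr[E\,\Pi^{(d,k)}_{\mathsf{sym}}]/\binom{d+k-1}{k}\le k!/d^k$ with $d=d_A^2$.

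\emph{Lévy's lemma cannot supply the tail bound.} With Lipschitz constant $O(k)$ on the unit sphere of $\CC^{2^n}$, a union bound over $2^{r(n)}$ channels only controls deviations of size $\gtrsim k\sqrt{r(n)}\,2^{-n/2}$. You need deviations below $2^{-k(n/2-m)}$.
\begin{itemize}
\item Your choice $t=2^{-n/4}$ already exceeds this threshold for $k=1$ unless $m\ge n/4$.
\item For any $k\ge 2$ and, say, $m=o(n)$ (which $m=\omega(\log n_A)$ allows), the required $t\le 2^{-k(n/2-m)}$ makes the exponent $2^nt^2/k^2\le 2^{-(k-1)n+2km}/k^2$ tend to zero. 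No choice of $t$ rescues this.
\end{itemize}

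\emph{The missing idea is a high-moment Markov bound.} Every moment of $\Tr[E\psi^{\otimes k}]$ is itself a Haar moment:
$\mathbb{E}_{\mathsf{Haar}}\bigl[\Tr[E\psi^{\otimes k}]^t\bigr]=\Tr[E^{\otimes t}\Pi^{(d,kt)}_{\mathsf{sym}}]/\binom{d+kt-1}{kt}\le (kt)!/d^{kt}$.
Markov's inequality on the $t$-th moment then gives
$\Pr\bigl[\Tr[E\psi^{\otimes k}]\ge 2^{-k(n_A-m)}\bigr]\le (kt)!\,2^{-kt(n_A+m)}$.
Choose a polynomial $t$ such that $kt(n_A+m)$ exceeds $r(n)+\log((kt)!)$ by $\Omega(n)$; this beats the union bound. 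This is the paper's argument: its final probabilistic step uses only these Haar moments, not the comparison with $\EE_m$. It yields $-k\,m(n_A)$ directly, so your $m'=m-1$ adjustment is not needed.
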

\begin{proof}
Let $A \cong B$, where $A_k$ denotes the space $A^{\otimes k}$ and $B_k$ denotes $ B^{\otimes k}$, such that $d_{A_k}=d^k_{A}=2^{k \cdot n_A}$ and $d= d_A \cdot d_B = d_A^{2}$. Then,  $T:\BB(\HH^{B_k}_n)\to\BB(\HH^{A'_k}_n)$ acts on $k$ copies. Given the definition of the computational conditional min-entropy
\begin{align*}
		\CompMin(A^k|B^k)_{|\psi_{n}\rangle^{\otimes k}} &= -\log \left(\ d_A^k\cdot \max_{T}  \bra{\Omega_{A_kA'_k}} (\Id_{A_k} \otimes T)(|\psi\rangle \langle \psi|^{\otimes k})\ket{\Omega_{A_kA'_k}}\right)\\
        &= -\log \left(\ d_A^k\cdot \max_{T} F_T (|\psi\rangle^{\otimes k})\right)\,,
\end{align*}
where $T: \BB(\HH^{B_{k}}_n) \rightarrow\BB(\HH^{A'_{ k}}_n)$ is maximized w.r.t. the set of polynomially implementable quantum channels and $F_T(|\psi\rangle^{\otimes k})\coloneq \bra{\Omega_{A_kA'_k}} (\Id_{A_k} \otimes T)(|\psi\rangle \langle \psi|^{\otimes k})\ket{\Omega_{A_kA'_k}}$.

Note that $F_T(\cdot)$ corresponds to the following observable: given any efficient $T$, let us define a (non-uniform) quantum poly time ($\mathsf{QPT}$) distinguisher $\mathcal{D}_T$ that, on input $k$ copies of a pure state $|\psi\rangle$, applies $(\Id\otimes T)$ and measures the two-outcome POVM $\{\ket{\Omega}\!\bra{\Omega},\,\id-\ket{\Omega}\!\bra{\Omega}\}$ on $AA'$, outputting $1$ if and only if the outcome is $\ket{\Omega}$.
Then
\begin{align*}
    \Pr[\mathcal{D}_T((|\psi\rangle^{\otimes k})=1]=F_T((|\psi\rangle^{\otimes k})\,.
\end{align*}

Let us now calculate the difference in the expected value of the observable $F_T$ for the following ensembles
\begin{align*}
    &\EE_{\mathsf{Haar}} = \left\{|\psi_{AB}\rangle \, , \qquad |\psi_{AB}\rangle\sim \mathsf{Haar} (n) \right\}\,, \\
    &\EE_m = \left\{ U_\pi |0 \rangle ^{n-m}|\phi_m \rangle \, : \, U_{\pi} \in P \, , |\phi_m \rangle  \sim \mathsf{Haar} (m) \right\}\,,
\end{align*}
where $\EE_m$ is a Haar-subsystem state ensemble, where $P$ is the set of unitaries implementing a random permutation on $n$-bit strings $\pi : \{0,1\}^n \rightarrow \{0,1\}^n $, $U_\pi |x\rangle = |\pi(x)\rangle$.

Let us now bound the difference between the expected values,  
\begin{align*}
    &\left|\mathbb{E}_{|\psi\rangle \sim\EE_{\mathsf{Haar}} }F_T(|\psi\rangle^{\otimes k} ) - \mathbb{E}_{|\psi\rangle \sim\EE_{m} }F_T(|\psi\rangle^{\otimes k} ) \right| \\
    &\qquad = \left| \Tr\left[M_T\left( \mathbb{E}_{|\psi\rangle \sim\EE_{\mathsf{Haar}}}(|\psi\rangle\langle\psi|^{\otimes k}) - \mathbb{E}_{|\psi\rangle \sim\EE_{m} }\left(|\psi\rangle\langle\psi|^{\otimes k}\right)\right)  \right] \right|
    \end{align*}
with $M_T \coloneq   (\Id_{A_k} \otimes T^*)(|\Omega_{AA'}\rangle\langle\Omega_{AA'}|)$, where the Haar average of $|\psi\rangle^{\otimes k}$ can be described as
\begin{align*}
   \mathbb{E}_{|\psi\rangle \sim\EE_{\mathsf{Haar}} }\left[|\phi\rangle\langle\phi|^{\otimes k} \right] = \frac{\Pi^{(n,k)}_{\mathsf{sym}}}{\tr\left( \Pi^{(n,k)}_{\mathsf{sym}}\right)}= \frac{\Pi^{(n,k)}_{\mathsf{sym}}}{\binom{d+k-1}{k}}\,, 
\end{align*}
where $\Pi^{(n,k)}_{\mathsf{sym}}$ is the projector onto the symmetric subspace (see for example \cite{Me24}).  On the other hand, for the Haar-subsystem ensemble $\EE_m$, a sample can be described equivalently as follows: pick a subset $S\subseteq[d]$ of cardinality $|S|=d_S\coloneq 2^m$ uniformly at random, and then draw $|\phi\rangle$ Haar-random on the $d_S$-dimensional coordinate subspace $\HH_S\coloneq \Span\{\ket{x}:x\in S\}\subseteq(\CC^2)^{\otimes n}$. Then, its average over all possible permutations can be represented as
\begin{align*} \mathbb{E}_{|\phi\rangle \sim\EE_m}\left[|\phi\rangle\langle\phi|^{\otimes k} \right] = \mathbb{E}_{S:|S|=d_S}\left[\frac{\Pi_\mathsf{Sym}^{(S,k)}}{\Tr(\Pi_\mathsf{Sym}^{(S,k)})} \right] \,.
\end{align*}
Moreover, let us fix a subset of basis labels $S \subseteq[d]$, with $d=d_Ad_B$ with $|S|=d_S$  such that
\begin{align*}
    (a,b) \in S \iff |a\rangle_A \otimes|b\rangle_B \in \HH_S\,,
\end{align*}
where $\HH_S$ is described in the computational basis. Let us now define the set $S_B \subset [d_B]$ of $B$-indices appearing in $S$ and its corresponding projector
\begin{align*}
    S_B\coloneq \left\{ b \in [d_B]\,:\, \exists a \in [d_A]\; \text{s.t.}\; (a,b) \in S\right\},\quad \Pi_{S_B}\coloneq \sum_{b \in S_B} |b\rangle \langle b|\,.
\end{align*}
Then,
\begin{align*}
    \Pi_S = \sum_{(a,b) \in S} |ab\rangle \langle ab| \leq \sum_{b\in S_B}\sum_{a \in [d_A]}|ab\rangle \langle ab|  = \id_{A} \otimes \Pi_{S_B}\,.
\end{align*}
Moreover, since $\Pi_\mathsf{Sym}^{(S,k)} \leq \Pi^{\otimes k}_S$,
\begin{align*}
    \Pi_\mathsf{Sym}^{(S,k)}\leq  \id_{A_k} \otimes \Pi^{\otimes k}_{S_B}\,.
\end{align*}

In order to bound the expected value of $F_T$ with respect to the Haar-random ensemble, we first use the fact that it is computationally indistinguishable from the Haar-subsystem random enemble to estimate a bound $\delta$. Later, we prove that this bound $\delta$ holds with overwhelming probability for computationally bounded obsevers via Markov's inequality.

Let us first evaluate the asymptotic behavior of the expectation values of the Haar-random ensemble introduced in Eq. \eqref{eq:haar.subs.ensensembles}. 

\begin{align*}
    \mathbb{E}_{|\psi\rangle \sim\EE_{\mathsf{Haar}} }[F_T(|\psi\rangle^{\otimes k} ) ]
    & = \frac{k!}{d^k}\left(1+O\left(\frac{k^2}{d}\right) \right) \Tr\left[M_T\Pi^{(n,k)}_{\mathsf{sym}}\right] \\
      & = \frac{k!}{d^k}\left(1+O\left(\frac{k^2}{d}\right) \right)\\
      & = O\left( \frac{k!}{d^k}\right)\,.
\end{align*}

Similarly, the expected value of $F_T$ for the Haar-susbsytem ensemble is given by
\begin{align*}
 \mathbb{E}_{|\psi\rangle \sim\EE_{m} }[F_T(|\psi\rangle^{\otimes k} )]
    & =  \frac{k!}{d_S^k}\left(1+O\left(\frac{k^2}{d_S}\right) \right)\mathbb{E}_{S:|S|=d_S} \Tr\left[M_T\Pi^{(S,k)}_{\mathsf{sym}}\right]\\
     & =\frac{k!}{d_S^k}\left(1+O\left(\frac{k^2}{d_S}\right) \right)\mathbb{E}_{S:|S|=d_S} \Tr\left[M_T\left(\id_{A_k} \otimes \Pi^{\otimes k}_{S_B}\right)\right]\\
      &=\frac{k!}{d_A^kd_S^k}\left(1+O\left(\frac{k^2}{d_S}\right) \right)\mathbb{E}_{S:|S|=d_S} \Tr\left[ \Pi^{\otimes k}_{S_B}\right]\\
      &= \frac{k!}{d_A^k}\left(1+O\left(\frac{k^2}{d_S}\right) \right)\\
      & = O\left( \frac{k!}{d_A^k}\right)\,,
\end{align*}
where in the third line we use the fact that $ \Pi_\mathsf{Sym}^{(S,k)}\leq  \id_{A_k} \otimes \Pi^{\otimes k}_{S_B}$. Then, the fourth line follows from the Choi constraint $\tr_{A_k}[M_T] = \frac{\id_{B_k}}{d_A^k}$ and the fifth since $\Tr\left[ \Pi^{\otimes k}_{S_B}\right]=\mathsf{rank}(\Pi^{\otimes k}_{S_B})\leq d_S^k$ and $\Tr\left[M_T\Pi^{(n,k)}_{\mathsf{sym}}\right]\leq \Tr\left[M_T\right]=1$.

Using the triangle inequality and the above asymptomatic inequalities, we can see that:
\begin{align*}
    \left|\mathbb{E}_{|\psi\rangle \sim\EE_{\mathsf{Haar}} }F_T(|\psi\rangle^{\otimes k} ) - \mathbb{E}_{|\psi\rangle \sim\EE_{m} }F_T(|\psi\rangle^{\otimes k} ) \right| = O\left( \frac{k!}{d_A^k}\right)\,.
\end{align*}

Therefore, it follows that
\begin{align*}
   \mathbb{E}_{|\psi\rangle \sim\EE_{\mathsf{Haar}} }F_T(|\psi\rangle^{\otimes k} )=  \mathbb{E}_{|\psi\rangle \sim\EE_{m} }F_T(|\psi\rangle^{\otimes k} ) +  O\left( \frac{k!}{d_A^k}\right)\,,
\end{align*}
where
\begin{align*}
     \mathbb{E}_{|\psi\rangle \sim\EE_{m} }F_T(|\psi\rangle^{\otimes k} )= \frac{k!}{d_A^k}\left(1+O\left(\frac{k^2}{d_S}\right)\right)\,.
\end{align*}
Thus,
\begin{align*}
   \mathbb{E}_{|\psi\rangle \sim\EE_{\mathsf{Haar}} }F_T(|\psi\rangle^{\otimes k} ) &=  \frac{k!}{d_A^k}\left(1+O\left(\frac{k^2}{d_S}\right)\right)\\
   &\leq \frac{2^{k\cdot m(n)}}{d_A^k}\,,
\end{align*}
where the last inequality follows by taking $m(n)=\omega(\log n)$ and  $k=\poly(n)$,  since 
$$\log (k!)\leq k (\log (k)-1) +O(\log(k))\;.$$

Let us now prove that there exists a state in the Haar ensemble that satisfies this bound with high probability. To do so, fix $t=t(n)\in \NN$. Then,
\begin{align*}
    \left(F_T\!\left(|\psi\rangle^{\otimes k}\right)\right)^t
    =
    \tr\!\left[
        M_T^{\otimes t}
        |\psi\rangle\langle\psi|^{\otimes kt}
    \right]\,.
\end{align*}
One can calculate the Haar moment bound for the $kt$ moment 
\begin{align*}
    \mathbb{E}_{|\psi\rangle\sim \EE_{\mathsf{Haar}}}
    \left[
        \left(F_T\!\left(|\psi\rangle^{\otimes k}\right)\right)^t
    \right]
    \le
    \frac{(kt)!}{d^{kt}}
    \left(
        1 + O\!\left(\frac{(kt)^2}{d}\right)
    \right) \;.
\end{align*}

Hence, by Markov's inequality, for
\begin{align*}
    \delta \coloneq  \frac{2^{k\,m(n_A)}}{d_A^k} \;,
\end{align*}

one has
\begin{align*}
    \Pr_{|\psi\rangle\sim \EE_{\mathsf{Haar}}}
    \left[F_T\!\left(|\psi\rangle^{\otimes k}\right)\ge \delta\right]
    &=\Pr_{|\psi\rangle\sim \EE_{\mathsf{Haar}}}\left[\left(F_T\!\left(|\psi\rangle^{\otimes k}\right)\right)^t\ge\delta^t\right] \\
    &\le\frac{\mathbb{E}_{|\psi\rangle\sim \EE_{\mathsf{Haar}}}\left[\left(F_T\!\left(|\psi\rangle^{\otimes k}\right)\right)^t\right]}{\delta^t} \\
    &\le\frac{(kt)!}{2^{kt(n_A+\,m(n_A))}}\left(1 + O\!\left(\frac{(kt)^2}{d}\right)\right).
\end{align*}

Let us now fix a polynomial circuit-size bound $p(n)$ and let $\mathsf{QPT}_{\le p}$ denote the set of channels implementable by circuits of size at most $p(n)$. By \Cref{lem:counting-choi}, there exists a polynomial $r(n)$ such that
\begin{align*}
    |\mathsf{QPT}_{\le p}| \le 2^{r(n)} .
\end{align*}
Therefore, by the union bound,
\begin{align*}
    \Pr_{|\psi\rangle\sim \EE_{\mathsf{Haar}}}\left[\exists\, T\in \mathsf{QPT}_{\le p} :F_T\!\left(|\psi\rangle^{\otimes k}\right)\ge \delta\right]\le 2^{r(n)}\frac{(kt)!}{2^{kt(n_A+\,m(n_A))}}\left(1 + O\!\left(\frac{(kt)^2}{d}\right)\right)\,.
\end{align*}
Since $k\in \poly(n)$, $r(n)\in \poly(n)$, $d_S = 2^{m(n_A)}$, and  $m(n_A)=\omega(\log n_A)$, there exists a polynomial $t=t(n)$  
\begin{align*}
    r(n) + \log((kt)!)=r(n)+kt(n)[\log(kt(n))-1]+\mathcal{O}(\log n)  =o(kt(n_A+\,m(n_A)))
\end{align*}
for all sufficiently large $n$. 
In particular, if the degree of $r$ is $g_r$ and the degree of $k$ is $g_k$, then it suffices to take $t$ to be a polynomial of degree $\min\{0,d_r-d_k\}$.
Hence
\begin{align*}
    \Pr_{|\psi\rangle\sim \EE_{\mathsf{Haar}}}
    \left[\forall\, T\in \mathsf{QPT}_{\le p} : F_T\!\left(|\psi\rangle^{\otimes k}\right) \le \frac{2^{k\,m(n_A)}}{d_A^k}\right]\ge 1-2^{-\Omega(n)}\;.
\end{align*}

By the previous estimate, the computational bound holds with probability at least $1-2^{-\Omega(n)}$ over $|\psi\rangle\sim \EE_{\mathsf{Haar}}$.
By Lemma~\ref{lem:min.entr.Haar.state},the informational bound
$H_{\min}(A^k|B^k)_{|\psi\rangle^{\otimes k}}=-k(n_A-O(1))$
holds almost surely for Haar-random $|\psi\rangle$.
Hence, for all sufficiently large $n$, these two events have nonempty intersection. Then, there exists a pure state $|\psi_{AB,n}\rangle$ in this intersection such that
\begin{align*}
    \max_{T\in \mathsf{QPT}_{\le p}}
    F_T\!\left(|\psi_{AB,n}\rangle^{\otimes k}\right)
    \le
    \frac{2^{k\,m(n_A)}}{d_A^k}\;,
\end{align*}
and thus
\begin{align*}
   \CompMin(A^k|B^k)_{|\psi_{n}\rangle^{\otimes k}}
    &=
    -\log\!\left(
        d_A^k
        \max_{T\in \mathsf{QPT}_{\le p}}
        F_T\!\left(|\psi_{AB,n}\rangle^{\otimes k}\right)
    \right) \\
    &\ge
    -\log\!\left(
        d_A^k \cdot \frac{2^{k\,m(n_A)}}{d_A^k}
    \right) \\
    &=
    -k\,m(n_A)\,.
\end{align*}
\end{proof}

In the following, we summarize bounds of this form by writing
\begin{align*}
\CompMin(A^k|B^k)_{|\psi_{AB,n}\rangle^{\otimes k}}\ge -k\cdot \omega(\log n_A), 
\end{align*}
meaning that the above inequality holds for every function
$m(n_A)=\omega(\log n_A)$. Similarly, an estimate of the form
\begin{align*}
\CompMin(A^k|B^k)_{|\psi_{AB,n}\rangle^{\otimes k}}\le -k\cdot \Omega(\log k)
\end{align*}
means that there exists a constant $C>0$ such that
\begin{align*}
\CompMin(A^k|B^k)_{|\psi_{AB,n}\rangle^{\otimes k}}\le -C\cdot k\cdot \log k  
\end{align*}
for all sufficiently large $n_A$. Then, given $k = \poly (n)$ copies, there exists a family of states $\{|\psi_{n}\rangle\}_n$  whose computational conditional min-entropy is given by
\begin{align}
-k\cdot \omega(\log n) \leq \CompMin(A^k |B^k)_{|\psi_{AB,n}\rangle^{\otimes k}} \leq -k\cdot \Omega(\log k) \,
\end{align}
Therefore, since $k$ is polynomial in $n$ we can write
\begin{align}
\CompMin(A^k |B^k)_{|\psi_{AB,n}\rangle^{\otimes k}} = - \Theta(k\cdot \log k) \,,
\end{align} 
while its informational one satisfies 
\begin{align*}
    H_\text{min}(A^k|B^k)_{|\psi_{AB,n}\rangle^{\otimes k}} = -k \cdot \left( n_A - O(1)\right)\,.
\end{align*}

By using the subadditivity property of the $\CompMin(A |B)$ one gets that the computational min-entropy per copy is given by
\begin{align}
\CompMin(A |B)_{|\psi_{AB,n}\rangle} \geq - \omega(\log n_A)\,,
\end{align}
while, in the informational case, one has that
\begin{align}
H_{\text{min}}(A |B)_{|\psi_{AB,n}\rangle}=  -n_A+O(1)\,.
\end{align}

\subsection{Separation for Mixed States}\label{sec:sep.mixed.state}
In this section we turn to mixed states. We begin by extending our notion of a random ensemble beyond pure states, using the canonical construction obtained by partially tracing out $m$ qubits from a Haar-random $(n+m)$-qubit state. The resulting induced ensemble, also known as the GHSE ensemble \cite{HLW06,Col15,Ha98,ZPN11,BMB+25}, will be our main object of study.

\begin{definition}[Generalized Hilbert-Schmidt ensemble (GHSE) \cite{BMB+25}]\label{def:GHSE}
Let $\mathsf{Haar}(n+m)$ be the Haar measure on pure states of $n+m$ qubits. Then, the \emph{GHSE} ensemble is defined as
\begin{align*}
    \eta_{n,m} \coloneq  \{\tr_m(|\psi\rangle \langle \psi |)\}_{|\psi\rangle \sim \mathsf{Haar}(n+m)}\,.
\end{align*}
\end{definition}

Similarly to \Cref{th:lower-bound.ineff}, one can explicitly characterize the computational min-entropy of the GHSE ensemble. For mixed states, however, one can obtain a larger gap. We obtain that when restricted to a polynomial number of copies and to polynomially implementable quantum operations, the corresponding computational min-entropy unconditionally scales as the one of a maximally mixed state (up to negligible error) while its information-theoretic min-entropy is upper bounded by $-k\cdot\left(n_A-\omega(\log n)\right)$, where $k$ is the number of copies. Therefore, under complexity-constrained restrictions, one can get an almost maximal gap between the computational and the information-theoretic notion of conditional min-entropy.

\begin{theorem}[Lower bound on the computational min-entropy for mixed states]\label{th:lower-bound.ineff.mix}
Fix an $A|B$ bipartition with $n_A= n_B$, a function $m:\mathbb{N}\to\mathbb{N}$ such that $m(n_A)=\omega(\log n_A)$ and let $k\in\poly(n)$ be the number of copies of a given state. Then, there exists a family of bipartite mixed states $\{ \rho_{n,m}\}_n$ such that
\begin{align*}
H_\mathrm{min}(A^k|B^k)_{\rho_{n,m}^{\otimes k}} &\leq -k\cdot\left(n_A-m(n_A)\right)+\negl(n)\,,\\
\CompMin(A^k|B^k)_{\rho_{n,m}^{\otimes k}} &\geq k\cdot n_A- \negl(n)\,.
\end{align*}
\end{theorem}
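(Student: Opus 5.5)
We take $\rho_{n,m}$ to be a typical sample of the GHSE ensemble $\eta_{n,m}$ of \Cref{def:GHSE}, with $n=n_A+n_B$ and $m=m(n_A)$ traced-out qubits. The idea is to show that each of the two bounds holds with overwhelming (resp.\ almost sure) probability over $\eta_{n,m}$, and then intersect the two events, as at the end of the proof of \Cref{th:lower-bound.ineff}. Since $m$ is only required to be $\omega(\log n_A)$, we may freely replace $m$ by a constant multiple of $m$ (for instance $m/2$) in intermediate steps.

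\emph{Information-theoretic bound.} We purify $\rho_{AB}$ by the Haar-random pure state $\ket{\psi_{ABE}}$, where $E$ consists of the $m$ discarded qubits. The chain-type bound for min-entropy says that conditioning on an extra system $E$ decreases the conditional min-entropy by at most $2\log d_E$. This gives
\begin{align*}
H_{\min}(A|B)_\rho \le H_{\min}(A|BE)_\psi + 2m\,.
\end{align*}
For the pure Haar state on $A|BE$ we have $d_{BE}/d_A\ge 1$. The Marchenko--Pastur argument of \Cref{lem:min.entr.Haar.state} then carries over: one uses the ratio $c=d_{BE}/d_A$, or simply the bound $\Tr\sqrt{\rho_A}\ge(1-o(1))\sqrt{d_A}\cdot\mathrm{const}$. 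It gives $H_{\min}(A|BE)_\psi=-n_A+O(1)$ almost surely. Absorbing the constant and the factor $2$ into $m$, and using additivity of $H_{\min}$ under tensor powers, yields the first inequality.

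\emph{Computational bound.} By \Cref{thm:op.mean.fullyquantum}, it suffices to show that with overwhelming probability
\begin{align*}
F_T(\rho^{\otimes k})=\Tr[M_T\rho^{\otimes k}]\le d_A^{-2k}\bigl(1+\negl(n)\bigr)
\end{align*}
simultaneously for all $T$ of size $\le p(n)$, where $M_T=(\Id\otimes T^*)(\Omega)$. This normalization is the right one. Since $\tr_{A_k}M_T=\id_{B_k}/d_A^k$, we have $\Tr[M_T]=d_B^k/d_A^k$. Hence the maximally mixed state gives exactly $F_T=\Tr[M_T]/d^k=d_A^{-2k}$, i.e.\ $\CompMin=k\,n_A$.

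The argument then proceeds in three steps.
\begin{itemize}
\item[(i)] Write the moment $\mathbb E_{\eta_{n,m}}[\rho^{\otimes q}]=\tr_{E^q}\bigl[\Pi^{(n+m,q)}_{\mathsf{sym}}\bigr]/\binom{2^{n+m}+q-1}{q}$ as a normalized sum over permutations $\pi\in S_q$ of $\tr_{E^q}R_\pi$.
\item[(ii)] Bound each non-identity permutation's contribution to $\Tr[M_T^{\otimes t}\,\cdot\,]$, with $q=kt$. The identity term gives exactly $(d_A^{-2k})^t$. Every other term should carry a suppression factor of $2^{-m}$ per transposition, because the traced $E$ loops contribute $d_E^{\#\mathrm{cycles}-q}$. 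Here one uses $0\le M_T\le\id$ and $\Tr M_T=(d_B/d_A)^k$ to control the $AB$ part. The target is
\begin{align*}
\mathbb E[F_T^t]\le d_A^{-2kt}\bigl(1+O((kt)^2 2^{-m})\bigr)^{kt},
\end{align*}
which is $d_A^{-2kt}(1+\negl)$ once $m=\omega(\log n)$ and $t$ is polynomial.
\item[(iii)] Apply Markov's inequality to $F_T^t$ with threshold $d_A^{-2k}(1+\varepsilon)$. Then take a union bound over the at most $2^{r(n)}$ channels from \Cref{lem:counting-choi}, choosing the polynomial $t$ large enough that $(1+\varepsilon)^{-t}2^{r(n)}$ is negligible while $\varepsilon$ remains negligible.
\end{itemize}

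The main obstacle is step (ii). Unlike the pure-state proof, we need a \emph{relative} (multiplicative) error bound rather than an additive $O(k!/d_A^k)$ one. Moreover, the partially traced permutation operators $\tr_{E}R_\pi$ are not positive, so we cannot simply bound them in operator order. The plan is to expand each $\pi$ into its cycle structure between the $AB$ and $E$ factors. We then bound $|\Tr[M_T^{\otimes t}\tr_E R_\pi]|$ by a product of Schatten norms of $M_T$: using $\|M_T\|_\infty\le1$ and $\Tr M_T$, each nontrivial cycle costs at most a factor $d_A^{-2}\cdot d_E^{-1}$ relative to the identity. If this direct bound is too lossy, a fallback is to first twirl $T$ by the symmetry of the ensemble, or to invoke the known closeness of GHSE moments to those of the maximally mixed state from \cite{BMB+25}, and then only prove the multiplicative upgrade.
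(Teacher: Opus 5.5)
\textbf{The gap is in step (iii).} The skeleton is the paper's: GHSE sample, $t$-th moment, Markov, and a union bound over the $2^{r(n)}$ circuits of \Cref{lem:counting-choi}. But the parameter choice you describe cannot work.

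If $\varepsilon(n)$ is negligible and $t(n)\le n^c$, then $t\ln(1+\varepsilon)\le n^c\varepsilon\to 0$. Hence $(1+\varepsilon)^{-t}\to 1$ and $2^{r(n)}(1+\varepsilon)^{-t}\ge 2^{r(n)-1}$, so the union bound is vacuous. The union bound needs $t\ln(1+\varepsilon)\gtrsim r(n)+n$. With polynomial $t$ you therefore only reach $\varepsilon\approx r(n)/t$, which gives $\CompMin(A^k|B^k)\ge k\,n_A-1/\poly(n)$ rather than $k\,n_A-\negl(n)$.

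A negligible $\varepsilon$ forces a superpolynomial $t$. At that point your step (ii) target is insufficient, since you only claim it is $1+\negl$ for polynomial $t$. What you need is a moment bound that is explicit and uniform in $t$. With $q=kt$, the paper uses
\begin{align*}
\mathbb{E}\bigl[F_T(\rho^{\otimes k})^t\bigr]\le\prod_{j=0}^{q-1}\frac{d_m+j}{d_nd_m+j}\le d_n^{-kt}\exp\Bigl(\frac{k^2t^2}{2d_m}\Bigr).
\end{align*}
It then takes $t\approx d_m\varepsilon/k^2$ (superpolynomial) and $\varepsilon$ of order $k\sqrt{(r(n)+n)/d_m}$. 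This $\varepsilon$ is negligible exactly because $d_m=2^{m}$ is superpolynomial.

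\textbf{Step (ii) is easier than you fear, and your proposed refinement is false.} On $(\HH_n\otimes\HH_E)^{\otimes q}$ the permutation operator factorizes as $R_\pi=\hat\pi_n\otimes\hat\pi_E$. Hence $\tr_{E^q}R_\pi=d_m^{\mathrm{cycles}(\pi)}\hat\pi_n$ is a scalar multiple of a unitary. Hölder's inequality with $M_T\ge 0$ then gives $|\Tr[M_T^{\otimes t}\hat\pi_n]|\le\Tr[M_T]^t=1$. Summing with $\sum_{\pi\in S_q}d_m^{\mathrm{cycles}(\pi)}=(d_m+q-1)!/(d_m-1)!$ yields the display above directly. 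No cycle-by-cycle Schatten analysis, twirling, or additive-to-multiplicative upgrade is needed.

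The extra factor $d_A^{-2}$ per cycle does not exist. For the identity channel, which uses zero gates, $M_T=\Omega_{A_kB_k}$ and $\Tr[M_T\hat\pi_n]=1=\Tr M_T$ for every $\pi\in S_k$. Only the $d_m^{-1}$ per transposition is available, and it suffices.

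\textbf{Your information-theoretic part is a valid alternative route.} The bound $H_{\min}(A|B)\le H_{\min}(A|BE)+2\log d_E$ holds, since $\rho_{ABE}\le d_E\,\rho_{AB}\otimes\id_E$. However, because $d_{BE}/d_A\to\infty$, the fixed-$c$ Marchenko--Pastur argument does not literally carry over. Use instead $\Tr\sqrt{\rho_A}\ge\lambda_{\max}(\rho_A)^{-1/2}$ together with a standard high-probability bound $\lambda_{\max}(\rho_A)=O(1/d_A)$.

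Your route costs $2m+O(1)$, so it forces a reparametrization to $\lfloor (m-C)/2\rfloor$ traced qubits. The paper avoids this by using $H_{\min}(A|B)\le H(A|B)=H(E)-H(B)\le m-H(B)$ with $H(B)\ge n_B-\negl(n)$ from \cite{HLW06}, which gives the stated $m$ directly.
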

\begin{proof}
Similarly to the proof of \Cref{th:lower-bound.ineff}, $A \cong B$, where $A_k$ denotes the space $A^{\otimes k}$ and $B_k$ denotes $ B^{\otimes k}$, such that $d_{A_k}=d^k_{A}=2^{k \cdot n_A}$ and $d_n = d_{A} \cdot d_{B} = d^2_{A}$. Then, $T:\BB(\HH^{B_k}_n)\to\BB(\HH^{A'_k}_n)$ acts on $k$ copies. Given the definition of the computational min-entropy,
\begin{align}
		\CompMin(A^k|B^k)_{\rho^{\otimes k}} &= -\log \left(\ d_A^k\cdot \max_{T}  \bra{\Omega_{A_kA'_k}} (\Id_{A_k} \otimes T)(\rho^{\otimes k})\ket{\Omega_{A_kA'_k}}\right)\\
        &= -\log \left(\ d_A^k\cdot \max_{T} F_T (\rho^{\otimes k})\right) \; ,
\end{align}
where $T: \BB(\HH^{B_{k}}_n) \rightarrow\BB(\HH^{A'_{ k}}_n)$ is maximized over the set of polynomially implementable quantum channels and $F_T(\rho^{\otimes k})\coloneq\bra{\Omega_{A_kA'_k}} (\Id_{A_k} \otimes T)(\rho^{\otimes k})\ket{\Omega_{A_kA'_k}}$. 

Note that $F_T(\cdot)$ corresponds to the following observable: given any efficient $T$, let us define a (non-uniform) QPT distinguisher $\mathcal{D}_T$ that, on input $k$ copies of an unknown mixed state $\rho$, applies $(\Id\otimes T)$ and measures the two-outcome POVM $\{\ket{\Omega}\!\bra{\Omega},\,\id-\ket{\Omega}\!\bra{\Omega}\}$ on $AA'$, outputting $1$ if and only if the outcome is $\ket{\Omega}$.
Then
\begin{align*}
    \Pr[\mathcal{D}_T(\rho^{\otimes k})=1]=F_T(\rho^{\otimes k})\,.
\end{align*}
Let us now bound the difference between the expected values of $F_T$ for the Haar random and Haar subsystem ensembles introduced in Eq. \eqref{eq:haar.subs.ensensembles}. As stated in \cite[Theorem D.1]{BMB+25}, the $k$-th moment of the GHSE ensemble can be constructed as
\begin{align}\label{eq:GHSE.decom}
   \mathbb{E}_{\EE_{\rho \sim\eta_{n,m} }}(\rho^{\otimes k})= \mathbb{E}_{\psi \in \mathsf{Haar}(d_nd_m)}[\tr_m(|\psi\rangle \langle \psi|)^{\otimes k}]=\frac{(d_nd_m-1)!}{(d_nd_m+k-1)! }\sum_{\pi \in S_k}d_m^{\text{cycles}(\pi)}\hat{\pi}_n\,,
\end{align}
where $S_k$ is the symmetric group of degree $k$, and $\hat{\pi}_n$  denotes the permutation unitary acting on $\HH_n^{\otimes k}$. The function $\text{cycles}(\pi)$ denotes the number of disjoint cycles in the cycle decomposition of $\pi\in S_k$, counting fixed points as cycles of length $1$. In particular,
\begin{align}\label{eq:cycles}
 \sum_{\substack{\pi \in S_k \\ \pi \neq \id}} d_m^{\text{cycles}(\pi)}
= \frac{(d_m+k-1)!}{(d_m-1)!}-d_m^k.   
\end{align}
 Using $M_T \coloneq  (\Id_{A_k} \otimes T^*)(|\Omega_{A_kA'_k}\rangle\langle\Omega_{A_kA'_k}|)$, we find that
\begin{align*}
    &\left|\mathbb{E}_{\rho \sim\eta_{n,m} }F_T(\rho^{\otimes k} )  - F_T\left(\frac{\id_n^{\otimes k}}{d_n^k} \right) \right| \\
     &\quad = \left| \Tr\left[M_T\left( \mathbb{E}_{\EE_{\rho \sim\eta_{n,m} }}(\rho^{\otimes k}) -\frac{\id_n^{\otimes k}}{d_n^k} \right)  \right] \right|\\
    &\quad = \left| \Tr\left[M_T\left( \frac{1}{d_n^k}\left( -\frac{k(k-1)}{2d_nd_m} + O\left( \frac{k^4}{d_n^2 d_m^2}\right)\right)\id_n^{\otimes k} \right. \right. \right.\\
    & \qquad\qquad \, \left. \left. + \frac{1}{(d_nd_m)^k}\left(1-\frac{k(k-1)}{2d_nd_m} +  O\left(\frac{k^4}{d_n^2d_m^2}\right)\right) \sum_{\substack{\pi \in S_k \\ \pi \neq \id}}d_m^{\text{cycles}(\pi)}\hat{\pi}_n \right] \right|\\
       \end{align*}
          \begin{align*}    
     &\quad \leq   \frac{1}{d_n^k}\left( \frac{k(k-1)}{2d_nd_m} + O\left( \frac{k^4}{d_n^2 d_m^2}\right)\right)\Tr(M_T)   \\
     & \qquad\qquad \, + \frac{1}{(d_nd_m)^k}\left(1+\frac{k(k-1)}{2d_nd_m} +  O\left(\frac{k^4}{d_n^2d_m^2}\right)\right) \sum_{\substack{\pi \in S_k \\ \pi \neq \id}}d_m^{\text{cycles}(\pi)}|\Tr(M_T \hat{\pi}_n)| \\
    &\quad \leq   \frac{1}{d_n^k}\left( \frac{k(k-1)}{2d_nd_m} + O\left( \frac{k^4}{d_n^2 d_m^2}\right)\right) + \frac{1}{(d_nd_m)^k}\left(1+\frac{k(k-1)}{2d_nd_m} +  O\left(\frac{k^4}{d_n^2d_m^2}\right)\right) \sum_{\substack{\pi \in S_k \\ \pi \neq \id}}d_m^{\text{cycles}(\pi)} \\
       &\quad =   \frac{1}{d_n^k}\left( \frac{k(k-1)}{2d_nd_m} + O\left( \frac{k^4}{d_n^2 d_m^2}\right)\right) + \frac{1}{(d_nd_m)^k}\left(1+\frac{k(k-1)}{2d_nd_m} +  O\left(\frac{k^4}{d_n^2d_m^2}\right)\right) \left( \frac{(d_m +k-1)!}{(d_m-1)!}-d_m^k\right)\,.\\
    \end{align*}
The second equality follows from splitting the sum over $S_k$ into the identity permutation and the sum over non-identity permutations, as in \cite[Theorem D.1]{BMB+25}. The first inequality follows from applying the triangle inequality and the second one from the fact that $|\Tr(M_T \hat{\pi}_n)|\leq \Tr(M_T)=1$. The last equality follows by using Eq.~\eqref{eq:cycles}. Finally, since
\begin{align*}
     \left( \frac{(d_m +k-1)!}{(d_m-1)!}-d_m^k\right) = d_m^k\left( \frac{k(k-1)}{2d_m}+O\left( \frac{k^4}{d_m^2}\right)\right)\,,
\end{align*}
it follows that
    \begin{align*}  &\left|\mathbb{E}_{\rho \sim\eta_{n,m} }F_T(\rho^{\otimes k} )  - F_T\left(\frac{\id^{\otimes k}}{d^k} \right) \right| \\
          &\quad \leq   \frac{1}{d_n^k}\left( \frac{k(k-1)}{2d_nd_m} + O\left( \frac{k^4}{d_n^2 d_m^2}\right)\right) + \frac{1}{d_n^k}\left(1+\frac{k(k-1)}{2d_nd_m} +  O\left(\frac{k^4}{d_n^2d_m^2}\right)\right) \left( \frac{k(k-1)}{2d_m}+O\left( \frac{k^4}{d_m^2}\right)\right)\\   
        &\quad =   \frac{1}{d_n^k}\left( \frac{k(k-1)}{2d_nd_m} + \frac{k(k-1)}{2d_m}\right) + O\left(\frac{k^4}{d_n^{k}d_m^2}\right)\\
        &\quad = O \left( \frac{k^2}{d_n^kd_m}\right)\,.\\
    \end{align*}
Therefore we have
\begin{align*}
    \mathbb{E}_{\rho \sim\eta_{n,m} }F_T(\rho^{\otimes k} )\leq  F_T\left(\frac{\id^{\otimes k}}{d_n^k} \right)+  O\left(\frac{k^2}{d_n^kd_m}\right)\,.
\end{align*}
On the other hand, by using the defintion of the (informational) min-entropy, it follows that for any efficient CPTP map $T$
\begin{equation}
\label{eq:alpha-upper-by-Hmin}
F_T(\rho^{\otimes k} )\ \le\ \frac{2^{-H_{\min}(A^k|B^k)_{\rho^{\otimes k}}}}{d_A^k}.
\end{equation}

Let us now evalaute the conditional min-entropy for $\rho= \frac{\id_{AB}}{d_Ad_B}$ with $d_n= d_A \cdot d_B$. Then, it directly follows that
\begin{align*}
    H_{\min}(A|B)_{\frac{\id_{AB}}{d_A d_B}}=\log d_A\,.
\end{align*}

By the additivity of the conditional min-entropy one has that
\begin{align*}
    \mathbb{E}_{\rho \sim\eta_{n,m} }F_T(\rho^{\otimes k} ) &\leq \frac{1}{d_n^{k}} +  O\left(\frac{k^2}{d_n^kd_m}\right)\\
    &\leq \frac{1+\varepsilon}{d_n^{k}}\,,
\end{align*}
for any $\varepsilon\in (0,1]$ satisfying $\varepsilon\geq Ck^2/d_m$ for some constant $C\geq 0$.

Let us now prove that there exists a state in the GHSE ensemble that satisfies this bound with high probability. To do so, we define the $t$-th moment of the distribution 
\begin{align*}
    (F_T(\rho^{\otimes k} ))^t = \Tr[M_T\rho^{\otimes k}]^t = \Tr\left( M_T^{\otimes t} \rho^{\otimes k t}\right) 
\end{align*}
By Markov's inequality, one has that for any moment $t \geq 1$,
\begin{align*}
    \Pr_{\rho\sim\eta_{n,m}}\left[F_T(\rho^{\otimes k} ) \geq \delta \right] =  \Pr_{\rho\sim\eta_{n,m}}\left[(F_T(\rho^{\otimes k} ))^t \geq \delta^t \right] \leq \frac{\mathbb{E}_{\rho \sim\eta_{n,m} } (F_T(\rho^{\otimes k} ))^t}{\delta^t}\,.
\end{align*}
Let us first calculate $\mathbb{E}_{\rho \sim\eta_{n,m} } (F_T(\rho^{\otimes k} ))^t$. Taking Eq.\eqref{eq:GHSE.decom}, one has
\begin{align*}
    \mathbb{E}_{\rho \sim\eta_{n,m} } (F_T(\rho^{\otimes k} ))^t &=  \frac{(d_nd_m-1)!}{(d_nd_m+kt-1)! }\sum_{\pi \in S_{kt}}d_m^{\text{cycles}(\pi)}\Tr(M_T^{\otimes t}\hat{\pi}_n)\\
    &\leq  \frac{(d_nd_m-1)!}{(d_nd_m+kt-1)! } \frac{(d_m + kt -1)!}{(d_m-1)! }\\
    &\leq \frac{1}{d_n^{kt}}\exp\left(\frac{kt(kt-1)}{2 d_m} \right)\,.
\end{align*}
Then, by taking $\delta= (1+\varepsilon)/d^{k}_n$, one has that for any moment $t\geq 1$,
\begin{align*}
    \Pr_{\rho\sim\eta_{n,m}}\left[F_T(\rho^{\otimes k} ) \geq \frac{(1+\varepsilon)}{d^{k}_n}\right] \leq \frac{1}{ (1+\varepsilon)^t}\cdot \exp\left(\frac{kt(kt-1)}{2d_m} \right) \leq \exp\left(\frac{k^2t^2}{2d_m} -t\log(1+\varepsilon)\right)\,.
\end{align*}
Fixing $\varepsilon\in (0,1)$ and taking
\begin{align*}
    t\coloneq \left\lfloor\frac{d_m}{k^2}\log(1+\varepsilon)\right\rfloor \geq1\,,
\end{align*}
one gets
\begin{align*}
    \Pr_{\rho\sim\eta_{n,m}}\left[F_T(\rho^{\otimes k} ) \geq \frac{(1+\varepsilon)}{d^{k}_n}\right] \leq  \exp\left(-\frac{d_m}{2k^2} \log(1+\varepsilon)^2\right)\,.
\end{align*}
Let us now fix a polynomial circuit-size bound $p(n)$ and let $\mathsf{QPT}_{\le p}$ denote the set of channels implementable by circuits of size at most $p(n)$. By \Cref{lem:counting-choi}, there exists a polynomial $r(n)$ such that $|\mathsf{QPT}_{\le p}|\le 2^{ r(n)}$.
Therefore, by taking the union bound
\begin{align*}
    \Pr_{ \rho \sim\eta_{n,m} } \left[\exists T \in \mathsf{QPT}_{\le p}\,:\;F_T(\rho^{\otimes k} ) \geq \frac{(1+\varepsilon)}{d^{k}_n}\right] \leq  2^{r(n)}\exp\left(-\frac{d_m}{2k^2} \log(1+\varepsilon)^2\right)\,.
\end{align*}
Let us now fix the variable $\varepsilon$ such that
\begin{align*}
    \varepsilon(n)\coloneq 2k(n)\sqrt{\frac{2(r(n)+n)\ln 2}{d_{m(n)}}}\,.
\end{align*}
Since  $m(n) = \omega(\log n)$ and $k(n), r(n) = \poly(n)$, $\varepsilon$ satisfies for a sufficiently large $n \in \NN$ that $\varepsilon \in (0,1]$ and $\varepsilon\geq Ck^2/d_m$. Moreover, using the standard bound $\log(1+\varepsilon)\geq \varepsilon/2$, it follows that
\begin{align*}
    \frac{d_m}{2k^2} \log(1+\varepsilon)^2\geq \frac{d_m\varepsilon^2}{8k^2}= (r(n)+n)\ln 2\,.
\end{align*}
Hence,
\begin{align*}
    \Pr_{ \rho \sim\eta_{n,m} } \left[\forall T \in \mathsf{QPT}_{\le p}\,:\;F_T(\rho^{\otimes k} ) \leq \frac{(1+\varepsilon)}{d^{k}_n}\right] \geq  
    1-2^{-n}\,.
\end{align*}

Therefore, for $k = \poly (n)$ there exists a state $\tilde{\rho} \in \eta_{n,m} $ such that for every polynomially implementable quantum channel $\mathsf{QPT}_{\leq p}\ni T: \BB(\HH^B_n) \rightarrow\BB(\HH^{A'}_n)$ with at most $p(n)$ gates,
\begin{align*}
    F_T(\tilde{\rho}^{\otimes k})  \leq \frac{1+\varepsilon}{d_n^{k}}\,.
\end{align*}
Finally, by the definition of the computational min-entropy, 
\begin{align*}
  \CompMin(A^k|B^k)_{\tilde{\rho}^{\otimes k}}  &\geq k\cdot n_A -\log(1+\varepsilon) \geq k\cdot n_A -\negl(n)\,.\\
\end{align*}
Let us now calculate the (informational) conditional min-entropy of a state sampled from the GHSE. Since $H_{\min}(A|B)_\rho \le H(A|B)_\rho$ and $H(A|B)_\rho= H(AB)_{\rho}-H(B)_{\rho}$, where $H(\cdot)$ is the von-Neumann entropy (see~\cite[Definition 5.1]{T16}), it follows that 
\begin{align*}
    H_{\min}(A|B)_\rho\leq H(M)_{\ket{\psi}}-H(B)_{\ket{\psi}}\,, 
\end{align*}
where we have used the fact that $H(M)_{\ket{\psi}} = H(AB)_{\ket{\psi}}$ for the bipartite state $|\psi\rangle \in \CC^{d_{A}d_{B}}\otimes \CC^{d_m}$ as defined in \Cref{def:GHSE}. Then, we used the fact that the von Neumann entropy is upper bounded by the rank, obtaining
\begin{align*}
    H(M)_{\ket{\psi}}\leq \log d_m = m
\end{align*}
almost surely. On the other hand, as proven in \cite[Lemma II.4]{HLW06}, since $d_m=\omega(\log n)$
\begin{align*}
   H(B)_{\ket{\psi}}\geq n_A-\negl(n) 
\end{align*}
almost surely as well.
Then, since the conditional min-entropy is additive and $k=\poly(n)$,
\begin{align*}
    H_\text{min}(A^k|B^k)_{\rho^{\otimes k}} \leq -k\cdot\left(n_A-m\right)+\negl(n)\,.
\end{align*}

\end{proof}

Then, given $k=\poly(n)$ copies, there exists a family of mixed states $\{\rho_{n,m}\}_n$ whose computational conditional min-entropy is asymptotically maximal:
\begin{align*}
    \CompMin(A^k|B^k)_{\rho_{n,m}^{\otimes k}} \ge k\cdot n_A-\negl(n),
\end{align*}
while its information-theoretic conditional min-entropy remains highly negative:
\begin{align*}
    H_\text{min}(A^k|B^k)_{\rho_{n,m}^{\otimes k}} \le -k\cdot\bigl(n_A-m(n_A)\bigr)+\negl(n).
\end{align*}
By subadditivity of $\CompMin(A|B)$ and additivity of the ordinary conditional min-entropy, this yields the corresponding per-copy bounds
\begin{align*}
    \CompMin(A|B)_{\rho_{n,m}} \ge n_A-\negl(n),
    \qquad
    H_\text{min}(A|B)_{\rho_{n,m}} \le -\bigl(n_A-m(n_A)\bigr)+\negl(n).
\end{align*}
Thus, in the mixed-state regime, the computational conditional min-entropy is nearly maximal even though the information-theoretic conditional min-entropy remains highly negative, yielding an almost maximal separation between operationally accessible and information-theoretic side information.

\end{document}